\definecolor{dbl}{rgb}{0.4,0.0,0.97}
\renewcommand\thesection{\arabic{section}}
\newcommand{\ncom}{\newcommand}
\ncom{\ul}{\underline}
\ncom{\m}{\mathbb{R}}
\ncom{\np}{N_P(\mu,\sGm)}
\ncom{\nno}{\nonumber}
\ncom{\non}{\nonumber}
\ncom{\ds}{\displaystyle}
\ncom{\half}{\frac{1}{2}}
\ncom{\mbx}{\makebox{.25cm}}
\ncom{\hs}{\mbox{\hspace{.25cm}}}
\ncom{\rar}{\rightarrow}
\ncom{\Rar}{\Rightarrow}
\ncom{\noin}{\noindent}
\ncom{\sz}{\scriptsize}
\ncom{\rf}{\ref}
\ncom{\s}{\sqrt{2}}
\ncom{\sgm}{\sigma}
\ncom{\sGm}{\Sigma}
\ncom{\Sgm}{\sigma^2}
\ncom{\psgm}{\sigma^{\prime}}
\ncom{\dt}{\delta}
\ncom{\Dt}{\Delta}
\ncom{\lmd}{\lambda}
\ncom{\Lmd}{\Lambda}
\ncom{\Th}{\theta}
\ncom{\e}{\eta}
\ncom{\Ch}{\chi^2}
\ncom{\pcc}{\stackrel{P}{>}}
\ncom{\lp}{\stackrel{L_{p}}{>}}
\ncom{\dist}{{\rm\,dist}}
\ncom{\sspan}{{\rm\,span}}
\ncom{\re}{{\rm Re\,}}
\ncom{\im}{{\rm Im\,}}
\ncom{\sgn}{{\rm sgn\,}}
\ncom{\hone}{\mbox{\hspace{1em}}}
\ncom{\htwo}{\mbox{\hspace{2em}}}
\ncom{\hthree}{\mbox{\hspace{3em}}}
\ncom{\hfour}{\mbox{\hspace{4em}}}
\ncom{\vone}{\vskip 2ex}
\ncom{\vtwo}{\vskip 4ex}
\ncom{\vonee}{\vskip 1.5ex}
\ncom{\vthree}{\vskip 6ex}
\ncom{\vfour}{\vspace*{8ex}}
\ncom{\norm}{\|\;\;\|}
\ncom{\integ}[4]{\int_{#1}^{#2}\,{#3}\,d{#4}}
\ncom{\vspan}[1]{{{\rm\,span}\{ #1 \}}}
\ncom{\dm}[1]{ {\displaystyle{#1} } }
\ncom{\ri}[1]{{#1} \index{#1}}
\ncom{\vecs}{\mathbf{s}}
\ncom{\rvec}{\mathbf{r}}
\ncom{\rbvec}{\mathbf{\bar{r}}}
\ncom{\rmat}{\mathbf{R}}
\ncom{\rbmat}{\mathbf{\bar{R}}}
\ncom{\mmat}{\mathbf{M}}
\ncom{\mbmat}{\mathbf{\bar{M}}}
\ncom{\nmat}{\mathbf{N}}
\ncom{\zmat}{\mathbf{Z}}
\renewcommand{\vec}[1]{\bm{#1}}
\ncom{\vecone}{\vec{{1}}}
\ncom{\identity}{\vec{I}}
\ncom{\zero}{\vec{{0}}}
\ncom{\matone}{\vec{J}}
\ncom{\hatmat}{\vec{H}}
\ncom{\eps}{\epsilon}
\ncom{\cmat}{\vec{C}}
\ncom{\cbmat}{\vec{\bar{C}}}
 \titleformat{\section}{\bfseries\Large}{\appendixname~\thesection:}{0.5em}{}%
 \titleformat{\subsection}{\bfseries\large}{\thesubsection}{0.5em}{}%
\ncom{\mybib}{\bibliography{references}\bibliographystyle{apalike}}
\newtheorem{remarks}{\bf Remark}[section]
\newtheorem{lemmas}{\bf Lemma}[section]
\newtheorem{thm}{\bf Theorem}[section]
\numberwithin{equation}{section}
\def\namedlabel#1#2{\begingroup
    #2%
    \def\@currentlabel{#2}%
    \phantomsection\label{#1}\endgroup
}
\def\@makefnmark{%
  \leavevmode
  \raise.9ex\hbox{\fontsize\sf@size\z@\normalfont\tiny\@thefnmark}}
\providecommand{\keywords}[1]
{
  \small	
  \textbf{\textit{Keywords---}} #1
}
\begin{document}
\renewcommand{\subsectionautorefname}{subsection}

\pagenumbering{arabic}

\title{\bf Efficient designs for multivariate crossover trials}
 \author{Shubham Niphadkar\thanks{ Department of Mathematics, Indian Institute of Technology Bombay, Mumbai 400 076, India}\hspace{.2cm} and Siuli Mukhopadhyay\footnotemark[1] \textsuperscript{,} \thanks{Corresponding author; Department of Mathematics, Indian Institute of Technology Bombay, Powai, Mumbai 400 076, India; Email: siuli@math.iitb.ac.in}\hspace{.2cm} }
\date{}
  \maketitle

\begin{abstract}
This article aims to study efficient/trace optimal designs for crossover trials with multiple responses recorded from each subject in the time periods. A multivariate fixed effects model is proposed with direct and carryover effects corresponding to the multiple responses. The corresponding error dispersion matrix is chosen to be either of the proportional or the generalized Markov covariance type, permitting the existence  of direct and cross-correlations within and between the multiple responses. The corresponding information matrices for direct effects under the two types of dispersions are used to determine efficient designs. The efficiency of orthogonal array designs of Type $I$ and strength $2$ is investigated for a wide choice of covariance functions, namely, Mat($0.5$), Mat($1.5$) and Mat($\infty$). To motivate these multivariate crossover designs, a gene expression dataset in a $3 \times 3$ framework is utilized.
\end{abstract}
\vspace{.5em}

\keywords{Completely symmetric; Correlated response; Markov-type covariance; Orthogonal arrays; Proportional covariance; Trace optimal.}

\section{Introduction}
In biomedical and clinical studies, we often come across crossover or repeated measure setups with $g$ responses, where $g>1$. For instance, \citet{Grender1993AnalysisResponse} studies the effect of caffeine on the mathematical ability of boys aged 14 and 16 years in a crossover framework. In this study, two responses diastolic and systolic blood pressures, of each boy are recorded from each of the two periods in a $2\times 2$ crossover design setup.  More recently, \citet{leaker2017nasal} considered a randomized, placebo controlled, double-blind $3 \times 3$ crossover trial, where in each time period gene expression values corresponding to multiple genes were recorded for each subject, resulting in a multiple response crossover trial. Due to more than one response being measured from the same subject at the same and different time periods, there is a necessity to account for both between and within response correlations in the covariance structure. Using the gene experiment study of \citet{leaker2017nasal}, we motivate a multivariate response model for a $p,t\geq 3$ crossover setup with correlation within and between responses. Our main objective is to find $p$-period and $t$-treatment trace optimal/efficient crossover designs for $g$ correlated responses. Though these correlated response crossover trials are quite popular in the statistical literature \citep[see][]{Grender1993AnalysisResponse, chinchilli1996design, tudor200020, pareek2023likelihood}, mostly the topics of estimation and testing of hypothesis for such trials have been studied without any attention being paid to the actual designing of such studies. To the best of our knowledge, this is the first work in optimal designs for correlated response crossover experiments.\par

\noindent The key difficulty in finding optimal designs for such crossover models is specifying the direct and cross-covariance for the multiple responses. In the direct covariance, correlations arising from within responses measured in different time periods from each subject are acknowledged. This closely resembles the covariance pattern we see in usual longitudinal/repeated measures studies. To consider the relations between different responses measured in the same and different time periods we use cross-correlations. Two popular covariance structures from the statistical literature, resembling the proportional covariance and the generalized Markovian-type covariance \citep[see][]{Journel1999955, chiles2012geostatistics, dasgupta2022optimal}, are utilized to model the correlations within (direct covariance) and between (cross-covariance) responses. Under the proportional structure, the between response covariance is assumed to be proportional to the within response covariance, $\boldsymbol{V}$, where $\boldsymbol{V}$ is positive definite. The Markov-type covariance has been proposed in the statistical literature solely for $g=2$ or the bivariate response case, with the within and between response covariances  formed by  two respective matrices, $\boldsymbol{V}_1$ and $\boldsymbol{V}_R$, where $\boldsymbol{V}_1$ is a positive definite matrix and $\boldsymbol{V}_R$ is a valid correlation matrix. Possible choices of covariance functions, for example the Exponential (Mat($0.5$)), Gaussian (Mat($\infty$)) and Matern are used in this article for $\boldsymbol{V}$ and ($\boldsymbol{V}_1$, $\boldsymbol{V}_R$) (see Section~\ref{working-cov} for details). Thus, it is important to note that the proportional and Markovian-type covariances are not at all restrictive in nature, in fact they allow us to explore a wide range of covariance functions for modelling the within and between response correlations.\par

\noindent For obtaining optimal designs, information matrices under the proportional and generalized Markovian-type covariance assumptions are determined. We show that these information matrices fail to satisfy the usual sufficient conditions \citep{Kiefer1975ConstructionIi} and the conditions by \citet{yen1986conditions} for universal optimal designs, leading us to instead search for trace optimal or efficient crossover designs \citep{bate2006construction}. For the proportional structure, an orthogonal array design represented by $OA_{I} \left( n=\lambda t \left(t-1 \right), p=t, t, 2 \right)$, where $\lambda$ is a positive integer and $t\geq 3$, is proved to be trace optimal/efficient. Under the more complicated Markovian structure, for $g=2$, i.e., a bivariate model, we show that the same orthogonal array design is highly efficient for various choices of within and between response covariance matrices. Further, we utilize the gene experiment dataset to illustrate the choice of a highly efficient design in a multivariate crossover setup. To reiterate, it is very important to understand that in this article we are able to determine efficient designs theoretically, for a wide range of covariance structures as opposed to obtaining optimal designs only for the exponential covariance case \citep[][]{dasgupta2022optimal}.\par

\noindent Various researchers, namely, \citet{matthews1987optimal, matthews1990optimal}, \citet{kunert1991cross}, \citet{Stufken1991SomeDesigns}, \citet{Kushner1997OptimalityDesigns, Kushner1998OptimalObservations}, \citet{bate2006construction} and \citet{Singh2021EfficientSettings} have studied efficient/highly efficient crossover designs. While, universally optimal crossover designs have been discussed by \citet{hedayat1978repeated}, \citet{cheng1980balanced}, \citet{kunert1984optimality}, \citet{Kunert2000OptimalityErrors}, \citet{hedayat2003universal, hedayat2004universal}, to name a few. For a detailed review of crossover designs, we would like to refer to the paper by \citet{Bose2013DevelopmentsDesigns} and books by \citet{Senn2002Cross-overResearch}, \citet{Bose2009OptimalDesigns} and \citet{Kenward2014CrossoverTrials}. However, all of these works concentrate only on single responses measured in each period. Very recently, \citet{niphadkar2023universally} have determined universal optimality results for multiple response crossover trials, where the responses are assumed to be uncorrelated with a homoscedastic covariance matrix.\par

\noindent The original contributions of this article include (i) a multivariate response crossover model with proportional and Markovian covariance structures to model correlations, (ii) information matrices for direct effects, (iii) theoretical results for efficient designs for a wide range of covariance functions, particularly -trace optimal design under the proportional covariance structure and highly efficient designs for the Markovian structure.

\subsection{Motivating example: A gene expression study}
\label{motexample}
\renewcommand{\figureautorefname}{Fig.}
\renewcommand{\thefigure}{\arabic{figure}}
We motivate our multivariate crossover model using a gene expression dataset from \citet{leaker2017nasal}. In this study, multiple gene expressions from subjects were measured in a randomized, double-blind $3\times 3$ crossover setup to understand the biomarkers of mucosal inflammation. Two doses of an oral drug were compared with a placebo, thus resulting in a $3$ treatment trial. Each of the subjects were assigned to any one of the $3$ treatment sequences; $ABC$, $CAB$ and $BCA$, where treatment $A$ represented the $10$ mg dose of the drug, while $B$ and $C$ were the placebo and $25$ mg dose of the drug, respectively. For the purpose of illustration, we considered responses/measurements corresponding to $3$ genes from each of the $3$ periods, thus in the gene example, $g=3$. The information about these $3$ genes is given in \autoref{table}. This dataset on gene profile is publicly available at the NCBI Gene Expression Omnibus \citep{Clough2016} with accession number $GSE67200$. \autoref{fig:period effect} and \autoref{fig:treatment effect} display the plot of average response values for each gene with respect to periods and treatments, respectively, where the averaging is done over subjects. \autoref{fig:subject effect} displays a similar average response plot with respect to subject, where averaging is done over periods. From \autoref{fig:period effect}, \autoref{fig:subject effect} and \autoref{fig:treatment effect} we note that effects due to periods, subjects and treatments differ across genes. \autoref{fig:overall effect} shows that the average response values vary across genes. These plots thus help us to develop a multivariate crossover model in Section~\ref{proposed-model} with varying intercepts for each gene and differential effects of periods, treatments and subjects on the $3$ gene responses.\par

\noindent We also studied the correlation profile of the $3$ genes in detail and noted significant correlations between multiple gene pairs. In order to check for correlation within and between the $3$ responses, several tests were performed. The test results are shown in \autoref{within_gene} and \autoref{between_gene}. We note from \autoref{within_gene} that at the $5\%$ level of significance, within gene correlation for genes $1$ and $2$ are high between periods, while in \autoref{between_gene} we see significant correlations present between gene pairs $(1,2)$ and $(2,3)$.

\begin{figure}
\centering
\begin{subfigure}{.5\textwidth}
  \centering
  \includegraphics[height=\linewidth, width=1.0\linewidth]{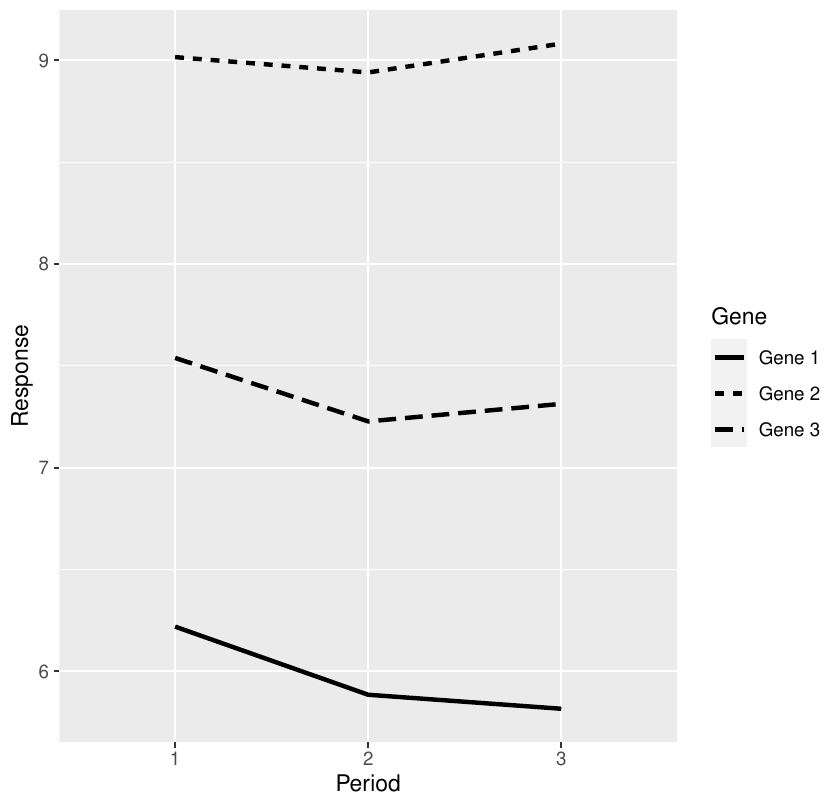}
  \caption{Period effects}
  \label{fig:period effect}
\end{subfigure}%
\begin{subfigure}{.5\textwidth}
  \centering
  \includegraphics[height=\linewidth, width=1.0\linewidth]{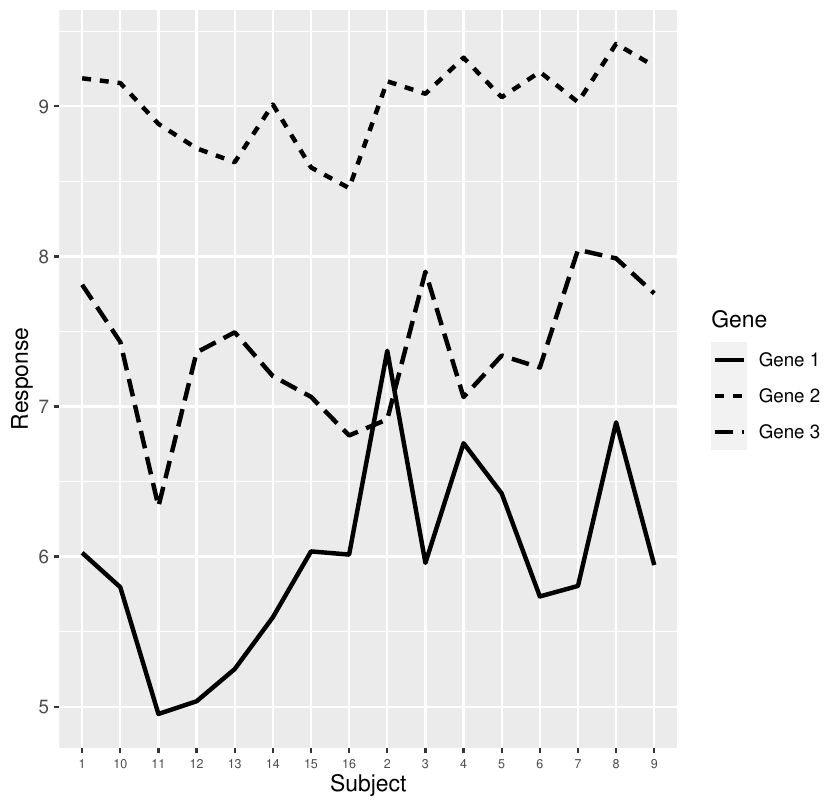}
  \caption{Subject effects}
  \label{fig:subject effect}
\end{subfigure}\\
\begin{subfigure}{.5\textwidth}
  \centering
  \includegraphics[height=\linewidth, width=1.0\linewidth]{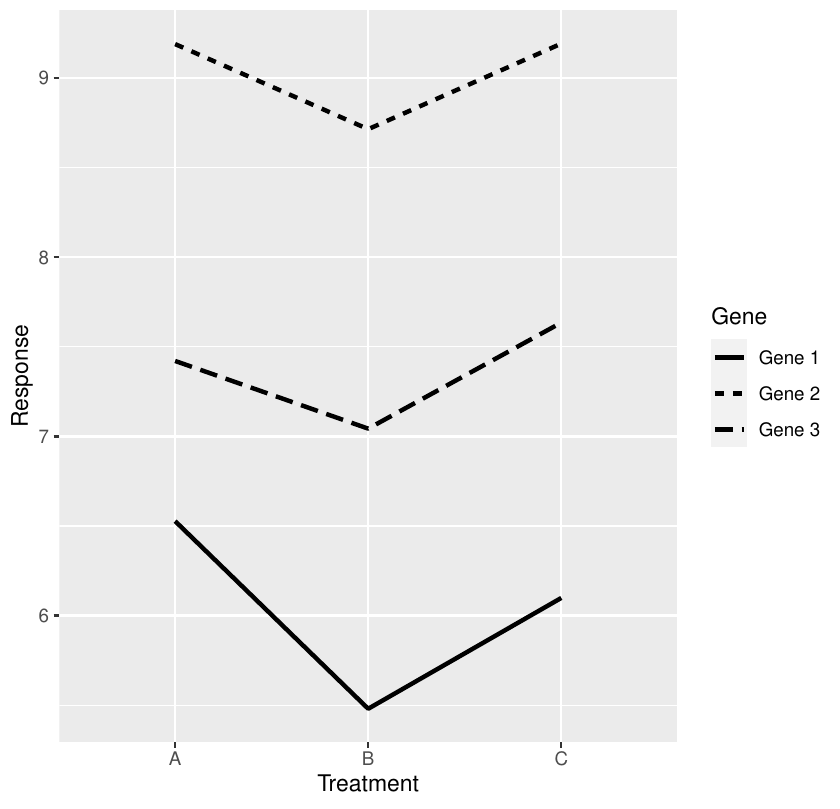}
  \caption{Treatment effects}
  \label{fig:treatment effect}
\end{subfigure}%
\begin{subfigure}{.5\textwidth}
  \centering
  \includegraphics[height=\linewidth, width=1.0\linewidth]{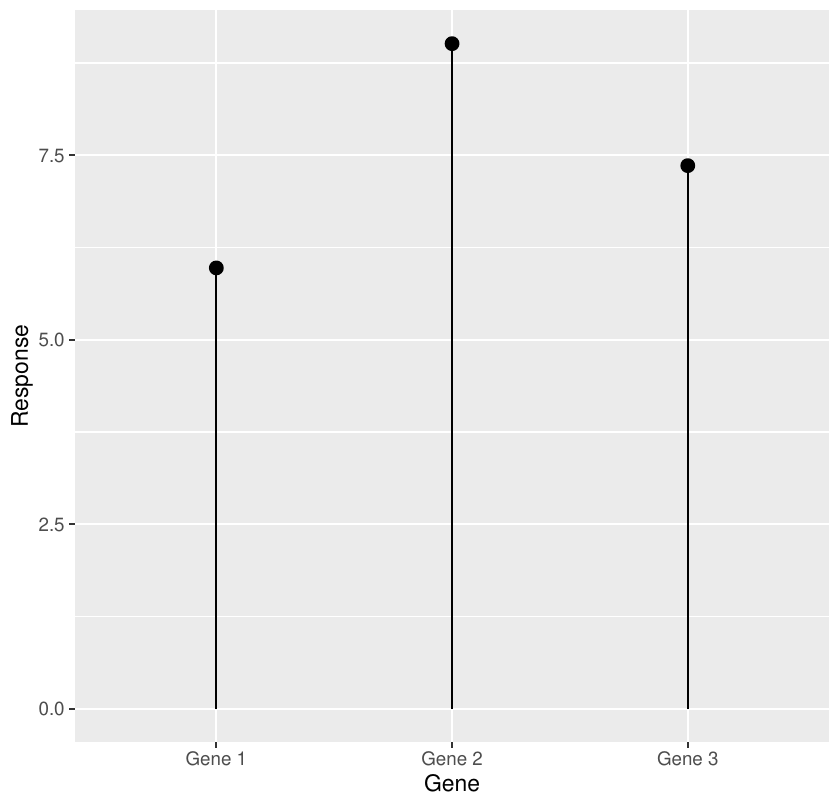}
  \caption{Overall mean}
  \label{fig:overall effect}
\end{subfigure}
\caption{Plots of the period, subject, treatment and gene versus average response corresponding to the $3$ genes.}
\label{effects-example}
\end{figure}

\begin{figure}
\centering
\begin{subfigure}{.3\textwidth}
  \centering
  \includegraphics[height=\linewidth, width=1.0\linewidth]{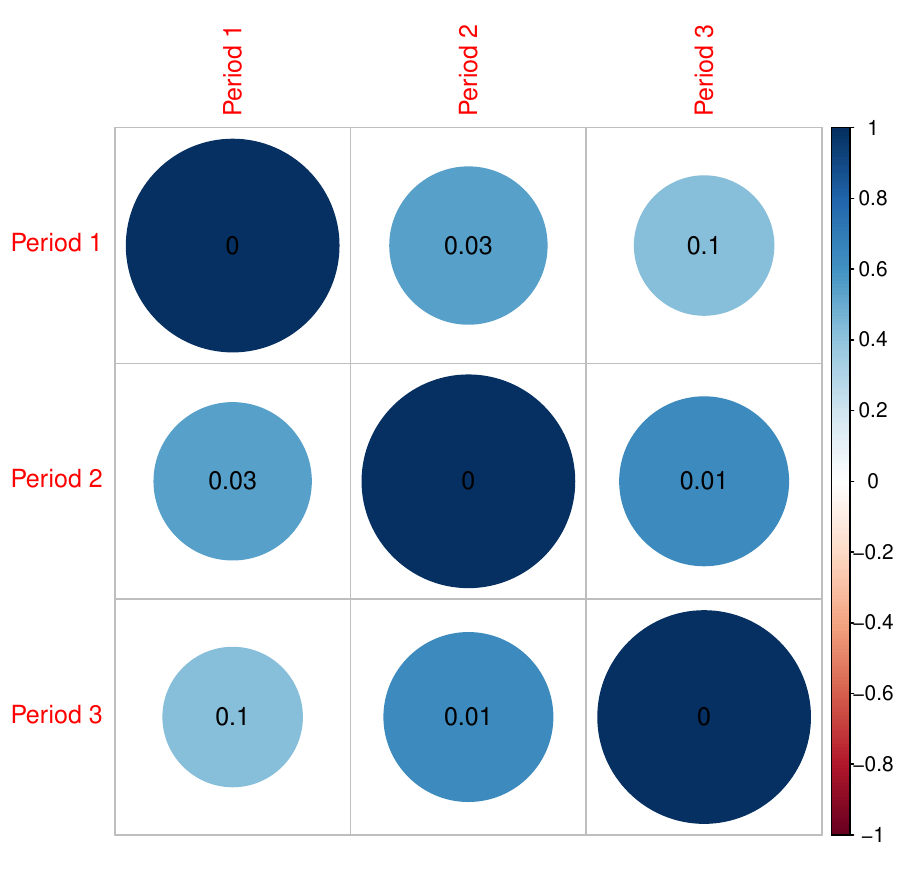}
  \caption{$Gene_1$}
  \label{fig:g1}
\end{subfigure}%
\begin{subfigure}{.3\textwidth}
  \centering
  \includegraphics[height=\linewidth, width=1.0\linewidth]{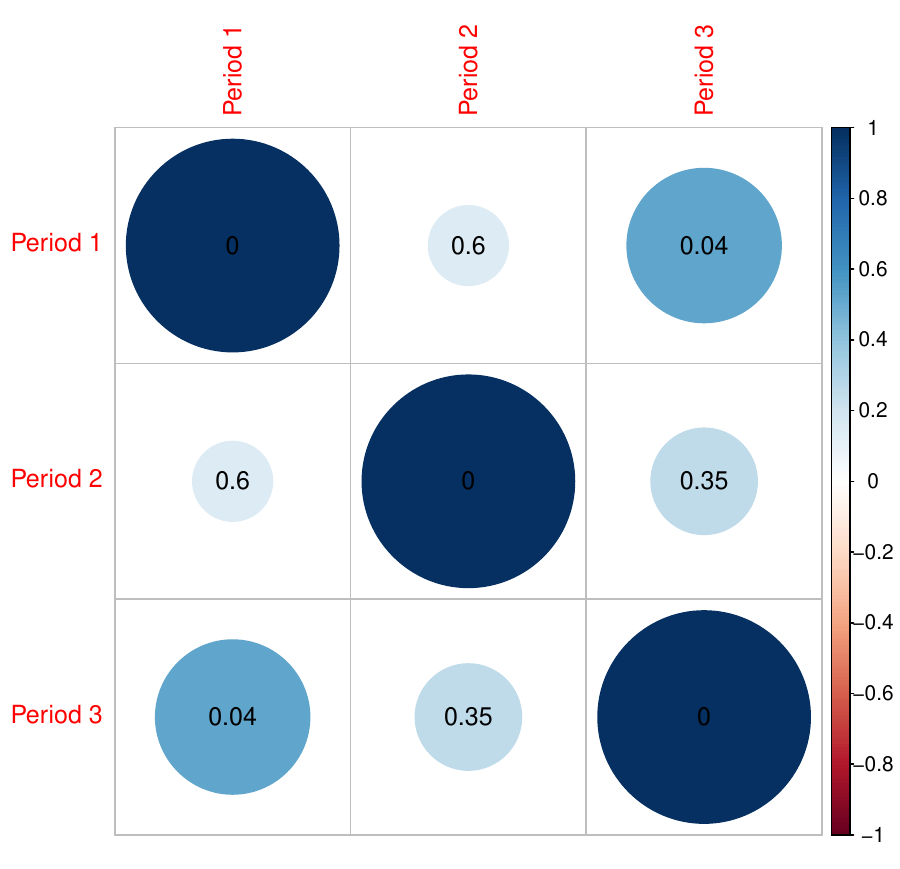}
  \caption{$Gene_2$}
  \label{fig:g2}
\end{subfigure}%
\begin{subfigure}{.3\textwidth}
  \centering
  \includegraphics[height=\linewidth, width=1.0\linewidth]{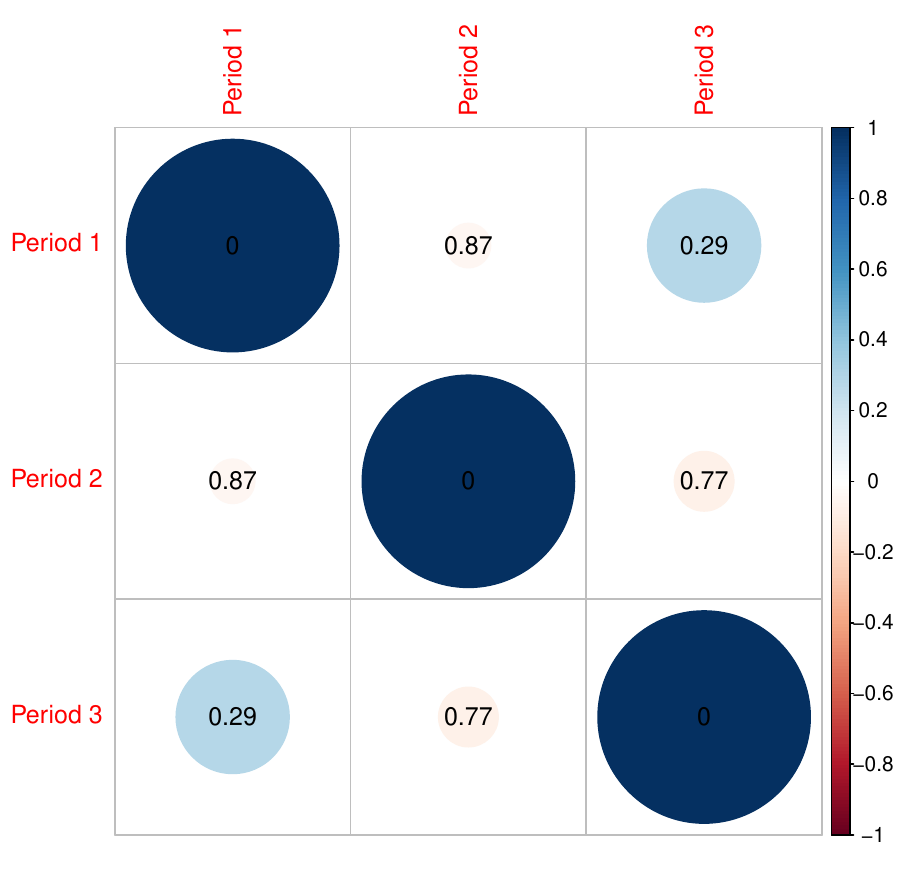}
  \caption{$Gene_3$}
  \label{fig:g3}
\end{subfigure}
\caption{Within response correlation plots corresponding to the $3$ genes. The values are the p-values of the corresponding correlation tests while the colours represent the quantity of correlation present.}
\label{within_gene}
\end{figure}

\begin{figure}
\centering
\begin{subfigure}{.3\textwidth}
  \centering
  \includegraphics[height=\linewidth, width=1.0\linewidth]{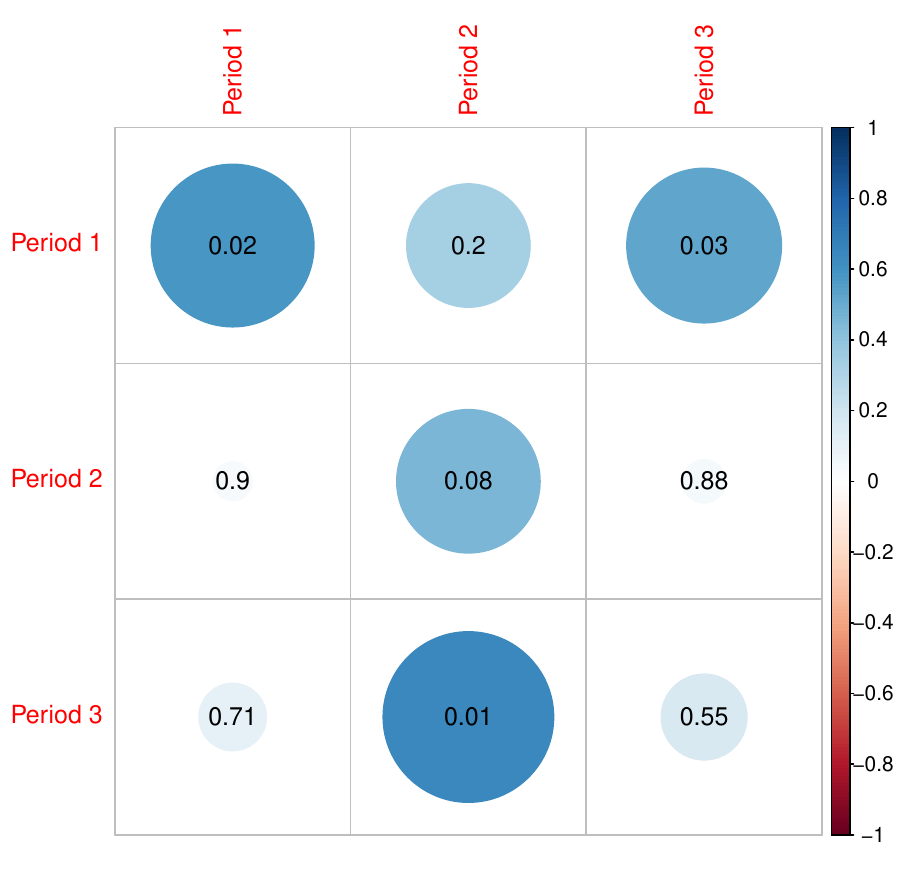}
  \caption{$\left( Gene_1, Gene_2 \right)$}
  \label{fig:g1g2}
\end{subfigure}%
\begin{subfigure}{.3\textwidth}
  \centering
  \includegraphics[height=\linewidth, width=1.0\linewidth]{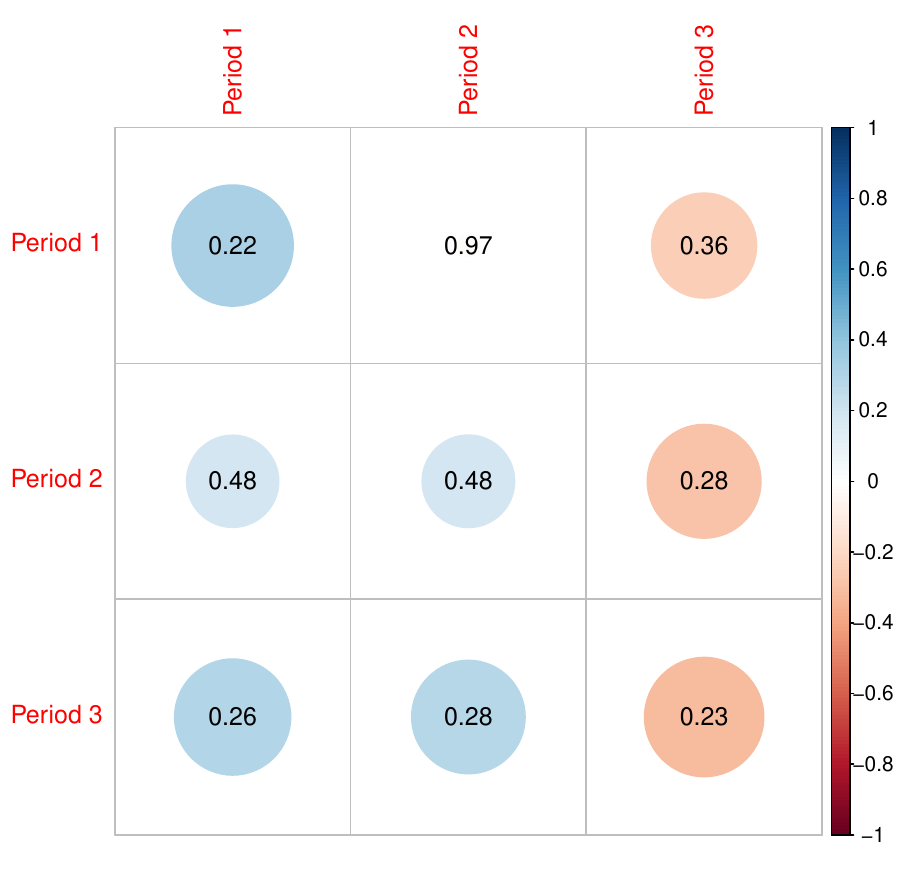}
  \caption{$\left( Gene_1, Gene_3 \right)$}
  \label{fig:g1g3}
\end{subfigure}%
\begin{subfigure}{.3\textwidth}
  \centering
  \includegraphics[height=\linewidth, width=1.0\linewidth]{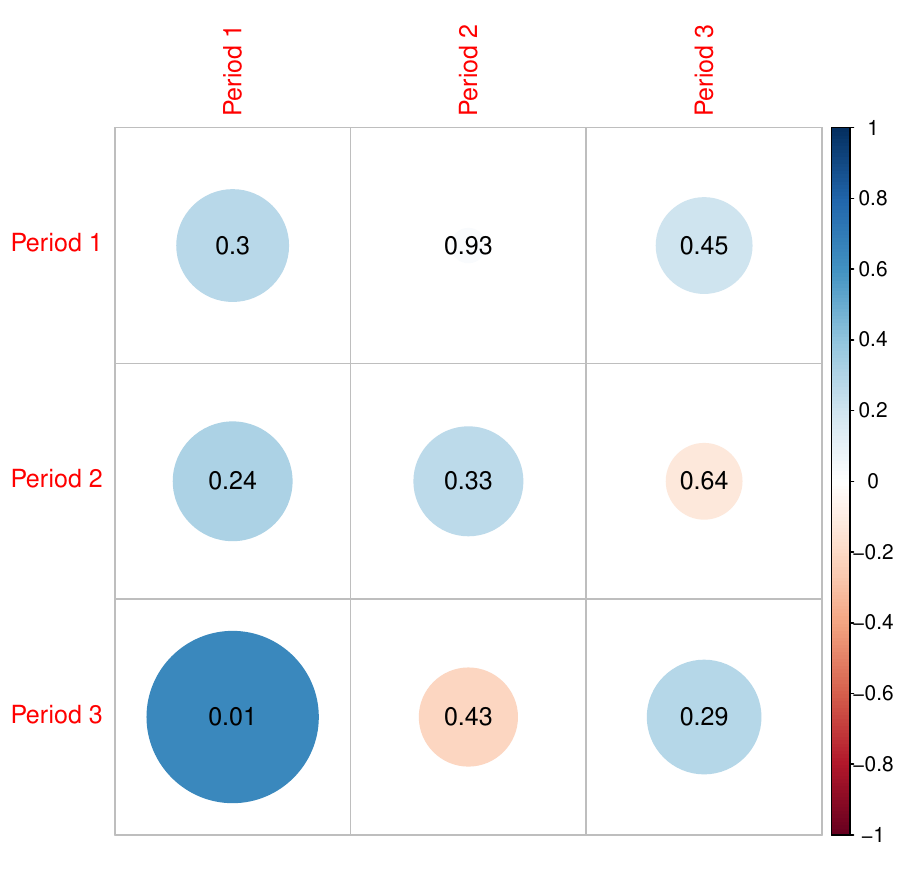}
  \caption{$\left( Gene_2, Gene_3 \right)$}
  \label{fig:g2g3}
\end{subfigure}
\caption{Between response correlation plots corresponding to the $3$ gene pairs. The values are the p-values of the corresponding correlation tests while the colours represent the quantity of correlation present.}
\label{between_gene}
\end{figure}

\noindent Learning from the motivating gene example, in the following section we propose a multivariate model that has varying effects of periods, subjects and treatments corresponding to each response variable, and also includes correlation between and within responses.

\section{A multivariate crossover model}
\label{proposed-model}
We consider the class, $\Omega_{t,n,p}$, of crossover designs with $t$ treatments, $n$ subjects and $p$ periods, where $t,p \geq 3$. It is assumed that measurements are recorded on $g \geq 1$ response variables from every subject in each period. Suppose for a design $d \in \Omega_{t,n,p}$, $Y_{dijk}$ represents the $k^{th}$ response measurement from the $j^{th}$ subject in the  $i^{th}$ period of the experiment, where $1 \leq k \leq g$, $1 \leq i \leq p$ and $1 \leq j \leq n$. We model $Y_{dijk}$ as:
\begin{align}
Y_{dijk} = \mu_k + \alpha_{i,k} + \beta_{j,k} + \tau_{d \left( i,j \right),k} + \rho_{d \left( i-1,j \right),k}  + \eps_{ijk},
\label{model1}
\end{align}
where corresponding to the $k^{th}$ response variable, $\mu_k$ is the intercept effect, $\alpha_{i,k}$ is the $i^{th}$ period effect, $\beta_{j,k}$ is the $j^{th}$ subject effect, $\tau_{s,k}$ and $\rho_{s,k}$ are respectively the direct effect and the first order carryover effect due to the $s^{th}$ treatment where $1 \leq s \leq t$,  and $d(i,j)$ denotes the treatment allocated to the $j^{th}$ subject in the $i^{th}$ period. The assumptions of varying effects are in accordance with the observations made in the gene expression study in Subsection~\ref{motexample}. We assume that \eqref{model1} is a fixed effect model with a zero mean error term, $\eps_{ijk}$, and no carryover effect in the first period. If $g=1$, then the above model becomes an univariate response model similar to those found in the statistical literature \citep[see][]{Bose2009OptimalDesigns}. \par

\noindent Using matrix notations we rewrite the above model as:
\begin{align}
\begin{bmatrix}
 \boldsymbol{Y}^{'}_{d1} &
 \cdots &
\boldsymbol{Y}^{'}_{dg}
\end{bmatrix}^{'}
&= \boldsymbol{Z}_d 
\boldsymbol{\eta} +
\begin{bmatrix}
 \boldsymbol{\eps}^{'}_{1} &
 \cdots &
\boldsymbol{\eps}^{'}_{g}
\end{bmatrix}^{'},
\label{model2}
\end{align}
where
$
\boldsymbol{Y}_{dk} = \left( Y_{d11k}, \cdots, Y_{dp1k}, \cdots, Y_{d1nk}, \cdots, Y_{dpnk} \right)^{'}$ and $
\boldsymbol{\eps}_k = \left( \eps_{11k}, \cdots, \eps_{pnk} \right)^{'},
$ represent the observations and the error terms, respectively, corresponding to the $k^{th}$ response, and
\setcounter{MaxMatrixCols}{20}
$\boldsymbol{\eta} =$ \\$
\begin{bmatrix}
\mu_1 & 
\cdots &
\mu_g &
\boldsymbol{\alpha}^{'}_1 &
\boldsymbol{\beta}^{'}_1 & 
\cdots &
\boldsymbol{\alpha}^{'}_g &
\boldsymbol{\beta}^{'}_g &
\boldsymbol{\tau}^{'}_1 &
\boldsymbol{\rho}^{'}_1 &
\cdots &
\boldsymbol{\tau}^{'}_g &
\boldsymbol{\rho}^{'}_g
\end{bmatrix}^{'}$ is the parameter vector. The design matrix is given by
\begin{align}
\boldsymbol{Z}_d &= 
\begin{bmatrix}
\identity_g \otimes \vecone_{np} & \boldsymbol{Z_1} & \boldsymbol{Z_2}
\end{bmatrix},
\label{Zd}
\end{align}
where $\boldsymbol{Z_1} = \identity_g \otimes \boldsymbol{X_1}$, $\boldsymbol{Z_2} = \identity_g \otimes \boldsymbol{X_2}$, $\boldsymbol{X_1} = 
\begin{bmatrix} 
\boldsymbol{P} & \boldsymbol{U}
\end{bmatrix}$ and $\boldsymbol{X_2} = 
\begin{bmatrix}  
\boldsymbol{T}_d & \boldsymbol{F}_d 
\end{bmatrix}$. For the periods and subjects, we have $\boldsymbol{P}=\vecone_n \otimes \identity_p$, $\boldsymbol{U}=\identity_n \otimes \vecone_p$, while $\boldsymbol{T}_d$ and $\boldsymbol{F}_d$ are the matrices for treatment and carryover effects, respectively. Note that $\boldsymbol{F}_d = \left( \identity_n \otimes \boldsymbol{\psi} \right) \boldsymbol{T}_d$, where $\boldsymbol{\psi} =  
\begin{bmatrix}
\zero^{'}_{p-1 \times 1} & 0\\
\identity_{p-1} & \zero_{p-1 \times 1}
\end{bmatrix}$. The vector of errors is assumed to have multivariate normal distribution with $E(\boldsymbol{\eps})=\zero_{np \times 1}$ and the dispersion matrix as
\begin{align}
\mathbb{D} \left( 
\boldsymbol{\eps}
\right)
&=
\boldsymbol{\mathit{\Sigma}} =
\begin{bmatrix}
\boldsymbol{\mathit{\Sigma}}_{11} & \cdots & \boldsymbol{\mathit{\Sigma}}_{1g}\\
\vdots & \vdots & \vdots\\
\boldsymbol{\mathit{\Sigma}}_{g1} & \cdots & \boldsymbol{\mathit{\Sigma}}_{gg}
\end{bmatrix},
\label{dispepsc}
\end{align} 
where $\boldsymbol{\mathit{\Sigma}}_{k'k} = \boldsymbol{\mathit{\Sigma}}_{kk'}^{'}$, for $k \neq k' = 1, \cdots, g$. Here, the symmetric matrices, $\boldsymbol{\mathit{\Sigma}}_{kk}$ represent the variance of $\boldsymbol{\eps}_k$, while $\boldsymbol{\mathit{\Sigma}}_{kk'}$ represents the covariance between $\boldsymbol{\eps}_k$ and $\boldsymbol{\eps}_{k'}$. The matrix $\boldsymbol{\mathit{\Sigma}}$ is assumed to be positive definite. Also, observations from different subjects are taken to be uncorrelated, as is the usual practice \citep[see][]{Bose2009OptimalDesigns}.\par

\noindent The investigation made regarding the presence of correlation between genes (see \autoref{between_gene}) in Subsection~\ref{motexample}, motivates the use of the covariance structure with cross-covariances. Two different structures are considered to model the covariances between responses, namely, the proportional structure and the generalized Markov-type structure. A generalized version of the Markov-type covariance structure for spatial data \citep[][]{Journel1999955, chiles2012geostatistics} was proposed recently by \citet{dasgupta2022optimal}. The proportional structure is simpler than the Markovian one and is used for the general $g>1$ case. However, in the statistical literature, the Markovian structure has been studied for only the $g=2$ or bivariate response case. We also restrict our search for optimal designs for a Markovian covariance structure to $g=2$ or bivariate responses.
\begin{description}
\item[\namedlabel{structure1}{Proportional Structure}:] 
Under the proportional structure the dispersion matrix $\boldsymbol{\mathit{\Sigma}}$ is taken as follows:
\begin{align*}
\boldsymbol{\mathit{\Sigma}} &= \boldsymbol{\mathit{\Gamma}} \otimes \left( \identity_n \otimes \boldsymbol{V} \right),
\end{align*}
where $\boldsymbol{\mathit{\Gamma}}$ is assumed to be a $g \times g$ positive definite and symmetric matrix having non-zero off-diagonal elements, and $\boldsymbol{V}$ a known positive definite and symmetric matrix. For $g=2$ case, the matrix $\boldsymbol{\mathit{\Gamma}}$ is given as
\begin{align*}
\boldsymbol{\mathit{\Gamma}} = 
\begin{bmatrix}
\gamma_{11} & \gamma_{12}\\
\gamma_{12} & \gamma_{22}
\end{bmatrix},
\end{align*}
where $\gamma_{11}, \gamma_{22} > 0$ and $\gamma_{11} \gamma_{22} > \gamma_{12}^2$. Under the proportional structure, the within and between responses have similar covariance matrices differing by some constants.
\item[\namedlabel{structure2}{Generalized Markov-Type Structure}:]
In the statistical literature \citep[see][]{Journel1999955, chiles2012geostatistics, dasgupta2022optimal}, this structure is available to model the $g=2$ case. Let $\sigma_{11}$ and $\sigma_{22}$, where $\sigma_{11}, \sigma_{22}>0$, be the respective variances of the first and second responses and $\rho$, $0<|\rho|<1$,  denotes the correlation coefficient between the observations from the two responses measured from the same period. Under the generalized Markov-type structure the dispersion matrix $\boldsymbol{\mathit{\Sigma}}$ is as follows:
\begin{align*}
\boldsymbol{\mathit{\Sigma}} &=
\begin{bmatrix}
\boldsymbol{\mathit{\Sigma}}_{11} & \boldsymbol{\mathit{\Sigma}}_{12}\\
\boldsymbol{\mathit{\Sigma}}_{21} & \boldsymbol{\mathit{\Sigma}}_{22}
\end{bmatrix},
\end{align*}
where $\boldsymbol{\mathit{\Sigma}}_{11} = \identity_n \otimes \boldsymbol{V}_1$, $\boldsymbol{\mathit{\Sigma}}_{12} = \boldsymbol{\mathit{\Sigma}}_{21} = \rho \sqrt{\frac{\sigma_{22}}{\sigma_{11}} }\boldsymbol{\mathit{\Sigma}}_{11}$, $\boldsymbol{\mathit{\Sigma}}_{22} = \rho^2 \frac{\sigma_{22}}{\sigma_{11}} \boldsymbol{\mathit{\Sigma}}_{11} + \sigma_{22} (1-\rho^2) \boldsymbol{\mathit{\Sigma}}_{R}$ and $\boldsymbol{\mathit{\Sigma}}_{R} = \identity_n \otimes \boldsymbol{V}_R$. Here, $\boldsymbol{V}_1 = \sigma_{11} \boldsymbol{V}_C$, $\boldsymbol{V}_C$ is a known positive definite and symmetric matrix, and $\boldsymbol{V}_R$ is a known correlation matrix. 
\end{description}

\subsection{Working covariance structures}
\label{working-cov}
The two working covariance structures for the dispersion matrix $\boldsymbol{\mathit{\Sigma}}$, namely, the proportional structure and the generalized Markov-type structure are not at all restrictive, as they allow usage of various covariance functions as shown next. For the proportional case, the only constraint on the matrix $\boldsymbol{V}$ is positive definiteness and symmetricity. Some popular choices of $\boldsymbol{V}$ are shown in \autoref{table_prop}.
\begin{table}
\begin{threeparttable}[b]
\caption{Some popular choices for $\boldsymbol{V}$ in the proportional covariance structure}\label{table_prop}
\centering
\begin{tabular}{p{5cm}|p{10.5cm}}
\toprule%
 Covariance function for $\boldsymbol{V}$ & $\left( \boldsymbol{V} \right)_{i_1, i_2}: \left(i_1, i_2 \right)^{th} \text{ element of } \boldsymbol{V}$\\
\hline
& $0 < r_{(1)} < 1$, and $i_1$ and $i_2$ are period indices, $1 \leq i_1, i_2 \leq p$.\\
\hline
Mat($0.5$) \tnote{1,2}& $r_{(1)}^{|i_1 - i_2|}$\\
Mat($1.5$) \tnote{1}& $\left[ 1 - |i_1 - i_2| log\left(r_{(1)} \right) \right]r_{(1)}^{|i_1 -i_2|}$\\
Mat($\infty$) \tnote{1,2} & $r_{(1)}^{|i_1 - i_2|^2}$\\
\hline
\end{tabular}
\begin{tablenotes}
       \item [1] See \citet{dasgupta2022optimal} and \citet{li2015}. 
       \item [2] Note that $r_{(1)}$ is the within and between response correlation coefficient between consecutive periods.
     \end{tablenotes}
 \end{threeparttable}
\end{table}

\noindent For the generalized Markov-type covariance, we need to choose matrices $\boldsymbol{V}_1$ and $\boldsymbol{V}_R$. Note, $\boldsymbol{V}_1 = \sigma_{11} \boldsymbol{V}_C$, where $\boldsymbol{V}_C$ is a known positive definite and symmetric matrix, $\sigma_{11}>0$ and $\boldsymbol{V}_R$ is a known correlation matrix. Various popular combinations of the matrices $\boldsymbol{V}_1$ and $\boldsymbol{V}_R$ are presented in \autoref{table_markov}. Note that we could also use the same covariance functions for $\boldsymbol{V}_1$ and $\boldsymbol{V}_R$ with different correlation parameters between consecutive periods.
\begin{table}
\begin{threeparttable}[b]
\caption{Some popular combinations of $\boldsymbol{V}_1$ and $\boldsymbol{V}_R$ under the generalized Markov-type structure}\label{table_markov}
\centering
\begin{tabular}{p{4cm}|p{5.65cm}|p{5.5cm}}
\toprule%
Covariance function for $\boldsymbol{V}_1$ and $\boldsymbol{V}_R$ & $\left( \boldsymbol{V}_1 \right)_{i_1, i_2}$: $\left(i_1, i_2 \right)^{th}$ element of $\boldsymbol{V}_1$ & $\left( \boldsymbol{V}_R \right)_{i_1, i_2}$: $\left(i_1, i_2 \right)^{th}$ element of $\boldsymbol{V}_R$\\
\hline
& \multicolumn{2}{c}{$0 < r_{(1)} < 1$, and $i_1$ and $i_2$ are period indices, $1 \leq i_1, i_2 \leq p$.}\\
\hline
& Note that $\sigma_{11}>0$ for all the structures below. &\\
\hline
(Mat($0.5$), Mat($1.5$)) & $\sigma_{11} r_{(1)}^{|i_1 - i_2|}$ & $\left[ 1 - |i_1 - i_2| log\left(r_{(1)} \right) \right]r_{(1)}^{|i_1 -i_2|}$\\
(Mat($0.5$), Mat($\infty$)) & $\sigma_{11} r_{(1)}^{|i_1 - i_2|}$ & $r_{(1)}^{|i_1 - i_2|^2}$\\
(Mat($1.5$), Mat($0.5$)) & $\sigma_{11} \left[ 1 - |i_1 - i_2| log\left(r_{(1)} \right) \right] r_{(1)}^{|i_1 -i_2|}$ & $r_{(1)}^{|i_1 - i_2|}$\\
(Mat($1.5$), Mat($\infty$)) & $\sigma_{11} \left[ 1 - |i_1 - i_2| log\left(r_{(1)} \right) \right]r_{(1)}^{|i_1 -i_2|}$ & $r_{(1)}^{|i_1 - i_2|^2}$\\
(Mat($\infty$), Mat($0.5$)) & $\sigma_{11} r_{(1)}^{|i_1 - i_2|^2}$ & $r_{(1)}^{|i_1 - i_2|}$\\
(Mat($\infty$), Mat($1.5$)) & $\sigma_{11} r_{(1)}^{|i_1 - i_2|^2}$ & $\left[ 1 - |i_1 - i_2| log\left(r_{(1)} \right) \right]r_{(1)}^{|i_1 -i_2|}$\\
(Mat($0.5$), Mat($0.5$)) \tnote{3} & $\sigma_{11} r_{(1)}^{|i_1 - i_2|}$ & $r_{(1)}^{2|i_1 - i_2|}$\\
\hline 
\end{tabular}
\begin{tablenotes}
       \item [3] $\boldsymbol{V}_1$ and $\boldsymbol{V}_R$ have Mat($0.5$) structure with correlation between adjacent periods as $r_{(1)}$ and $r_{(1)}^2$, respectively. This structure has been named as NS1 in \citet{li2015} and \citet{dasgupta2022optimal}.
     \end{tablenotes}
 \end{threeparttable}
\end{table}

\noindent To aid the reader, we use an illustration based on a crossover design with $t=p=n=3$ and  $g=2$. In this case we write $\boldsymbol{\mathit{\Sigma}}$ as
\begin{align*}
\boldsymbol{\mathit{\Sigma}} = 
\begin{bmatrix}
\boldsymbol{G}_{11} & \zero_{3 \times 3} & \zero_{3 \times 3} & \boldsymbol{G}_{12} & \zero_{3 \times 3} & \zero_{3 \times 3}\\
\zero_{3 \times 3} & \boldsymbol{G}_{11} & \zero_{3 \times 3} & \zero_{3 \times 3} & \boldsymbol{G}_{12} & \zero_{3 \times 3}\\
\zero_{3 \times 3} & \zero_{3 \times 3} &  \boldsymbol{G}_{11} & \zero_{3 \times 3} & \zero_{3 \times 3} &  \boldsymbol{G}_{12}\\
\boldsymbol{G}_{12} & \zero_{3 \times 3} & \zero_{3 \times 3} & \boldsymbol{G}_{22} & \zero_{3 \times 3} & \zero_{3 \times 3}\\
\zero_{3 \times 3} & \boldsymbol{G}_{12} & \zero_{3 \times 3} & \zero_{3 \times 3} & \boldsymbol{G}_{22} & \zero_{3 \times 3}\\
\zero_{3 \times 3} & \zero_{3 \times 3} &  \boldsymbol{G}_{12} & \zero_{3 \times 3} & \zero_{3 \times 3} &  \boldsymbol{G}_{22}
\end{bmatrix},
\end{align*}
where under the proportional covariance, for $\boldsymbol{V}$ with Mat($\infty$) type structure, we have
\begin{align*}
\boldsymbol{G}_{lm}= 
\begin{bmatrix}
\gamma_{l,m} &  \gamma_{l,m} r_{(1)} & \gamma_{l,m} r_{(1)}^4\\
 \gamma_{l,m} r_{(1)} & \gamma_{l,m} & \gamma_{l,m} r_{(1)}\\
\gamma_{l,m} r_{(1)}^4 & \gamma_{l,m} r_{(1)} & \gamma_{l,m}
\end{bmatrix}.
\end{align*}
Here, $0<r_{(1)}<1$, $\gamma_{l,m}>0$, $1 \leq l < m \leq 2$, and $\gamma_{11} \gamma_{22} > \gamma^2_{12}$. We can see that the matrices $\boldsymbol{G}_{11}$, $\boldsymbol{G}_{12}$ and $\boldsymbol{G}_{22}$ are proportional to each other. Under the generalized Markov-type covariance, for the NS1 structure of $\boldsymbol{\mathit{\Sigma}}$, we have
\begin{multline*}
\boldsymbol{G}_{11}= \sigma_{11}
\begin{bmatrix}
1 &  r_{(1)} & r_{(1)}^2\\
 r_{(1)} & 1 & r_{(1)}\\
r_{(1)}^2 &  r_{(1)} & 1
\end{bmatrix},~
\boldsymbol{G}_{12}= \rho \sqrt{\sigma_{11} \sigma_{22}}
\begin{bmatrix}
1 &   r_{(1)} &  r_{(1)}^2\\
  r_{(1)} & 1 &  r_{(1)}\\
 r_{(1)}^2 & r_{(1)} & 1
\end{bmatrix} \text{ and}\\
\boldsymbol{G}_{22}= \sigma_{22}
\begin{bmatrix}
1 &   r_{(1)} \left( \rho^2 +r_{(1)} \left(1-\rho^2 \right) \right) &  r_{(1)}^2 \left( \rho^2 +r^2_{(1)} \left(1-\rho^2 \right) \right)\\
  r_{(1)} \left( \rho^2 +r_{(1)} \left(1-\rho^2 \right) \right) & 1 &  r_{(1)} \left( \rho^2 +r_{(1)} \left(1-\rho^2 \right) \right)\\
 r_{(1)}^2 \left( \rho^2 +r^2_{(1)} \left(1-\rho^2 \right) \right) &  r_{(1)} \left( \rho^2 +r_{(1)} \left(1-\rho^2 \right) \right) & 1
\end{bmatrix},
\end{multline*}
where $\sigma_{11}, \sigma_{22}>0$, $0<|\rho|<1$ and $0<r_{(1)}<1$. It can be easily observed that the matrices $\boldsymbol{G}_{11}$ and $\boldsymbol{G}_{12}$ are proportional to each other, but for $r_{(1)} \neq 0$, the matrix $\boldsymbol{G}_{22}$ is not proportional to $\boldsymbol{G}_{11}$.\par

\noindent The choices presented in \autoref{table_prop} and \autoref{table_markov} are not exhaustive. We could have chosen any matrix for $\boldsymbol{V}$ (proportional case) as long as it is positive definite and symmetric, while in the Markovian case, $\boldsymbol{V}_1$ needs to be positive definite and symmetric while $\boldsymbol{V}_R$ can be any valid correlation matrix.

\section{Search for efficient designs}
\label{effi}
We next determine a trace optimal/efficient design for the direct effects in model \eqref{model2}. Using the idea as provided by \citet{bate2006construction}, we say that a design $d^* \in \mathcal{D}$, where $\mathcal{D}$ is a subclass of designs, is a trace optimal/efficient design for the parameters of interest over $\mathcal{D}$ if $d^*$ maximizes the trace of the corresponding information matrix over $\mathcal{D}$.\par

\noindent For any design $d \in \Omega_{t,n,p}$, let $\boldsymbol{C}_{d(s1)}$ and $\boldsymbol{C}_{d(s2)}$ represent the information matrices of the direct effects for the proportional and generalized Markov-type structures, respectively. \autoref{lemma5-c4-i} and \autoref{lemma5-c4} show that the two information matrices have the following representations:
\begin{align}
\boldsymbol{C}_{d(s1)} &=  \boldsymbol{C}_{d(s1)(11)} - \boldsymbol{C}_{{d(s1)}(12)} \boldsymbol{C}^{-}_{{d (s1)}(22)} \boldsymbol{C}_{{d (s1)}(21)},\label{markov-t25c-1}\\
\boldsymbol{C}_{d(s2)} &= \boldsymbol{C}_{{d(s2)}(11)(1)} - \boldsymbol{C}_{{d(s2)}(12)(1)} \boldsymbol{C}^{-}_{{d(s2)}(22)(1)} \boldsymbol{C}_{{d(s2)}(21)(1)}, \label{markov-t26c-1}
\end{align}
where $ \boldsymbol{C}_{d(s1)(11)} = \left( \identity_g \otimes \boldsymbol{T}_d^{'} \right) \boldsymbol{A}^* \left( \identity_g \otimes \boldsymbol{T}_d\right)$, $\boldsymbol{C}_{d(s1)(12)} = \boldsymbol{C}_{d(s1)(21)}^{'} = \left( \identity_g \otimes \boldsymbol{T}_d^{'} \right) \boldsymbol{A}^* \left(\identity_g \otimes \boldsymbol{F}_d \right)$, \\$\boldsymbol{C}_{{d(s1)}(22)} = \left(\identity_g \otimes \boldsymbol{F}^{'}_d\right) \boldsymbol{A}^* \left(\identity_g \otimes \boldsymbol{F}_d \right)$, $\boldsymbol{C}_{{d(s2)}(11)(1)} = \left( \identity_2 \otimes \boldsymbol{T}_d^{'} \right) \boldsymbol{A}^* \left( \identity_2 \otimes \boldsymbol{T}_d\right)$, $\boldsymbol{C}_{{d(s2)}(12)(1)}=$\\$ \boldsymbol{C}^{'}_{{d(s2)}(21)(1)}=\left( \identity_2 \otimes \boldsymbol{T}_d^{'} \right) \boldsymbol{A}^* \left(\identity_2 \otimes \boldsymbol{F}_d \hatmat_t\right)$, $\boldsymbol{C}_{{d(s2)}(22)(1)} = \left(\identity_2 \otimes \boldsymbol{F}^{'}_d \hatmat_t\right) \boldsymbol{A}^* \left(\identity_2 \otimes \boldsymbol{F}_d \hatmat_t\right)$, and the matrix  $\boldsymbol{A}^*$ is  $\boldsymbol{\mathit{\Sigma}}^{-1/2} pr^{\perp} \left(\boldsymbol{\mathit{\Sigma}}^{-1/2} \boldsymbol{Z_1}
\right) \boldsymbol{\mathit{\Sigma}}^{-1/2}$. Here, $\hatmat_t = \identity_t - \frac{1}{t} \vecone_{t} \vecone{t}^{'}$. Note that $\boldsymbol{A}^*$ depends on the structure of $\boldsymbol{\mathit{\Sigma}}$ and $\boldsymbol{C}_{d(s1)}$ is a $gt \times gt$ matrix, while $\boldsymbol{C}_{d(s2)}$ is a $2t \times 2t$ matrix.

\subsection{Proportional covariance structure}
\label{proportional covariance}
Under the proportional covariance, we prove that for the $g>1$ setup, a design represented by $OA_{I} \left( n=\lambda t \left(t-1 \right), p=t, t, 2 \right)$, where $\lambda$ is a positive integer and $t\geq 3$, maximizes the $tr \left(  \boldsymbol{C}_{d(s1)} \right)$ and hence is a trace optimal/efficient design.\par

\noindent We know from \eqref{markov-t25c-1}, that the information matrix for the direct effects depends on the expression of $\boldsymbol{A}^*$. However, till date, there has been no work discussing the explicit expression of $\boldsymbol{A}^*$ for bivariate/multivariate responses in a crossover framework. Using \autoref{prop-lemma3-c4-1}, the expression of the matrix $\boldsymbol{A}^*$ under the proportional structure for the $g>1$ case is obtained as
\begin{align*}
\boldsymbol{A}^* 
&= 
\boldsymbol{\mathit{\Gamma}}^{-1} \otimes \left(\hatmat_n \otimes \boldsymbol{V}^* \right),
\end{align*}
where $\boldsymbol{V}^* = \boldsymbol{V}^{-1} - \left( \vecone_{p}^{'} \boldsymbol{V}^{-1} \vecone_p \right)^{-1} \boldsymbol{V}^{-1} \matone_{p \times p} \boldsymbol{V}^{-1}$, $\matone_{p \times p} = \vecone_{p} \vecone_{p}^{'}$ and $\hatmat_n = \identity_n - \frac{1}{n} \vecone_{n} \vecone_{n}^{'}$. In the next theorem, the expression of $\boldsymbol{A}^*$ is used to obtain the information matrix for the direct effects under the proportional structure for the multivariate setup.
\renewcommand{\subsectionautorefname}{Appendix}
\begin{thm}
\label{prop-lemma4-c4-1}
The information matrix for the direct effects under $g>1$ case can be expressed as
\begin{align}
\boldsymbol{C}_{d(s1)} &= \boldsymbol{\mathit{\Gamma}}^{-1} \otimes \boldsymbol{C}_{d(uni)},
\label{prop-t20c-1}
\end{align}
where $\boldsymbol{C}_{d(uni)} = \boldsymbol{C}_{d(uni)(11)} - \boldsymbol{C}_{d(uni)(12)} \boldsymbol{C}_{d(uni)(22)}^{-} \boldsymbol{C}_{d(uni)(21)}$, $\boldsymbol{C}_{d(uni)(11)} = \boldsymbol{T}_d^{'} \left( \hatmat_n \otimes \boldsymbol{V}^* \right) \boldsymbol{T}_d$, \\$\boldsymbol{C}_{d(uni)(12)} = \boldsymbol{C}_{d(uni)(21)}^{'} = \boldsymbol{T}_d^{'} \left( \hatmat_n \otimes \boldsymbol{V}^* \right) \boldsymbol{F}_d $, and $\boldsymbol{C}_{d(uni)(22)} = \boldsymbol{F}_d^{'} \left( \hatmat_n \otimes \boldsymbol{V}^* \right) \boldsymbol{F}_d $. Here  $\boldsymbol{V}^* = \boldsymbol{V}^{-1} - \left( \vecone_{p}^{'} \boldsymbol{V}^{-1} \vecone_p \right)^{-1} \boldsymbol{V}^{-1} \matone_{p \times p} \boldsymbol{V}^{-1}$.
\end{thm}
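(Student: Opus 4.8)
The plan is to substitute the closed form $\boldsymbol{A}^* = \boldsymbol{\mathit{\Gamma}}^{-1} \otimes \left( \hatmat_n \otimes \boldsymbol{V}^* \right)$ supplied by \autoref{prop-lemma3-c4-1} into each of the four component matrices of \eqref{markov-t25c-1}, and then repeatedly apply the mixed-product rule $\left( \boldsymbol{A} \otimes \boldsymbol{B} \right)\left( \boldsymbol{C} \otimes \boldsymbol{D} \right) = \left( \boldsymbol{A}\boldsymbol{C} \right) \otimes \left( \boldsymbol{B}\boldsymbol{D} \right)$ to pull the factor $\boldsymbol{\mathit{\Gamma}}^{-1}$ out in front of a purely univariate Kronecker factor. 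For the $(1,1)$ block this gives
\begin{align*}
\boldsymbol{C}_{d(s1)(11)} &= \left( \identity_g \otimes \boldsymbol{T}_d^{'} \right)\left( \boldsymbol{\mathit{\Gamma}}^{-1} \otimes \left( \hatmat_n \otimes \boldsymbol{V}^* \right) \right)\left( \identity_g \otimes \boldsymbol{T}_d \right) = \boldsymbol{\mathit{\Gamma}}^{-1} \otimes \boldsymbol{C}_{d(uni)(11)},
\end{align*}
since the first Kronecker slot contracts as $\identity_g \boldsymbol{\mathit{\Gamma}}^{-1} \identity_g = \boldsymbol{\mathit{\Gamma}}^{-1}$ while the second leaves $\boldsymbol{T}_d^{'}\left( \hatmat_n \otimes \boldsymbol{V}^* \right)\boldsymbol{T}_d = \boldsymbol{C}_{d(uni)(11)}$. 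The identical manipulation on the remaining blocks yields $\boldsymbol{C}_{d(s1)(12)} = \boldsymbol{\mathit{\Gamma}}^{-1} \otimes \boldsymbol{C}_{d(uni)(12)}$, $\boldsymbol{C}_{d(s1)(21)} = \boldsymbol{\mathit{\Gamma}}^{-1} \otimes \boldsymbol{C}_{d(uni)(21)}$ and $\boldsymbol{C}_{d(s1)(22)} = \boldsymbol{\mathit{\Gamma}}^{-1} \otimes \boldsymbol{C}_{d(uni)(22)}$.

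Next I would handle the generalized inverse $\boldsymbol{C}^{-}_{d(s1)(22)}$. Because $\boldsymbol{\mathit{\Gamma}}^{-1}$ is nonsingular, the matrix $\boldsymbol{\mathit{\Gamma}} \otimes \boldsymbol{C}^{-}_{d(uni)(22)}$ is a legitimate generalized inverse of $\boldsymbol{\mathit{\Gamma}}^{-1} \otimes \boldsymbol{C}_{d(uni)(22)}$, as one checks directly from the defining relation $\boldsymbol{M}\boldsymbol{M}^{-}\boldsymbol{M} = \boldsymbol{M}$ together with the mixed-product rule. Adopting this choice and contracting the three Kronecker factors, with the $\boldsymbol{\mathit{\Gamma}}$-slot collapsing as $\boldsymbol{\mathit{\Gamma}}^{-1}\boldsymbol{\mathit{\Gamma}}\boldsymbol{\mathit{\Gamma}}^{-1} = \boldsymbol{\mathit{\Gamma}}^{-1}$, the product term becomes
\begin{align*}
\boldsymbol{C}_{d(s1)(12)} \boldsymbol{C}^{-}_{d(s1)(22)} \boldsymbol{C}_{d(s1)(21)} = \boldsymbol{\mathit{\Gamma}}^{-1} \otimes \left( \boldsymbol{C}_{d(uni)(12)} \boldsymbol{C}^{-}_{d(uni)(22)} \boldsymbol{C}_{d(uni)(21)} \right).
\end{align*}
Subtracting this from the $(1,1)$ block and factoring $\boldsymbol{\mathit{\Gamma}}^{-1}$ out of the Kronecker sum then delivers exactly \eqref{prop-t20c-1}.

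The main obstacle, and the one point I would justify explicitly rather than wave at, is that the reduced matrix in \eqref{markov-t25c-1} must be independent of the particular generalized inverse used for $\boldsymbol{C}_{d(s1)(22)}$; otherwise the convenient choice $\boldsymbol{\mathit{\Gamma}} \otimes \boldsymbol{C}^{-}_{d(uni)(22)}$ would not faithfully represent $\boldsymbol{C}_{d(s1)}$. This invariance follows from the fact that $\boldsymbol{A}^*$ is symmetric and nonnegative definite, being $\boldsymbol{\mathit{\Sigma}}^{-1/2}$ times an orthogonal projector times $\boldsymbol{\mathit{\Sigma}}^{-1/2}$; hence the full partitioned matrix with blocks $\boldsymbol{C}_{d(s1)(\cdot\cdot)}$ is nonnegative definite, so the column space of $\boldsymbol{C}_{d(s1)(21)}$ is contained in that of $\boldsymbol{C}_{d(s1)(22)}$ and the resulting Schur complement is invariant to the choice of $\boldsymbol{C}^{-}_{d(s1)(22)}$. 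Granting this standard fact, the computation above is unambiguous and the representation \eqref{prop-t20c-1} holds.
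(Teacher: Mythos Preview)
Your proposal is correct and follows essentially the same route as the paper: substitute the Kronecker form of $\boldsymbol{A}^*$ from \autoref{prop-lemma3-c4-1} into the block decomposition \eqref{markov-t25c-1} and pull out the common factor $\boldsymbol{\mathit{\Gamma}}^{-1}$. The paper's own proof is much terser---it simply writes down the factored Schur complement in one line---whereas you spell out the mixed-product manipulations and, more importantly, justify why the particular choice $\boldsymbol{\mathit{\Gamma}} \otimes \boldsymbol{C}^{-}_{d(uni)(22)}$ is admissible by invoking the column-space containment that comes from nonnegative definiteness of the full block matrix; the paper does not address this point at all.
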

\begin{proof}
The details of the proof are given in \autoref{proportion-information matrix}.
\end{proof}

It should be noted that $\boldsymbol{C}_{d(uni)}$ is the information matrix for the direct effects when we have a univariate version of the model \eqref{model1}, i.e., $g=1$. To check if $OA_{I} \left( n=\lambda t \left(t-1 \right), p=t,\right.$\\$\left. t, 2 \right)$, where $\lambda$ is a positive integer and $t \geq 3$, is universally optimal, we explore complete symmetricity of the corresponding information matrix for $g>1$ case.
\begin{remarks}
Let $d^*$ be a design given by $OA_{I} \left( n=\lambda t \left(t-1 \right), p=t, t, 2 \right)$, where $\lambda$ is a positive integer and $t \geq 3$. Then for $g>1$, the information matrix for the direct effects corresponding to $d^*$, $\boldsymbol{C}_{d^*(s1)}$, is not completely symmetric.
\label{prop-remark-1}
\end{remarks}
\begin{proof}
The proof is given in \autoref{proportion-information matrix}.
\end{proof}

\noindent From \autoref{prop-remark-1}, it is clear that the usual sufficient condition of complete symmetricity similar to the one given by \citet{Kiefer1975ConstructionIi} for finding a universally optimal design is not applicable here. Also, to satisfy the more general conditions similar to those given by \citet{yen1986conditions}, for $g>1$ case, we need
\begin{align*}
\begin{split}
\frac{\gamma^{(11)} |\boldsymbol{E} | }{ t e_{22}  }   &= \cdots = \frac{\gamma^{(gg)} |\boldsymbol{E} | }{ t e_{22}  }, \text{ and}\\
\frac{\gamma^{(kk)} |\boldsymbol{E} | }{t e_{22}  } &=  \frac{\gamma^{(kk')} |\boldsymbol{E} | }{t e_{22}  },
\end{split}
\end{align*}
where $\gamma^{(kk')}$ is the $(k,k')^{th}$ element of the matrix $\boldsymbol{\mathit{\Gamma}}^{-1}$. However, this condition is also seen to be violated for the $g>1$ setup (see Proof of \autoref{prop-remark-1}). Note, for a multivariate setup with no correlation between responses and a much simpler covariance structure, \citet{niphadkar2023universally} have successfully obtained universally optimal designs. However, this is not true when the responses are correlated. Hence for the multivariate response case, we resort to identifying an efficient design in the next theorem.

\begin{thm}
Let $d^* \in \mathcal{D}^{(1)}_{t,n=\lambda t (t-1),p=t}$ be a design given by $OA_{I} \left( n=\lambda t \left(t-1 \right), p=t, t, 2 \right)$, where $\mathcal{D}^{(1)}_{t,n=\lambda t (t-1),p=t}$ is a class of binary designs with $p=t$, $\lambda$ is a positive integer and $t \geq 3$. Then $d^*$ is a trace optimal/efficient design for the direct effects over $\mathcal{D}^{(1)}_{t,n=\lambda t (t-1),p=t}$ for the $g>1$ case.
\label{prop-thm-1}
\end{thm}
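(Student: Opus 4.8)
The plan is to reduce the multivariate trace to a univariate one and then establish trace optimality of $d^*$ in the univariate model. By the definition of trace optimality recalled in Section~\ref{effi}, it suffices to show that $d^*$ maximizes $tr(\boldsymbol{C}_{d(s1)})$ over $\mathcal{D}^{(1)}_{t,n=\lambda t(t-1),p=t}$. Invoking the representation $\boldsymbol{C}_{d(s1)} = \boldsymbol{\mathit{\Gamma}}^{-1}\otimes\boldsymbol{C}_{d(uni)}$ from \autoref{prop-lemma4-c4-1} together with the Kronecker identity $tr(\boldsymbol{A}\otimes\boldsymbol{B}) = tr(\boldsymbol{A})\,tr(\boldsymbol{B})$, I would write $tr(\boldsymbol{C}_{d(s1)}) = tr(\boldsymbol{\mathit{\Gamma}}^{-1})\,tr(\boldsymbol{C}_{d(uni)})$. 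Since $\boldsymbol{\mathit{\Gamma}}$ is positive definite, $tr(\boldsymbol{\mathit{\Gamma}}^{-1})>0$ is a strictly positive constant not depending on $d$. Hence maximizing $tr(\boldsymbol{C}_{d(s1)})$ over $\mathcal{D}^{(1)}$ is equivalent to maximizing $tr(\boldsymbol{C}_{d(uni)})$, i.e.\ to proving trace optimality of $d^*$ in the univariate ($g=1$) model, and the entire problem is thereby freed of $\boldsymbol{\mathit{\Gamma}}$.

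Next I would bound the leading block. Writing $tr(\boldsymbol{C}_{d(uni)}) = tr(\boldsymbol{C}_{d(uni)(11)}) - tr(\boldsymbol{C}_{d(uni)(12)}\boldsymbol{C}_{d(uni)(22)}^{-}\boldsymbol{C}_{d(uni)(21)})$ and expanding $\hatmat_n = \identity_n - \tfrac1n\matone_{n\times n}$, note that for any binary design in $\mathcal{D}^{(1)}$ each subject receives a permutation of the $t=p$ treatments, so each per-subject incidence block is a permutation matrix and $tr(\boldsymbol{C}_{d(uni)(11)}) = n\,tr(\boldsymbol{V}^*) - \tfrac1n\,tr(\boldsymbol{T}_{d\cdot}'\boldsymbol{V}^*\boldsymbol{T}_{d\cdot})$, where $\boldsymbol{T}_{d\cdot}$ sums the incidence blocks over subjects. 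Because $\boldsymbol{V}^* = \boldsymbol{V}^{-1} - (\vecone_p'\boldsymbol{V}^{-1}\vecone_p)^{-1}\boldsymbol{V}^{-1}\matone_{p\times p}\boldsymbol{V}^{-1}$ is positive semidefinite with $\boldsymbol{V}^*\vecone_p = \zero$, the subtracted quantity is nonnegative and vanishes exactly when $\boldsymbol{V}^*\boldsymbol{T}_{d\cdot} = \zero$; the strength-$2$ uniformity of the orthogonal array forces $\boldsymbol{T}_{d^*\cdot} = (n/t)\matone_{p\times t}$, so this occurs at $d^*$ and $tr(\boldsymbol{C}_{d^*(uni)(11)}) = n\,tr(\boldsymbol{V}^*)$ is the maximal value of the leading term over $\mathcal{D}^{(1)}$.

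The main obstacle is the direct--carryover correction, which is positive-semidefinite-valued and so yields only the crude bound $tr(\boldsymbol{C}_{d(uni)}) \le n\,tr(\boldsymbol{V}^*)$; since this correction does not in general vanish for a crossover design, that bound is not tight at $d^*$, and a sharper, design-aware upper bound on the full Schur complement is required. The route I would take is to exploit the Type~$I$, strength-$2$ structure: at $d^*$ the consecutive-period pair balance makes the preceding-treatment (adjacency) counts constant off the diagonal, so $\boldsymbol{C}_{d^*(uni)(11)}$, $\boldsymbol{C}_{d^*(uni)(12)}$ and $\boldsymbol{C}_{d^*(uni)(22)}$ are each completely symmetric on the treatment-contrast space, $\boldsymbol{C}_{d^*(uni)}$ is completely symmetric, and $tr(\boldsymbol{C}_{d^*(uni)})$ is computable in closed form. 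For an arbitrary binary $d$ I would then express $tr(\boldsymbol{C}_{d(uni)})$ through the fixed first- and second-order margins of a binary design (the treatment-by-period incidences and the treatment-by-treatment adjacency counts that determine $\boldsymbol{C}_{d(uni)(12)}$ and $\boldsymbol{C}_{d(uni)(22)}$) and bound it above via a psd-ordering / Cauchy--Schwarz step whose equality case is precisely the completely symmetric configuration realized by $d^*$. The genuinely hard part is verifying that the \emph{same} design saturates both the leading-term bound of the previous paragraph and this correction bound simultaneously; this parallels the reduction behind \autoref{prop-lemma4-c4-1} and the symmetrization underlying the companion universal-optimality result of \citet{niphadkar2023universally}, now carried out with the correlated-error matrix $\boldsymbol{V}^*$.

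Combining the two bounds shows that $tr(\boldsymbol{C}_{d(uni)})$ is maximized over $\mathcal{D}^{(1)}$ at $d^*$; by the factorization of the first paragraph, $tr(\boldsymbol{C}_{d(s1)}) = tr(\boldsymbol{\mathit{\Gamma}}^{-1})\,tr(\boldsymbol{C}_{d(uni)})$ is then also maximized at $d^*$, establishing that $d^*$ is trace optimal/efficient for the direct effects over $\mathcal{D}^{(1)}_{t,n=\lambda t(t-1),p=t}$ for every $g>1$, as claimed.
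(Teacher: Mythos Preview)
Your first paragraph is exactly the paper's argument: use \autoref{prop-lemma4-c4-1} and $tr(\boldsymbol{A}\otimes\boldsymbol{B})=tr(\boldsymbol{A})\,tr(\boldsymbol{B})$ to write $tr(\boldsymbol{C}_{d(s1)})=tr(\boldsymbol{\mathit{\Gamma}}^{-1})\,tr(\boldsymbol{C}_{d(uni)})$, note $tr(\boldsymbol{\mathit{\Gamma}}^{-1})>0$ since $\boldsymbol{\mathit{\Gamma}}$ is positive definite, and reduce to maximizing $tr(\boldsymbol{C}_{d(uni)})$ over $\mathcal{D}^{(1)}_{t,n=\lambda t(t-1),p=t}$.

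Where you diverge from the paper is in how you handle the univariate step. The paper does not prove it at all: it simply invokes \citet{Kunert2000OptimalityErrors}, who showed that $d^*=OA_I(n=\lambda t(t-1),p=t,t,2)$ is \emph{universally} optimal for direct effects over this class in the univariate model with general positive definite $\boldsymbol{V}$. Since universal optimality implies maximization of the trace, $d^*$ maximizes $tr(\boldsymbol{C}_{d(uni)})$, and the theorem follows in two lines.

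Your paragraphs 2--3, by contrast, attempt to re-derive the Kunert--Martin result from scratch. The leading-term bound in paragraph~2 is fine, but paragraph~3 is where the real difficulty lies, and you do not close it: you describe a ``psd-ordering / Cauchy--Schwarz step whose equality case is precisely the completely symmetric configuration realized by $d^*$'' without actually producing such an inequality or verifying simultaneous saturation. That is exactly the nontrivial content of \citet{Kunert2000OptimalityErrors} (and the underlying \citet{Martin1998Variance-balancedObservations} machinery), and it does not fall out of a generic Cauchy--Schwarz argument. So as written your proposal has a genuine gap at the point you yourself flag as ``the genuinely hard part''; the fix is simply to cite the existing univariate result rather than rebuild it.
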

\begin{proof}
\label{prop-ss3}
Here, $d^* \in \mathcal{D}^{(1)}_{t,n=\lambda t (t-1),p=t}$ is a design given by $OA_{I} \left( n=\lambda t \left(t-1 \right), p=t, t, 2 \right)$, where \\$\mathcal{D}^{(1)}_{t,n=\lambda t (t-1),p=t}$ is a class of binary designs with $p=t$, $\lambda$ is a positive integer and $t \geq 3$. It has been shown by \citet{Kunert2000OptimalityErrors} that $d^*$ is universally optimal for the direct effects over $\mathcal{D}^{(1)}_{t,n=\lambda t (t-1),p=t}$ in the univariate response case. Note that the universal optimality of a design over a subclass of designs implies that the design maximizes the trace of the corresponding information matrix over the same subclass of designs. Hence, $d^*$ maximizes $tr \left( \boldsymbol{C}_{d(uni)} \right)$ over $\mathcal{D}^{(1)}_{t,n=\lambda t (t-1),p=t}$, where $\boldsymbol{C}_{d(uni)}$ is the information matrix of the direct effects for the univariate response case. For $g>1$, using the expression of $\boldsymbol{C}_{d(s1)}$ from \autoref{prop-lemma4-c4-1}, we get
\begin{align}
tr \left( \boldsymbol{C}_{d(s1)} \right) &= tr \left( \boldsymbol{\mathit{\Gamma}}^{-1} \right) tr \left( \boldsymbol{C}_{d(uni)} \right).
\label{markov-t27c-1}
\end{align}
Here, $\boldsymbol{\mathit{\Gamma}}^{-1}$ is a positive definite matrix and thus $tr \left( \boldsymbol{\mathit{\Gamma}}^{-1} \right) > 0$. So from \eqref{markov-t27c-1}, we get that for $g>1$, $d^*$ maximizes $tr \left( \boldsymbol{C}_{d(s1)} \right)$ over $\mathcal{D}^{(1)}_{t,n=\lambda t (t-1),p=t}$. Thus $d^*$ is trace optimal/efficient for the direct effects over $\mathcal{D}^{(1)}_{t,n=\lambda t (t-1),p=t}$ for the $g>1$ case.
\end{proof}

\subsection{Generalized Markov-type covariance structure}
For the Markovian structure, we are able to show that the $OA_{I} \left( n=\lambda t \left(t-1 \right), p=t, t, 2 \right)$, where $\lambda$ is a positive integer and $t\geq 3$, is a highly efficient design. To evaluate the trace optimality/efficiency of these $OA_{I} \left( n=\lambda t \left(t-1 \right), p=t, t, 2 \right)$ designs, we obtain an upper bound of $tr \left( \boldsymbol{C}_{d(s2)} \right)$, and state a necessary and sufficient condition for  a $OA_{I} \left( n=\lambda t \left(t-1 \right), p=t, t, 2 \right)$, to attain this upper bound of $tr \left( \boldsymbol{C}_{d(s2)} \right)$. Note, that if this upper bound is attained then the design maximizes $tr \left(  \boldsymbol{C}_{d(s2)} \right)$ and hence is an efficient design. Also, for  various choices of $\boldsymbol{V}_1$, $\boldsymbol{V}_R$ and $\rho$,  we study the trace of the information matrix for the $OA_{I} \left( n=\lambda t \left(t-1 \right), p=t, t, 2 \right)$ and the derived upper bound.\par

\noindent Proceeding as before, we see that the process of obtaining  $\boldsymbol{A}^*$ is more complicated as compared to the proportional structure. This is mainly because under the Markovian structure, $\boldsymbol{\mathit{\Sigma}}_{22}$ is not proportional to $\boldsymbol{\mathit{\Sigma}}_{11}$. By laborious calculation, (see \autoref{lemma3-c4-1}), we obtain the expression of the matrix $\boldsymbol{A}^*$ under the generalized Markov-type structure as
\begin{align*}
\boldsymbol{A}^* 
&= 
\begin{bmatrix}
\hatmat_n \otimes \boldsymbol{\mathit{\Omega}}_1 & -\hatmat_n \otimes \boldsymbol{\mathit{\Omega}}_2\\
-\hatmat_n \otimes \boldsymbol{\mathit{\Omega}}_2 & \hatmat_n \otimes \boldsymbol{\mathit{\Omega}}_4
\end{bmatrix},
\end{align*}
where $\boldsymbol{\mathit{\Omega}}_1 = \boldsymbol{V}_1^* + \frac{\bar{\rho}^2}{\sigma_{12}} \boldsymbol{V}_R^*$, $\boldsymbol{\mathit{\Omega}}_2 =  \frac{\bar{\rho}}{\sigma_{12}} \boldsymbol{V}_R^*$ and $\boldsymbol{\mathit{\Omega}}_4 = \frac{1}{\sigma_{12}} \boldsymbol{V}_R^*$. Here, $\boldsymbol{V}_1^* = \boldsymbol{V}_1^{-1} - \left( \vecone_{p}^{'} \boldsymbol{V}_1^{-1} \vecone_p \right)^{-1} \boldsymbol{V}_1^{-1} \times$\\$\matone_{p \times p} \boldsymbol{V}_1^{-1}$, $\boldsymbol{V}_R^* = \boldsymbol{V}_R^{-1} - \left( \vecone_{p}^{'} \boldsymbol{V}_R^{-1} \vecone_p \right)^{-1} \boldsymbol{V}_R^{-1} \matone_{p \times p} \boldsymbol{V}_R^{-1}$, $\hatmat_n = \identity_n - \frac{1}{n} \vecone_{n} \vecone_{n}^{'}$, $\matone_{p \times p} = \vecone_{p} \vecone_{p}^{'}$, $\bar{\rho} = \rho \sqrt{\frac{\sigma_{22}}{\sigma_{11}}}$ and $\sigma_{12} = \sigma_{22} \left(1 
- \rho^2 \right)$.\par

\noindent Let $d^*$ be a design given by $OA_{I} \left( n=\lambda t \left(t-1 \right), p=t, t, 2 \right)$, where $\lambda$ is a positive integer and $t \geq 3$. Using the above expression of $\boldsymbol{A}^*$, the information matrix for the direct effects and its trace under the generalized Markov-type structure, corresponding to $d^*$, are obtained.
\begin{thm}
\label{thm5-c4}
Let $d^*$ be a design given by $OA_{I} \left( n=\lambda t \left(t-1 \right), p=t, t, 2 \right)$, where $\lambda$ is a positive integer and $t \geq 3$. Then under the generalized Markov-type structure, the information matrix for the direct effects can be expressed as
\begin{align}
\boldsymbol{{C}}_{d^*(s2)} &= \frac{n}{t-1} 
\begin{bmatrix}
\boldsymbol{\mathit{\Lambda}}_1 & \boldsymbol{\mathit{\Lambda}}_2\\
\boldsymbol{\mathit{\Lambda}}_2 & \boldsymbol{\mathit{\Lambda}}_4
\end{bmatrix},
\label{p37c}
\end{align}
where the matrices $\boldsymbol{\mathit{\Lambda}}_1$, $\boldsymbol{\mathit{\Lambda}}_2$ and $\boldsymbol{\mathit{\Lambda}}_4$ are given as 
\begin{align}
\begin{split}
\boldsymbol{\mathit{\Lambda}}_1 &= \left( tr \left( \boldsymbol{V}_1^* \right) + \frac{\bar{\rho}^2}{\sigma_{12}} tr \left( \boldsymbol{V}_R^* \right) - \frac{\left(  tr \left( \boldsymbol{V}_1^* \boldsymbol{\psi} \right) \right)^2}{tr \left( \hatmat_p \boldsymbol{\psi}^{'} \boldsymbol{V}_1^* \boldsymbol{\psi} \right) } - \frac{ \bar{\rho}^2}{\sigma_{12}} \frac{\left(tr \left( \boldsymbol{V}_R^* \boldsymbol{\psi} \right) \right)^2}{tr \left( \hatmat_p \boldsymbol{\psi}^{'} \boldsymbol{V}_R^* \boldsymbol{\psi} \right)} \right) \hatmat_t,\\
\boldsymbol{\mathit{\Lambda}}_2 &= - \frac{ \bar{\rho} }{\sigma_{12}} \left( tr \left(\boldsymbol{V}_R^* \right) -  \frac{\left(tr \left( \boldsymbol{V}_R^* \boldsymbol{\psi} \right) \right)^2}{tr \left( \hatmat_p \boldsymbol{\psi}^{'} \boldsymbol{V}_R^* \boldsymbol{\psi} \right)} \right) \hatmat_t, \text{ and}\\
\boldsymbol{\mathit{\Lambda}}_4 &= \frac{1}{\sigma_{12}} \left(  tr \left( \boldsymbol{V}_R^* \right) - \frac{\left(tr \left( \boldsymbol{V}_R^* \boldsymbol{\psi} \right) \right)^2}{tr \left( \hatmat_p \boldsymbol{\psi}^{'} \boldsymbol{V}_R^* \boldsymbol{\psi} \right)} \right) \hatmat_t.
\end{split}
\label{markov-t30c-1}
\end{align}
Further, we have
\begin{align}
tr \left( \boldsymbol{{C}}_{d^*(s2)} \right) &= n \left( tr \left( \boldsymbol{V}_1^* \right) + \frac{1 + \bar{\rho}^2}{\sigma_{12}} tr \left( \boldsymbol{V}_R^* \right) -\frac{\left(  tr \left( \boldsymbol{V}_1^* \boldsymbol{\psi} \right) \right)^2}{tr \left( \hatmat_p \boldsymbol{\psi}^{'} \boldsymbol{V}_1^* \boldsymbol{\psi} \right) } - \frac{1 + \bar{\rho}^2}{\sigma_{12}} \frac{\left(tr \left( \boldsymbol{V}_R^* \boldsymbol{\psi} \right) \right)^2}{tr \left( \hatmat_p \boldsymbol{\psi}^{'} \boldsymbol{V}_R^* \boldsymbol{\psi} \right)} \right).
\label{markov-t19c-1}
\end{align}
\end{thm}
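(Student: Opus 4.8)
The plan is to substitute the stated block form of $\boldsymbol{A}^*$ into the expression \eqref{markov-t26c-1} for $\boldsymbol{C}_{d(s2)}$ and to exploit $\boldsymbol{F}_d=\left(\identity_n\otimes\boldsymbol{\psi}\right)\boldsymbol{T}_d$ to rewrite every block through cross-products of the single matrix $\boldsymbol{T}_{d^*}$. Writing $\boldsymbol{A}^*$ as the block-$2\times2$ array with entries $\hatmat_n\otimes\boldsymbol{\Omega}_1$, $-\hatmat_n\otimes\boldsymbol{\Omega}_2$, $\hatmat_n\otimes\boldsymbol{\Omega}_4$, each of $\boldsymbol{C}_{d^*(s2)(11)(1)}$, $\boldsymbol{C}_{d^*(s2)(12)(1)}$ and $\boldsymbol{C}_{d^*(s2)(22)(1)}$ becomes a $2\times2$ array in the response index whose entries are of the form $\boldsymbol{T}_{d^*}^{'}\left(\hatmat_n\otimes\boldsymbol{N}\right)\boldsymbol{T}_{d^*}$, $\boldsymbol{T}_{d^*}^{'}\left(\hatmat_n\otimes\boldsymbol{N}\boldsymbol{\psi}\right)\boldsymbol{T}_{d^*}\hatmat_t$, or $\hatmat_t\boldsymbol{T}_{d^*}^{'}\left(\hatmat_n\otimes\boldsymbol{\psi}^{'}\boldsymbol{N}\boldsymbol{\psi}\right)\boldsymbol{T}_{d^*}\hatmat_t$, with $\boldsymbol{N}$ ranging over $\boldsymbol{\Omega}_1,\boldsymbol{\Omega}_2,\boldsymbol{\Omega}_4$.

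The crux, and the step I expect to be the main obstacle, is the combinatorial identity for $d^*$: for every $p\times p$ matrix $\boldsymbol{N}$,
\begin{align*}
\boldsymbol{T}_{d^*}^{'}\left(\hatmat_n\otimes\boldsymbol{N}\right)\boldsymbol{T}_{d^*} &= \frac{n}{t-1}\,tr\left(\hatmat_p\boldsymbol{N}\right)\hatmat_t .
\end{align*}
I would prove this in three steps. First, since each row of $\boldsymbol{T}_{d^*}$ sums to one, $\boldsymbol{T}_{d^*}\vecone_t=\vecone_{np}$ and $\left(\hatmat_n\otimes\boldsymbol{N}\right)\vecone_{np}=\zero$ (as $\hatmat_n\vecone_n=\zero$), so the matrix has zero row and column sums. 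Second, the defining balance of the design, namely that each treatment appears $n/t$ times per period and each ordered pair of distinct treatments appears equally often in each ordered pair of periods, forces the matrix to be completely symmetric, i.e.\ of the form $a\identity_t+b\matone_t$; together with the zero row sums this yields a multiple of $\hatmat_t$. Third, the multiplier follows from the trace via $tr\left(\left(\hatmat_n\otimes\boldsymbol{N}\right)\boldsymbol{T}_{d^*}\boldsymbol{T}_{d^*}^{'}\right)$ and $\sum_{j,j'}\boldsymbol{T}_{d^*,j}\boldsymbol{T}_{d^*,j'}^{'}=\tfrac{n^2}{t}\matone_p$ (where $\boldsymbol{T}_{d^*,j}$ is the per-subject incidence matrix), using $p=t$. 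This is precisely the mechanism behind the univariate complete symmetry of \citet{Kunert2000OptimalityErrors}. Since $\boldsymbol{V}_1^*\vecone_p=\boldsymbol{V}_R^*\vecone_p=\zero$, the factor $\hatmat_p$ may be dropped inside the numerator traces, and the three blocks collapse to $\tfrac{n}{t-1}\boldsymbol{P}_{\bullet}\otimes\hatmat_t$ for symmetric $2\times2$ coefficient matrices $\boldsymbol{P}_{11},\boldsymbol{P}_{12},\boldsymbol{P}_{22}$ (with $\boldsymbol{P}_{21}=\boldsymbol{P}_{12}$).

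Because every block is a $2\times2$ matrix tensored with the idempotent $\hatmat_t$, the generalized inverse respects the structure, $\left(\boldsymbol{P}_{22}\otimes\hatmat_t\right)^{-}=\boldsymbol{P}_{22}^{-1}\otimes\hatmat_t$, so the Schur complement in \eqref{markov-t26c-1} reduces to the $2\times2$ computation $\tfrac{n}{t-1}\left(\boldsymbol{P}_{11}-\boldsymbol{P}_{12}\boldsymbol{P}_{22}^{-1}\boldsymbol{P}_{21}\right)\otimes\hatmat_t$. The final simplification uses that $\boldsymbol{A}^*$ is a sum of two response-rank-one pieces, $\boldsymbol{A}^*=\boldsymbol{e}_1\boldsymbol{e}_1^{'}\otimes\left(\hatmat_n\otimes\boldsymbol{V}_1^*\right)+\boldsymbol{w}\boldsymbol{w}^{'}\otimes\left(\hatmat_n\otimes\boldsymbol{\Omega}_4\right)$ with $\boldsymbol{e}_1=\left(1,0\right)^{'}$ and $\boldsymbol{w}=\left(\bar{\rho},-1\right)^{'}$. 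Hence, with $\boldsymbol{U}=\left[\boldsymbol{e}_1\ \boldsymbol{w}\right]$, all three coefficient matrices share the eigenbasis $\boldsymbol{U}$: $\boldsymbol{P}_{11}=\boldsymbol{U}\,\mathrm{diag}\left(a_1,a_R\right)\boldsymbol{U}^{'}$, $\boldsymbol{P}_{12}=\boldsymbol{U}\,\mathrm{diag}\left(b_1,b_R\right)\boldsymbol{U}^{'}$, $\boldsymbol{P}_{22}=\boldsymbol{U}\,\mathrm{diag}\left(c_1,c_R\right)\boldsymbol{U}^{'}$, where the scalars are the relevant traces of $\boldsymbol{V}_1^*$ and $\boldsymbol{\Omega}_4$. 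The Schur complement therefore diagonalizes to $\boldsymbol{U}\,\mathrm{diag}\left(\delta_1,\delta_R\right)\boldsymbol{U}^{'}$ with $\delta_1=a_1-b_1^2/c_1$ and $\delta_R=a_R-b_R^2/c_R$, and expanding $\boldsymbol{U}\,\mathrm{diag}\left(\delta_1,\delta_R\right)\boldsymbol{U}^{'}=\delta_1\boldsymbol{e}_1\boldsymbol{e}_1^{'}+\delta_R\boldsymbol{w}\boldsymbol{w}^{'}$ reproduces exactly the entries $\boldsymbol{\mathit{\Lambda}}_1,\boldsymbol{\mathit{\Lambda}}_2,\boldsymbol{\mathit{\Lambda}}_4$ of \eqref{markov-t30c-1}.

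For the trace \eqref{markov-t19c-1} I would use $tr\left(\boldsymbol{M}\otimes\hatmat_t\right)=\left(t-1\right)tr\left(\boldsymbol{M}\right)$, giving $tr\left(\boldsymbol{C}_{d^*(s2)}\right)=n\,tr\left(\mathrm{diag}\left(\delta_1,\delta_R\right)\boldsymbol{U}^{'}\boldsymbol{U}\right)=n\left(\delta_1+\left(1+\bar{\rho}^2\right)\delta_R\right)$, since $\boldsymbol{U}^{'}\boldsymbol{U}=\begin{bmatrix}1&\bar{\rho}\\\bar{\rho}&1+\bar{\rho}^2\end{bmatrix}$; substituting $\delta_1$ and $\delta_R$ yields the displayed formula. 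Beyond the combinatorial identity, the only points requiring care are verifying $c_1,c_R>0$, so that $\boldsymbol{P}_{22}$ is invertible and the generalized-inverse step is legitimate (this follows from positive semidefiniteness of $\boldsymbol{V}_1^*$ and $\boldsymbol{V}_R^*$ together with $0<\lvert\rho\rvert<1$), and noting that the post-multiplications by $\hatmat_t$ in the $(12)$ and $(22)$ blocks are harmless because $\hatmat_t^2=\hatmat_t$ and the master identity already returns a multiple of $\hatmat_t$.
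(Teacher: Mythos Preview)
Your approach is essentially the same as the paper's: substitute the block form of $\boldsymbol{A}^*$, use the orthogonal-array combinatorics to reduce every block to a scalar multiple of $\hatmat_t$, and then compute the $2\times2$ Schur complement in the response index. The paper invokes Lemmas~3--5 of \citet{Martin1998Variance-balancedObservations} for the reduction, whereas you derive the master identity $\boldsymbol{T}_{d^*}^{'}\left(\hatmat_n\otimes\boldsymbol{N}\right)\boldsymbol{T}_{d^*}=\tfrac{n}{t-1}\,tr\!\left(\hatmat_p\boldsymbol{N}\right)\hatmat_t$ directly; both routes are valid and equivalent.

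Where your argument genuinely differs is the final algebra. The paper writes out the $2\times2$ coefficient matrices $\boldsymbol{P}_{11},\boldsymbol{P}_{12},\boldsymbol{P}_{22}$, inverts $\boldsymbol{P}_{22}$ explicitly via its determinant $c_{22(11)}c_{22(22)}-c_{22(12)}^2=\tfrac{1}{\sigma_{12}}\,tr\!\left(\hatmat_p\boldsymbol{\psi}^{'}\boldsymbol{V}_1^*\boldsymbol{\psi}\right)tr\!\left(\hatmat_p\boldsymbol{\psi}^{'}\boldsymbol{V}_R^*\boldsymbol{\psi}\right)$, and multiplies through. Your observation that $\boldsymbol{A}^*=\boldsymbol{e}_1\boldsymbol{e}_1^{'}\otimes\left(\hatmat_n\otimes\boldsymbol{V}_1^*\right)+\boldsymbol{w}\boldsymbol{w}^{'}\otimes\left(\hatmat_n\otimes\boldsymbol{\Omega}_4\right)$, so that all three $\boldsymbol{P}_\bullet$ are simultaneously congruence-diagonalized by $\boldsymbol{U}=[\boldsymbol{e}_1\ \boldsymbol{w}]$, is a cleaner route: the Schur complement collapses to $\boldsymbol{U}\,\mathrm{diag}(\delta_1,\delta_R)\,\boldsymbol{U}^{'}$ without any brute-force inversion, and the trace formula drops out from $\boldsymbol{U}^{'}\boldsymbol{U}$. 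One terminological quibble: the columns of $\boldsymbol{U}$ are not orthogonal, so ``eigenbasis'' is not quite the right word---what you are using is a common \emph{congruence} factorization $\boldsymbol{P}_\bullet=\boldsymbol{U}D_\bullet\boldsymbol{U}^{'}$, and the Schur complement respects it because $\left(\boldsymbol{U}D\boldsymbol{U}^{'}\right)^{-1}=(\boldsymbol{U}^{'})^{-1}D^{-1}\boldsymbol{U}^{-1}$. Finally, your positivity claim $c_1,c_R>0$ is correct but not an immediate consequence of positive semidefiniteness alone; it requires the argument (given in the paper's \autoref{lemma-markov-1}) that $\hatmat_p\boldsymbol{\psi}^{'}\boldsymbol{V}^*\boldsymbol{\psi}\hatmat_p=\zero$ would force $\boldsymbol{V}^*=\zero$, a contradiction.
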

\begin{proof}
The details of the proof are given in \autoref{markov-information matrix}.
\end{proof}

\noindent As in Subsection~\ref{proportional covariance}, we show next that the sufficient conditions for universal optimality as similar to those discussed in \citet{Kiefer1975ConstructionIi} and \citet{yen1986conditions} are not applicable.  
\renewcommand{\subsectionautorefname}{Appendix}
\begin{remarks}
Let $d^*$ be a design given by $OA_{I} \left( n=\lambda t \left(t-1 \right), p=t, t, 2 \right)$, where $\lambda$ is a positive integer and $t \geq 3$. Then the information matrix for the direct effects corresponding to a design $d^*$, $\boldsymbol{{C}}_{d^*(s2)}$, is not completely symmetric.  
\label{markov-remark-2}
\end{remarks}
\begin{proof}
For the proof see \autoref{markov-information matrix}.
\end{proof}

\noindent From \autoref{markov-remark-2}, it is clear that as in the proportional covariance case, here also we cannot apply the usual sufficient condition for a universally optimal design. For conditions similar to \citet{yen1986conditions}, we require
\begin{multline*}
\frac{1}{t} \left( tr \left( \boldsymbol{V}_1^* \right) + \frac{\bar{\rho}^2}{\sigma_{12}} tr \left( \boldsymbol{V}_R^* \right) - \frac{\left(  tr \left( \boldsymbol{V}_1^* \boldsymbol{\psi} \right) \right)^2}{tr \left( \hatmat_p \boldsymbol{\psi}^{'} \boldsymbol{V}_1^* \boldsymbol{\psi} \right) } - \frac{ \bar{\rho}^2}{\sigma_{12}} \frac{\left(tr \left( \boldsymbol{V}_R^* \boldsymbol{\psi} \right) \right)^2}{tr \left( \hatmat_p \boldsymbol{\psi}^{'} \boldsymbol{V}_R^* \boldsymbol{\psi} \right)} \right) =\\ \frac{1}{\sigma_{12} t} \left(  tr \left( \boldsymbol{V}_R^* \right) - \frac{\left(tr \left( \boldsymbol{V}_R^* \boldsymbol{\psi} \right) \right)^2}{tr \left( \hatmat_p \boldsymbol{\psi}^{'} \boldsymbol{V}_R^* \boldsymbol{\psi} \right)} \right)  = - \frac{ \bar{\rho} }{\sigma_{12} t } \left( tr \left(\boldsymbol{V}_R^* \right) -  \frac{\left(tr \left( \boldsymbol{V}_R^* \boldsymbol{\psi} \right) \right)^2}{tr \left( \hatmat_p \boldsymbol{\psi}^{'} \boldsymbol{V}_R^* \boldsymbol{\psi} \right)} \right).
\end{multline*}
Since both conditions are not satisfied (see proof of \autoref{markov-remark-2}), we instead focus on efficient designs. In order to check whether $d^*$ is an efficient design, we obtain an upper bound of $tr \left( \boldsymbol{C}_{d(s2)} \right)$. If $d^*$ attains this upper bound then we say that $d^*$ is an efficient design. However, obtaining an upper bound for $tr \left( \boldsymbol{C}_{d(s2)} \right)$ is computationally challenging, so we instead consider $\boldsymbol{\tilde{C}}_{d(s2)}$ (see \autoref{lemma5-c4-1}), the information matrix for the direct effects with no period effects, and show that $\boldsymbol{{C}}_{d(s2)} \leq \boldsymbol{\tilde{C}}_{d(s2)}$ with equality if $d$ is uniform on periods (see \autoref{lemma5-c4-1} and \autoref{thm3-c4}).\par

\noindent Note, $d^* \in \mathcal{D}^{(1)}_{t,n=\lambda t (t-1),p=t}$, where 
\begin{align*}
\mathcal{D}^{(1)}_{t,n=\lambda t (t-1),p=t} = \{ d: d \in \Omega_{t,n=\lambda t (t-1) ,p=t}, \text{ } t \geq 3 \text{ and } d \text{ is a binary design} \}.
\end{align*}
Also, for $p=t$, a binary design is a design uniform on subjects and thus $\mathcal{D}^{(1)}_{t,n=\lambda t (t-1),p=t}$ is indeed a class of designs with $p=t$ which are uniform on subjects.

\begin{lemmas}
\label{thm4-c4}
Let $d \in \mathcal{D}^{(1)}_{t,n=\lambda t (t-1),p=t}$. Then,
\begin{align}
tr \left( \boldsymbol{{C}}_{d(s2)} \right) \leq u \left( t, n, p, \sigma_{11}, \sigma_{22}, \rho, \boldsymbol{V}_1, \boldsymbol{V}_R \right),
\label{p26c}
\end{align}
where $u \left( t, n, p, \sigma_{11}, \sigma_{22}, \rho, \boldsymbol{V}_1, \boldsymbol{V}_R \right) \\= n \Bigg( tr \left(  \boldsymbol{V}_1^*  \right) + \frac{1+\bar{\rho}^2}{\sigma_{12}} tr \left( \boldsymbol{V}_R^*  \right) -
\left( tr \left(  \boldsymbol{V}_1^* \boldsymbol{\psi} \right)  + \frac{1+\bar{\rho}^2}{\sigma_{12}} tr \left(   \boldsymbol{V}_R^* \boldsymbol{\psi} \right) \right)^2 \times$ \\ $ \left( tr \left( \hatmat_p \boldsymbol{\psi}^{'} \boldsymbol{V}_1^* \boldsymbol{\psi} \right) + \frac{1+\bar{\rho}^2}{\sigma_{12}} tr \left( \hatmat_p \boldsymbol{\psi}^{'} \boldsymbol{V}_R^* \boldsymbol{\psi} \right) \right)^{-1} \Bigg)$. Here $\boldsymbol{V}_1^*$, $\boldsymbol{V}_R^*$, $\bar{\rho}$ and $\sigma_{12}$ are as defined before.
\end{lemmas}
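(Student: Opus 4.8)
The plan is to deduce the bound by passing from $\boldsymbol{C}_{d(s2)}$ to the larger information matrix $\boldsymbol{\tilde{C}}_{d(s2)}$ that ignores the period effects, for which the trace can be controlled uniformly over $\mathcal{D}^{(1)}_{t,n=\lambda t(t-1),p=t}$. By \autoref{lemma5-c4-1} we have $\boldsymbol{C}_{d(s2)} \leq \boldsymbol{\tilde{C}}_{d(s2)}$ in the Loewner order: dropping the period effects amounts to replacing the factor $\hatmat_n$ in the expression of $\boldsymbol{A}^*$ (\autoref{lemma3-c4-1}) by $\identity_n$, where $\hatmat_n \leq \identity_n$, and the Schur complement defining the direct-effect information is monotone in the joint matrix. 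Since $\boldsymbol{A} \leq \boldsymbol{B}$ implies $tr(\boldsymbol{A}) \leq tr(\boldsymbol{B})$, it suffices to prove $tr(\boldsymbol{\tilde{C}}_{d(s2)}) \leq u$. Throughout I abbreviate $\boldsymbol{W} = \boldsymbol{\mathit{\Omega}}_1 + \boldsymbol{\mathit{\Omega}}_4 = \boldsymbol{V}_1^* + \frac{1+\bar{\rho}^2}{\sigma_{12}} \boldsymbol{V}_R^*$, a positive semidefinite matrix annihilating $\vecone_p$, so that the target reads $u = n\left( tr(\boldsymbol{W}) - (tr(\boldsymbol{W}\boldsymbol{\psi}))^2 / tr(\hatmat_p \boldsymbol{\psi}^{'} \boldsymbol{W}\boldsymbol{\psi}) \right)$.

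Writing $\boldsymbol{\tilde{C}}_{d(s2)} = \boldsymbol{\tilde{C}}_{11} - \boldsymbol{\tilde{C}}_{12} \boldsymbol{\tilde{C}}_{22}^{-} \boldsymbol{\tilde{C}}_{21}$ with the blocks built from $\identity_2 \otimes \boldsymbol{T}_d$, $\identity_2 \otimes \boldsymbol{F}_d \hatmat_t$ and the relaxed $\boldsymbol{\tilde{A}}$ (with $\identity_n$ in place of $\hatmat_n$), I first evaluate the leading term. Because $d$ is binary with $p=t$, each subject receives every treatment once, so its $t \times t$ treatment-incidence block $\boldsymbol{T}_{dj}$ is a permutation matrix and its carryover block is $\boldsymbol{\psi}\boldsymbol{T}_{dj}$. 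The factor $\identity_n$ lets the sum over subjects separate, and since the trace only sees the diagonal response-channel blocks of $\boldsymbol{\tilde{A}}$ (so the coupling block $\boldsymbol{\mathit{\Omega}}_2$ drops out), permutation invariance of the trace gives $tr(\boldsymbol{\tilde{C}}_{11}) = n\left( tr(\boldsymbol{\mathit{\Omega}}_1) + tr(\boldsymbol{\mathit{\Omega}}_4) \right) = n\, tr(\boldsymbol{W})$, the leading term of $u$.

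The crux is a lower bound, valid for every $d$ in the class, on the Schur correction $tr(\boldsymbol{\tilde{C}}_{12} \boldsymbol{\tilde{C}}_{22}^{-} \boldsymbol{\tilde{C}}_{21})$ (a smaller subtracted term yields a larger trace). I would use the generalized-inverse Cauchy--Schwarz inequality
\[
tr\left( \boldsymbol{B}^{'} \boldsymbol{M}^{-} \boldsymbol{B} \right) \geq \frac{\left( tr\left( \boldsymbol{N}^{'} \boldsymbol{B} \right) \right)^2}{tr\left( \boldsymbol{N}^{'} \boldsymbol{M} \boldsymbol{N} \right)},
\]
valid for symmetric positive semidefinite $\boldsymbol{M}$ and any $\boldsymbol{N}$, applied with $\boldsymbol{M} = \boldsymbol{\tilde{C}}_{22}$, $\boldsymbol{B} = \boldsymbol{\tilde{C}}_{21}$ and the test matrix $\boldsymbol{N} = \identity_2 \otimes \hatmat_t$. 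The key identity $\boldsymbol{T}_{dj} \hatmat_t \boldsymbol{T}_{dj}^{'} = \hatmat_t$ (permutations fix $\hatmat_t$), together with $\boldsymbol{W}\vecone_p = \zero$ and $\boldsymbol{F}_{dj} = \boldsymbol{\psi}\boldsymbol{T}_{dj}$, collapses the numerator to $n\, tr(\boldsymbol{W}\boldsymbol{\psi})$ and the denominator to $n\, tr(\hatmat_p \boldsymbol{\psi}^{'} \boldsymbol{W}\boldsymbol{\psi})$ --- again the two channels combine through $\boldsymbol{\mathit{\Omega}}_1 + \boldsymbol{\mathit{\Omega}}_4 = \boldsymbol{W}$ and the coupling term drops out for the block-diagonal $\boldsymbol{N}$. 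This yields $tr(\boldsymbol{\tilde{C}}_{12} \boldsymbol{\tilde{C}}_{22}^{-} \boldsymbol{\tilde{C}}_{21}) \geq n (tr(\boldsymbol{W}\boldsymbol{\psi}))^2 / tr(\hatmat_p \boldsymbol{\psi}^{'} \boldsymbol{W}\boldsymbol{\psi})$, and subtracting from $tr(\boldsymbol{\tilde{C}}_{11}) = n\, tr(\boldsymbol{W})$ gives $tr(\boldsymbol{\tilde{C}}_{d(s2)}) \leq u$, as required.

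I expect the main obstacle to be the correction term: verifying that the Cauchy--Schwarz test matrix $\boldsymbol{N} = \identity_2 \otimes \hatmat_t$ makes both traces design-independent and collapses them to the single effective matrix $\boldsymbol{W}$, rather than to two uncombined channel contributions. The delicate bookkeeping is to track how the off-diagonal coupling $\boldsymbol{\mathit{\Omega}}_2$ disappears under the block-diagonal $\boldsymbol{N}$ while the diagonal channels add to $\boldsymbol{W}$, and to confirm the range condition (that $\boldsymbol{\tilde{C}}_{21}$ has columns in the range of $\boldsymbol{\tilde{C}}_{22}$) so the generalized inverse is legitimate. Equality in the Cauchy--Schwarz step is exactly what the subsequent attainment result uses to decide which $OA_{I}$ designs actually meet $u$, consistent with the two-fraction trace of $d^*$ in \autoref{thm5-c4} being $\leq u$ by the same inequality.
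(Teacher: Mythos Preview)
Your proof is correct and follows essentially the same route as the paper: pass to $\boldsymbol{\tilde{C}}_{d(s2)}$ (the paper does this via \autoref{thm3-c4}, not \autoref{lemma5-c4-1}, though your Schur-complement monotonicity argument is also valid), then bound $tr(\boldsymbol{\tilde{C}}_{d(s2)})$ by $tr(\boldsymbol{\tilde{C}}_{11}) - (tr\,\boldsymbol{\tilde{C}}_{12})^2/tr\,\boldsymbol{\tilde{C}}_{22}$ and evaluate the three traces using the permutation structure of binary designs with $p=t$. The only cosmetic difference is that the paper invokes this trace inequality as Lemma~5.1 of \citet{kushner1997optimal} rather than deriving it from Cauchy--Schwarz; your test matrix $\boldsymbol{N}=\identity_2\otimes\hatmat_t$ is harmless (the $\hatmat_t$ is already built into $\boldsymbol{\tilde{C}}_{21}$ and $\boldsymbol{\tilde{C}}_{22}$, so it is equivalent to $\boldsymbol{N}=\identity_{2t}$), and your concern about $\boldsymbol{\mathit{\Omega}}_2$ is unfounded since only diagonal blocks contribute to each of the three traces.
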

\begin{proof}
Note that a design $d \in \mathcal{D}^{(1)}_{t,n=\lambda t (t-1),p=t}$, is a binary design. In \autoref{thm3-c4}, it has been shown that $\boldsymbol{{C}}_{d(s2)} \leq \boldsymbol{\tilde{C}}_{d(s2)}$, so
\begin{align}
tr \left( \boldsymbol{{C}}_{d(s2)} \right) \leq tr \left( \boldsymbol{\tilde{C}}_{d(s2)} \right).
\label{p27c}
\end{align}
Using the expression of $\boldsymbol{\tilde{C}}_{d(s2)}$ from \autoref{lemma5-c4-1} and applying Lemma 5.1 from \citet{kushner1997optimal}, we get
\begin{align}
tr \left( \boldsymbol{\tilde{C}}_{d(s2)} \right) \leq tr \left( \boldsymbol{\tilde{C}}_{d(s2)(11)} \right) - \left(tr \left( \boldsymbol{\tilde{C}}_{d(s2)(12)} \right) \right)^2 \left(tr \left( \boldsymbol{\tilde{C}}_{d(s2)(22)} \right) \right)^{+},
\label{p28c}
\end{align}
where
\begin{align}
\begin{split}
tr \left( \boldsymbol{\tilde{C}}_{d(s2)(11)} \right) &= tr \left( \boldsymbol{T}_{d}^{'} \left( \identity_n \otimes \boldsymbol{\mathit{\Omega}}_1 \right) \boldsymbol{T}_{d} \right) + tr \left( \boldsymbol{T}_{d}^{'} \left( \identity_n \otimes \boldsymbol{\mathit{\Omega}}_4 \right) \boldsymbol{T}_{d} \right), \\
tr \left( \boldsymbol{\tilde{C}}_{d(s2)(12)} \right) &= tr \left( \boldsymbol{T}_{d}^{'} \left( \identity_n \otimes \boldsymbol{\mathit{\Omega}}_1 \boldsymbol{\psi} \right) \boldsymbol{T}_{d} \hatmat_t  \right) + tr \left( \boldsymbol{T}_{d}^{'} \left( \identity_n \otimes \boldsymbol{\mathit{\Omega}}_4 \boldsymbol{\psi} \right) \boldsymbol{T}_{d} \hatmat_t \right), \text{ and}\\
tr \left( \boldsymbol{\tilde{C}}_{d(s2)(22)} \right) &= tr \left( \hatmat_t \boldsymbol{T}_{d}^{'} \left( \identity_n \otimes \boldsymbol{\psi}^{'} \boldsymbol{\mathit{\Omega}}_1 \boldsymbol{\psi} \right) \boldsymbol{T}_{d} \hatmat_t  \right) + tr \left( \hatmat_t \boldsymbol{T}_{d}^{'} \left( \identity_n \otimes \boldsymbol{\psi}^{'} \boldsymbol{\mathit{\Omega}}_4 \boldsymbol{\psi} \right) \boldsymbol{T}_{d} \hatmat_t \right),
\end{split}
\label{p29c}
\end{align}
from \autoref{lemma7-c4-1}. Using \autoref{remark1-c4}, it is observed that $\vecone_{p}^{'} \boldsymbol{\mathit{\Omega}}_1 = \vecone_{p}^{'} \boldsymbol{\mathit{\Omega}}_2 = \vecone_{p}^{'} \boldsymbol{\mathit{\Omega}}_4 = \zero_{1 \times p} $, and
\begin{align}
\begin{split}
tr \left( \boldsymbol{T}_{d}^{'} \left( \identity_n \otimes \boldsymbol{\mathit{\Omega}}_1 \right) \boldsymbol{T}_{d} \right) &= n \times  tr \left( 
\boldsymbol{\mathit{\Omega}}_1
\right),\\
tr \left( \boldsymbol{T}_{d}^{'} \left( \identity_n \otimes \boldsymbol{\mathit{\Omega}}_4 \right) \boldsymbol{T}_{d} \right) &= n \times tr \left( 
\boldsymbol{\mathit{\Omega}}_4
\right),\\
tr \left( \boldsymbol{T}_{d}^{'} \left( \identity_n \otimes \boldsymbol{\mathit{\Omega}}_1 \boldsymbol{\psi} \right) \boldsymbol{T}_{d} \hatmat_t \right) &= n \times tr \left( 
\boldsymbol{\mathit{\Omega}}_1
\boldsymbol{\psi} 
\right),\\
tr \left( \boldsymbol{T}_{d}^{'} \left( \identity_n \otimes \boldsymbol{\mathit{\Omega}}_4 \boldsymbol{\psi} \right) \boldsymbol{T}_{d} \hatmat_t \right) &= n \times tr \left( 
\boldsymbol{\mathit{\Omega}}_4
\boldsymbol{\psi}
\right),\\
tr \left( \hatmat_t \boldsymbol{T}_{d}^{'} \left( \identity_n \otimes \boldsymbol{\psi}^{'} \boldsymbol{\mathit{\Omega}}_1 \boldsymbol{\psi} \right) \boldsymbol{T}_{d} \hatmat_t \right) &= n \times tr \left( \hatmat_p
\boldsymbol{\psi}^{'} 
\boldsymbol{\mathit{\Omega}}_1
\boldsymbol{\psi}
\right), \text{ and}\\
tr \left( \hatmat_t \boldsymbol{T}_{d}^{'} \left( \identity_n \otimes \boldsymbol{\psi}^{'} \boldsymbol{\mathit{\Omega}}_4 \boldsymbol{\psi} \right) \boldsymbol{T}_{d} \hatmat_t \right) &= n \times tr \left( \hatmat_p
\boldsymbol{\psi}^{'} 
\boldsymbol{\mathit{\Omega}}_4
\boldsymbol{\psi}
\right).
\end{split}
\label{p30c}
\end{align}
Suppose we define, $c_{11} = n \left( c_{11(11)} + c_{11(22)} \right)$, $c_{12} = n \left(c_{12(11)} + c_{12(22)} \right)$ and $c_{22} = n \big( c_{22(11)} +$\\$ c_{22(22)} \big)$, where $
c_{11(11)} = tr \left( 
\boldsymbol{\mathit{\Omega}}_1
\right)$, $c_{11(22)} = tr \left( 
\boldsymbol{\mathit{\Omega}}_4
\right)$, $c_{12(11)} = tr \left( 
\boldsymbol{\mathit{\Omega}}_1
\boldsymbol{\psi}
\right)$, $c_{12(22)} = tr \left( 
\boldsymbol{\mathit{\Omega}}_4
\boldsymbol{\psi} 
\right)$, $c_{22(11)} = tr \left( \hatmat_p
\boldsymbol{\psi}^{'}
\boldsymbol{\mathit{\Omega}}_1
\boldsymbol{\psi}
\right)$ and $c_{22(22)} = tr \left( \hatmat_p
\boldsymbol{\psi}^{'}
\boldsymbol{\mathit{\Omega}}_4
\boldsymbol{\psi}
\right)$. From \autoref{lemma3-c4-1}, $\boldsymbol{\mathit{\Omega}}_1 = \boldsymbol{V}_1^* + \frac{\bar{\rho}^2}{\sigma_{12}} \boldsymbol{V}_R^*$ and $\boldsymbol{\mathit{\Omega}}_4 = \frac{1}{\sigma_{12}} \boldsymbol{V}_R^*$, where $\bar{\rho} = \rho \sqrt{\frac{\sigma_{22}}{\sigma_{11}}}$ and $\sigma_{12} = \sigma_{22} (1 - \rho^2)$. Using these expressions in $c_{22(11)}$ and $c_{22(22)}$, we obtain $c_{22(11)} = tr \left( \hatmat_p
\boldsymbol{\psi}^{'}
\boldsymbol{V}_1^* \boldsymbol{\psi} \right) + \frac{\bar{\rho}^2}{\sigma_{12}} tr \left( \hatmat_p
\boldsymbol{\psi}^{'}
\boldsymbol{V}_R^*
\boldsymbol{\psi}
\right)$ and $c_{22(22)} = \frac{1}{\sigma_{12}} tr \left( \hatmat_p
\boldsymbol{\psi}^{'}
\boldsymbol{V}_R^*
\boldsymbol{\psi}
\right)$. Using \autoref{lemma-markov-1}, $| \bar{\rho} | >0$ and $\sigma_{12}>0$ gives $c_{22(11)}$, $c_{22(22)}>0$, implying $c_{22} = n \left( c_{22(11)} + c_{22(22)}  \right) > 0$. Combining \eqref{p27c}, \eqref{p28c}, \eqref{p29c} and \eqref{p30c}, we obtain 
\begin{align}
tr \left( \boldsymbol{{C}}_{d(s2)} \right) &\leq  c_{11} - c_{12}^2c_{22}^{-1}.
\label{markov-t15c-1}
\end{align}
Note that since $c_{22}>0$, $c_{22}^{+}$ has been replaced by $c_{22}^{-1}$. Using the expression of $c_{11}$, $c_{12}$ and $c_{22}$, and the expressions of $\boldsymbol{\mathit{\Omega}}_1$ and $\boldsymbol{\mathit{\Omega}}_4$ in \eqref{markov-t15c-1}, we prove \eqref{p26c}.
\end{proof}

\begin{remarks}
Let $d^* \in \mathcal{D}^{(1)}_{t,n=\lambda t (t-1),p=t}$ be a design given by $OA_{I} \left( n=\lambda t \left(t-1 \right), p=t, t, 2 \right)$, where $\lambda$ is a positive integer and $t \geq 3$. Then $d^*$ attains $u \left( t, n, p, \sigma_{11}, \sigma_{22}, \rho, \boldsymbol{V}_1, \boldsymbol{V}_R \right)$ as given in \autoref{thm4-c4}, where $n=\lambda t \left(t-1 \right)$ and $p=t$, if and only if $tr \left( \boldsymbol{V}_1^* \boldsymbol{\psi} \right) tr \left( \hatmat_p \boldsymbol{\psi}^{'} \boldsymbol{V}_R^* \boldsymbol{\psi} \right) = tr \left( \boldsymbol{V}_R^* \boldsymbol{\psi} \right) tr \left( \hatmat_p \boldsymbol{\psi}^{'} \boldsymbol{V}_1^* \boldsymbol{\psi} \right)$. Here, $\boldsymbol{V}_1^*$ and $\boldsymbol{V}_R^*$ are as defined before.
\label{markov-lemma-1}
\end{remarks}
\begin{proof}
The proof of the above remark is given in \ref{markov-efficiency}.
\end{proof}

\noindent From \autoref{markov-lemma-1}, we note that if $\boldsymbol{V}_1$ is proportional to $\boldsymbol{V}_R$, then the upper bound is attained, implying the attainment of the upper bound for the proportional structure.

\subsubsection{Study of design efficiency for the Markovian covariance structure}
\label{nearly-efficient}
In this section, for various choices for $\boldsymbol{V}_1$ and $\boldsymbol{V}_R$ we evaluate $d^*$ using the upper bound of $tr \left( 
\boldsymbol{C}_{d(s2)} \right)$. For the purpose of simplicity, we denote the upper bound given in \autoref{thm4-c4} by $u$. We list $7$ cases based on different combinations of  $\boldsymbol{V}_1$ and $\boldsymbol{V}_R$ from \autoref{table_markov}:

\begin{minipage}[t]{0.5\textwidth}
\begin{enumerate}
\item[\namedlabel{c1}{Case $1$}.] $\boldsymbol{V}_1$: Mat($0.5$) and $\boldsymbol{V}_R$: Mat($1.5$)
\item[\namedlabel{c2}{Case $2$}.] $\boldsymbol{V}_1$: Mat($0.5$) and $\boldsymbol{V}_R$: Mat($\infty$)
\item[\namedlabel{c3}{Case $3$}.] $\boldsymbol{V}_1$: Mat($1.5$) and $\boldsymbol{V}_R$: Mat($0.5$)
\item[\namedlabel{c4}{Case $4$}.] $\boldsymbol{V}_1$: Mat($1.5$) and $\boldsymbol{V}_R$: Mat($\infty$)
\end{enumerate}
\end{minipage}
\begin{minipage}[t]{0.5\textwidth}
\begin{enumerate}
 \item[\namedlabel{c5}{Case $5$}.] $\boldsymbol{V}_1$: Mat($\infty$) and $\boldsymbol{V}_R$: Mat($0.5$)
 \item[\namedlabel{c6}{Case $6$}.] $\boldsymbol{V}_1$: Mat($\infty$) and $\boldsymbol{V}_R$: Mat($1.5$)
 \item[\namedlabel{c7}{Case $7$}.] $\boldsymbol{V}_1$: Mat($0.5$) and $\boldsymbol{V}_R$: Mat($0.5$)
\end{enumerate}
\end{minipage}

\noindent Note in case $7$,  $\boldsymbol{\mathit{\Sigma}}$ resembles the NS1 structure, where both $\boldsymbol{V}_1$ and $\boldsymbol{V}_R$ have a Mat($0.5$) like structure with correlation parameters $r_{(1)}$ and $r_{(1)}^2$; $0 < r_{(1)}<1$, respectively.\par

\noindent For our illustrations, $p=t \in \{ 3,4 \}$ is used. To apply \autoref{markov-lemma-1}, the expressions for $\boldsymbol{V}_1^*$ and $\boldsymbol{V}_R^*$ are needed. In case $7$ for $p=t=3$, 
\begin{align*}
\boldsymbol{V}_1^* &= \boldsymbol{V}_1^{-1} - \left( \vecone_{p}^{'} \boldsymbol{V}_1^{-1} \vecone_p \right)^{-1} \boldsymbol{V}_1^{-1} \matone_{p \times p} \boldsymbol{V}_1^{-1} \\
&=
 \frac{\sigma_{11}^{-1}}{r_{(1)}^2 -4 r_{(1)}+3}
\begin{bmatrix}
 \frac{2 \left( r_{(1)}^2 -4 r_{(1)}+3 \right)}{r_{(1)}^3-3 r_{(1)}^2-r_{(1)} +3} & -1 &\frac{ r_{(1)}^2 -4 r_{(1)}+3}{r_{(1)}^2 -2 r_{(1)}-3} \\
-1 & 2  & -1 \\
\frac{ r_{(1)}^2 -4 r_{(1)}+3}{r_{(1)}^2 -2 r_{(1)}-3}  & -1 & \frac{2 \left( r_{(1)}^2 -4 r_{(1)}+3 \right)}{r_{(1)}^3-3 r_{(1)}^2-r_{(1)} +3}
\end{bmatrix}
\end{align*}
and
\begin{align*}
\boldsymbol{V}_R^* &= \boldsymbol{V}_R^{-1} - \left( \vecone_{p}^{'} \boldsymbol{V}_R^{-1} \vecone_p \right)^{-1} \boldsymbol{V}_R^{-1} \matone_{p \times p} \boldsymbol{V}_R^{-1}\\
&= 
\frac{1}{r_{(1)}^4 -4 r_{(1)}^2+3}
\begin{bmatrix}
\frac{2 \left( r_{(1)}^4 -4 r_{(1)}^2+3 \right)}{r_{(1)}^6 - 3 r_{(1)}^4 -r_{(1)}^2+3} & -1 & \frac{r_{(1)}^4 -4 r_{(1)}^2+3}{r_{(1)}^4 -2 r_{(1)}^2 - 3}\\
-1 & 2 & -1\\
\frac{r_{(1)}^4 -4 r_{(1)}^2+3}{r_{(1)}^4 -2 r_{(1)}^2 - 3} & -1 & \frac{2 \left( r_{(1)}^4 -4 r_{(1)}^2+3 \right)}{r_{(1)}^6 - 3 r_{(1)}^4 -r_{(1)}^2+3}
\end{bmatrix},
\end{align*}
where $\boldsymbol{V}_1 = \sigma_{11}  
\begin{bmatrix}
1 & r_{(1)} & r_{(1)}^2\\
r_{(1)} & 1 & r_{(1)}\\
r_{(1)}^2 & r_{(1)} & 1
\end{bmatrix}$ and $\boldsymbol{V}_R = \begin{bmatrix}
1 & r_{(1)}^2 & r_{(1)}^4\\
r_{(1)}^2 & 1 & r_{(1)}^2\\
r_{(1)}^4 & r_{(1)}^2 & 1
\end{bmatrix}$. Here, $\sigma_{11} >0$ and $0 < r_{(1)} < 1$. The constraint, $r_{(1)} <1$, is required for the matrix $\boldsymbol{V}_1$ to be positive definite. To check the necessary and sufficient condition from \autoref{markov-lemma-1}, we evaluate $tr \left( \boldsymbol{V}_1^* \boldsymbol{\psi} \right) = -\frac{2\sigma_{11}^{-1}}{r_{(1)}^2 -4 r_{(1)}+3}$, $tr \left( \boldsymbol{V}_R^* \boldsymbol{\psi} \right) = -\frac{2}{r_{(1)}^4 - 4 r_{(1)}^2 + 3}$, $tr \left( \hatmat_p \boldsymbol{\psi}^{'} \boldsymbol{V}_1^* \boldsymbol{\psi} \right) =  \frac{2 \sigma_{11}^{-1} \left(3r_{(1)} +5 \right)}{3 \left( r_{(1)}^3-3 r_{(1)}^2-r_{(1)} +3 \right)}$ and $tr \left( \hatmat_p \boldsymbol{\psi}^{'} \boldsymbol{V}_R^* \boldsymbol{\psi} \right) =$\\$ \frac{2 \left(3r_{(1)}^2 +5 \right)}{3 \left( r_{(1)}^6-3 r_{(1)}^4-r_{(1)}^2 +3 \right)}$. Note, these expressions of $\boldsymbol{V}_1^*$, $\boldsymbol{V}_R^*$ and the traces have been cross checked by using MATLAB \citep{MATLAB}. Since $tr \left( \boldsymbol{V}_1^* \boldsymbol{\psi} \right) tr \left( \hatmat_p \boldsymbol{\psi}^{'} \boldsymbol{V}_R^* \boldsymbol{\psi} \right)$ and $tr \left( \boldsymbol{V}_R^* \boldsymbol{\psi} \right) \times$\\$ tr \left( \hatmat_p \boldsymbol{\psi}^{'} \boldsymbol{V}_1^* \boldsymbol{\psi} \right)$ are unequal, we may conclude that for $0< r_{(1)} < 1$, $d^*$ does not attain the upper bound of $tr \left( \boldsymbol{{C}}_{d(s2)} \right)$ over $d \in \mathcal{D}^{(1)}_{t,n=\lambda t (t-1),p=t}$. We find that similar results hold for the other cases when $p=t=3$, and for all the cases when $p=t=4$. Thus we may state that for $p=t \in \{3,4\}$, and all cases of $\boldsymbol{V}_1$ and $\boldsymbol{V}_R$ considered,  $d^*$ fails to attain the upper bound $u$.\par

\noindent We then evaluate the performance of $d^*$ for the various combinations of $\boldsymbol{V}_1$ and $\boldsymbol{V}_R$ by the relative difference between $tr \left( \boldsymbol{{C}}_{d^*(s2)} \right)$ and the upper bound $u$. The performance of $d^*$ is also compared with a uniform and a balanced uniform design having the same number of subjects as $d^*$. Note that a uniform and balanced uniform design for $p=t$ are also binary designs and hence belong to $\mathcal{D}^{(1)}_{t,n=\lambda t (t-1),p=t}$.\par

\noindent For any design $d^{(0)} \in \mathcal{D}^{(1)}_{t,n=\lambda t (t-1),p=t}$, where $\lambda$ is a positive integer and $t \geq 3$, $RD_{d^{(0)}}$, the relative difference between $tr \left( \boldsymbol{{C}}_{d^{(0)}(s2)} \right)$ and $u$ is defined as
\begin{align}
RD_{d^{(0)}} &= 1 - \frac{tr \left( \boldsymbol{{C}}_{d^{(0)}(s2)} \right)}{ u }.
\label{markov-t21c-1}
\end{align}
Here, $RD_{d^{(0)}}$ is a function of $t$, $n$, $p$, $\sigma_{11}$, $\sigma_{22}$, $\rho$, $\boldsymbol{V}_1$ and $\boldsymbol{V}_R$, and its values lie between $0$ and $1$. For $RD_{d^{(0)}} =0$, the upper bound is attained by the design $d^{(0)}$, thus proving $d^{(0)}$ is an efficient design. Hence, lower values of $RD_{d^{(0)}}$ are preferred as they indicate that $tr \left( \boldsymbol{{C}}_{d^{(0)}(s2)} \right)$ is closer to the upper bound $u$, and $d^{(0)}$ is a highly efficient design.\par

\noindent Let $d_1$ be a uniform design and $d_2$ be a balanced uniform design. For illustration purpose, when $p=t=3$, we consider
\begin{align*}
d_1 = 
\begin{matrix}
1 & 2 & 3 & 1 & 2 & 3\\
2 & 3 & 1 & 2 & 3 & 1\\
3 & 1 & 2 & 3 & 1 & 2
\end{matrix}
\quad \text{and} \quad 
d^* =
\begin{matrix}
1 & 2 & 3 & 1 & 2 & 3\\
2 & 3 & 1 & 3 & 1 & 2\\
3 & 1 & 2 & 2 & 3 & 1
\end{matrix}.
\end{align*}
While for $p=t=4$, we consider
\begin{align*}
d_1 =
\begin{matrix}
1 & 4 & 3 & 2 & 1 & 4 & 3 & 2 & 1 & 4 & 3 & 2\\
2 & 1 & 4 & 3 & 2 & 1 & 4 & 3 & 2 & 1 & 4 & 3\\
3 & 2 & 1 & 4 & 3 & 2 & 1 & 4 & 3 & 2 & 1 & 4\\
4 & 3 & 2 & 1 & 4 & 3 & 2 & 1 & 4 & 3 & 2 & 1
\end{matrix}, \quad 
d_2 =
\begin{matrix}
1 & 2 & 3 & 4 & 1 & 2 & 3 & 4 & 1 & 2 & 3 & 4\\
4 & 1 & 2 & 3 & 4 & 1 & 2 & 3 & 4 & 1 & 2 & 3\\
2 & 3 & 4 & 1 & 2 & 3 & 4 & 1 & 2 & 3 & 4 & 1\\
3 & 4 & 1 & 2 & 3 & 4 & 1 & 2 & 3 & 4 & 1 & 2
\end{matrix} 
\end{align*}
and
\begin{align*}
d^* =
\begin{matrix}
1 & 1 & 1 & 2 & 2 & 2 & 3 & 3 & 3 & 4 & 4 & 4\\
2 & 3 & 4 & 1 & 3 & 4 & 1 & 2 & 4 & 1 & 2 & 3\\
3 & 4 & 2 & 4 & 1 & 3 & 2 & 4 & 1 & 3 & 1 & 2\\
4 & 2 & 3 & 3 & 4 & 1 & 4 & 1 & 2 & 2 & 3 & 1
\end{matrix}. 
\end{align*}
The balanced uniform design $d_2$ is constructed using the method described by \citet{Bose2009OptimalDesigns}. For $p=t=3$, a balanced uniform design with the same number of subjects as $d^*$ is also an orthogonal array, so it is not considered as a candidate for comparison with $d^*$. A graphical procedure is used to evaluate the designs $d^{(0)} \in \{d_1, d_2, d^*\}$. In our illustrations we assume $\sigma_{11} = \sigma_{22}$. The next remark shows that if $\sigma_{11} = \sigma_{22}$, then for $p=t \geq 3$, the relative difference does not depend on the value of $\sigma_{11}$.\par

\begin{remarks}
\label{markov-lemma-sigma}
Let $d^{(0)} \in \mathcal{D}^{(1)}_{t,n=\lambda t \left( t-1 \right),p=t}$ be a design, where $\lambda$ is a positive integer and $p=t \geq 3$. Then for $\sigma_{11} = \sigma_{22}$, $RD_{d^{(0)}}$ is independent of the value of $\sigma_{11}$, where $RD_{d^{(0)}}$ is as given in \eqref{markov-t21c-1}.
\end{remarks}
\begin{proof}
Details are given in \ref{markov-efficiency}.
\end{proof}

\noindent From our choices of $\boldsymbol{V}_1$ and $\boldsymbol{V}_R$, it is clear that a $p \times p$ matrix $\boldsymbol{V}_1$ is a function of $\sigma_{11}$ and $r_{(1)}$, while a $p \times p$ matrix $\boldsymbol{V}_R$ is only a function of $r_{(1)}$. Thus from \autoref{markov-lemma-sigma}, we can see that for any $d^{(0)} \in \mathcal{D}^{(1)}_{t,n=\lambda t \left( t-1 \right),p=t}$, $p=t \geq 3$ and $\sigma_{11} = \sigma_{22}$, the relative difference given as $RD_{d^{(0)}}$ is a function of $d^{(0)}$, $p$, $n$, $\rho$ and $r_{(1)}$.\par

\noindent For $p=t \in \{ 3,4 \}$, $n=t \left(t-1 \right)$, $\sigma_{11} = \sigma_{22}$,  and $d^{(0)} \in \{ d_1, d_2, d^* \}$, we plot $min_{0< |\rho | <1} ~ RD_{d^{(0)}}$ and $max_{0< |\rho | <1} ~ RD_{d^{(0)}}$, corresponding to various positive permissible values of $r_{(1)}$. Here by permissible values, we mean those values for which the matrices $\boldsymbol{V}_1$ and $\boldsymbol{V}_R$ are positive definite. If in any case $min_{0< |\rho | <1} ~ RD_{d^{(0)}}$ attains $0$, then this implies that design $d^{(0)}$ maximizes $tr \left( \boldsymbol{{C}}_{d^{(0)}(s2)} \right)$ and  $d^{(0)}$ is a efficient design. If $min_{0< |\rho | <1} ~ RD_{d^{(0)}}$ does not attain $0$, we plot $max_{0< |\rho | <1} ~ RD_{d^{(0)}}$ to get an understanding of the worst scenario and quantify the departure of the design under consideration from an efficient design. Lower values of $max_{0< |\rho | <1} ~ RD_{d^{(0)}}$ implies better performance of the design in terms of efficiency. For $d^{(0)} \in \{ d_1, d^* \}$, where $p=t=3$ and $n=6$, \autoref{min-max-p3t3-1} and \autoref{min-max-p3t3-2} display the plots of $min_{0< |\rho | <1} ~ RD_{d^{(0)}}$ and $max_{0< |\rho | <1} ~ RD_{d^{(0)}}$, for cases $1$-$7$. While, for $p=t=4$, $n=12$, \autoref{min-max-p4t4-1} and \autoref{min-max-p4t4-2} show the design performances.\par

\noindent From \autoref{min-max-p3t3-1} and \autoref{min-max-p3t3-2}, for $p=t=3$ and $n=6$, we see that $min_{0< |\rho | <1} ~ RD_{d_1}$ never attains $0$ and hence $d_1$, like $d^*$, fails to attain the upper bound $u$. Also, we note that for cases $1$-$7$,  the values of $max_{0< |\rho | <1} ~ RD_{d_1}>>max_{0< |\rho | <1} ~ RD_{d^*}$, implying that $d^*$ performs much better when compared to $d_1$. Considering all seven cases, the maximum value of $RD_{d^*}$ over $\rho \neq 0$ is at most around $0.39\%$ indicating that $d^*$ is a highly efficient design. In the $p=t=4$ and $n=12$ case, we observe from \autoref{min-max-p4t4-1} and \autoref{min-max-p4t4-2} that for none of the designs $d^{(0)} \in \{ d_1, d_2, d^* \}$, $min_{0< |\rho | <1} ~ RD_{d^{(0)}}$ attain $0$ and  $max_{0< |\rho | <1} ~ RD_{d^*} < max_{0< |\rho | <1} ~ RD_{d_2} < max_{0< |\rho | <1} ~ RD_{d_1}$. So $d^*$ performs the best among the three designs being compared, followed by $d_2$. Since $max_{0< |\rho | <1} ~ RD_{d_1}$ is far away from $0$, we reject the candidature of $d_1$ as a highly efficient design. For some values of $r_{(1)}$, we note $max_{0< |\rho | <1} ~ RD_{d_2}$ is quite close to $max_{0< |\rho | <1} ~ RD_{d^*}$. The value of $max_{0< |\rho | <1} ~ RD_{d^*}$ is found to be at most $2.02\%$, while most of the values of $max_{0< |\rho | <1} ~ RD_{d_2}$ are much larger. Thus, we may conclude that $d^*$ is the best design among the three designs being compared and also is a highly efficient design. 

\begin{figure}
\centering
\begin{subfigure}{.5\textwidth}
  \centering
  \includegraphics[height=0.4\linewidth, width=1.0\linewidth]{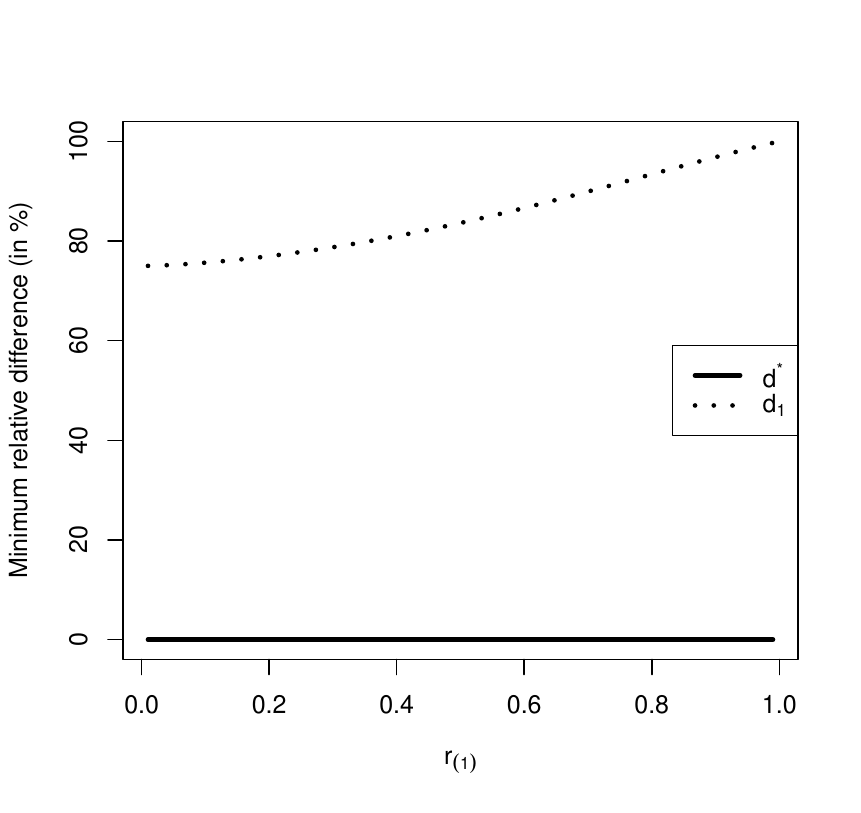}
  \caption{$min_{0< |\rho | <1}~ RD_{d^{(0)}}$ for $d^{(0)} \in \{ d_1, d^* \}$ \\($\boldsymbol{V}_1$: Mat($0.5$), $\boldsymbol{V}_R$: Mat($1.5$))}
  \label{fig:minimum_case1_p3t3}
\end{subfigure}%
\begin{subfigure}{.5\textwidth}
  \centering
  \includegraphics[height=0.4\linewidth, width=1.0\linewidth]{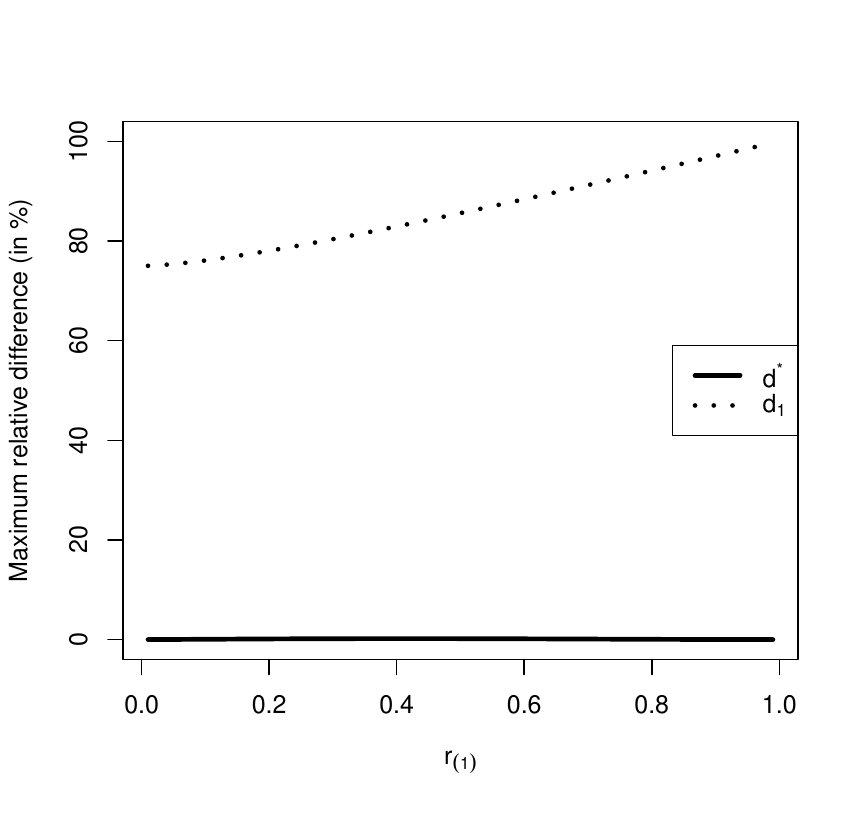}
  \caption{$max_{0< |\rho | <1}~ RD_{d^{(0)}}$ for $d^{(0)} \in \{ d_1, d^* \}$ \\($\boldsymbol{V}_1$: Mat($0.5$), $\boldsymbol{V}_R$: Mat($1.5$))}
  \label{fig:maximum_case1_p3t3}
\end{subfigure}\\
\begin{subfigure}{.5\textwidth}
  \centering
  \includegraphics[height=0.4\linewidth, width=1.0\linewidth]{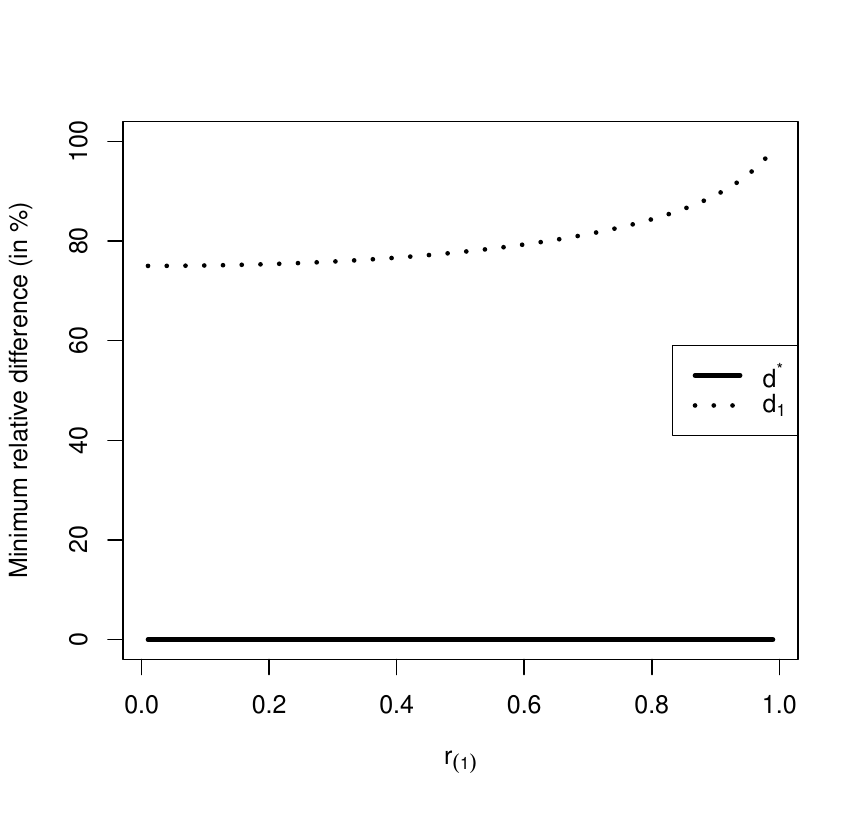}
  \caption{$min_{0< |\rho | <1}~ RD_{d^{(0)}}$ for $d^{(0)} \in \{ d_1, d^* \}$ \\($\boldsymbol{V}_1$: Mat($0.5$), $\boldsymbol{V}_R$: Mat($\infty$))}
  \label{fig:minimum_case2_p3t3}
\end{subfigure}%
\begin{subfigure}{.5\textwidth}
  \centering
  \includegraphics[height=0.4\linewidth, width=1.0\linewidth]{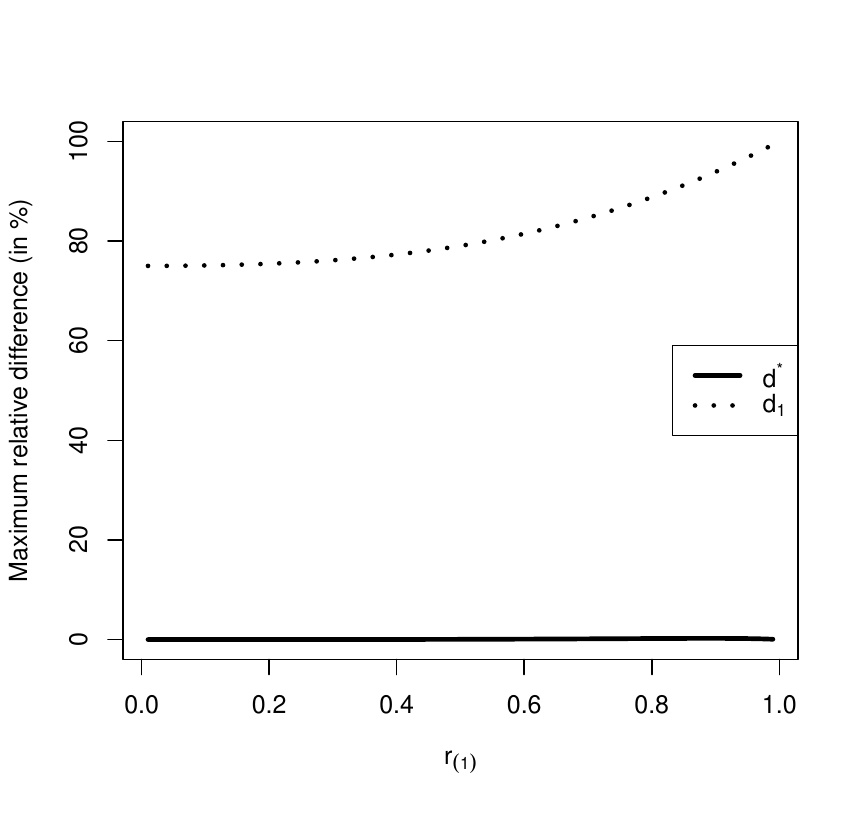}
  \caption{$max_{0< |\rho | <1}~ RD_{d^{(0)}}$ for $d^{(0)} \in \{ d_1, d^* \}$ \\($\boldsymbol{V}_1$: Mat($0.5$), $\boldsymbol{V}_R$: Mat($\infty$))}
  \label{fig:maximum_case2_p3t3}
\end{subfigure}\\
\begin{subfigure}{.5\textwidth}
  \centering
  \includegraphics[height=0.4\linewidth, width=1.0\linewidth]{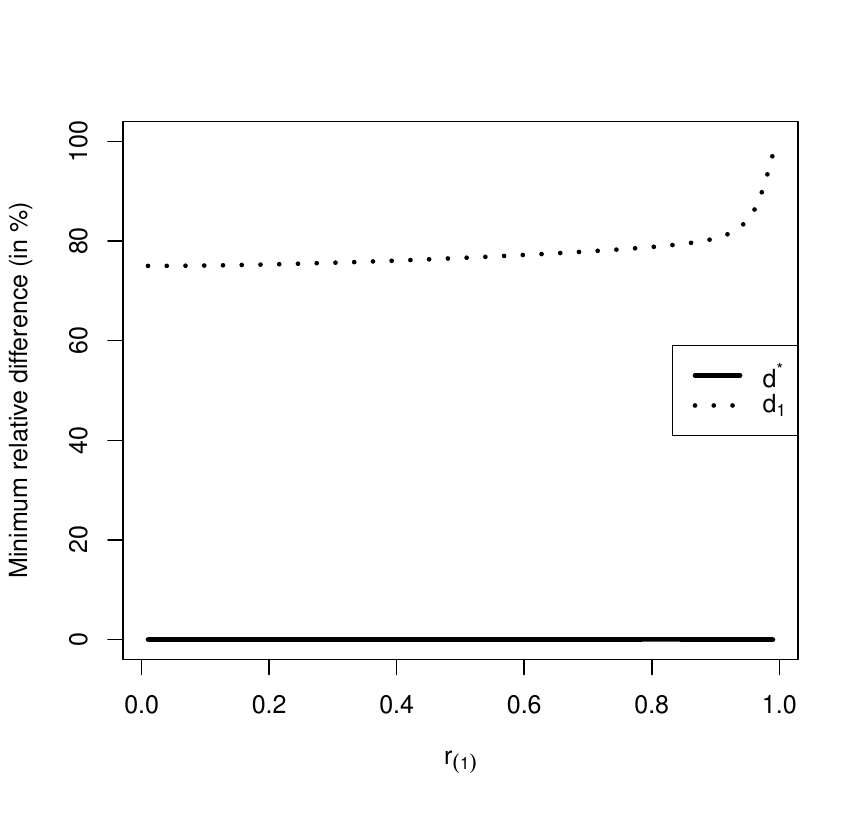}
  \caption{$min_{0< |\rho | <1}~ RD_{d^{(0)}}$ for $d^{(0)} \in \{ d_1, d^* \}$ \\($\boldsymbol{V}_1$: Mat($1.5$), $\boldsymbol{V}_R$: Mat($0.5$))}
  \label{fig:minimum_case3_p3t3}
\end{subfigure}%
\begin{subfigure}{.5\textwidth}
  \centering
  \includegraphics[height=0.4\linewidth, width=1.0\linewidth]{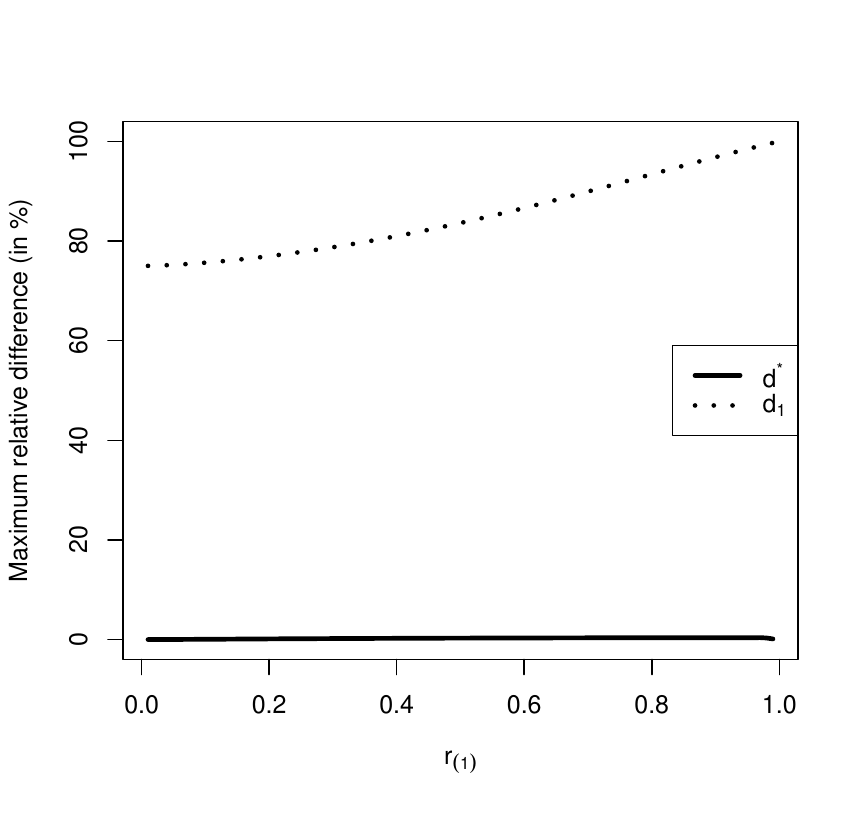}
  \caption{$max_{0< |\rho | <1}~ RD_{d^{(0)}}$ for $d^{(0)} \in \{ d_1, d^* \}$ \\($\boldsymbol{V}_1$: Mat($1.5$), $\boldsymbol{V}_R$: Mat($0.5$))}
  \label{fig:maximum_case3_p3t3}
\end{subfigure}\\
\begin{subfigure}{.5\textwidth}
  \centering
  \includegraphics[height=0.4\linewidth, width=1.0\linewidth]{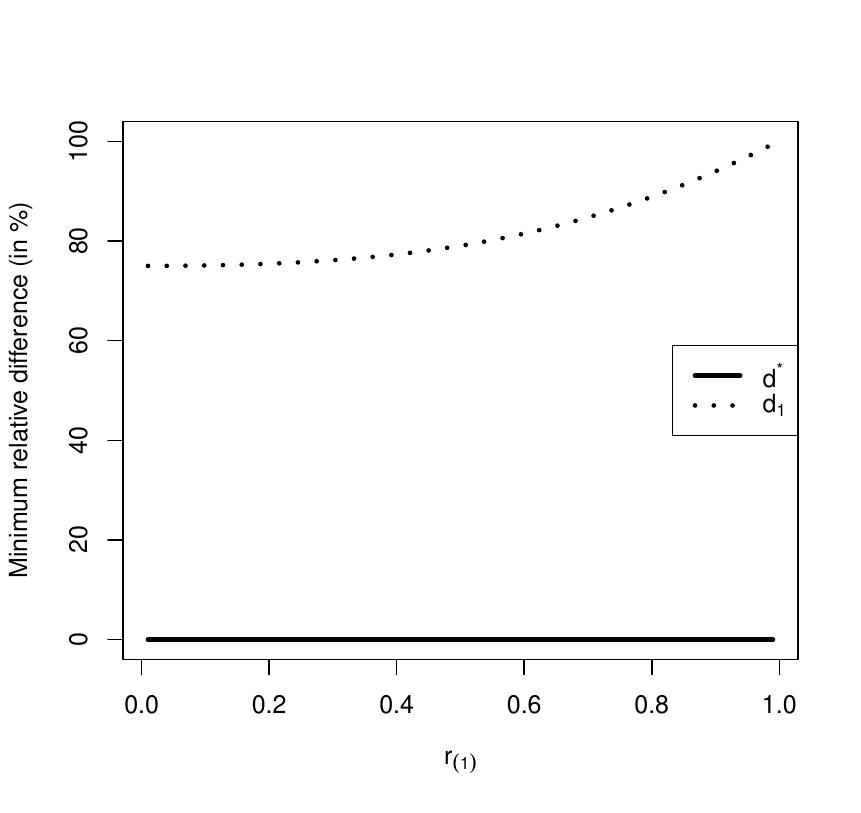}
  \caption{$min_{0< |\rho | <1}~ RD_{d^{(0)}}$ for $d^{(0)} \in \{ d_1, d^* \}$ \\($\boldsymbol{V}_1$: Mat($1.5$), $\boldsymbol{V}_R$: Mat($\infty$))}
  \label{fig:minimum_case4_p3t3}
\end{subfigure}%
\begin{subfigure}{.5\textwidth}
  \centering
  \includegraphics[height=0.4\linewidth, width=1.0\linewidth]{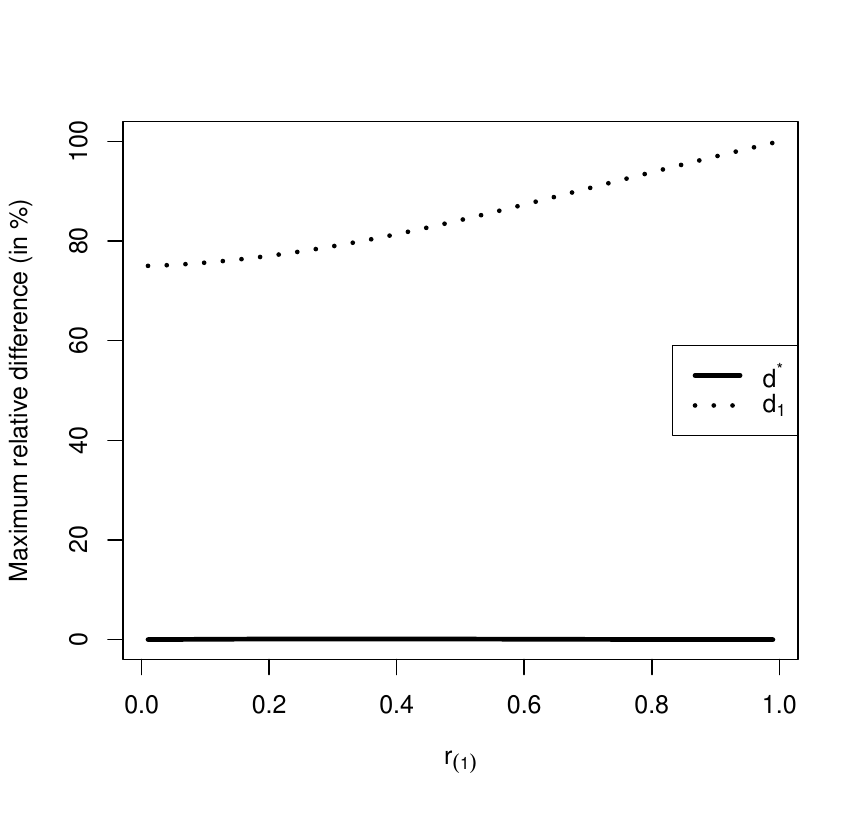}
  \caption{$max_{0< |\rho | <1}~ RD_{d^{(0)}}$ for $d^{(0)} \in \{ d_1, d^* \}$ \\($\boldsymbol{V}_1$: Mat($1.5$), $\boldsymbol{V}_R$: Mat($\infty$))}
  \label{fig:maximum_case4_p3t3}
\end{subfigure}
\caption{Plots of $min_{0< |\rho | <1}~ RD_{d^{(0)}}$ and $max_{0< |\rho | <1}~ RD_{d^{(0)}}$ for $d^{(0)} \in \{ d_1, d^* \}$,  $p=t=3$, $n=6$ and $\sigma_{11} = \sigma_{22}$.}
\label{min-max-p3t3-1}
\end{figure}

\begin{figure}
\centering
\begin{subfigure}{.5\textwidth}
  \centering
  \includegraphics[height=0.4\linewidth, width=1.0\linewidth]{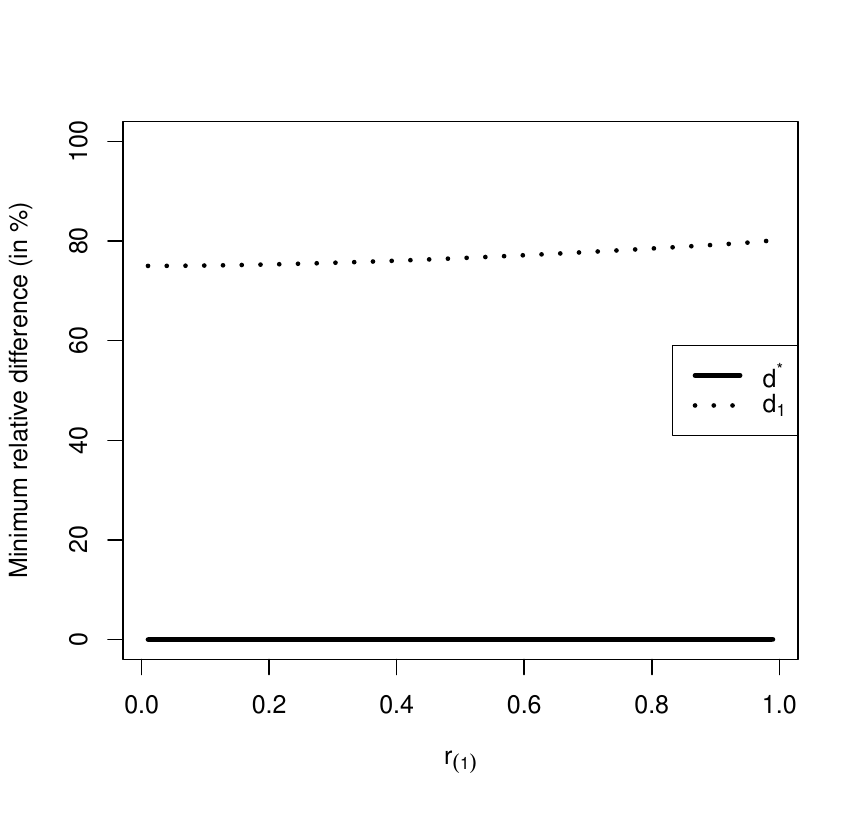}
  \caption{$min_{0< |\rho | <1}~ RD_{d^{(0)}}$ for $d^{(0)} \in \{ d_1, d^* \}$ \\($\boldsymbol{V}_1$: Mat($\infty$), $\boldsymbol{V}_R$: Mat($0.5$))}
  \label{fig:minimum_case5_p3t3}
\end{subfigure}%
\begin{subfigure}{.5\textwidth}
  \centering
  \includegraphics[height=0.4\linewidth, width=1.0\linewidth]{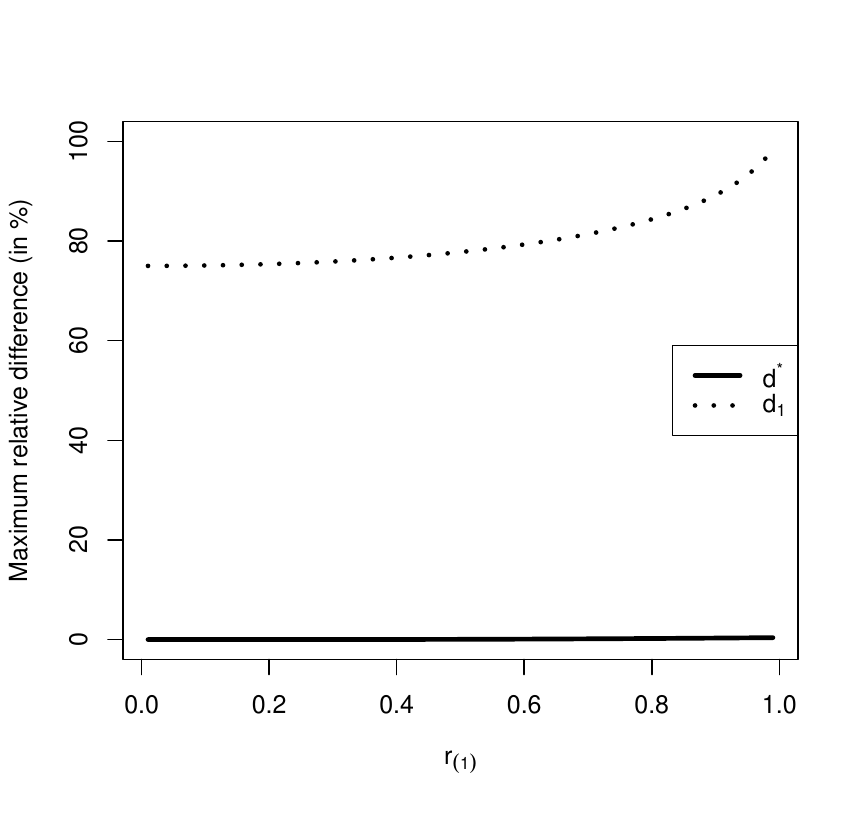}
  \caption{$max_{0< |\rho | <1}~ RD_{d^{(0)}}$ for $d^{(0)} \in \{ d_1, d^* \}$ \\($\boldsymbol{V}_1$: Mat($\infty$), $\boldsymbol{V}_R$: Mat($0.5$))}
  \label{fig:maximum_case5_p3t3}
\end{subfigure}\\
\begin{subfigure}{.5\textwidth}
  \centering
  \includegraphics[height=0.4\linewidth, width=1.0\linewidth]{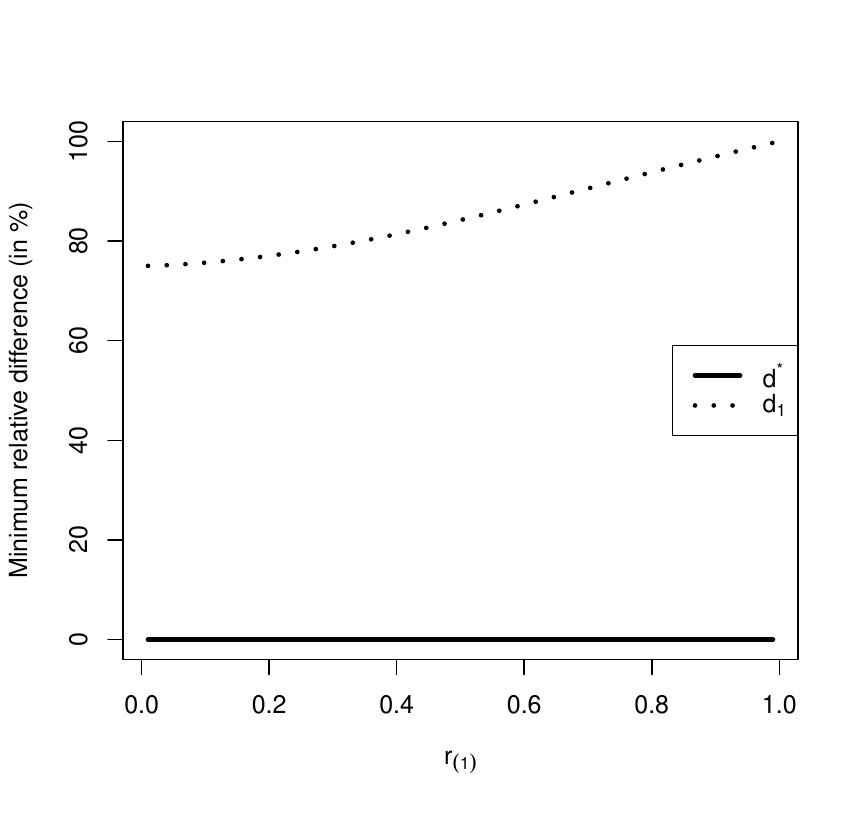}
  \caption{$min_{0< |\rho | <1}~ RD_{d^{(0)}}$ for $d^{(0)} \in \{ d_1, d^* \}$ \\($\boldsymbol{V}_1$: Mat($\infty$), $\boldsymbol{V}_R$: Mat($1.5$))}
  \label{fig:minimum_case6_p3t3}
\end{subfigure}%
\begin{subfigure}{.5\textwidth}
  \centering
  \includegraphics[height=0.4\linewidth, width=1.0\linewidth]{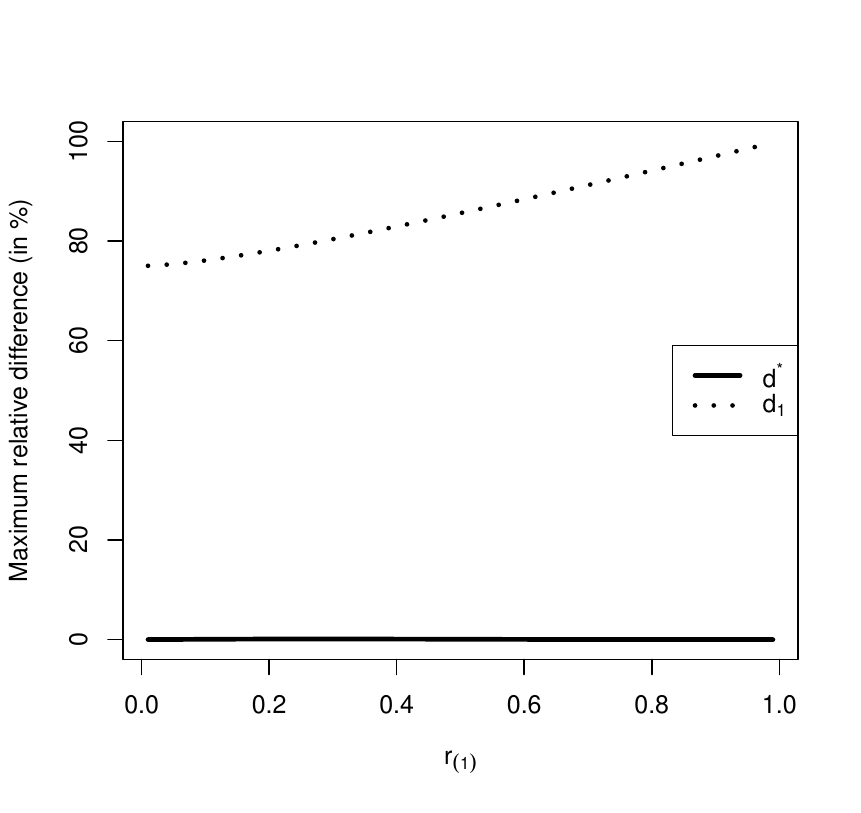}
  \caption{$max_{0< |\rho | <1}~ RD_{d^{(0)}}$ for $d^{(0)} \in \{ d_1, d^* \}$ \\($\boldsymbol{V}_1$: Mat($\infty$), $\boldsymbol{V}_R$: Mat($1.5$))}
  \label{fig:maximum_case6_p3t3}
\end{subfigure}\\
\begin{subfigure}{.5\textwidth}
  \centering
  \includegraphics[height=0.4\linewidth, width=1.0\linewidth]{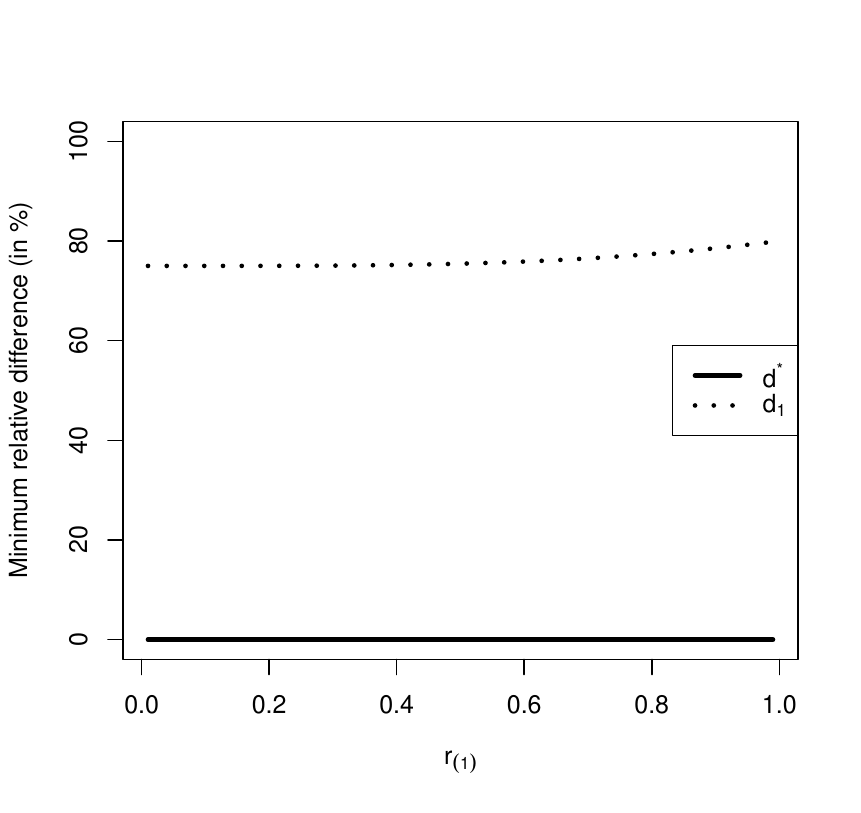}
  \caption{$min_{0< |\rho | <1}~ RD_{d^{(0)}}$ for $d^{(0)} \in \{ d_1, d^* \}$ \\($\boldsymbol{V}_1$: Mat($0.5$), $\boldsymbol{V}_R$: Mat($0.5$))}
  \label{fig:minimum_case7_p3t3}
\end{subfigure}%
\begin{subfigure}{.5\textwidth}
  \centering
  \includegraphics[height=0.4\linewidth, width=1.0\linewidth]{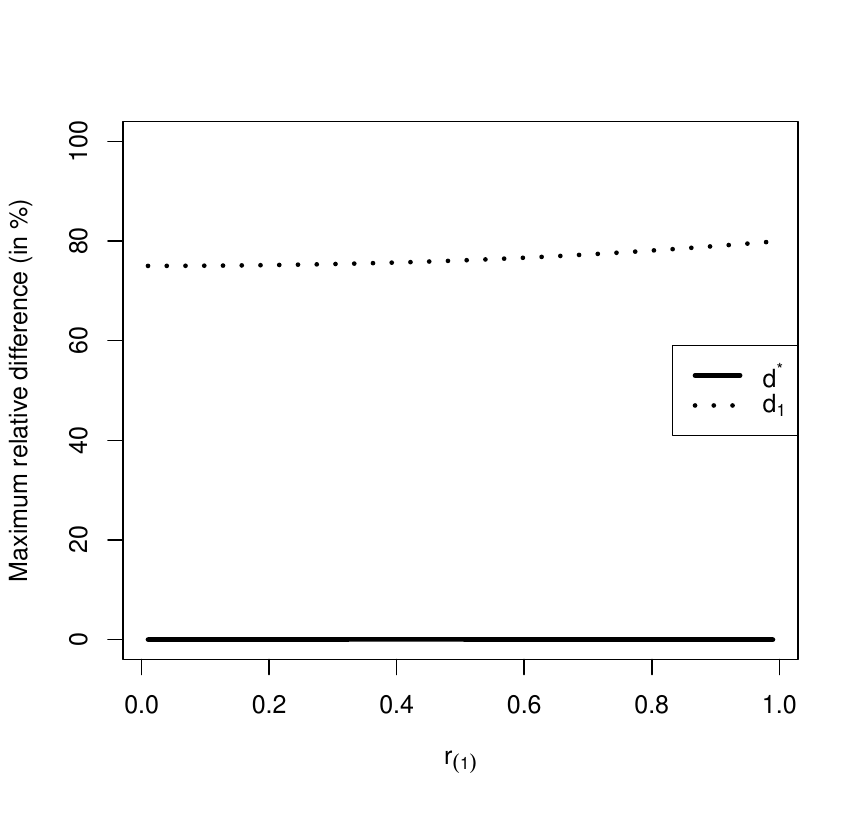}
  \caption{$max_{0< |\rho | <1}~ RD_{d^{(0)}}$ for $d^{(0)} \in \{ d_1, d^* \}$ \\($\boldsymbol{V}_1$: Mat($0.5$), $\boldsymbol{V}_R$: Mat($0.5$))}
  \label{fig:maximum_case7_p3t3}
\end{subfigure}
\caption{Plots of $min_{0< |\rho | <1}~ RD_{d^{(0)}}$ and $max_{0< |\rho | <1}~ RD_{d^{(0)}}$ for $d^{(0)} \in \{ d_1, d^* \}$, $p=t=3$, $n=6$ and $\sigma_{11} = \sigma_{22}$.}
\label{min-max-p3t3-2}
\end{figure}

\begin{figure}
\centering
\begin{subfigure}{.5\textwidth}
  \centering
  \includegraphics[height=0.4\linewidth, width=1.0\linewidth]{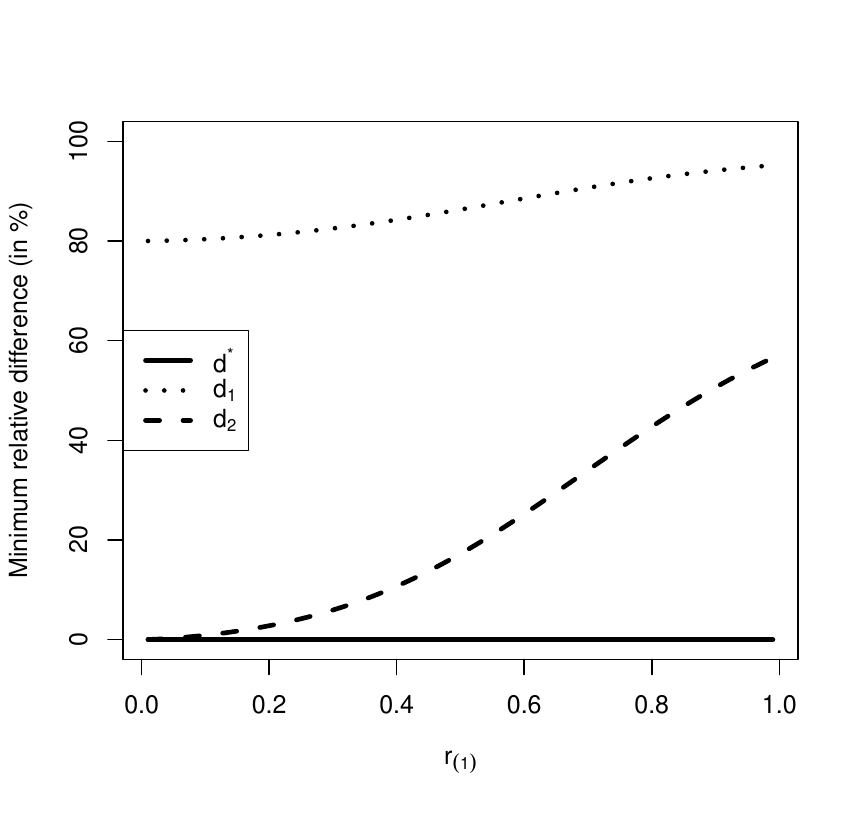}
  \caption{$min_{0< |\rho | <1}~ RD_{d^{(0)}}$ for $d^{(0)} \in \{ d_1, d_2, d^* \}$ \\($\boldsymbol{V}_1$: Mat($0.5$), $\boldsymbol{V}_R$: Mat($1.5$))}
  \label{fig:minimum_case1_p4t4}
\end{subfigure}%
\begin{subfigure}{.5\textwidth}
  \centering
  \includegraphics[height=0.4\linewidth, width=1.0\linewidth]{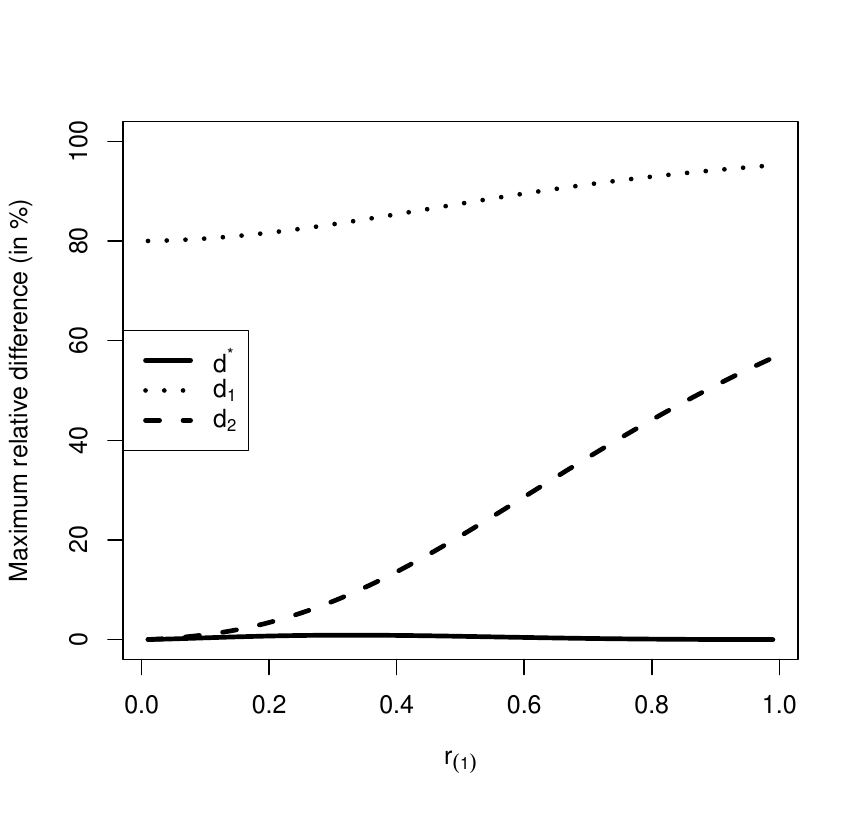}
  \caption{$max_{0< |\rho | <1}~ RD_{d^{(0)}}$ for $d^{(0)} \in \{ d_1, d_2, d^* \}$ \\($\boldsymbol{V}_1$: Mat($0.5$), $\boldsymbol{V}_R$: Mat($1.5$))}
  \label{fig:maximum_case1_p4t4}
\end{subfigure}\\
\begin{subfigure}{.5\textwidth}
  \centering
  \includegraphics[height=0.4\linewidth, width=1.0\linewidth]{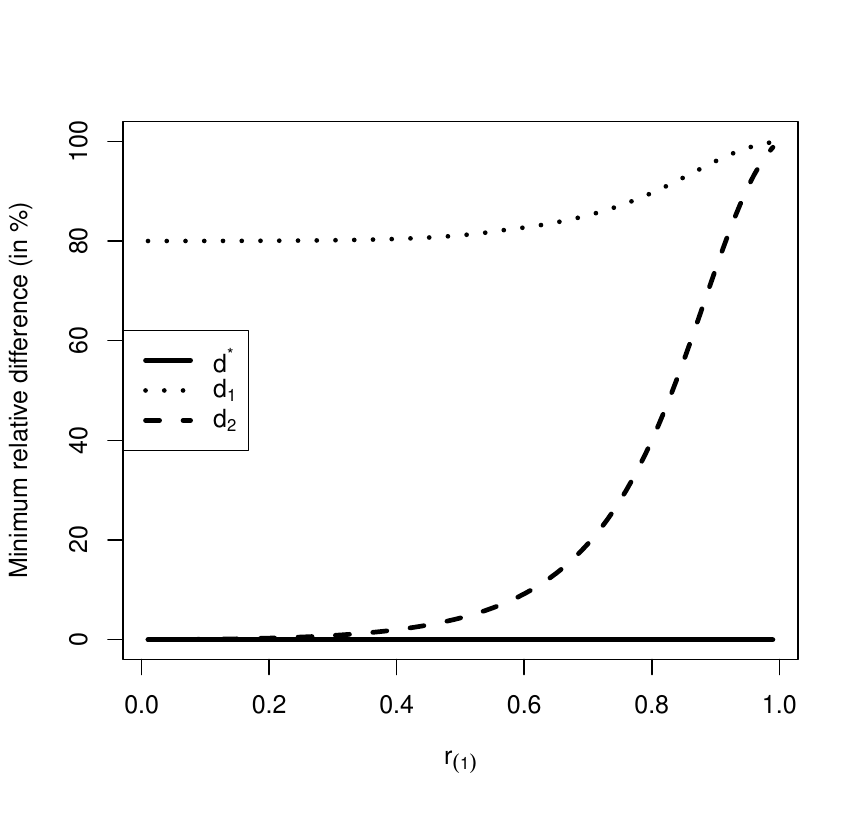}
  \caption{$min_{0< |\rho | <1}~ RD_{d^{(0)}}$ for $d^{(0)} \in \{ d_1, d_2, d^* \}$ \\($\boldsymbol{V}_1$: Mat($0.5$), $\boldsymbol{V}_R$: Mat($\infty$))}
  \label{fig:minimum_case2_p4t4}
\end{subfigure}%
\begin{subfigure}{.5\textwidth}
  \centering
  \includegraphics[height=0.4\linewidth, width=1.0\linewidth]{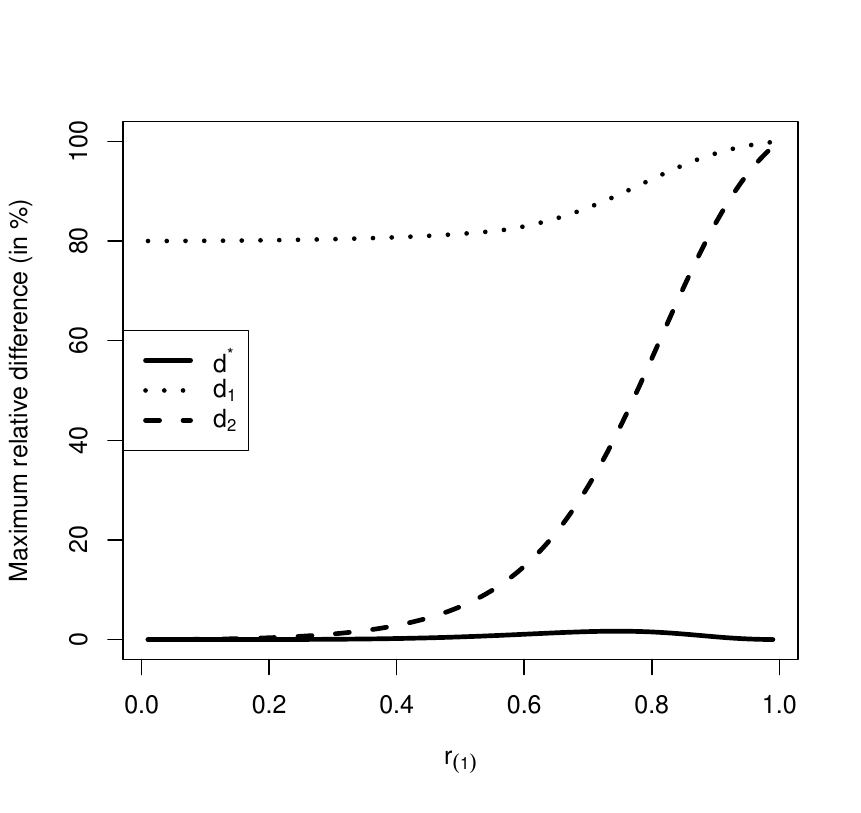}
  \caption{$max_{0< |\rho | <1}~ RD_{d^{(0)}}$ for $d^{(0)} \in \{ d_1, d_2, d^* \}$ \\($\boldsymbol{V}_1$: Mat($0.5$), $\boldsymbol{V}_R$: Mat($\infty$))}
  \label{fig:maximum_case2_p4t4}
\end{subfigure}\\
\begin{subfigure}{.5\textwidth}
  \centering
  \includegraphics[height=0.4\linewidth, width=1.0\linewidth]{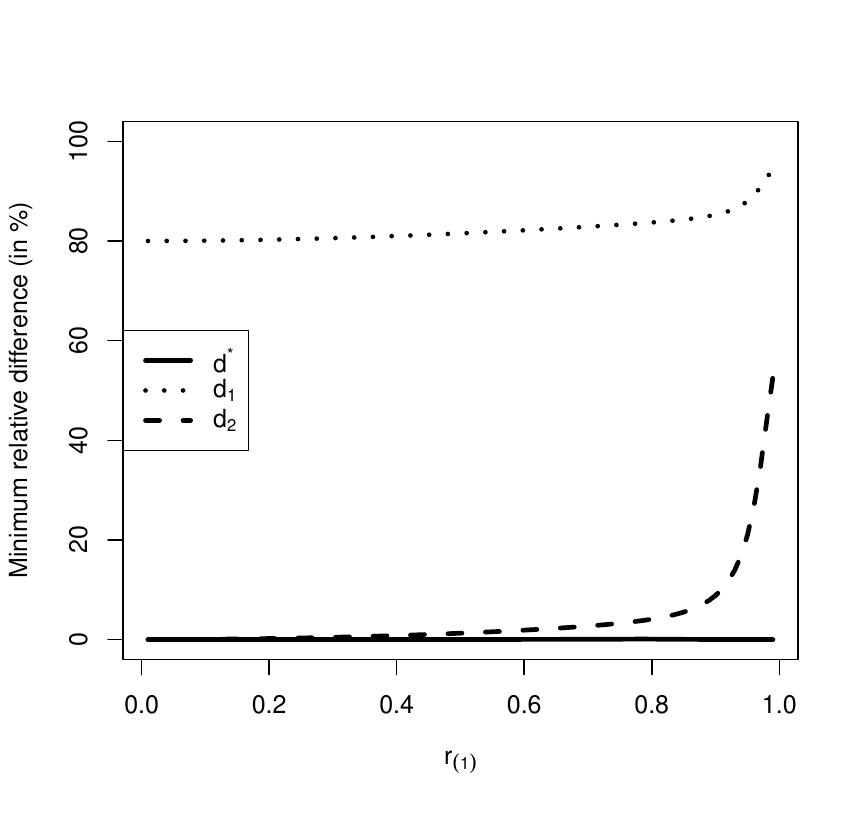}
  \caption{$min_{0< |\rho | <1}~ RD_{d^{(0)}}$ for $d^{(0)} \in \{ d_1, d_2, d^* \}$ \\($\boldsymbol{V}_1$: Mat($1.5$), $\boldsymbol{V}_R$: Mat($0.5$))}
  \label{fig:minimum_case3_p4t4}
\end{subfigure}%
\begin{subfigure}{.5\textwidth}
  \centering
  \includegraphics[height=0.4\linewidth, width=1.0\linewidth]{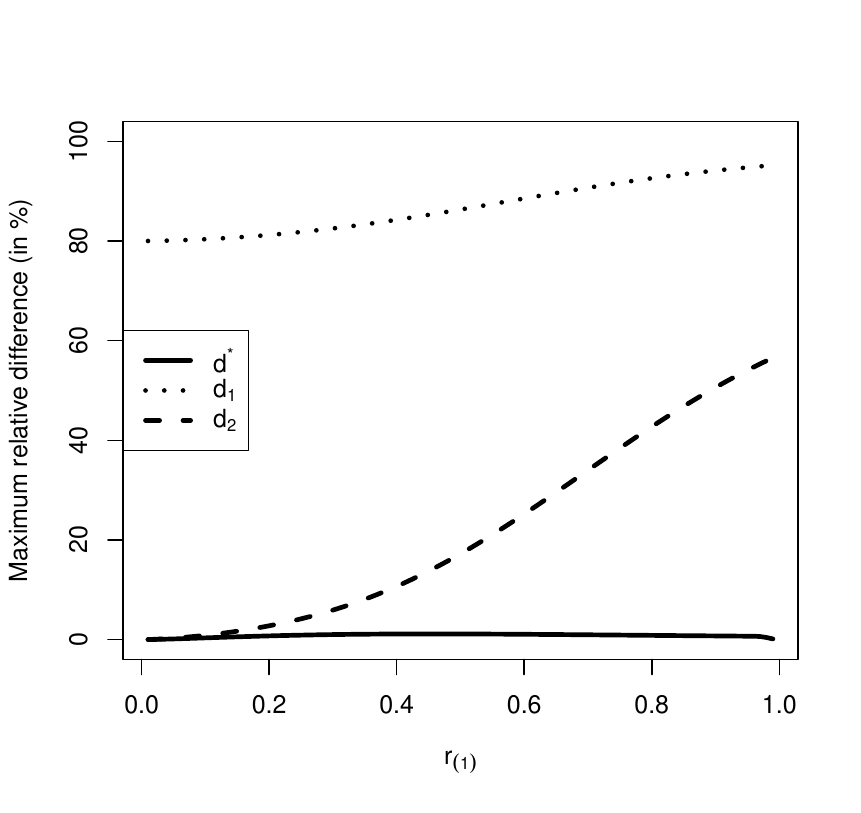}
  \caption{$max_{0< |\rho | <1}~ RD_{d^{(0)}}$ for $d^{(0)} \in \{ d_1, d_2, d^* \}$ \\($\boldsymbol{V}_1$: Mat($1.5$), $\boldsymbol{V}_R$: Mat($0.5$))}
  \label{fig:maximum_case3_p4t4}
\end{subfigure}\\
\begin{subfigure}{.5\textwidth}
  \centering
  \includegraphics[height=0.4\linewidth, width=1.0\linewidth]{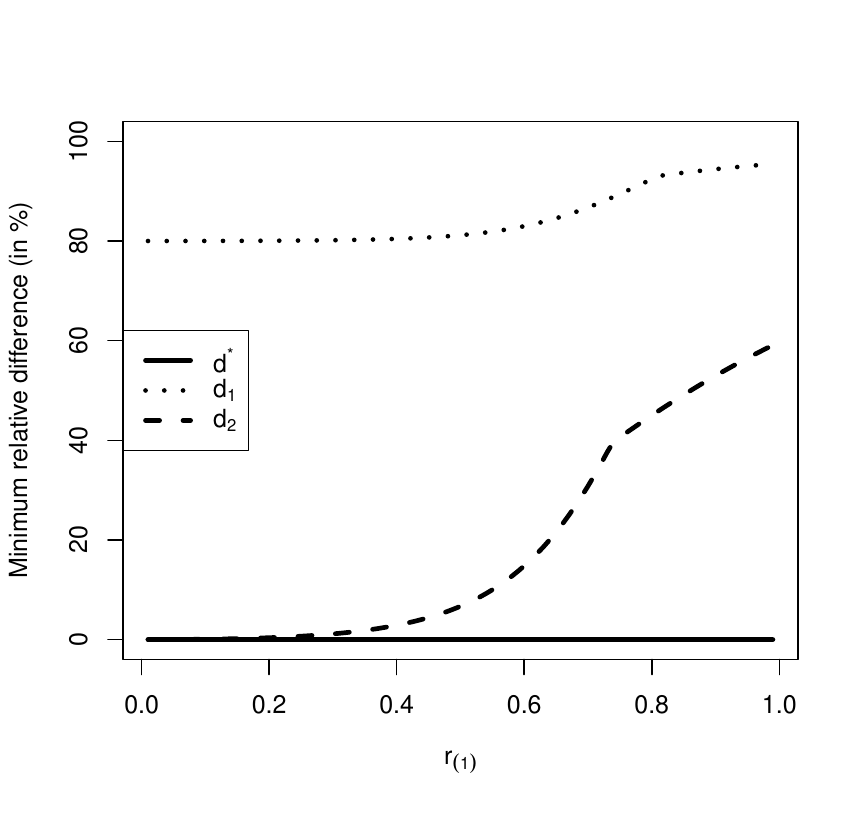}
  \caption{$min_{0< |\rho | <1}~ RD_{d^{(0)}}$ for $d^{(0)} \in \{ d_1, d_2, d^* \}$ \\($\boldsymbol{V}_1$: Mat($1.5$), $\boldsymbol{V}_R$: Mat($\infty$))}
  \label{fig:minimum_case4_p4t4}
\end{subfigure}%
\begin{subfigure}{.5\textwidth}
  \centering
  \includegraphics[height=0.4\linewidth, width=1.0\linewidth]{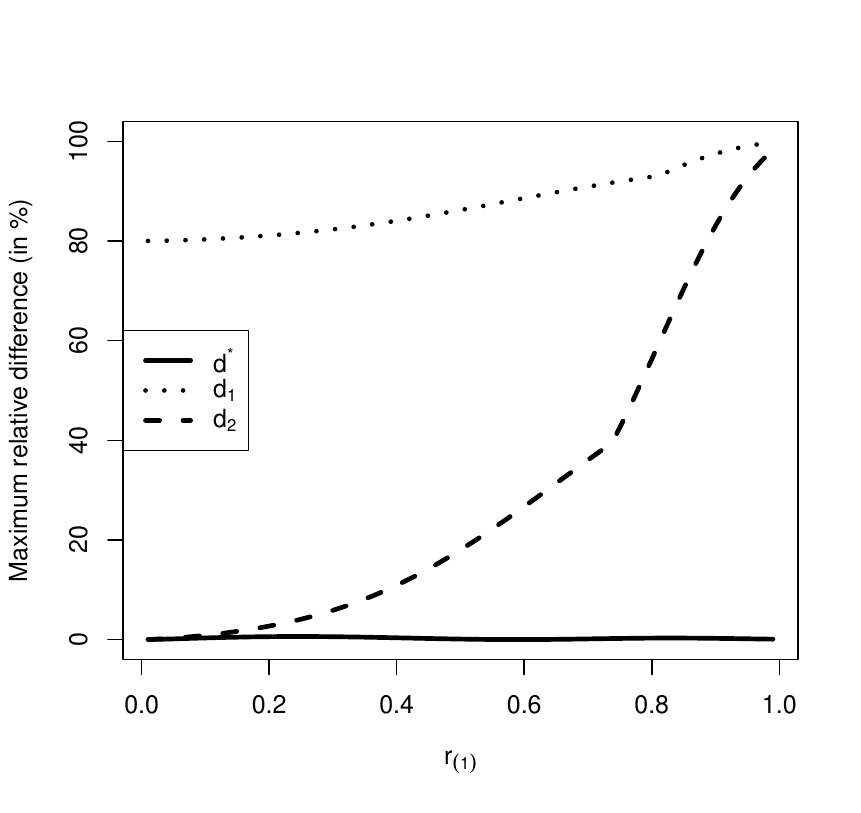}
  \caption{$max_{0< |\rho | <1}~ RD_{d^{(0)}}$ for $d^{(0)} \in \{ d_1, d_2, d^* \}$ \\($\boldsymbol{V}_1$: Mat($1.5$), $\boldsymbol{V}_R$: Mat($\infty$))}
  \label{fig:maximum_case4_p4t4}
\end{subfigure}
\caption{Plots of $min_{0< |\rho | <1}~ RD_{d^{(0)}}$ and $max_{0< |\rho | <1}~ RD_{d^{(0)}}$ for $d^{(0)} \in \{ d_1, d_2, d^* \}$, $p=t=4$, $n=12$ and $\sigma_{11} = \sigma_{22}$.}
\label{min-max-p4t4-1}
\end{figure}

\begin{figure}
\centering
\begin{subfigure}{.5\textwidth}
  \centering
  \includegraphics[height=0.4\linewidth, width=1.0\linewidth]{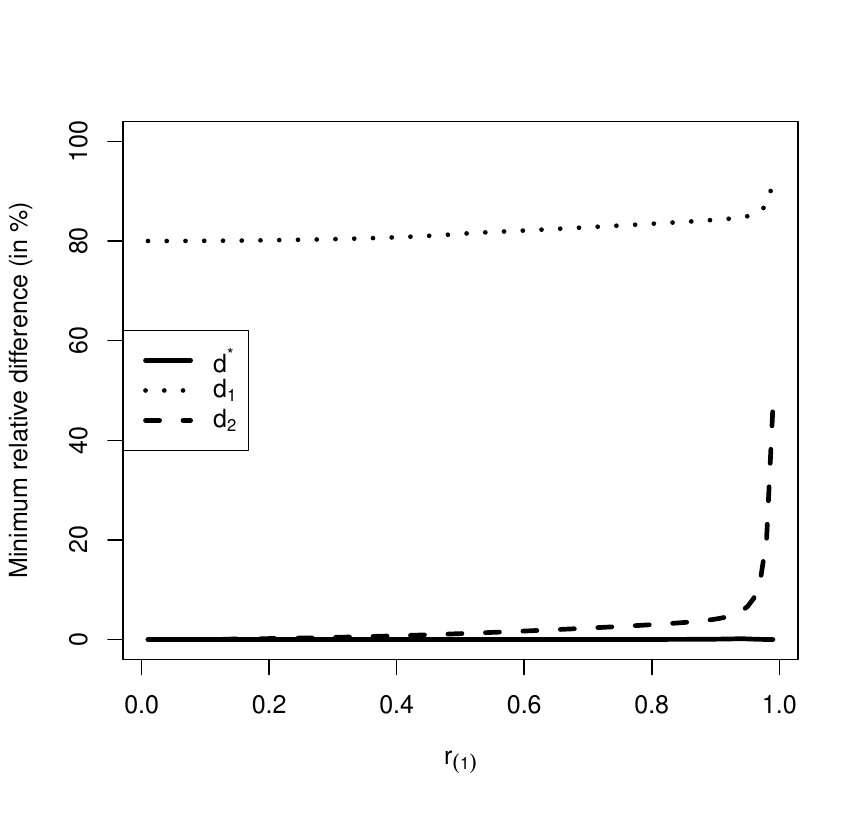}
  \caption{$min_{0< |\rho | <1}~ RD_{d^{(0)}}$ for $d^{(0)} \in \{ d_1, d_2, d^* \}$ \\($\boldsymbol{V}_1$: Mat($\infty$), $\boldsymbol{V}_R$: Mat($0.5$))}
  \label{fig:minimum_case5_p4t4}
\end{subfigure}%
\begin{subfigure}{.5\textwidth}
  \centering
  \includegraphics[height=0.4\linewidth, width=1.0\linewidth]{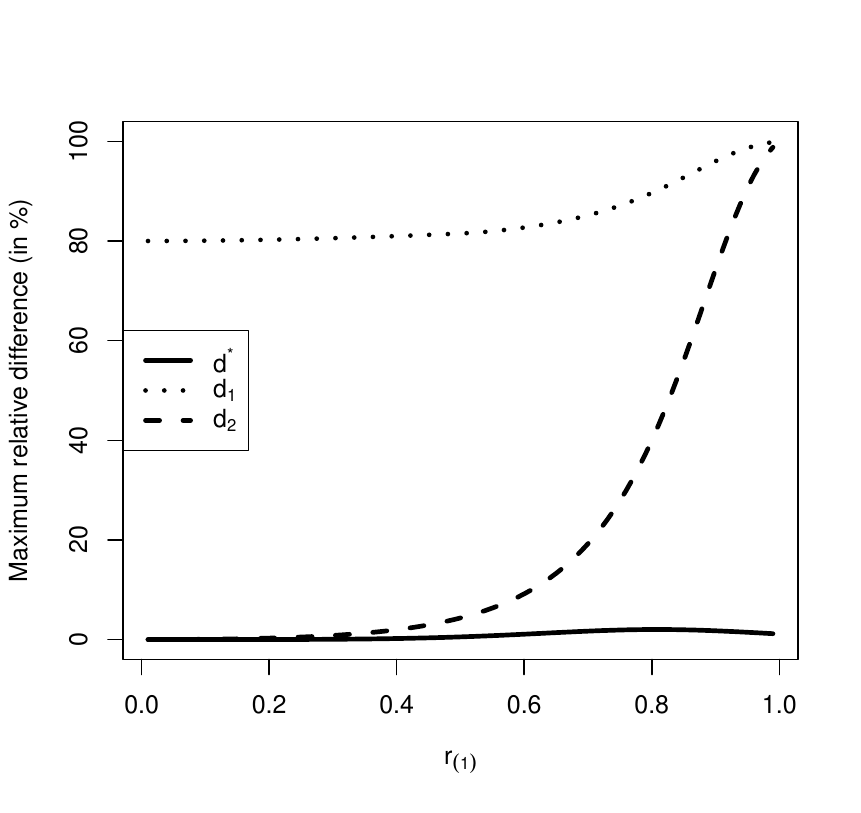}
  \caption{$max_{0< |\rho | <1}~ RD_{d^{(0)}}$ for $d^{(0)} \in \{ d_1, d_2, d^* \}$ \\($\boldsymbol{V}_1$: Mat($\infty$), $\boldsymbol{V}_R$: Mat($0.5$))}
  \label{fig:maximum_case5_p4t4}
\end{subfigure}\\
\begin{subfigure}{.5\textwidth}
  \centering
  \includegraphics[height=0.4\linewidth, width=1.0\linewidth]{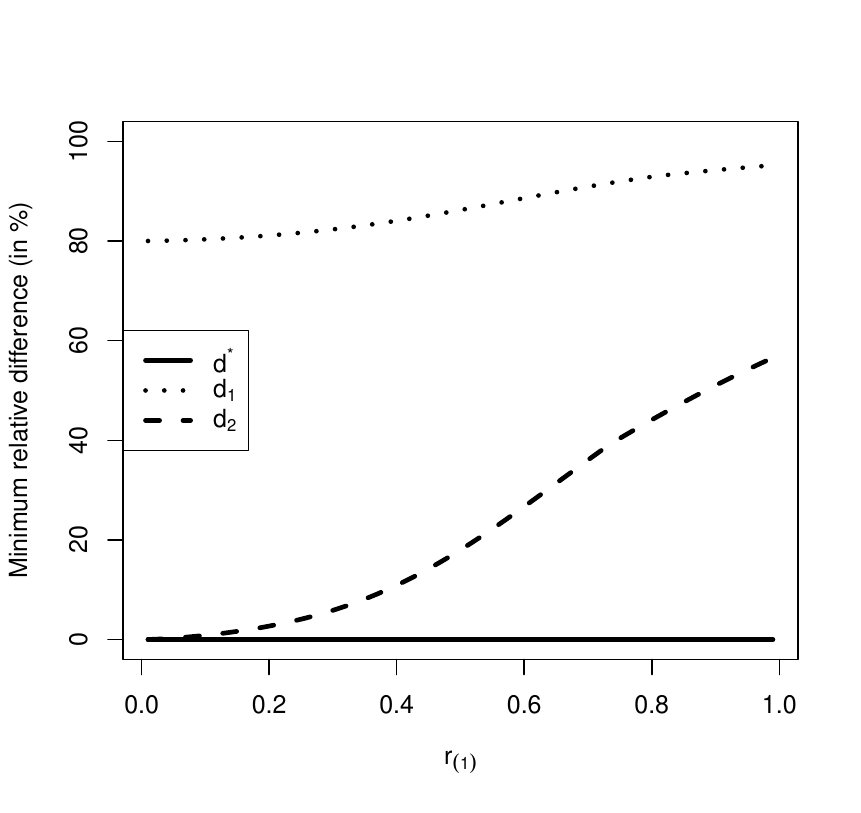}
  \caption{$min_{0< |\rho | <1}~ RD_{d^{(0)}}$ for $d^{(0)} \in \{ d_1, d_2, d^* \}$ \\($\boldsymbol{V}_1$: Mat($\infty$), $\boldsymbol{V}_R$: Mat($1.5$))}
  \label{fig:minimum_case6_p4t4}
\end{subfigure}%
\begin{subfigure}{.5\textwidth}
  \centering
  \includegraphics[height=0.4\linewidth, width=1.0\linewidth]{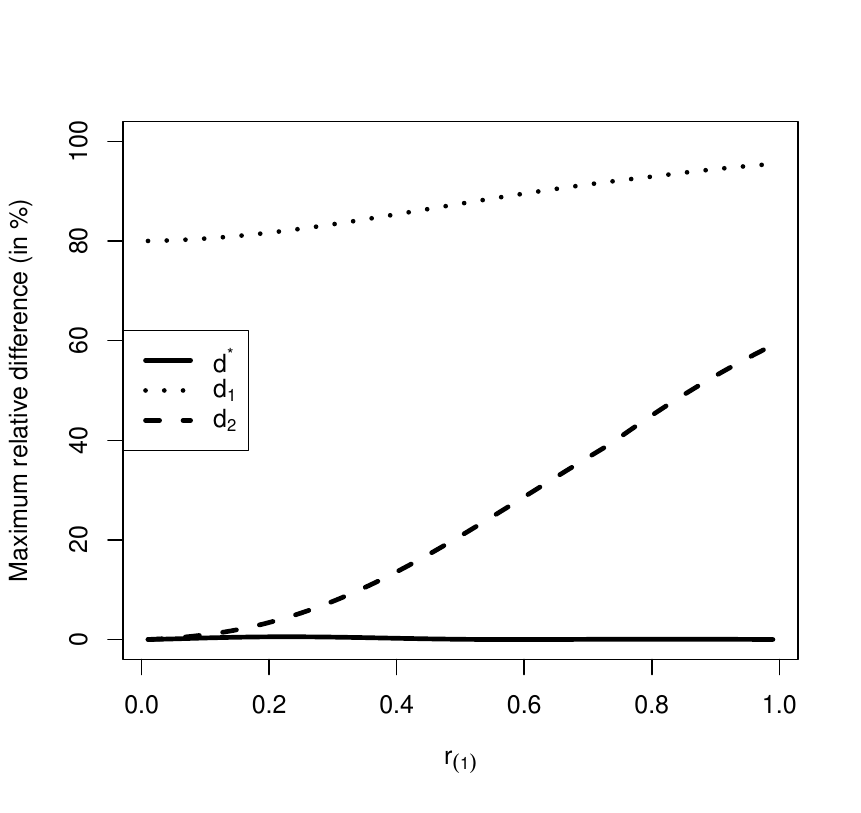}
  \caption{$max_{0< |\rho | <1}~ RD_{d^{(0)}}$ for $d^{(0)} \in \{ d_1, d_2, d^* \}$ \\($\boldsymbol{V}_1$: Mat($\infty$), $\boldsymbol{V}_R$: Mat($1.5$))}
  \label{fig:maximum_case6_p4t4}
\end{subfigure}\\
\begin{subfigure}{.5\textwidth}
  \centering
  \includegraphics[height=0.4\linewidth, width=1.0\linewidth]{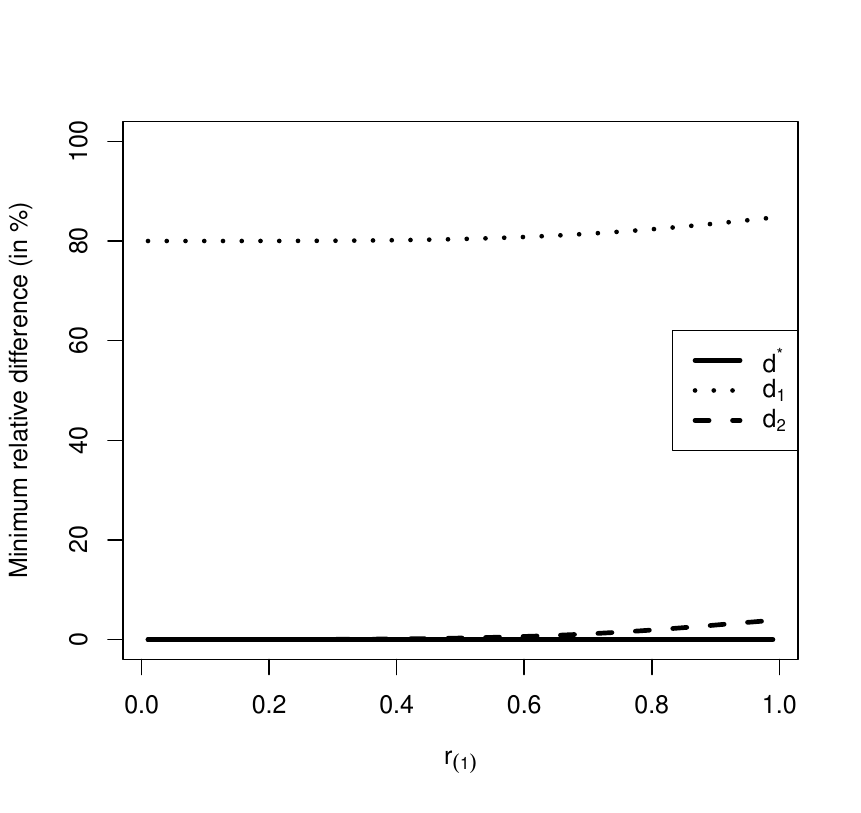}
  \caption{$min_{0< |\rho | <1}~ RD_{d^{(0)}}$ for $d^{(0)} \in \{ d_1, d_2, d^* \}$ \\($\boldsymbol{V}_1$: Mat($0.5$), $\boldsymbol{V}_R$: Mat($0.5$))}
  \label{fig:minimum_case7_p4t4}
\end{subfigure}%
\begin{subfigure}{.5\textwidth}
  \centering
  \includegraphics[height=0.4\linewidth, width=1.0\linewidth]{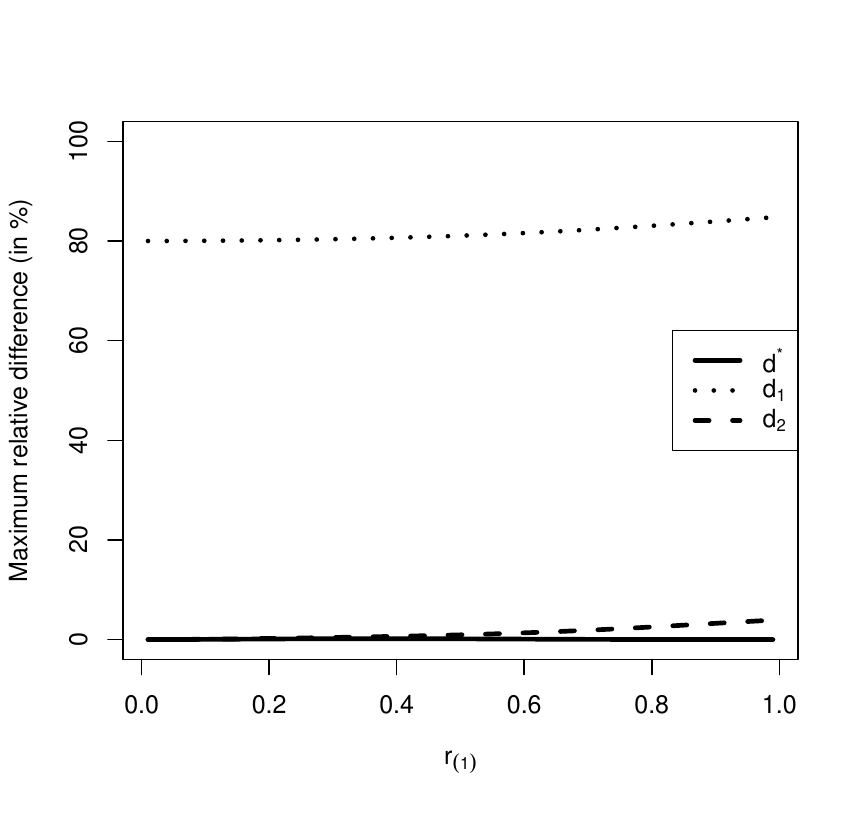}
  \caption{$max_{0< |\rho | <1}~ RD_{d^{(0)}}$ for $d^{(0)} \in \{ d_1, d_2, d^* \}$ \\($\boldsymbol{V}_1$: Mat($0.5$), $\boldsymbol{V}_R$: Mat($0.5$))}
  \label{fig:maximum_case7_p4t4}
\end{subfigure}
\caption{Plots of $min_{0< |\rho | <1}~ RD_{d^{(0)}}$ and $max_{0< |\rho | <1}~ RD_{d^{(0)}}$ for $d^{(0)} \in \{ d_1, d_2, d^* \}$, $p=t=4$, $n=12$ and $\sigma_{11} = \sigma_{22}$.}
\label{min-max-p4t4-2}
\end{figure}

\section{Illustration using the gene data}
\label{illustration}
We go back to use the genetic data example from \citet{leaker2017nasal} and evaluate the design used under different covariance structures. In particular, we focus on the $3 \times 3$ crossover design used in the gene data example, involving $18$ subjects and $3$ treatment sequences, namely $ABC$, $CAB$ and $BCA$. Subjects $1$-$6$ receive treatment sequence $ABC$, $7$-$12$ receive $CAB$, while the remaining receive $BCA$ and multiple gene expression values are measured from each subject in three time periods. Note the study design is a uniform design with $p=t=3$. We name this design $d_0$, where $d_0$ belongs to $\mathcal{D}^{(1)}_{t,n=\lambda t (t-1),p=t}$, with $t=3$ and $\lambda=3$.

\subsection{Proportional structure}
Under the proportional structure, to investigate trace optimality/efficiency of $d_0$ we consider three different structures of matrix $\boldsymbol{V}$ as discussed in \autoref{table_prop}. Corresponding to each case, for $p=3$, the matrix $\boldsymbol{V}$ is as follows:
\renewcommand{\sectionautorefname}{Appendix}
\begin{enumerate}[1.]
\item \label{a1}
$\boldsymbol{V} = 
\begin{bmatrix}
1 & r_{(1)} & r_{(1)}^2\\
r_{(1)} & 1 & r_{(1)}\\
r_{(1)}^2 & r_{(1)} & 1
\end{bmatrix}$, where $0 < r_{(1)} < 1$;
\item \label{b1}
$\boldsymbol{V} =
\begin{bmatrix}
1 & \left[1+log \left(r_{(1)} \right) \right] r_{(1)} & \left[1+2 log \left(r_{(1)} \right) \right] r_{(1)}^2\\
\left[1+log \left(r_{(1)} \right) \right] r_{(1)} & 1 & \left[1+log \left(r_{(1)} \right) \right] r_{(1)}\\
\left[1+2 log \left(r_{(1)} \right) \right] r_{(1)}^2 & \left[1+log \left(r_{(1)} \right) \right] r_{(1)} & 1
\end{bmatrix}$, where $0 < r_{(1)} < 1$;
\item \label{c11}
$\boldsymbol{V} = 
\begin{bmatrix}
1 & r_{(1)} & r_{(1)}^4\\
r_{(1)} & 1 & r_{(1)}\\
r_{(1)}^4 & r_{(1)} & 1
\end{bmatrix}$, where $0 < r_{(1)} < 1$.
\end{enumerate}

\noindent From \autoref{prop-thm-1} in Subsection~\ref{proportional covariance}, we know that under the proportional structure, a design $d^* \in \mathcal{D}^{(1)}_{t,n=\lambda t (t-1),p=t}$ represented by $OA_{I} \left( n=\lambda t \left(t-1 \right), p=t, t, 2 \right)$, where $\mathcal{D}^{(1)}_{t,n=\lambda t (t-1),p=t}$ is a class of binary designs with $p=t$, $\lambda$ is a positive integer and $t \geq 3$, is trace optimal/efficient for the direct effects over $\mathcal{D}^{(1)}_{t,n=\lambda t (t-1),p=t}$ for the multivariate response case. For any binary design $d \in \mathcal{D}^{(1)}_{t,n=\lambda t (t-1),p=t}$, with $t$ and $n$ same as that of $d^*$, let $e$ represent the efficiency of design $d$, defined as \citep{kunert1991cross}
\begin{align*}
e &= \frac{tr \left( \boldsymbol{C}_{d(s1)} \right)}{tr \left( \boldsymbol{C}_{d^*(s1)} \right)}.
\end{align*}
Since $d^*$ maximizes $tr \left( \boldsymbol{C}_{d(s1)} \right)$ over $\mathcal{D}^{(1)}_{t,n=\lambda t (t-1),p=t}$, $e$ lies between $0$ and $1$, and values of $e$ close to $1$ indicate that $d$ is a highly efficient design.\par

\noindent Using the expression of the information matrix for the direct effects under the proportional structure from \autoref{prop-lemma4-c4-1}, we get
\begin{align}
e &= \frac{tr \left( \boldsymbol{\mathit{\Gamma}}^{-1} \right) tr \left( \boldsymbol{C}_{d(uni)} \right)}{ tr \left( \boldsymbol{\mathit{\Gamma}}^{-1} \right) tr \left( \boldsymbol{C}_{d^*(uni)} \right)} = \frac{ tr \left( \boldsymbol{C}_{d(uni)} \right)}{  tr \left( \boldsymbol{C}_{d^*(uni)} \right)},
\label{prop-eqn-efficiency-1}
\end{align}
where $\boldsymbol{C}_{d(uni)} = \boldsymbol{C}_{d(uni)(11)} - \boldsymbol{C}_{d(uni)(12)} \boldsymbol{C}_{d(uni)(22)}^{-} \boldsymbol{C}_{d(uni)(21)}$, $\boldsymbol{C}_{d(uni)(11)} = \boldsymbol{T}_d^{'} \left( \hatmat_n \otimes \boldsymbol{V}^* \right) \boldsymbol{T}_d$, \\$\boldsymbol{C}_{d(uni)(12)} = \boldsymbol{C}_{d(uni)(21)}^{'} = \boldsymbol{T}_d^{'} \left( \hatmat_n \otimes \boldsymbol{V}^* \right) \boldsymbol{F}_d $, and $\boldsymbol{C}_{d(uni)(22)} = \boldsymbol{F}_d^{'} \left( \hatmat_n \otimes \boldsymbol{V}^* \right) \boldsymbol{F}_d $.\par

\noindent From \eqref{prop-eqn-efficiency-1}, it is clear that for a particular design $d \in \mathcal{D}^{(1)}_{t,n=\lambda t (t-1),p=t}$, and fixed values of $t$, $n$ and $p$, $e$ depends only on the matrix $\boldsymbol{V}$, thus only on $r_{(1)}$. Our computations for $0<r_{(1)}<1$, show that the maximum value of $e$ under the three structures of $\boldsymbol{V}$ is around $2.78\%$, indicating that $d_0$ performs very poorly in comparison to the efficient design $d^*$.

\subsection{Generalized Markov-type structure}
To check the trace optimality of design $d_0$ and compare its performance with $d^*$ under the Markovian structure, we consider cases $5$, $6$ and $7$ from Subsubsection~\ref{nearly-efficient}. Note that here $d^*$ is a design represented by an orthogonal array of Type $I$ and strength $2$ with $p=t=3$ and $n=18$. We assume that $\sigma_{11} = \sigma_{22}$. \autoref{min-max-example-2} displays $min_{0< |\rho | <1} ~ RD_{d_0}$, $min_{0< |\rho | <1} ~ RD_{d^*}$, $max_{0< |\rho | <1} ~ RD_{d_0}$ and $max_{0< |\rho | <1} ~ RD_{d^*}$ under the $3$ cases considered. From \autoref{min-max-example-2}, we note $min_{0< |\rho | <1} ~ RD_{d_0}$ and $min_{0< |\rho | <1} ~ RD_{d^*}$ never attain $0$, and $max_{0< |\rho | <1} ~ RD_{d_0}$ and $max_{0< |\rho | <1} ~ RD_{d^*}$ are around $99.70\%$ and $0.38\%$, respectively, indicating the poor performance of $d_0$ when compared to $d^*$. From these comparisons, we may conclude that if the experiment for the gene study had used an orthogonal array based design then he/she would have scored much in terms of efficiency. The poor choice of designs may also affect the parameter estimation for the gene dataset.

\begin{figure}
\centering
\begin{subfigure}{.5\textwidth}
  \centering
  \includegraphics[height=0.4\linewidth, width=1.0\linewidth]{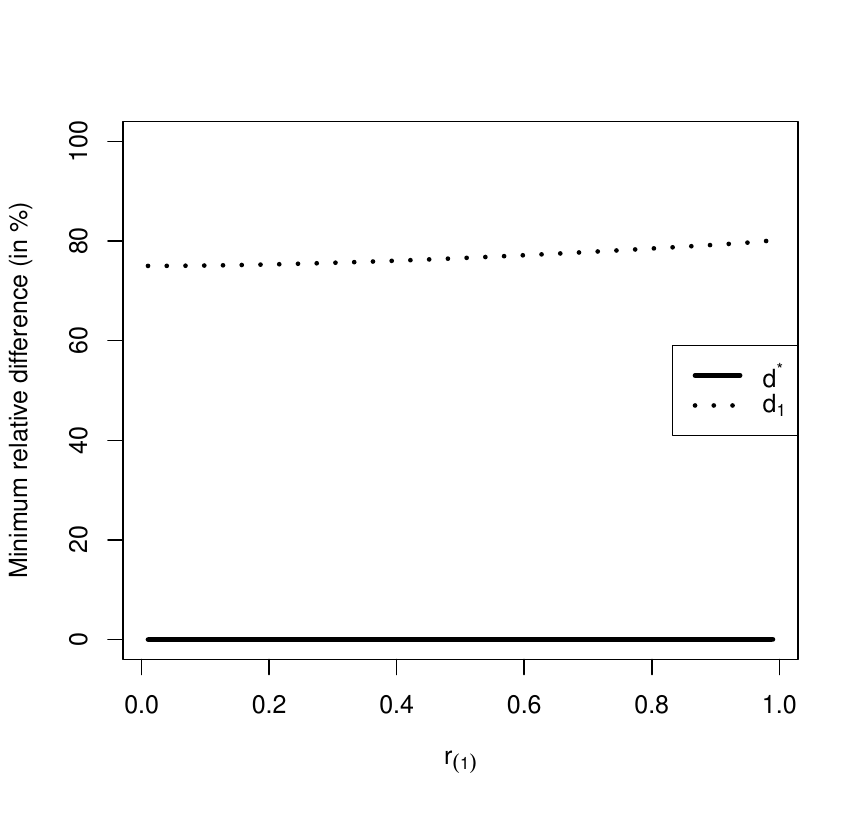}
  \caption{$min_{0< |\rho | <1}~ RD_{d^{(0)}}$ for $d^{(0)} \in \{ d_0, d^* \}$ \\($\boldsymbol{V}_1$: Mat($\infty$), $\boldsymbol{V}_R$: Mat($0.5$))}
  \label{fig:minimum_case5_example}
\end{subfigure}%
\begin{subfigure}{.5\textwidth}
  \centering
  \includegraphics[height=0.4\linewidth, width=1.0\linewidth]{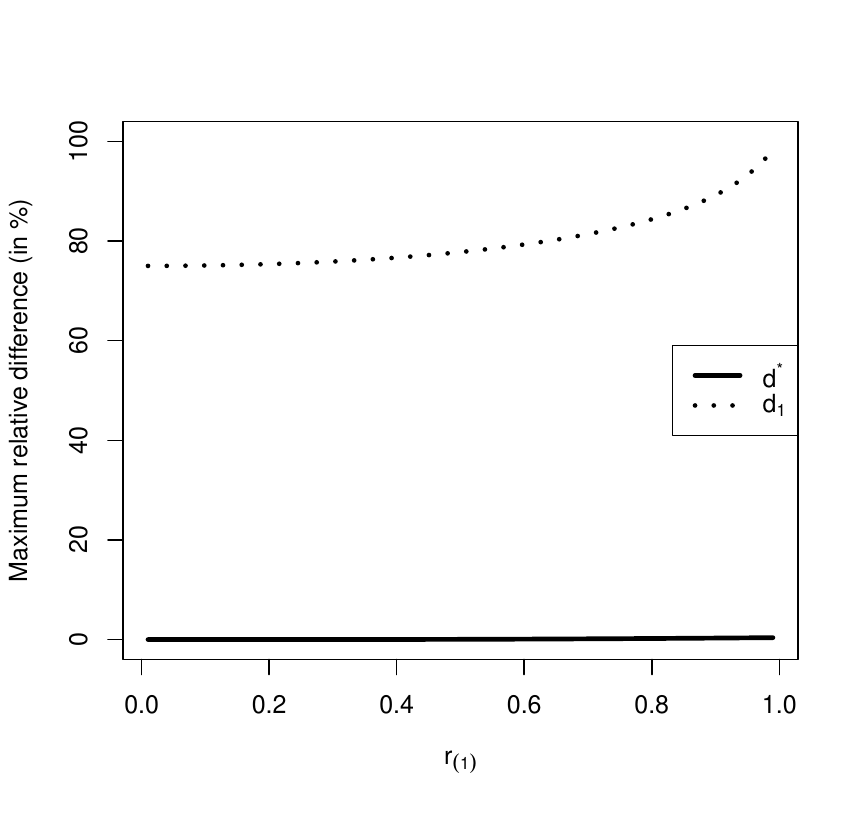}
  \caption{$max_{0< |\rho | <1}~ RD_{d^{(0)}}$ for $d^{(0)} \in \{ d_0, d^* \}$ \\($\boldsymbol{V}_1$: Mat($\infty$), $\boldsymbol{V}_R$: Mat($0.5$))}
  \label{fig:maximum_case5_example}
\end{subfigure}\\
\begin{subfigure}{.5\textwidth}
  \centering
  \includegraphics[height=0.4\linewidth, width=1.0\linewidth]{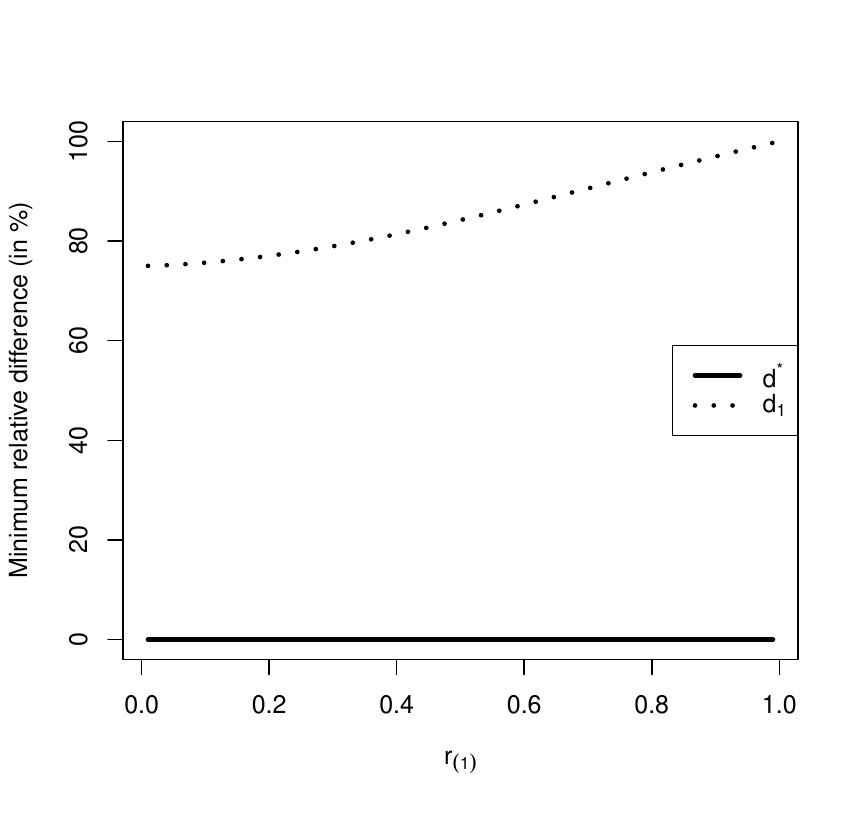}
  \caption{$min_{0< |\rho | <1}~ RD_{d^{(0)}}$ for $d^{(0)} \in \{ d_0, d^* \}$ \\($\boldsymbol{V}_1$: Mat($\infty$), $\boldsymbol{V}_R$: Mat($1.5$))}
  \label{fig:minimum_case6_example}
\end{subfigure}%
\begin{subfigure}{.5\textwidth}
  \centering
  \includegraphics[height=0.4\linewidth, width=1.0\linewidth]{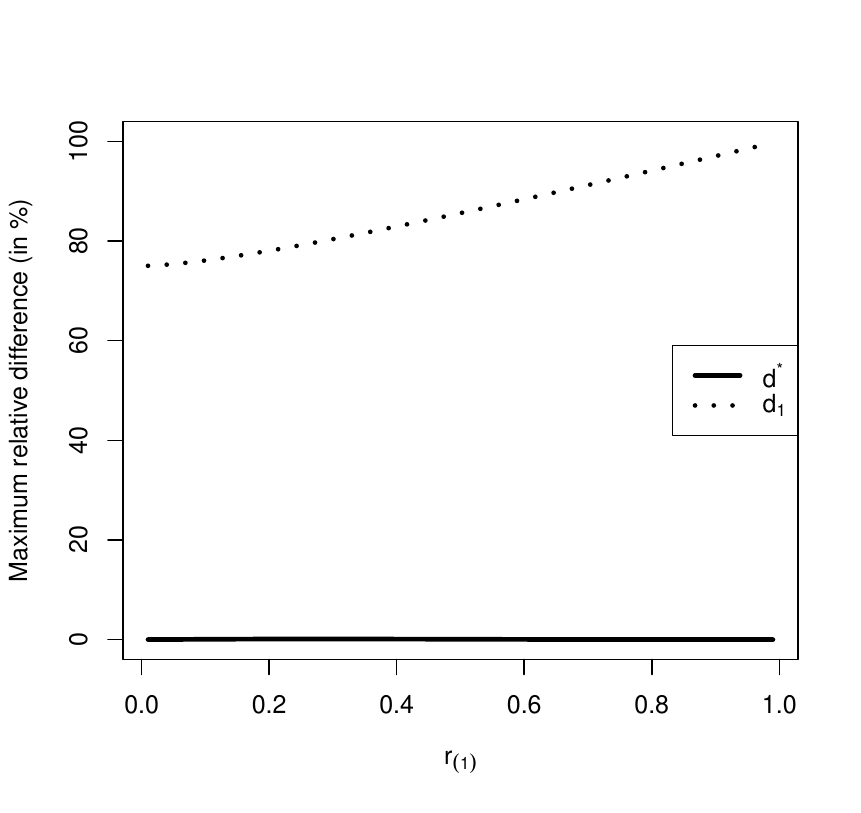}
  \caption{$max_{0< |\rho | <1}~ RD_{d^{(0)}}$ for $d^{(0)} \in \{ d_0, d^* \}$ \\($\boldsymbol{V}_1$: Mat($\infty$), $\boldsymbol{V}_R$: Mat($1.5$))}
  \label{fig:maximum_case6_example}
\end{subfigure}\\
\begin{subfigure}{.5\textwidth}
  \centering
  \includegraphics[height=0.4\linewidth, width=1.0\linewidth]{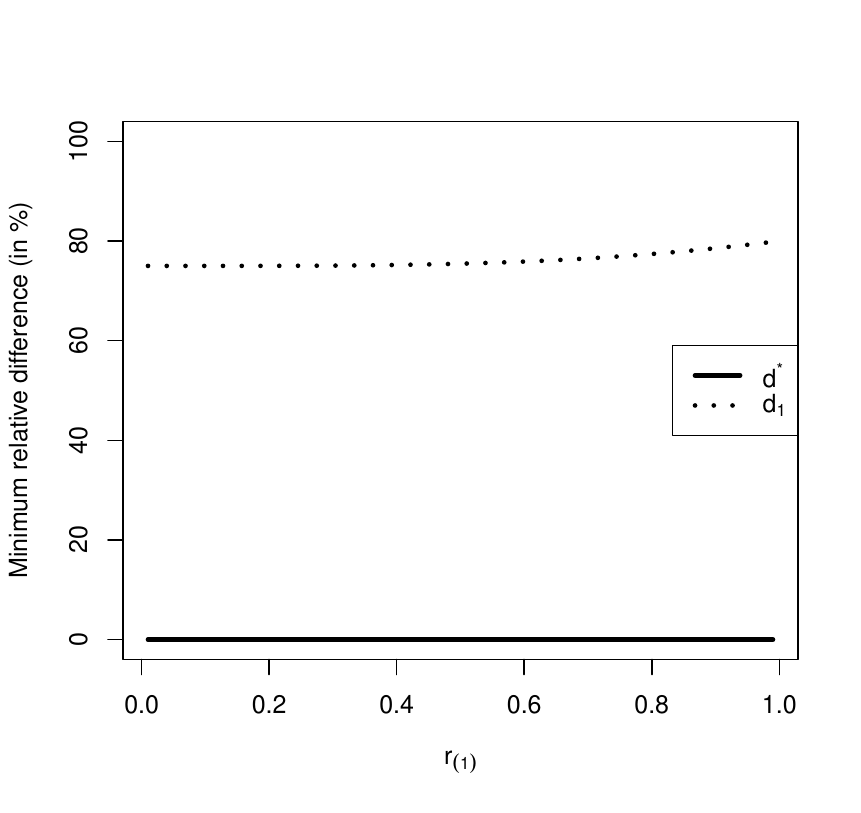}
  \caption{$min_{0< |\rho | <1}~ RD_{d^{(0)}}$ for $d^{(0)} \in \{ d_0, d^* \}$ \\($\boldsymbol{V}_1$: Mat($0.5$), $\boldsymbol{V}_R$: Mat($0.5$))}
  \label{fig:minimum_case7_example}
\end{subfigure}%
\begin{subfigure}{.5\textwidth}
  \centering
  \includegraphics[height=0.4\linewidth, width=1.0\linewidth]{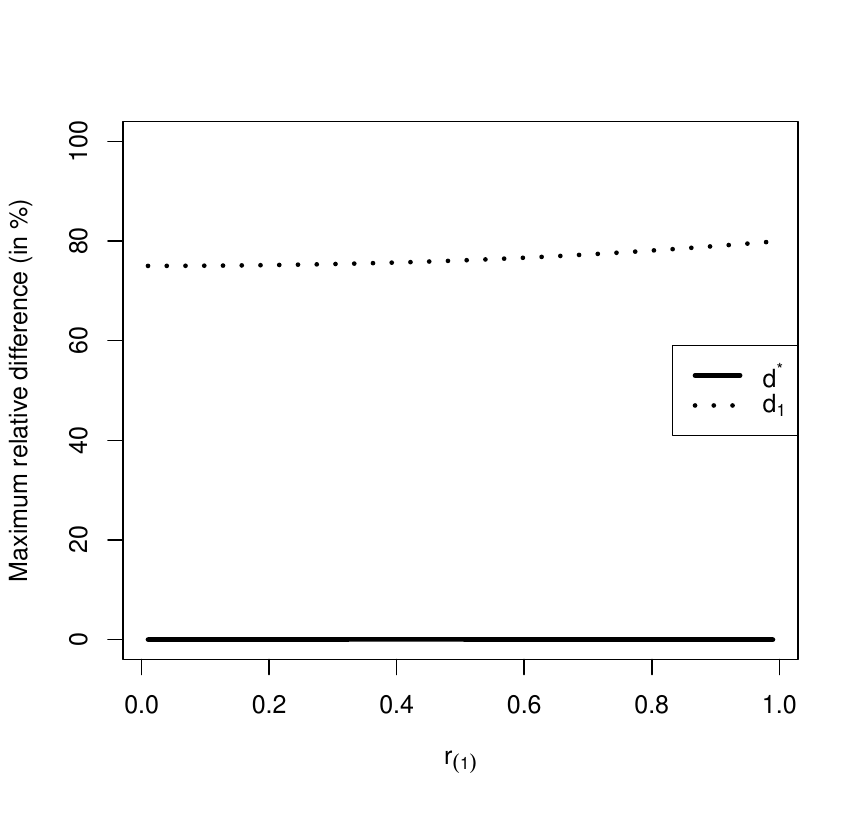}
  \caption{$max_{0< |\rho | <1}~ RD_{d^{(0)}}$ for $d^{(0)} \in \{ d_0, d^* \}$ \\($\boldsymbol{V}_1$: Mat($0.5$), $\boldsymbol{V}_R$: Mat($0.5$))}
  \label{fig:maximum_case7_example}
\end{subfigure}
\caption{Plots of $min_{0< |\rho | <1}~ RD_{d^{(0)}}$ and $max_{0< |\rho | <1}~ RD_{d^{(0)}}$ for $d^{(0)} \in \{ d_0, d^* \}$, where $p=t=3$, $n=18$ and $\sigma_{11} = \sigma_{22}$.}
\label{min-max-example-2}
\end{figure}

\section{Computational details}
\label{computation}
The exploratory analysis for the gene data regarding the presence of varying effects, and within and between response correlation (see Section~\ref{motexample}), was illustrated through line plots and correlation plots with the p-values of the correlation test, respectively. Also, the comparison of designs (see Subsubsection~\ref{nearly-efficient} and Section~\ref{illustration}) is represented using line plots. For the above purpose, we used R programming \citep[][]{rr} with R version 4.3.1 under Windows 11 (64-bit), with a 13th Gen Intel(R) Core(TM) I5-1335U processor and 8GB of RAM. We used \textit{base}, \textit{MASS}, \textit{pracma}, \textit{raster}, \textit{rapportools}, \textit{writexl}, \textit{graphics}, \textit{dplyr}, \textit{tidyr}, \textit{ggplot2}, \textit{readxl}, \textit{snpar}, \textit{stats}, \textit{QuantPsyc}, \textit{corrplot} packages. The R codes (.R files), the dataset on the $3$ genes (.zip file), and the MATLAB code (.m file) used for cross-checking in Subsubsection~\ref{nearly-efficient} are available at \url{https://github.com/rsphd/Efficient-designs-for-multivariate-crossover-trials}. 

\section{Conclusion}
\label{conclusion}
In this article, we studied the trace optimality/efficiency of an orthogonal array of Type $I$ and strength $2$ when measurements are recorded on $g$ responses, where $g > 1$. To account for correlation between responses, we considered two different types of structure for the error dispersion matrix, namely, the proportional structure and the generalized Markov-type structure. Under the proportional covariance structure, we considered within response covariance structures, i.e, structures of $\boldsymbol{V}_1$ to be Mat($0.5$), Mat($1.5$) and Mat($\infty$). Under the generalized Markov-type covariance structure, we considered $7$ different choices for structure of ($\boldsymbol{V}_1$, $\boldsymbol{V}_R$), namely (Mat($0.5$), Mat($1.5$)), (Mat($0.5$), Mat($\infty$)), (Mat($1.5$), Mat($0.5$)), (Mat($1.5$), Mat($\infty$)), (Mat($\infty$), Mat($0.5$)) and (Mat($\infty$), Mat($1.5$)). The main conclusions of this paper are:
\begin{enumerate}
\item Under the proportional structure, for $g>1$ and $p=t \geq 3$, a design given as an orthogonal array of type $I$ and strength $2$ is trace optimal/efficient for the direct effects over the class of binary designs.
\item Under the generalized Markov-type structure, for the $7$ different cases depending on the choices of matrices $\boldsymbol{V}_1$ and $\boldsymbol{V}_R$, when $p=t=3$ and $n=6$, and $p=t=4$ and $n=12$, an orthogonal array of type $I$ and strength $2$ is highly efficient for the direct effects over the class of binary designs.
\end{enumerate}

\noindent Although in this article, we have only considered universal optimality and trace optimality criteria, other criteria can also be investigated. In the future, for multivariate crossover designs, we plan to investigate $A$-, $D$-, and $E$-optimal designs, under the proportional and Markovian structures of the dispersion matrix.

\begin{appendices}
\numberwithin{equation}{section}
\section{Information matrix}
\renewcommand{\thesection}{A}

\subsection{Proportional structure}
\label{proportion-information matrix}

\renewcommand{\thesection}{A}
\renewcommand{\thesubsection}{A.1}
\numberwithin{equation}{section}

\begin{proof}[Proof of \autoref{prop-lemma4-c4-1}]
\label{prop-ss1}
From \citet{Bose2009OptimalDesigns}, we know that for the $g=1$ case with the dispersion matrix of error terms as a positive definite and symmetric matrix $\identity_n \otimes \boldsymbol{V}$, the information matrix for the direct effects is given as
\begin{align}
\boldsymbol{C}_{d(uni)} &= \boldsymbol{C}_{d(uni)(11)} - \boldsymbol{C}_{d(uni)(12)} \boldsymbol{C}_{d(uni)(22)}^{-} \boldsymbol{C}_{d(uni)(21)},
\label{prop-t21c-1}
\end{align}
where $\boldsymbol{C}_{d(uni)(11)} = \boldsymbol{T}_d^{'} \left( \hatmat_n \otimes \boldsymbol{V}^* \right) \boldsymbol{T}_d$, $\boldsymbol{C}_{d(uni)(12)} = \boldsymbol{C}_{d(uni)(21)}^{'} = \boldsymbol{T}_d^{'} \left( \hatmat_n \otimes \boldsymbol{V}^* \right) \boldsymbol{F}_d $ and \\ $\boldsymbol{C}_{d(uni)(22)} = \boldsymbol{F}_d^{'} \left( \hatmat_n \otimes \boldsymbol{V}^* \right) \boldsymbol{F}_d $. Here, the matrix $\boldsymbol{V}^* = \boldsymbol{V}^{-1} - \left( \vecone_{p}^{'} \boldsymbol{V}^{-1} \vecone_p \right)^{-1} \boldsymbol{V}^{-1} \matone_{p \times p} \boldsymbol{V}^{-1}$.\par

\noindent Using the expression of $\boldsymbol{A}^*$ as given in \autoref{prop-lemma3-c4-1} and the expression of the information matrix for the direct effects as given in \autoref{lemma5-c4-i}, we get that under the proportional structure for $g > 1$,
\begin{multline}
\boldsymbol{C}_{d(s1)} = \boldsymbol{\mathit{\Gamma}}^{-1} \otimes \\
\left[  \boldsymbol{T}_d^{'} \left( \hatmat_n \otimes \boldsymbol{V}^* \right) \boldsymbol{T}_d - \boldsymbol{T}_d^{'} \left( \hatmat_n \otimes \boldsymbol{V}^* \right) \boldsymbol{F}_d \left( \boldsymbol{F}_d^{'} \left( \hatmat_n \otimes \boldsymbol{V}^* \right) \boldsymbol{F}_d \right)^{-} \boldsymbol{F}_d^{'} \left( \hatmat_n \otimes \boldsymbol{V}^* \right) \boldsymbol{T}_d \right],
\label{prop-t22c-1}
\end{multline}
where $\boldsymbol{\mathit{\Gamma}}$ is a $g \times g$ positive definite and symmetric matrix with non-zero off-diagonal elements. Thus from \eqref{prop-t21c-1} and \eqref{prop-t22c-1}, we can prove \eqref{prop-t20c-1}.
\end{proof}

\begin{proof}[Proof of \autoref{prop-remark-1}]
\label{prop-ss4}
Here, $d^*$ is a design given by $OA_{I} \left( n=\lambda t \left(t-1 \right), p=t, t, 2 \right)$, where $\lambda$ is a positive integer and $t \geq 3$. From \autoref{prop-lemma4-c4-1}, we get that for $g>1$, the information matrix for the direct effects corresponding to $d^*$ is given as
\begin{align}
\boldsymbol{C}_{d^*(s1)} &= \boldsymbol{\mathit{\Gamma}}^{-1} \otimes \boldsymbol{C}_{d^*(uni)},
\label{prop-t23c-1}
\end{align}
where $\boldsymbol{C}_{d^*(uni)} = \boldsymbol{C}_{d^*(uni)(11)} - \boldsymbol{C}_{d^*(uni)(12)} \boldsymbol{C}_{d^*(uni)(22)}^{-} \boldsymbol{C}_{d^*(uni)(21)}$, and $\boldsymbol{\mathit{\Gamma}}$ is a $g \times g$ positive definite and symmetric matrix with non-zero off-diagonal elements. Here, $\boldsymbol{C}_{d^*(uni)(11)} = \boldsymbol{T}_{d^*}^{'} \left( \hatmat_n \otimes \boldsymbol{V}^* \right) \boldsymbol{T}_{d^*}$, $\boldsymbol{C}_{d^*(uni)(12)} = \boldsymbol{C}_{d^*(uni)(21)}^{'} = \boldsymbol{T}_{d^*}^{'} \left( \hatmat_n \otimes \boldsymbol{V}^* \right) \boldsymbol{F}_{d^*} $, $\boldsymbol{C}_{d^*(uni)(22)}=$ \\ $ \boldsymbol{F}_{d^*}^{'} \left( \hatmat_n \otimes \boldsymbol{V}^* \right) \boldsymbol{F}_{d^*} $ and the matrix $\boldsymbol{V}^* = \boldsymbol{V}^{-1} - \left( \vecone_{p}^{'} \boldsymbol{V}^{-1} \vecone_p \right)^{-1} \boldsymbol{V}^{-1} \matone_{p \times p} \boldsymbol{V}^{-1}$.\par

\noindent Since $\boldsymbol{F}_{d^*} = \left(\identity_n \otimes \boldsymbol{\psi} \right) \boldsymbol{T}_{d^*}$, where $\boldsymbol{\psi} =  
\begin{bmatrix}
\zero^{'}_{p-1 \times 1} & 0\\
\identity_{p-1} & \zero_{p-1 \times 1}
\end{bmatrix}$, and $\boldsymbol{V}$ is a known positive definite and symmetric matrix, from \citet{Martin1998Variance-balancedObservations}, we know
\begin{align}
\boldsymbol{C}_{d^*(uni)} &= \left( | \boldsymbol{E} |/e_{22} \right) \hatmat_t,
\label{prop-t24c-1}
\end{align}
where $\boldsymbol{E} = \frac{n}{t-1} \begin{bmatrix}
e_{11} & e_{12}\\
e_{12} & e_{22}
\end{bmatrix}$, $e_{11} = tr \left(  \boldsymbol{T}^{'}_{d^*1} \boldsymbol{V}^{*} \boldsymbol{T}_{d^*1}\right)$, $e_{12} = tr \left(  \boldsymbol{T}^{'}_{d^*1} \boldsymbol{V}^{*} \boldsymbol{\psi} \boldsymbol{T}_{d^*1}\right)$, $e_{22} = $\\$tr \left(  \boldsymbol{T}^{'}_{d^*1} \boldsymbol{\psi}^{'} \boldsymbol{V}^{*} \boldsymbol{\psi} \boldsymbol{T}_{d^*1}\right) -  \frac{\left(\boldsymbol{V}^{*} \right)_{1,1}}{t}$ and $| \boldsymbol{E} | \neq 0$. Here $\left(\boldsymbol{V}^{*} \right)_{1,1}$ is the element corresponding to the first row and first column of the matrix $\boldsymbol{V}^{*}$, and $\boldsymbol{T}_{d^*1}$ is the treatment incidence matrix for the first subject corresponding to $d^*$. For $g>1$, let us denote
\begin{align*}
\boldsymbol{\mathit{\Gamma}}^{-1} = 
\begin{bmatrix}
\gamma^{(11)} & \cdots & \gamma^{(1g)}\\
\vdots & \vdots & \vdots\\
\gamma^{(g1)} & \cdots & \gamma^{(gg)}
\end{bmatrix}.
\end{align*}
Since $\boldsymbol{\mathit{\Gamma}}$ is a $g \times g$ positive definite, symmetric and non-diagonal matrix, the same holds for $\boldsymbol{\mathit{\Gamma}}^{-1}$. So from \eqref{prop-t23c-1} and \eqref{prop-t24c-1}, we get that $\boldsymbol{C}_{d^*(s1)}$ is completely symmetric for $g>1$ case if and only if 
\begin{align}
\begin{split}
\frac{\gamma^{(11)} |\boldsymbol{E} | }{ t e_{22}  }   &= \cdots = \frac{\gamma^{(gg)} |\boldsymbol{E} | }{ t e_{22}  },\\
\frac{\gamma^{(kk)} |\boldsymbol{E} | }{t e_{22}  } &=  \frac{\gamma^{(kk')} |\boldsymbol{E} | }{t e_{22}  }, \text{ and}\\
\frac{\gamma^{(kk')} |\boldsymbol{E} | }{t e_{22}  } &= \frac{\gamma^{(kk')} |\boldsymbol{E} | \left(t-1 \right) }{t  e_{22}  },
\end{split}
\label{prop-t25c-1}
\end{align}
where $\gamma^{(kk')}$ is the $(k,k')^{th}$ element of the matrix $\boldsymbol{\mathit{\Gamma}}^{-1}$. Here, due to positive definiteness, symmetricity and non-diagonality of $\boldsymbol{\mathit{\Gamma}}^{-1}$, $\gamma^{(kk)}>0$ and for atleast one pair $(k,k')$, $\gamma^{(kk')} = \gamma^{(k'k)} \neq 0$. Thus \eqref{prop-t25c-1} does not hold. Hence, $\boldsymbol{C}_{d^*(s1)}$ is not a completely symmetric matrix for $g>1$ case.
\end{proof}

\renewcommand{\thesubsection}{A.2}
\subsection{Generalized Markov-type structure}
\label{markov-information matrix}

\renewcommand{\thesection}{A}
\renewcommand{\thesubsection}{A.2}

\begin{proof}[Proof of \autoref{thm5-c4}]
\label{ss6}
Here $d^* \in \mathcal{D}^{(1)}_{t,n=\lambda t (t-1),p=t}$ be a design given by $OA_{I} \big( n=\lambda t \left(t-1 \right), p=t, t, 2 \big)$, where $\lambda$ is a positive integer and $t \geq 3$, and thus $d^*$ is a design uniform on periods. Since from \autoref{remark1-c4}, we get that $\vecone_p^{'} \boldsymbol{\mathit{\Omega}}_1 = \vecone_p^{'} \boldsymbol{\mathit{\Omega}}_2 = \vecone_p^{'} \boldsymbol{\mathit{\Omega}}_4 = \zero_{1 \times p}$, $\vecone_{p}^{'} \boldsymbol{\psi}^{'} \boldsymbol{\mathit{\Omega}}_x \boldsymbol{\psi} \vecone_p$ is the entry corresponding to the first row and first column of the matrix $\boldsymbol{\mathit{\Omega}}_x$, for $x=1,2,4$. Here $\boldsymbol{\mathit{\Omega}}_1$, $\boldsymbol{\mathit{\Omega}}_2$ and $\boldsymbol{\mathit{\Omega}}_4$ are as given in \autoref{lemma3-c4-1}. Using the expression of $\boldsymbol{{C}}_{d^*(s2)}$ as given in \autoref{lemma5-c4} and the expression of $\boldsymbol{A}^*$ in \autoref{lemma3-c4-1}, we get
\begin{align}
\boldsymbol{{C}}_{d^*(s2)} &= \boldsymbol{{C}}_{d^*(s2)(11)} - \boldsymbol{{C}}_{d^*(s2)(12)} \boldsymbol{{C}}_{d^*(s2)(22)}^{-} \boldsymbol{{C}}_{d^*(s2)(21)},
\label{p50c}
\end{align}
where 
\begin{align*}
\begin{split}
\boldsymbol{C}_{{d^*(s2)}(11)} &=
\begin{bmatrix}
\boldsymbol{T}_{d^*}^{'} \left( \hatmat_n \otimes \boldsymbol{\mathit{\Omega}}_1 \right) \boldsymbol{T}_{d^*} & -\boldsymbol{T}_{d^*}^{'} \left( \hatmat_n \otimes \boldsymbol{\mathit{\Omega}}_2 \right) \boldsymbol{T}_{d^*}\\
-\boldsymbol{T}_{d^*}^{'} \left( \hatmat_n \otimes \boldsymbol{\mathit{\Omega}}_2 \right) \boldsymbol{T}_{d^*} & \boldsymbol{T}_{d^*}^{'} \left( \hatmat_n \otimes \boldsymbol{\mathit{\Omega}}_4 \right) \boldsymbol{T}_{d^*}
\end{bmatrix},\\
\boldsymbol{C}_{{d^*(s2)}(12)} &= \boldsymbol{C}^{'}_{{d^*(s2)}(21)} =
\begin{bmatrix}
\boldsymbol{T}_{d^*}^{'} \left( \hatmat_n \otimes \boldsymbol{\mathit{\Omega}}_1 \boldsymbol{\psi} \right) \boldsymbol{T}_{d^*} & -\boldsymbol{T}_{d^*}^{'} \left( \hatmat_n \otimes \boldsymbol{\mathit{\Omega}}_2 \boldsymbol{\psi} \right) \boldsymbol{T}_{d^*} \\
-\boldsymbol{T}_{d^*}^{'} \left( \hatmat_n \otimes \boldsymbol{\mathit{\Omega}}_2 \boldsymbol{\psi} \right) \boldsymbol{T}_{d^*}  & \boldsymbol{T}_{d^*}^{'} \left( \hatmat_n \otimes \boldsymbol{\mathit{\Omega}}_4 \boldsymbol{\psi} \right) \boldsymbol{T}_{d^*} 
\end{bmatrix}, \text{ and}\\
\boldsymbol{C}_{{d^*(s2)}(22)} &=
\begin{bmatrix}
 \boldsymbol{T}_{d^*}^{'} \left( \hatmat_n \otimes \boldsymbol{\psi}^{'} \boldsymbol{\mathit{\Omega}}_1 \boldsymbol{\psi} \right) \boldsymbol{T}_{d^*}  & - \boldsymbol{T}_{d^*}^{'} \left( \hatmat_n \otimes \boldsymbol{\psi}^{'} \boldsymbol{\mathit{\Omega}}_2 \boldsymbol{\psi} \right) \boldsymbol{T}_{d^*} \\
- \boldsymbol{T}_{d^*}^{'} \left( \hatmat_n \otimes \boldsymbol{\psi}^{'} \boldsymbol{\mathit{\Omega}}_2 \boldsymbol{\psi} \right) \boldsymbol{T}_{d^*}  &  \boldsymbol{T}_{d^*}^{'} \left( \hatmat_n \otimes \boldsymbol{\psi}^{'} \boldsymbol{\mathit{\Omega}}_4 \boldsymbol{\psi} \right) \boldsymbol{T}_{d^*} 
\end{bmatrix}.
\end{split}
\end{align*}
Thus from the above equation and following Lemma 3, Lemma 4 and Lemma 5 from \citet{Martin1998Variance-balancedObservations}, we get
\begin{align}
\begin{split}
\boldsymbol{{C}}_{d^*(s2)(11)} &= \frac{n}{t-1}
\begin{bmatrix}
c_{11(11)} & -c_{11(12)}\\
-c_{11(12)} & c_{11(22)}
\end{bmatrix}
\otimes \hatmat_t,\\
\boldsymbol{{C}}_{d^*(s2)(12)} &= \boldsymbol{\tilde{C}}_{d^*(s2)(21)} = \frac{n}{t-1}
\begin{bmatrix}
c_{12(11)} & -c_{12(12)}\\
-c_{12(12)} & c_{12(22)}
\end{bmatrix}
\otimes \hatmat_t, \text{ and}\\
\boldsymbol{{C}}_{d^*(s2)(22)} &= \frac{n}{t-1}
\begin{bmatrix}
c_{22(11)} & -c_{22(12)}\\
-c_{22(12)} & c_{22(22)}
\end{bmatrix}
\otimes \hatmat_t,
\end{split}
\label{p36c}
\end{align}
where $
c_{11(11)} = tr \left( 
\boldsymbol{\mathit{\Omega}}_1
\right)$, $c_{11(12)} = tr \left( 
\boldsymbol{\mathit{\Omega}}_2
\right)$, $c_{11(22)} = tr \left( 
\boldsymbol{\mathit{\Omega}}_4
\right)$, $c_{12(11)} = tr \left( 
\boldsymbol{\mathit{\Omega}}_1
\boldsymbol{\psi}
\right)$, $c_{12(12)} = tr \left( 
\boldsymbol{\mathit{\Omega}}_2
\boldsymbol{\psi}
\right)$, $c_{12(22)} = tr \left( 
\boldsymbol{\mathit{\Omega}}_4
\boldsymbol{\psi} 
\right)$, $c_{22(11)} = tr \left( \hatmat_p
\boldsymbol{\psi}^{'}
\boldsymbol{\mathit{\Omega}}_1
\boldsymbol{\psi}
\right)$, $c_{22(12)} = tr \left( \hatmat_p
\boldsymbol{\psi}^{'}
\boldsymbol{\mathit{\Omega}}_2
\boldsymbol{\psi}
\right)$, $c_{22(22)} = tr \left( \hatmat_p
\boldsymbol{\psi}^{'}
\boldsymbol{\mathit{\Omega}}_4
\boldsymbol{\psi}
\right)$. From \autoref{lemma3-c4-1}, we get
\begin{align}
\begin{split}
\boldsymbol{\mathit{\Omega}}_1 &= \boldsymbol{V}_1^* + \frac{\bar{\rho}^2}{\sigma_{12}} \boldsymbol{V}_R^* = \boldsymbol{V}_1^* + \bar{\rho}^2 \boldsymbol{\mathit{\Omega}}_4,\\
\boldsymbol{\mathit{\Omega}}_2 &= \bar{\rho} \boldsymbol{\mathit{\Omega}}_4, \text{ and}\\
\boldsymbol{\mathit{\Omega}}_4 &= \frac{1}{\sigma_{12}} \boldsymbol{V}_R^*,
\end{split}
\label{markov-t29c-1}
\end{align}
where $\bar{\rho} = \sqrt{\frac{\sigma_{22}}{\sigma_{11}}}$ and $\sigma_{12} = \sigma_{22} (1-\rho^2)$. Thus using the expressions of $c_{11(11)}$, $c_{11(12)}$, $c_{11(22)}$, $c_{12(11)}$, $c_{12(12)}$, $c_{12(22)}$, $c_{22(11)}$, $c_{22(12)}$ and $c_{22(22)}$, we get
\begin{align}
\begin{split}
c_{11(11)} &= tr \left( \boldsymbol{V}_1^* \right) + \bar{\rho}^2  c_{11(22)},\\
c_{11(12)} &= \bar{\rho} \times c_{11(22)},\\
c_{11(22)} &= \frac{1}{\sigma_{12}} tr \left( \boldsymbol{V}_R^*  \right),\\
c_{12(11)} &= tr \left(  \boldsymbol{V}_1^* \boldsymbol{\psi} \right) + \bar{\rho}^2  c_{12(22)},\\
c_{12(12)} &= \bar{\rho} \times c_{12(22)},\\
c_{12(22)} &= \frac{1}{\sigma_{12}} tr \left(  \boldsymbol{V}_R^*  \boldsymbol{\psi} \right),\\
c_{22(11)} &= tr \left( \hatmat_p \boldsymbol{\psi}^{'} \boldsymbol{V}_1^* \boldsymbol{\psi} \right) + \bar{\rho}^2  c_{22(22)},\\
c_{22(12)} &= \bar{\rho} \times c_{22(22)}, \text{ and}\\
c_{22(22)} &= \frac{1}{\sigma_{12}} tr \left( \hatmat_p \boldsymbol{\psi}^{'} \boldsymbol{V}_R^*  \boldsymbol{\psi} \right).
\end{split}
\label{markov-t16c-1}
\end{align}
Using the above equation, we get that $c_{22(11)} c_{22(22)} - c^2_{22(12)} = \frac{1}{\sigma_{12}} tr \left( \hatmat_p \boldsymbol{\psi}^{'} \boldsymbol{V}_1^* \boldsymbol{\psi} \right) \times$\\ $tr \left( \hatmat_p \boldsymbol{\psi}^{'} \boldsymbol{V}_R^* \boldsymbol{\psi} \right)$. Since $\sigma_{12}>0$, applying \autoref{lemma-markov-1}, we get that $c_{22(11)} c_{22(22)} - c^2_{22(12)}>0 $. So from \eqref{p36c}, we get
\begin{align}
\boldsymbol{{C}}_{d^*(s2)(22)}^{-} &= \frac{t-1}{n \left( c_{22(11)} c_{22(22)} - c^2_{22(12)} \right)}
\begin{bmatrix}
c_{22(22)} & c_{22(12)}\\
c_{22(12)} & c_{22(11)}
\end{bmatrix}
\otimes \identity_t.
\label{markov-t17c-1}
\end{align}
From \eqref{p50c}, \eqref{p36c}, \eqref{markov-t16c-1} and \eqref{markov-t17c-1}, by calculation we get that $\boldsymbol{{C}}_{d^*(s2)} = \frac{n}{t-1} 
\begin{bmatrix}
\boldsymbol{\mathit{\Lambda}}_1 & \boldsymbol{\mathit{\Lambda}}_2\\
\boldsymbol{\mathit{\Lambda}}_2 & \boldsymbol{\mathit{\Lambda}}_4
\end{bmatrix}$, where
\begin{align*}
\begin{split}
\boldsymbol{\mathit{\Lambda}}_1 &= \left( tr \left( \boldsymbol{V}_1^* \right) + \frac{\bar{\rho}^2}{\sigma_{12}} tr \left( \boldsymbol{V}_R^* \right) - \frac{\left(  tr \left( \boldsymbol{V}_1^* \boldsymbol{\psi} \right) \right)^2}{tr \left( \hatmat_p \boldsymbol{\psi}^{'} \boldsymbol{V}_1^* \boldsymbol{\psi} \right) } - \frac{ \bar{\rho}^2}{\sigma_{12}} \frac{\left(tr \left( \boldsymbol{V}_R^* \boldsymbol{\psi} \right) \right)^2}{tr \left( \hatmat_p \boldsymbol{\psi}^{'} \boldsymbol{V}_R^* \boldsymbol{\psi} \right)} \right) \hatmat_t,\\
\boldsymbol{\mathit{\Lambda}}_2 &= - \frac{ \bar{\rho} }{\sigma_{12}} \left( tr \left(\boldsymbol{V}_R^* \right) -  \frac{\left(tr \left( \boldsymbol{V}_R^* \boldsymbol{\psi} \right) \right)^2}{tr \left( \hatmat_p \boldsymbol{\psi}^{'} \boldsymbol{V}_R^* \boldsymbol{\psi} \right)} \right) \hatmat_t, \text{ and}\\
\boldsymbol{\mathit{\Lambda}}_4 &= \frac{1}{\sigma_{12}} \left(  tr \left( \boldsymbol{V}_R^* \right) - \frac{\left(tr \left( \boldsymbol{V}_R^* \boldsymbol{\psi} \right) \right)^2}{tr \left( \hatmat_p \boldsymbol{\psi}^{'} \boldsymbol{V}_R^* \boldsymbol{\psi} \right)} \right) \hatmat_t.
\end{split}
\end{align*}

\noindent From the above equation, we get
\begin{align*}
tr \left( \boldsymbol{{C}}_{d^*(s2)} \right) &= n \left( tr \left( \boldsymbol{V}_1^* \right) + \frac{1 + \bar{\rho}^2}{\sigma_{12}} tr \left( \boldsymbol{V}_R^* \right) -\frac{\left(  tr \left( \boldsymbol{V}_1^* \boldsymbol{\psi} \right) \right)^2}{tr \left( \hatmat_p \boldsymbol{\psi}^{'} \boldsymbol{V}_1^* \boldsymbol{\psi} \right) } - \frac{1 + \bar{\rho}^2}{\sigma_{12}} \frac{\left(tr \left( \boldsymbol{V}_R^* \boldsymbol{\psi} \right) \right)^2}{tr \left( \hatmat_p \boldsymbol{\psi}^{'} \boldsymbol{V}_R^* \boldsymbol{\psi} \right)} \right) .
\end{align*}
\end{proof}

\begin{proof}[Proof of \autoref{markov-remark-2}]
\label{markov-ss10}
Here, $d^*$ is a design given by $OA_{I} \left( n=\lambda t \left(t-1 \right), p=t, t, 2 \right)$, where $\lambda$ is a positive integer and $t \geq 3$. From \autoref{thm5-c4}, we get that the information matrix for the direct effects corresponding to $d^*$ is expressed as
\begin{align}
\boldsymbol{{C}}_{d^*(s2)} &= \frac{n}{t-1} 
\begin{bmatrix}
\boldsymbol{\mathit{\Lambda}}_1 & \boldsymbol{\mathit{\Lambda}}_2\\
\boldsymbol{\mathit{\Lambda}}_2 & \boldsymbol{\mathit{\Lambda}}_4
\end{bmatrix},
\label{p48c}
\end{align}
where 
\begin{align*}
\begin{split}
\boldsymbol{\mathit{\Lambda}}_1 &= \left( tr \left( \boldsymbol{V}_1^* \right) + \frac{\bar{\rho}^2}{\sigma_{12}} tr \left( \boldsymbol{V}_R^* \right) - \frac{\left(  tr \left( \boldsymbol{V}_1^* \boldsymbol{\psi} \right) \right)^2}{tr \left( \hatmat_p \boldsymbol{\psi}^{'} \boldsymbol{V}_1^* \boldsymbol{\psi} \right) } - \frac{ \bar{\rho}^2}{\sigma_{12}} \frac{\left(tr \left( \boldsymbol{V}_R^* \boldsymbol{\psi} \right) \right)^2}{tr \left( \hatmat_p \boldsymbol{\psi}^{'} \boldsymbol{V}_R^* \boldsymbol{\psi} \right)} \right) \hatmat_t,\\
\boldsymbol{\mathit{\Lambda}}_2 &= - \frac{ \bar{\rho} }{\sigma_{12}} \left( tr \left(\boldsymbol{V}_R^* \right) -  \frac{\left(tr \left( \boldsymbol{V}_R^* \boldsymbol{\psi} \right) \right)^2}{tr \left( \hatmat_p \boldsymbol{\psi}^{'} \boldsymbol{V}_R^* \boldsymbol{\psi} \right)} \right) \hatmat_t, \text{ and}\\
\boldsymbol{\mathit{\Lambda}}_4 &= \frac{1}{\sigma_{12}} \left(  tr \left( \boldsymbol{V}_R^* \right) - \frac{\left(tr \left( \boldsymbol{V}_R^* \boldsymbol{\psi} \right) \right)^2}{tr \left( \hatmat_p \boldsymbol{\psi}^{'} \boldsymbol{V}_R^* \boldsymbol{\psi} \right)} \right) \hatmat_t.
\end{split}
\end{align*}
From the above equation, we get that $\boldsymbol{{C}}_{d^*(s2)}$ is completely symmetric if and only if 
\begin{multline}
\frac{1}{t} \left( tr \left( \boldsymbol{V}_1^* \right) + \frac{\bar{\rho}^2}{\sigma_{12}} tr \left( \boldsymbol{V}_R^* \right) - \frac{\left(  tr \left( \boldsymbol{V}_1^* \boldsymbol{\psi} \right) \right)^2}{tr \left( \hatmat_p \boldsymbol{\psi}^{'} \boldsymbol{V}_1^* \boldsymbol{\psi} \right) } - \frac{ \bar{\rho}^2}{\sigma_{12}} \frac{\left(tr \left( \boldsymbol{V}_R^* \boldsymbol{\psi} \right) \right)^2}{tr \left( \hatmat_p \boldsymbol{\psi}^{'} \boldsymbol{V}_R^* \boldsymbol{\psi} \right)} \right) =\\ \frac{1}{\sigma_{12} t} \left(  tr \left( \boldsymbol{V}_R^* \right) - \frac{\left(tr \left( \boldsymbol{V}_R^* \boldsymbol{\psi} \right) \right)^2}{tr \left( \hatmat_p \boldsymbol{\psi}^{'} \boldsymbol{V}_R^* \boldsymbol{\psi} \right)} \right)  = - \frac{ \bar{\rho} }{\sigma_{12} t } \left( tr \left(\boldsymbol{V}_R^* \right) -  \frac{\left(tr \left( \boldsymbol{V}_R^* \boldsymbol{\psi} \right) \right)^2}{tr \left( \hatmat_p \boldsymbol{\psi}^{'} \boldsymbol{V}_R^* \boldsymbol{\psi} \right)} \right)
\label{p51c}
\end{multline}
and
\begin{align}
\frac{ \bar{\rho} }{\sigma_{12} t } \left( tr \left(\boldsymbol{V}_R^* \right) -  \frac{\left(tr \left( \boldsymbol{V}_R^* \boldsymbol{\psi} \right) \right)^2}{tr \left( \hatmat_p \boldsymbol{\psi}^{'} \boldsymbol{V}_R^* \boldsymbol{\psi} \right)} \right) &= -\frac{ \bar{\rho} \left(t-1 \right)}{\sigma_{12} t } \left( tr \left(\boldsymbol{V}_R^* \right) -  \frac{\left(tr \left( \boldsymbol{V}_R^* \boldsymbol{\psi} \right) \right)^2}{tr \left( \hatmat_p \boldsymbol{\psi}^{'} \boldsymbol{V}_R^* \boldsymbol{\psi} \right)} \right).
\label{p52c}
\end{align}
From \autoref{lemma-markov-1}, we get that $tr \left( \hatmat_p \boldsymbol{\psi}^{'} \boldsymbol{V}_R^* \boldsymbol{\psi} \right)>0$. If we consider the univariate response case with the dispersion matrix of the vector error terms as $\sigma^2 \left( \identity_n \otimes \boldsymbol{V}_R \right)$, then from Lemma 6 of \citet{Martin1998Variance-balancedObservations}, we get that $tr \left(\boldsymbol{V}_R^* \right) tr \left( \hatmat_p \boldsymbol{\psi}^{'} \boldsymbol{V}_R^* \boldsymbol{\psi} \right) - \left(tr \left( \boldsymbol{V}_R^* \boldsymbol{\psi} \right) \right)^2 \neq 0$. Since $\sigma_{12}>0$ and $|\bar{\rho}|>0$, \eqref{p51c} and \eqref{p52c} does not hold true. Hence, $\boldsymbol{{C}}_{d^*(s2)}$ is not a completely symmetric matrix.
\end{proof}

\renewcommand{\thesection}{B}

\section{Efficiency under the generalized Markov-type structure}
\label{markov-efficiency}

\renewcommand{\thesection}{B}

\begin{proof}[Proof of \autoref{markov-lemma-1}]
\label{markov-ss1}
Here, $d^* \in \mathcal{D}^{(1)}_{t,n=\lambda t (t-1),p=t}$ is a design given by $OA_{I} \big( n=\lambda t \left(t-1 \right), p=t, t, 2 \big)$, where $\lambda$ is a positive integer and $t \geq 3$. Using the expression of $tr \left( \boldsymbol{{C}}_{d^*(s2)} \right)$ as given in \autoref{thm5-c4} and the expression of $u \left( t, n, p, \sigma_{11}, \sigma_{22}, \rho, \boldsymbol{V}_1, \boldsymbol{V}_R \right)$  as given in \autoref{thm4-c4}, where $p=t$ and $n=\lambda t \left(t-1 \right)$, by calculation we get 
\begin{multline}
u \left( t, n, p, \sigma_{11}, \sigma_{22}, \rho, \boldsymbol{V}_1, \boldsymbol{V}_R \right)  - tr \left( \boldsymbol{{C}}_{d^*(s2)} \right) \\= 
n \left( 1+ \bar{\rho}^2 \right) \frac{\left(  tr \left( \boldsymbol{V}_1^* \boldsymbol{\psi} \right) tr \left( \hatmat_p \boldsymbol{\psi}^{'} \boldsymbol{V}_R^* \boldsymbol{\psi} \right) - tr \left( \boldsymbol{V}_R^* \boldsymbol{\psi} \right) tr \left( \hatmat_p \boldsymbol{\psi}^{'} \boldsymbol{V}_1^* \boldsymbol{\psi} \right) \right)^2 }{  \left( \sigma_{12} tr \left( \hatmat_p \boldsymbol{\psi}^{'} \boldsymbol{V}_1^* \boldsymbol{\psi} \right) + \left( 1 + \bar{\rho}^2 \right) tr \left( \hatmat_p \boldsymbol{\psi}^{'} \boldsymbol{V}_R^* \boldsymbol{\psi} \right) \right) tr \left( \hatmat_p \boldsymbol{\psi}^{'} \boldsymbol{V}_1^* \boldsymbol{\psi} \right)  tr \left( \hatmat_p \boldsymbol{\psi}^{'} \boldsymbol{V}_R^* \boldsymbol{\psi} \right) }.
\label{markov-t20c-1}
\end{multline}
Thus from \eqref{markov-t20c-1}, we get that $d^*$ attains $u \left( t, n, p, \sigma_{11}, \sigma_{22}, \rho, \boldsymbol{V}_1, \boldsymbol{V}_R \right)$, if and only if 
\begin{align*}
tr \left( \boldsymbol{V}_1^* \boldsymbol{\psi} \right) tr \left( \hatmat_p \boldsymbol{\psi}^{'} \boldsymbol{V}_R^* \boldsymbol{\psi} \right) = tr \left( \boldsymbol{V}_R^* \boldsymbol{\psi} \right) tr \left( \hatmat_p \boldsymbol{\psi}^{'} \boldsymbol{V}_1^* \boldsymbol{\psi} \right).
\end{align*}
\end{proof}

\begin{proof}[Proof of \autoref{markov-lemma-sigma}]
\label{markov-ss2}
Here $d^{(0)} \in \mathcal{D}^{(1)}_{t,n=\lambda t \left( t-1 \right),p=t}$, where $\lambda$ is a positive integer, $p=t \geq 3$ and $\sigma_{11} = \sigma_{22}$. So, $\bar{\rho} = \rho \sqrt{\frac{\sigma_{22}}{\sigma_{11}}} = \rho$ and $\sigma_{12} = \sigma_{22} \left( 1- \bar{\rho}^2 \right) = \sigma_{11} \left( 1- \rho^2 \right)$. Since $\boldsymbol{V}_1 = \sigma_{11} \boldsymbol{V}_C$, from \autoref{thm4-c4}, we get
\begin{multline}
u=u \left( t, n, p, \sigma_{11}, \sigma_{22} = \sigma_{11}, \rho, \boldsymbol{V}_1, \boldsymbol{V}_R \right) = \frac{1}{\sigma_{11}} tr \left(  \boldsymbol{V}_C^*  \right) + \frac{1+\rho^2}{\sigma_{11} (1- \rho^2) } tr \left( \boldsymbol{V}_R^*  \right) \\-
\frac{1}{\sigma_{11}} \left( tr \left(  \boldsymbol{V}_C^* \boldsymbol{\psi} \right)  + \frac{1+\rho^2}{1-\rho^2} tr \left(   \boldsymbol{V}_R^* \boldsymbol{\psi} \right) \right)^2 / \left( tr \left( \hatmat_p \boldsymbol{\psi}^{'} \boldsymbol{V}_C^* \boldsymbol{\psi} \right) + \frac{1+\rho^2}{1 - \rho^2} tr \left( \hatmat_p \boldsymbol{\psi}^{'} \boldsymbol{V}_R^* \boldsymbol{\psi} \right) \right),
\label{markov-t22c-1}
\end{multline}
where $\boldsymbol{V}_C^* = \boldsymbol{V}_C^{-1} - \left( \vecone_{p}^{'} \boldsymbol{V}_C^{-1} \vecone_p \right)^{-1} \boldsymbol{V}_C^{-1} \matone_{p \times p} \boldsymbol{V}_C^{-1}$ and $\boldsymbol{V}_R^* = \boldsymbol{V}_R^{-1} - \left( \vecone_{p}^{'} \boldsymbol{V}_R^{-1} \vecone_p \right)^{-1} \boldsymbol{V}_R^{-1} \matone_{p \times p} \boldsymbol{V}_R^{-1}$, and using the expressions of $\boldsymbol{\mathit{\Omega}}_1$, $\boldsymbol{\mathit{\Omega}}_2$ and $\boldsymbol{\mathit{\Omega}}_4$ as given in \autoref{lemma3-c4-1}, we get
\begin{align}
\begin{split}
\boldsymbol{\mathit{\Omega}}_1 &= \frac{1}{\sigma_{11}} \boldsymbol{V}^*_C + \frac{\rho^2}{\sigma_{11} \left(1-\rho^2 \right)} \boldsymbol{V}_R^*,\\
\boldsymbol{\mathit{\Omega}}_2 &= \frac{\rho}{\sigma_{11} \left(1-\rho^2 \right)} \boldsymbol{V}_R^*, \text{ and}\\
\boldsymbol{\mathit{\Omega}}_4 &= \frac{1}{\sigma_{11} \left(1-\rho^2 \right)} \boldsymbol{V}_R^*.
\end{split}
\label{markov-t23c-1}
\end{align}
Let $\boldsymbol{B}^* = \sigma_{11} \boldsymbol{A}^*$. Hence, using the expression of $\boldsymbol{C}_{d(s2)}$ from \autoref{lemma5-c4}, we get
\begin{align}
\boldsymbol{C}_{d(s2)} &= \boldsymbol{C}_{d(s2)(11)} - \boldsymbol{C}_{{d(s2)}(12)} \boldsymbol{C}^{-}_{{d(s2)}(22)} \boldsymbol{C}_{{d(s2)}(21)},
\label{markov-t24c-1}
\end{align}
where $\boldsymbol{C}_{d(s2)(11)} = \frac{1}{\sigma_{11}} \left( \identity_2 \otimes \boldsymbol{T}^{'}_d \right) \boldsymbol{B}^* \left( \identity_2 \otimes \boldsymbol{T}_d \right)$, $\boldsymbol{C}_{d(s2)(12)} = \boldsymbol{C}^{'}_{d(s2)(21)} = \frac{1}{\sigma_{11}} \left( \identity_2 \otimes \boldsymbol{T}^{'}_d \right) \boldsymbol{B}^* \times$ \\ $\left( \identity_2 \otimes \boldsymbol{F}_d \right) $ and $\boldsymbol{C}_{d(s2)(22)} = \frac{1}{\sigma_{11}} \left( \identity_2 \otimes \boldsymbol{F}^{'}_d \right) \boldsymbol{B}^* \left( \identity_2 \otimes \boldsymbol{F}_d \right)$.\par

\noindent It should be noted that since the matrices $\boldsymbol{V}^*_C$ and $\boldsymbol{V}_R^*$ are independent of $\sigma_{11}$, from \eqref{markov-t23c-1} and using the expression of $\boldsymbol{A}^*$ in \autoref{lemma3-c4-1}, we get that the matrix $\boldsymbol{B}^*$ is independent of $\sigma_{11}$. Thus from \eqref{markov-t21c-1}, \eqref{markov-t22c-1} and \eqref{markov-t24c-1}, we get that $RD_{d^{(0)}}$ is independent of the value of $\sigma_{11}$.
\end{proof}

\end{appendices}

\section*{Declaration of competing interest}
The authors declare no conflicts of interest.

\section*{Funding}
This research did not receive any specific grant from funding agencies in the public, commercial, or not-for-profit sectors.

\section*{Author contributions}
\textbf{Shubham Niphadkar}: Formal analysis, Methodology, Software, Writing-Original draft preparation and Editing. \textbf{Siuli Mukhopadhyay}: Conceptualization, Methodology, Validation, Writing-Reviewing and Editing.


\AtBeginEnvironment{appendices}{%
 \titleformat{\section}{\bfseries\Large}{\appendixname}{0.5em}{}%
 \titleformat{\subsection}{\bfseries\large}{\thesubsection}{0.5em}{}%
\counterwithin{equation}{section}
}
\begin{appendices}
\renewcommand{\appendixname}{Supplementary Material}
\renewcommand{\thesection}{\hspace{-0.3em}}
\renewcommand{\thesubsection}{S.1}

\section{}
\label{supp-a}

\subsection{Table}
\label{supp-table}

\renewcommand{\thesection}{S}
\renewcommand{\thesubsection}{S.1}

\setlength{\arrayrulewidth}{0.5mm}
\setlength{\tabcolsep}{18pt}
\renewcommand{\arraystretch}{1.5}

\renewcommand{\thetable}{S.\arabic{table}}

\setcounter{table}{0}

\renewcommand{\thetable}{S.\arabic{table}}

\FloatBarrier
\begin{table}[ht]
\centering
\begin{tabular}{ p{4cm}|p{10cm}  }
\hline
Gene Marker & Characteristics\\
\hline
{\small{BPI (X100126967\_TGI\_AT) }} &  {\small{Defends against bacterial infections by neutralizing bacterial endotoxins within the innate immune system}}.\\
{\small{MTA3 (X100154937\_TGI\_AT)}} & {\small{Exhibits chromatin-binding and histone deacetylase activities, influencing transcriptional regulation and chromatin remodeling}}.\\
{\small{ LTBP2 (X100129645\_TGI\_AT) }} & {\small{ Crucial in the extracellular matrix, regulating transforming growth factor-beta (TGF-$\beta$) activity, influencing tissue homeostasis, and cellular behavior}}.\\
\hline
\end{tabular}
\caption{Characteristics of the $3$ gene markers. This information is obtained from GEO's annotation SOFT \href{https://www.ncbi.nlm.nih.gov/geo/query/acc.cgi?acc=GPL10379}{table}.}
\label{table}
\end{table}
\FloatBarrier

\renewcommand{\thesubsection}{S.2}
\subsection{Results for the proportional covariance structure}
\label{proportion-supp}

\begin{lemmas}
\label{lemma5-c4-i}
Under the proportional structure, for $g>1$, the information matrix for the direct effects can be expressed as
\begin{align}
\boldsymbol{C}_{d(s1)} &=  \boldsymbol{C}_{d(s1)(11)} - \boldsymbol{C}_{{d(s1)}(12)} \boldsymbol{C}^{-}_{{d 
(s1)}(22)} \boldsymbol{C}_{{d (s1)}(21)}, \label{z4-1}
\end{align}
where $\boldsymbol{C}_{d(s1)(11)} = \left( \identity_g \otimes \boldsymbol{T}^{'}_d \right) \boldsymbol{A}^* \left( \identity_g \otimes \boldsymbol{T}_d \right)$, $\boldsymbol{C}_{d(s1)(12)} = \boldsymbol{C}^{'}_{d(s1)(21)} = \left( \identity_g \otimes \boldsymbol{T}^{'}_d \right) \boldsymbol{A}^* \left( \identity_g \otimes \boldsymbol{F}_d \right) $ and $\boldsymbol{C}_{d(s1)(22)} = \left( \identity_g \otimes \boldsymbol{F}^{'}_d \right) \boldsymbol{A}^* \left( \identity_g \otimes \boldsymbol{F}_d \right)$. Here, $\boldsymbol{A}^* = \boldsymbol{\mathit{\Sigma}}^{-1/2} pr^{\perp} \left(\boldsymbol{\mathit{\Sigma}}^{-1/2} \boldsymbol{Z_1}
\right) \boldsymbol{\mathit{\Sigma}}^{-1/2}$.
\end{lemmas}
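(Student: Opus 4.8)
The plan is to treat \eqref{model2} as a single generalised least squares linear model and extract $\boldsymbol{C}_{d(s1)}$ by the classical two-stage elimination of nuisance parameters through Schur complements. First I would regroup the columns of $\boldsymbol{Z}_d$ into a nuisance block $\boldsymbol{W} = \begin{bmatrix} \identity_g \otimes \vecone_{np} & \boldsymbol{Z_1} \end{bmatrix}$ (intercepts, period and subject effects) and a block of interest $\boldsymbol{Z_2} = \identity_g \otimes \begin{bmatrix} \boldsymbol{T}_d & \boldsymbol{F}_d \end{bmatrix}$ (direct and carryover effects). Since $\boldsymbol{\mathit{\Sigma}}$ is positive definite, whitening by $\boldsymbol{\mathit{\Sigma}}^{-1/2}$ turns the model into an ordinary least squares problem, and the information (C-)matrix for the direct-plus-carryover parameters, after sweeping out $\boldsymbol{W}$, is the familiar $\boldsymbol{Z_2}'\boldsymbol{\mathit{\Sigma}}^{-1/2} pr^{\perp}\!\left(\boldsymbol{\mathit{\Sigma}}^{-1/2}\boldsymbol{W}\right)\boldsymbol{\mathit{\Sigma}}^{-1/2}\boldsymbol{Z_2}$, where $pr^{\perp}(\boldsymbol{M}) = \identity - \boldsymbol{M}(\boldsymbol{M}'\boldsymbol{M})^{-}\boldsymbol{M}'$.

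The first point to settle is that the projector may be taken against $\boldsymbol{\mathit{\Sigma}}^{-1/2}\boldsymbol{Z_1}$ alone, as in the definition of $\boldsymbol{A}^*$, rather than against the larger matrix $\boldsymbol{\mathit{\Sigma}}^{-1/2}\boldsymbol{W}$. This reduces to verifying $\identity_g \otimes \vecone_{np} \in \mathrm{col}(\boldsymbol{Z_1})$, which holds because $\boldsymbol{P}\vecone_p = (\vecone_n \otimes \identity_p)\vecone_p = \vecone_{np}$, so $\vecone_{np}$ already lies in $\mathrm{col}(\boldsymbol{X_1})$ and the grand-mean column of $\boldsymbol{W}$ is redundant. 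Hence $pr^{\perp}(\boldsymbol{\mathit{\Sigma}}^{-1/2}\boldsymbol{W}) = pr^{\perp}(\boldsymbol{\mathit{\Sigma}}^{-1/2}\boldsymbol{Z_1})$ and the C-matrix for the direct-plus-carryover parameters is exactly $\boldsymbol{Z_2}'\boldsymbol{A}^*\boldsymbol{Z_2}$.

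Next I would partition $\boldsymbol{Z_2}'\boldsymbol{A}^*\boldsymbol{Z_2}$ into direct and carryover blocks. Writing $\identity_g \otimes \begin{bmatrix}\boldsymbol{T}_d & \boldsymbol{F}_d\end{bmatrix}$ as the column permutation $\begin{bmatrix}\identity_g \otimes \boldsymbol{T}_d & \identity_g \otimes \boldsymbol{F}_d\end{bmatrix}$ (a reordering that merely groups the $g$ direct columns together and the $g$ carryover columns together, and that permutes the rows and columns of the quadratic form identically), the diagonal blocks become precisely $\boldsymbol{C}_{d(s1)(11)}$ and $\boldsymbol{C}_{d(s1)(22)}$ and the off-diagonal block becomes $\boldsymbol{C}_{d(s1)(12)}$, using $(\identity_g \otimes \boldsymbol{A})' = \identity_g \otimes \boldsymbol{A}'$. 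Eliminating the carryover parameters by one further Schur complement then gives \eqref{z4-1}.

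The main obstacle is the careful handling of generalised inverses: both the nuisance normal matrix $\boldsymbol{W}'\boldsymbol{\mathit{\Sigma}}^{-1}\boldsymbol{W}$ and the carryover block $\boldsymbol{C}_{d(s1)(22)}$ are in general singular, so I would invoke the standard invariance facts --- that $\boldsymbol{M}(\boldsymbol{M}'\boldsymbol{M})^{-}\boldsymbol{M}'$ is independent of the choice of $g$-inverse and equals the orthogonal projector onto $\mathrm{col}(\boldsymbol{M})$, and that the Schur-complement C-matrix is invariant to the choice of $g$-inverse provided the appropriate column-space inclusions hold --- to justify replacing ordinary inverses by generalised inverses throughout. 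Granting these, the remaining Kronecker and block manipulations are routine.
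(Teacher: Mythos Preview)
Your proposal is correct and follows essentially the same route as the paper: whiten by $\boldsymbol{\mathit{\Sigma}}^{-1/2}$, observe that the intercept columns $\identity_g\otimes\vecone_{np}$ lie in $\mathrm{col}(\boldsymbol{Z_1})$ so the relevant projector is $pr^{\perp}(\boldsymbol{\mathit{\Sigma}}^{-1/2}\boldsymbol{Z_1})$, and then peel off the carryover block by a Schur complement. The paper phrases the last step via the block-projection identity $pr^{\perp}([\boldsymbol{\zeta_1}\ \boldsymbol{\zeta_2}]) = pr^{\perp}(\boldsymbol{\zeta_1}) - pr^{\perp}(\boldsymbol{\zeta_1})\boldsymbol{\zeta_2}(\boldsymbol{\zeta_2}'pr^{\perp}(\boldsymbol{\zeta_1})\boldsymbol{\zeta_2})^{-}\boldsymbol{\zeta_2}'pr^{\perp}(\boldsymbol{\zeta_1})$ of \citet{kunert1983optimal}, which is exactly the Schur-complement step you describe; your two-stage elimination and their one-shot projection decomposition are the same calculation.
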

\begin{proof}
Let $\boldsymbol{\eta}^{(1)} = \begin{bmatrix}
\boldsymbol{\tau}^{'}_1 & \cdots & \boldsymbol{\tau}^{'}_g
\end{bmatrix}^{'}$ and $\boldsymbol{\eta}^{(2)} = \begin{bmatrix}
\mu_1 & \mu_2 &
\boldsymbol{\alpha}^{'}_1 &
\boldsymbol{\beta}^{'}_1 & 
\cdots &
\boldsymbol{\alpha}^{'}_g &
\boldsymbol{\beta}^{'}_g &
\boldsymbol{\rho}^{'}_1 &
\cdots &
\boldsymbol{\rho}^{'}_g
\end{bmatrix}^{'}
$. So $\boldsymbol{\eta}^{(1)}$ denotes the vector of direct effects. Then by rearranging the parameters, the model \eqref{model2} can be equivalently expressed as
\begin{align}
\begin{bmatrix}
 \boldsymbol{Y}^{'}_{d1} &
 \cdots &
\boldsymbol{Y}^{'}_{dg}
\end{bmatrix}^{'}
&= 
\left( \identity_g \otimes \boldsymbol{T}_d \right) \boldsymbol{\eta}^{(1)} + 
\begin{bmatrix}
 \identity_g \otimes \vecone_{np}  & \boldsymbol{Z_1} & \identity_g \otimes \boldsymbol{F}_d
\end{bmatrix}
\boldsymbol{\eta}^{(2)} + 
\begin{bmatrix}
 \boldsymbol{\eps}^{'}_{1} &
 \cdots &
\boldsymbol{\eps}^{'}_{g}
\end{bmatrix}^{'}.
\label{markov-t7c-1-i}
\end{align}

\noindent Premultiplying \eqref{markov-t7c-1-i} by $\boldsymbol{\mathit{\Sigma}}^{-1/2}$, we get the model as
\begin{multline}
\begin{bmatrix}
 \boldsymbol{Y}^{'}_{d1(new)} &
 \cdots &
\boldsymbol{Y}^{'}_{dg(new)}
\end{bmatrix}^{'}
= 
\boldsymbol{\mathit{\Sigma}}^{-1/2}
\left( \identity_g \otimes \boldsymbol{T}_d \right) \boldsymbol{\eta}^{(1)} + 
\boldsymbol{\mathit{\Sigma}}^{-1/2}
\begin{bmatrix}
 \identity_g \otimes \vecone_{np}  & \boldsymbol{Z_1} & \identity_g \otimes \boldsymbol{F}_d
\end{bmatrix}
\boldsymbol{\eta}^{(2)} + \\
\begin{bmatrix}
 \boldsymbol{\eps}^{'}_{1(new)} &
 \cdots &
\boldsymbol{\eps}^{'}_{g(new)}
\end{bmatrix}^{'}.
\label{markov-t8c-1-i}
\end{multline}

\noindent Thus using the expression of the information matrix from \citet{kunert1983optimal}, we get that under the proportional structure, the information matrix for the direct effects can be expressed as
\begin{align}
\boldsymbol{C}_{d(s1)} &=
\left( \identity_g \otimes \boldsymbol{T}_d^{'} \right) \boldsymbol{\mathit{\Sigma}}^{-1/2} pr^{\perp} \left( \boldsymbol{\mathit{\Sigma}}^{-1/2} 
\begin{bmatrix}
\boldsymbol{Z_1} & \identity_g \otimes \boldsymbol{F}_d
\end{bmatrix}  
\right) \boldsymbol{\mathit{\Sigma}}^{-1/2} \left( \identity_g \otimes \boldsymbol{T}_d\right).
\label{markov-t9c-1-i}
\end{align}

\noindent From \eqref{markov-t9c-1-i} and using the result $pr^{\perp} \left( \begin{bmatrix}
\boldsymbol{\mathit{\zeta}_1} & \boldsymbol{\mathit{\zeta}_2} 
\end{bmatrix} \right) = pr^{\perp} \left( \boldsymbol{\mathit{\zeta}_1} \right) - pr^{\perp} \left( \boldsymbol{\mathit{\zeta}_1} \right) \boldsymbol{\mathit{\zeta}_2} \left( \boldsymbol{\mathit{\zeta}}^{'}_\mathbf{2} pr^{\perp} \left( \boldsymbol{\mathit{\zeta}_1} \right)\boldsymbol{\mathit{\zeta}_2} \right)^{-} \times $\\$\boldsymbol{\mathit{\zeta}}^{'}_\mathbf{2} pr^{\perp} \left( \boldsymbol{\mathit{\zeta}_1} \right) $ \citep{kunert1983optimal}, we get
\begin{align}
\boldsymbol{C}_{d(s1)} &=
\boldsymbol{C}_{d(s1)(11)} - \boldsymbol{C}_{{d(s1)}(12)} \boldsymbol{C}^{-}_{{d 
(s1)}(22)} \boldsymbol{C}_{{d (s1)}(21)},
\label{markov-t10c-1-i}
\end{align}
where $\boldsymbol{C}_{d(s1)(11)} = \left( \identity_g \otimes \boldsymbol{T}^{'}_d \right) \boldsymbol{A}^* \left( \identity_g \otimes \boldsymbol{T}_d \right)$, $\boldsymbol{C}_{d(s1)(12)} = \boldsymbol{C}^{'}_{d(s1)(21)} = \left( \identity_g \otimes \boldsymbol{T}^{'}_d \right) \boldsymbol{A}^* \left( \identity_g \otimes \boldsymbol{F}_d \right) $ and $\boldsymbol{C}_{d(s1)(22)} = \left( \identity_g \otimes \boldsymbol{F}^{'}_d \right) \boldsymbol{A}^* \left( \identity_g \otimes \boldsymbol{F}_d \right)$. Here, $\boldsymbol{A}^* = \boldsymbol{\mathit{\Sigma}}^{-1/2} pr^{\perp} \left(\boldsymbol{\mathit{\Sigma}}^{-1/2} \boldsymbol{Z_1}
\right) \boldsymbol{\mathit{\Sigma}}^{-1/2}$.
\end{proof}

\begin{lemmas}
\label{prop-lemma3-c4-1}
Under the proportional structure, for $g>1$, the matrix $\boldsymbol{A}^*$ as given in \autoref{lemma5-c4-i} can be expressed as 
\begin{align}
\boldsymbol{A}^* &=
\boldsymbol{\mathit{\Gamma}}^{-1} \otimes \left( \hatmat_n \otimes \boldsymbol{V}^* \right),
\label{prop-p31c-1}
\end{align}
where $\boldsymbol{V}^* = \boldsymbol{V}^{-1} - \left( \vecone_{p}^{'} \boldsymbol{V}^{-1} \vecone_p \right)^{-1} \boldsymbol{V}^{-1} \matone_{p \times p} \boldsymbol{V}^{-1}$, and $\boldsymbol{\mathit{\Gamma}}$ is a $g \times g$ positive definite and symmetric matrix with non-zero off-diagonal matrix.
\end{lemmas}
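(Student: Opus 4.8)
The plan is to collapse the $g$-dimensional problem to a single-response (univariate) one using the Kronecker factorisation of the proportional dispersion, and then to recognise the surviving univariate operator as $\hatmat_n \otimes \boldsymbol{V}^*$ by checking a short list of properties that characterise it uniquely. First I would rewrite the definition from \autoref{lemma5-c4-i} in the algebraically more convenient ``residual'' form
\[
\boldsymbol{A}^* = \boldsymbol{\mathit{\Sigma}}^{-1} - \boldsymbol{\mathit{\Sigma}}^{-1} \boldsymbol{Z_1} \left( \boldsymbol{Z_1}^{'} \boldsymbol{\mathit{\Sigma}}^{-1} \boldsymbol{Z_1} \right)^{-} \boldsymbol{Z_1}^{'} \boldsymbol{\mathit{\Sigma}}^{-1},
\]
which follows by substituting $pr^{\perp}(\boldsymbol{M}) = \identity - \boldsymbol{M}(\boldsymbol{M}^{'}\boldsymbol{M})^{-}\boldsymbol{M}^{'}$ with $\boldsymbol{M} = \boldsymbol{\mathit{\Sigma}}^{-1/2}\boldsymbol{Z_1}$ and using $\boldsymbol{\mathit{\Sigma}}^{-1/2}\boldsymbol{\mathit{\Sigma}}^{-1/2} = \boldsymbol{\mathit{\Sigma}}^{-1}$; this form is independent of the choice of generalised inverse.

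Next I would insert the proportional structure $\boldsymbol{\mathit{\Sigma}}^{-1} = \boldsymbol{\mathit{\Gamma}}^{-1} \otimes \left( \identity_n \otimes \boldsymbol{V}^{-1} \right)$ together with $\boldsymbol{Z_1} = \identity_g \otimes \boldsymbol{X_1}$. Applying the mixed-product rule repeatedly, and using that $\boldsymbol{\mathit{\Gamma}}$ is invertible so that $\boldsymbol{\mathit{\Gamma}} \otimes \boldsymbol{S}^{-}$ serves as a generalised inverse of $\boldsymbol{\mathit{\Gamma}}^{-1} \otimes \boldsymbol{S}$ for $\boldsymbol{S} = \boldsymbol{X_1}^{'}(\identity_n \otimes \boldsymbol{V}^{-1})\boldsymbol{X_1}$, every $\boldsymbol{\mathit{\Gamma}}$-factor telescopes to a single $\boldsymbol{\mathit{\Gamma}}^{-1}$, leaving
\[
\boldsymbol{A}^* = \boldsymbol{\mathit{\Gamma}}^{-1} \otimes \left[ \left( \identity_n \otimes \boldsymbol{V}^{-1} \right) - \left( \identity_n \otimes \boldsymbol{V}^{-1} \right)\boldsymbol{X_1}\boldsymbol{S}^{-}\boldsymbol{X_1}^{'}\left( \identity_n \otimes \boldsymbol{V}^{-1} \right) \right].
\]
The bracketed matrix, call it $\boldsymbol{M}$, is precisely the residual operator of the single-response model with dispersion $\identity_n \otimes \boldsymbol{V}$ and design $\boldsymbol{X_1} = \begin{bmatrix} \boldsymbol{P} & \boldsymbol{U} \end{bmatrix}$, so it only remains to prove $\boldsymbol{M} = \hatmat_n \otimes \boldsymbol{V}^*$.

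For this last identification I would invoke that $\boldsymbol{M}$ is the unique symmetric matrix $\boldsymbol{A}$ satisfying $\boldsymbol{A}(\identity_n \otimes \boldsymbol{V})\boldsymbol{A} = \boldsymbol{A}$, $\boldsymbol{A}\boldsymbol{X_1} = \zero$ and $\mathrm{rank}(\boldsymbol{A}) = np - \mathrm{rank}(\boldsymbol{X_1})$, and verify all three for the candidate $\hatmat_n \otimes \boldsymbol{V}^*$. The single identity that powers every check is $\boldsymbol{V}^* \matone_{p \times p} = \zero$ — hence $\boldsymbol{V}^* \vecone_p = \zero$ — which is immediate from the definition of $\boldsymbol{V}^*$ once $\left( \vecone_{p}^{'}\boldsymbol{V}^{-1}\vecone_p \right)^{-1}\left( \vecone_{p}^{'}\boldsymbol{V}^{-1}\vecone_p \right) = 1$ is used. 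It yields $\boldsymbol{V}^*\boldsymbol{V}\boldsymbol{V}^* = \boldsymbol{V}^*$ (so the quadratic condition reduces to $\hatmat_n^2 = \hatmat_n$), and it gives $\left( \hatmat_n \otimes \boldsymbol{V}^* \right)\boldsymbol{U} = \hatmat_n \otimes (\boldsymbol{V}^*\vecone_p) = \zero$, while $\left( \hatmat_n \otimes \boldsymbol{V}^* \right)\boldsymbol{P} = (\hatmat_n \vecone_n)\otimes \boldsymbol{V}^* = \zero$ follows from $\hatmat_n \vecone_n = \zero$; together these give $\boldsymbol{A}\boldsymbol{X_1} = \zero$. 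For the rank I would compute $\mathrm{rank}(\boldsymbol{V}^*) = tr(\boldsymbol{V}^*\boldsymbol{V}) = p-1$ (the trace being computable directly, and equal to the rank because $\boldsymbol{V}^{1/2}\boldsymbol{V}^*\boldsymbol{V}^{1/2}$ is idempotent) and $\mathrm{rank}(\boldsymbol{X_1}) = p + n - 1$ (the columns of $\boldsymbol{P}$ and $\boldsymbol{U}$ share the single relation $\boldsymbol{P}\vecone_p = \boldsymbol{U}\vecone_n = \vecone_{np}$), so that $\mathrm{rank}(\hatmat_n \otimes \boldsymbol{V}^*) = (n-1)(p-1) = np - (p+n-1)$ matches.

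I expect the main obstacle to be this final paragraph rather than any one computation: setting up the uniqueness characterisation of the residual operator correctly and keeping the argument valid in the presence of the singular Gram matrix $\boldsymbol{S}$ (which forces genuine generalised inverses, since $\boldsymbol{X_1}$ is column-rank-deficient), together with the two rank counts. Once the identity $\boldsymbol{V}^*\vecone_p = \zero$ is in hand, all remaining manipulations are routine Kronecker algebra.
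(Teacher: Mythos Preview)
Your proof is correct and follows the same overall route as the paper: rewrite $\boldsymbol{A}^*$ in residual form, exploit $\boldsymbol{\mathit{\Sigma}}^{-1}=\boldsymbol{\mathit{\Gamma}}^{-1}\otimes(\identity_n\otimes\boldsymbol{V}^{-1})$ and $\boldsymbol{Z_1}=\identity_g\otimes\boldsymbol{X_1}$ so that the $\boldsymbol{\mathit{\Gamma}}$-factors telescope to a single $\boldsymbol{\mathit{\Gamma}}^{-1}$, and then identify the surviving univariate residual operator with $\hatmat_n\otimes\boldsymbol{V}^*$. The only point of departure is the last step: the paper simply cites \citet{Bose2009OptimalDesigns} for the identity
\[
(\identity_n\otimes\boldsymbol{V}^{-1/2})\,pr^{\perp}\!\left((\identity_n\otimes\boldsymbol{V}^{-1/2})\boldsymbol{X_1}\right)(\identity_n\otimes\boldsymbol{V}^{-1/2})=\hatmat_n\otimes\boldsymbol{V}^*,
\]
whereas you supply a self-contained verification via the uniqueness characterisation of the weighted residual projector (idempotence relative to $\identity_n\otimes\boldsymbol{V}$, annihilation of $\boldsymbol{X_1}$, and the rank count $(n-1)(p-1)=np-(n+p-1)$). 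Your argument is therefore slightly longer but does not rely on the external reference; the paper's is shorter but assumes the univariate result as known.
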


\begin{proof}
Under the proportional structure, for $g>1$, the dispersion matrix of the vector of error terms is given as
\begin{align}
\boldsymbol{\mathit{\Sigma}} &= \boldsymbol{\mathit{\Gamma}} \otimes \left( \identity_n \otimes \boldsymbol{V} \right),
\label{prop-t17c-1}
\end{align}
where $\boldsymbol{V}$ is a known positive definite and symmetric matrix. $\boldsymbol{\mathit{\Gamma}}$ is a $g \times g$ positive definite and symmetric matrix with non-zero off-diagonal elements.

\noindent Using \eqref{Zd} and properties of kronecker product, we get
\begin{align}
\begin{split}
\boldsymbol{\mathit{\Sigma}}^{-1} &= 
\boldsymbol{\mathit{\Gamma}}^{-1} \otimes \left(\identity_n \otimes \boldsymbol{V}^{-1} \right),\\
\boldsymbol{\mathit{\Sigma}}^{-1} \boldsymbol{Z_1} &= 
\boldsymbol{\mathit{\Gamma}}^{-1} \otimes \left(\identity_n \otimes \boldsymbol{V}^{-1} \right) \boldsymbol{X_1}, \text{ and}\\
\boldsymbol{Z}^{'}_{\mathbf{1}} \boldsymbol{\mathit{\Sigma}}^{-1} \boldsymbol{Z_1} &= 
\boldsymbol{\mathit{\Gamma}}^{-1} \otimes \boldsymbol{X}^{'}_{\mathbf{1}} \left(\identity_n \otimes \boldsymbol{V}^{-1} \right) \boldsymbol{X_1}.
\end{split}
\label{prop-t18c-1}
\end{align}

\noindent From \citet{Bose2009OptimalDesigns}, we get that  $\left[ \left( \identity_n \otimes \boldsymbol{V}^{-1/2} \right) pr^{\perp} \left( \left( \identity_n \otimes \boldsymbol{V}^{-1/2} \right) \boldsymbol{X_1} \right)  \left( \identity_n \otimes \boldsymbol{V}^{-1/2} \right) \right] = \hatmat_n \otimes \boldsymbol{V}^*$, where $\boldsymbol{V}^* = \boldsymbol{V}^{-1} - \left( \vecone_{p}^{'} \boldsymbol{V}^{-1} \vecone_p \right)^{-1} \boldsymbol{V}^{-1} \matone_{p \times p} \boldsymbol{V}^{-1}$. Thus from \eqref{prop-t18c-1} and using the expression $\boldsymbol{A}^* = \boldsymbol{\mathit{\Sigma}}^{-1/2} pr^{\perp} \left(\boldsymbol{\mathit{\Sigma}}^{-1/2} \boldsymbol{Z_1}
\right) \boldsymbol{\mathit{\Sigma}}^{-1/2}$ as given in \autoref{lemma5-c4-i}, we get that for $g>1$
\begin{align*}
\boldsymbol{A}^* &= 
\boldsymbol{\mathit{\Gamma}}^{-1} \otimes
\left[ \left( \identity_n \otimes \boldsymbol{V}^{-1/2} \right) pr^{\perp} \left( \left( \identity_n \otimes \boldsymbol{V}^{-1/2} \right) \boldsymbol{X_1} \right)  \left( \identity_n \otimes \boldsymbol{V}^{-1/2} \right) \right]\\
&= \boldsymbol{\mathit{\Gamma}}^{-1} \otimes \left(\hatmat_n \otimes \boldsymbol{V}^* \right).
\end{align*}
\end{proof}

\renewcommand{\thesubsection}{S.3}
\subsection{Results for the generalized Markov-type covariance structure}
\label{markov-supp}

\begin{lemmas}
\label{lemma5-c4}
Under the generalized Markov-type structure, the information matrix for the direct effects can be expressed as
\begin{align}
\boldsymbol{C}_{d(s2)} &=  \boldsymbol{C}_{d(s2)(11)} - \boldsymbol{C}_{{d(s2)}(12)} \boldsymbol{C}^{-}_{{d 
(s2)}(22)} \boldsymbol{C}_{{d (s2)}(21)} \label{z4}\\
&=
\left( \identity_2 \otimes \boldsymbol{T}_d^{'} \right) \boldsymbol{\mathit{\Sigma}}^{-1/2} pr^{\perp} \left( \boldsymbol{\mathit{\Sigma}}^{-1/2} 
\begin{bmatrix}
\boldsymbol{Z_1} & \identity_2 \otimes \boldsymbol{F}_d \hatmat_t
\end{bmatrix}  
\right) \boldsymbol{\mathit{\Sigma}}^{-1/2} \left( \identity_2 \otimes \boldsymbol{T}_d\right)
\label{z42}\\
&= \boldsymbol{C}_{{d(s2)}(11)(1)} - \boldsymbol{C}_{{d(s2)}(12)(1)} \boldsymbol{C}^{-}_{{d (s2)}(22)(1)} \boldsymbol{C}_{{d (s2)}(21)(1)},
\label{z43}
\end{align}
where $\boldsymbol{C}_{d(s2)(11)} = \left( \identity_2 \otimes \boldsymbol{T}^{'}_d \right) \boldsymbol{A}^* \left( \identity_2 \otimes \boldsymbol{T}_d \right)$, $\boldsymbol{C}_{d(s2)(12)} = \boldsymbol{C}^{'}_{d(s2)(21)} = \left( \identity_2 \otimes \boldsymbol{T}^{'}_d \right) \boldsymbol{A}^* \left( \identity_2 \otimes \boldsymbol{F}_d \right) $, \\$\boldsymbol{C}_{d(s2)(22)} = \left( \identity_2 \otimes \boldsymbol{F}^{'}_d \right) \boldsymbol{A}^* \left( \identity_2 \otimes \boldsymbol{F}_d \right)$, $\boldsymbol{C}_{{d(s2)}(11)(1)} = \left( \identity_2 \otimes \boldsymbol{T}_d^{'} \right) \boldsymbol{A}^* \left( \identity_2 \otimes \boldsymbol{T}_d\right)$, $\boldsymbol{C}_{{d (s2)}(12)(1)} =$\\$ \boldsymbol{C}^{'}_{{d(s2)}(21)(1)}=\left( \identity_2 \otimes \boldsymbol{T}_d^{'} \right) \boldsymbol{A}^* \left(\identity_2 \otimes \boldsymbol{F}_d \hatmat_t\right)$ and $\boldsymbol{C}_{{d (s2)}(22)(1)} = \left(\identity_2 \otimes \boldsymbol{F}^{'}_d \hatmat_t\right) \boldsymbol{A}^* \left(\identity_2 \otimes \boldsymbol{F}_d \hatmat_t\right)$. Here, $\boldsymbol{A}^* = \boldsymbol{\mathit{\Sigma}}^{-1/2} pr^{\perp} \left(\boldsymbol{\mathit{\Sigma}}^{-1/2} \boldsymbol{Z_1}
\right) \boldsymbol{\mathit{\Sigma}}^{-1/2}$.
\end{lemmas}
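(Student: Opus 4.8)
The plan is to follow the argument used for \autoref{lemma5-c4-i}, specialised to the bivariate ($g=2$) Markov setting, and then to justify the insertion of the centring matrix $\hatmat_t$ that distinguishes the three displayed forms. First I would rearrange the parameter vector $\boldsymbol{\eta}$ of \eqref{model2} into the direct-effect block $\boldsymbol{\eta}^{(1)} = \begin{bmatrix} \boldsymbol{\tau}_1^{'} & \boldsymbol{\tau}_2^{'} \end{bmatrix}^{'}$ and a remaining block $\boldsymbol{\eta}^{(2)}$ collecting the intercept, period, subject and carryover effects, so that the model reduces to a regression on $\identity_2 \otimes \boldsymbol{T}_d$ with nuisance design $\begin{bmatrix} \identity_2 \otimes \vecone_{np} & \boldsymbol{Z_1} & \identity_2 \otimes \boldsymbol{F}_d \end{bmatrix}$. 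Premultiplying by $\boldsymbol{\mathit{\Sigma}}^{-1/2}$ to whiten the errors, invoking the information-matrix formula of \citet{kunert1983optimal}, and then applying the partitioned-projection identity $pr^{\perp}\left( \begin{bmatrix} \boldsymbol{\mathit{\zeta}_1} & \boldsymbol{\mathit{\zeta}_2} \end{bmatrix} \right) = pr^{\perp}\left( \boldsymbol{\mathit{\zeta}_1} \right) - pr^{\perp}\left( \boldsymbol{\mathit{\zeta}_1} \right) \boldsymbol{\mathit{\zeta}_2} \left( \boldsymbol{\mathit{\zeta}}_2^{'} pr^{\perp}\left( \boldsymbol{\mathit{\zeta}_1} \right) \boldsymbol{\mathit{\zeta}_2} \right)^{-} \boldsymbol{\mathit{\zeta}}_2^{'} pr^{\perp}\left( \boldsymbol{\mathit{\zeta}_1} \right)$ of \citet{kunert1983optimal} with $\boldsymbol{\mathit{\zeta}_1} = \boldsymbol{\mathit{\Sigma}}^{-1/2} \boldsymbol{Z_1}$ and $\boldsymbol{\mathit{\zeta}_2} = \boldsymbol{\mathit{\Sigma}}^{-1/2} \left( \identity_2 \otimes \boldsymbol{F}_d \right)$, exactly as in \autoref{lemma5-c4-i}, produces \eqref{z4} with $\boldsymbol{A}^* = \boldsymbol{\mathit{\Sigma}}^{-1/2} pr^{\perp}\left( \boldsymbol{\mathit{\Sigma}}^{-1/2} \boldsymbol{Z_1} \right) \boldsymbol{\mathit{\Sigma}}^{-1/2}$; here the intercept column $\identity_2 \otimes \vecone_{np}$ is absorbed into the span of $\boldsymbol{X_1}$.

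The crux is \eqref{z42}, that is, showing that $\boldsymbol{F}_d$ may be replaced by $\boldsymbol{F}_d \hatmat_t$ inside the projector without changing $\boldsymbol{C}_{d(s2)}$. The key fact I would establish is that $\boldsymbol{F}_d \vecone_t$ already lies in the column space of $\boldsymbol{X_1}$: since $\boldsymbol{F}_d = \left( \identity_n \otimes \boldsymbol{\psi} \right) \boldsymbol{T}_d$ and $\boldsymbol{T}_d \vecone_t = \vecone_{np}$, we have $\boldsymbol{F}_d \vecone_t = \left( \identity_n \otimes \boldsymbol{\psi} \right) \vecone_{np} = \vecone_n \otimes \left( \boldsymbol{\psi} \vecone_p \right)$, and $\boldsymbol{\psi} \vecone_p = \begin{bmatrix} 0 & 1 & \cdots & 1 \end{bmatrix}^{'}$ is a sum of columns of the period matrix $\boldsymbol{P} = \vecone_n \otimes \identity_p$, which is a block of $\boldsymbol{X_1}$ and hence of $\boldsymbol{Z_1} = \identity_2 \otimes \boldsymbol{X_1}$. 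Writing $\identity_2 \otimes \boldsymbol{F}_d = \identity_2 \otimes \boldsymbol{F}_d \hatmat_t + \frac{1}{t} \left( \identity_2 \otimes \boldsymbol{F}_d \vecone_t \vecone_t^{'} \right)$ and noting that the columns of the last term lie in the span of $\boldsymbol{Z_1}$, the augmented matrices $\begin{bmatrix} \boldsymbol{Z_1} & \identity_2 \otimes \boldsymbol{F}_d \end{bmatrix}$ and $\begin{bmatrix} \boldsymbol{Z_1} & \identity_2 \otimes \boldsymbol{F}_d \hatmat_t \end{bmatrix}$ have the same column space; because $pr^{\perp}$ depends only on that space, \eqref{z42} follows.

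Finally I would obtain \eqref{z43} by applying the same partitioned-projection identity to the projector in \eqref{z42}, now with $\boldsymbol{\mathit{\zeta}_2} = \boldsymbol{\mathit{\Sigma}}^{-1/2} \left( \identity_2 \otimes \boldsymbol{F}_d \hatmat_t \right)$, which splits off the centred carryover block and yields the $(1)$-indexed decomposition. I expect the $\hatmat_t$-insertion of \eqref{z42} to be the main obstacle, since the two partitioned-projection steps are routine repetitions of \autoref{lemma5-c4-i}; the delicate point is verifying that the discarded direction $\boldsymbol{F}_d \vecone_t$ is genuinely absorbed by the period columns already present in $\boldsymbol{Z_1}$ --- which is precisely where the ``no carryover in the first period'' structure encoded by $\boldsymbol{\psi}$ enters --- so that the column spaces coincide and the orthogonal projector, and thus the information matrix for the direct effects, is unaffected.
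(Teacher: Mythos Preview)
Your proposal is correct and follows essentially the same route as the paper: specialise \autoref{lemma5-c4-i} to $g=2$, apply Kunert's partitioned-projection identity to obtain \eqref{z4}, show that $\boldsymbol{F}_d\vecone_t = \vecone_n\otimes(\boldsymbol{\psi}\vecone_p)$ lies in the column space of $\boldsymbol{P}$ (hence of $\boldsymbol{Z_1}$) to justify replacing $\boldsymbol{F}_d$ by $\boldsymbol{F}_d\hatmat_t$ in \eqref{z42}, and then reapply the projection identity for \eqref{z43}. One minor wording slip: $\boldsymbol{\psi}\vecone_p$ itself is a $p$-vector, not a sum of columns of the $np\times p$ matrix $\boldsymbol{P}$; what you need (and what your computation actually shows) is that $\boldsymbol{F}_d\vecone_t = \boldsymbol{P}\begin{bmatrix}0 & \vecone_{p-1}^{'}\end{bmatrix}^{'}$ lies in the column space of $\boldsymbol{P}$.
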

\begin{proof}
Under the generalized Markov-type structure, following the proof of \autoref{lemma5-c4-i} till \eqref{markov-t9c-1-i} by taking $g=2$, we get
\begin{align}
\boldsymbol{C}_{d(s2)} &=
\left( \identity_2 \otimes \boldsymbol{T}_d^{'} \right) \boldsymbol{\mathit{\Sigma}}^{-1/2} pr^{\perp} \left( \boldsymbol{\mathit{\Sigma}}^{-1/2} 
\begin{bmatrix}
\boldsymbol{Z_1} & \identity_2 \otimes \boldsymbol{F}_d
\end{bmatrix}  
\right) \boldsymbol{\mathit{\Sigma}}^{-1/2} \left( \identity_2 \otimes \boldsymbol{T}_d\right).
\label{markov-t9c-1}
\end{align}
From \eqref{markov-t9c-1} and using the result $pr^{\perp} \left( \begin{bmatrix}
\boldsymbol{\mathit{\zeta}_1} & \boldsymbol{\mathit{\zeta}_2} 
\end{bmatrix} \right) = pr^{\perp} \left( \boldsymbol{\mathit{\zeta}_1} \right) - pr^{\perp} \left( \boldsymbol{\mathit{\zeta}_1} \right) \boldsymbol{\mathit{\zeta}_2} \left( \boldsymbol{\mathit{\zeta}}^{'}_\mathbf{2} pr^{\perp} \left( \boldsymbol{\mathit{\zeta}_1} \right)\boldsymbol{\mathit{\zeta}_2} \right)^{-} \times $\\$ \boldsymbol{\mathit{\zeta}}^{'}_\mathbf{2} pr^{\perp} \left( \boldsymbol{\mathit{\zeta}_1} \right) $ \citep{kunert1983optimal}, we get
\begin{align}
\boldsymbol{C}_{d(s2)} &=
\boldsymbol{C}_{d(s2)(11)} - \boldsymbol{C}_{{d(s2)}(12)} \boldsymbol{C}^{-}_{{d 
(s2)}(22)} \boldsymbol{C}_{{d (s2)}(21)},
\label{markov-t10c-1}
\end{align}
where $\boldsymbol{C}_{d(s2)(11)} = \left( \identity_2 \otimes \boldsymbol{T}^{'}_d \right) \boldsymbol{A}^* \left( \identity_2 \otimes \boldsymbol{T}_d \right)$, $\boldsymbol{C}_{d(s2)(12)} = \boldsymbol{C}^{'}_{d(s2)(21)} = \left( \identity_2 \otimes \boldsymbol{T}^{'}_d \right) \boldsymbol{A}^* \left( \identity_2 \otimes \boldsymbol{F}_d \right) $ and $\boldsymbol{C}_{d(s2)(22)} = \left( \identity_2 \otimes \boldsymbol{F}^{'}_d \right) \boldsymbol{A}^* \left( \identity_2 \otimes \boldsymbol{F}_d \right)$. Here, $\boldsymbol{A}^* = \boldsymbol{\mathit{\Sigma}}^{-1/2} pr^{\perp} \left(\boldsymbol{\mathit{\Sigma}}^{-1/2} \boldsymbol{Z_1}
\right) \boldsymbol{\mathit{\Sigma}}^{-1/2}$.\par

\noindent Since $
\boldsymbol{F}_d \vecone_t = \vecone_n \otimes \begin{bmatrix} 0 & \vecone_{p-1}^{'} \end{bmatrix}^{'} = \boldsymbol{P} \begin{bmatrix} 0 & \vecone_{p-1}^{'} \end{bmatrix}^{'}$, we get $\left( \identity_2 \otimes \frac{1}{t} \boldsymbol{F}_d \matone_{t \times t} \right) =  \left( \identity_2 \otimes \boldsymbol{P} \right) \times$\\$\left( \identity_2 \otimes  \frac{1}{t} \begin{bmatrix} 0 & \vecone_{p-1}^{'} \end{bmatrix}^{'} \vecone_t^{'} \right) $. Hence, $\identity_2 \otimes \frac{1}{t} \boldsymbol{F}_d \matone_{t \times t}$ belongs to the column space of $\identity_2 \otimes \boldsymbol{P}$. So \\$\boldsymbol{\mathit{\Sigma}}^{-1/2} \left( \identity_2 \otimes \frac{1}{t} \boldsymbol{F}_d \matone_{t \times t} \right)$ belongs to the column space of $\boldsymbol{\mathit{\Sigma}}^{-1/2} \boldsymbol{Z_1}$. Thus from \eqref{markov-t9c-1}, we can prove \eqref{z42}.\par

\noindent Further using the result $pr^{\perp} \left( \begin{bmatrix}
\boldsymbol{\mathit{\zeta}_1} & \boldsymbol{\mathit{\zeta}_2} 
\end{bmatrix} \right) = pr^{\perp} \left( \boldsymbol{\mathit{\zeta}_1} \right) - pr^{\perp} \left( \boldsymbol{\mathit{\zeta}_1} \right) \boldsymbol{\mathit{\zeta}_2} \left( \boldsymbol{\mathit{\zeta}}^{'}_\mathbf{2} pr^{\perp} \left( \boldsymbol{\mathit{\zeta}_1} \right)\boldsymbol{\mathit{\zeta}_2} \right)^{-} \boldsymbol{\mathit{\zeta}}^{'}_\mathbf{2} pr^{\perp} \left( \boldsymbol{\mathit{\zeta}_1} \right) $ \citep{kunert1983optimal}, we get
\begin{align}
\boldsymbol{C}_{d(s2)} &= \boldsymbol{C}_{{d(s2)}(11)(1)} - \boldsymbol{C}_{{d(s2)}(12)(1)} \boldsymbol{C}^{-}_{{d (s2)}(22)(1)} \boldsymbol{C}_{{d (s2)}(21)(1)},
\end{align}
where $\boldsymbol{C}_{{d(s2)}(11)(1)} = \left( \identity_2 \otimes \boldsymbol{T}_d^{'} \right) \boldsymbol{A}^* \left( \identity_2 \otimes \boldsymbol{T}_d\right)$, $\boldsymbol{C}_{{d (s2)}(12)(1)} = \boldsymbol{C}^{'}_{{d(s2)}(21)(1)}=\left( \identity_2 \otimes \boldsymbol{T}_d^{'} \right) \boldsymbol{A}^* \times$\\$ \left(\identity_2 \otimes \boldsymbol{F}_d \hatmat_t\right)$ and $\boldsymbol{C}_{{d (s2)}(22)(1)} = \left(\identity_2 \otimes \boldsymbol{F}^{'}_d \hatmat_t\right) \boldsymbol{A}^*  \left(\identity_2 \otimes \boldsymbol{F}_d \hatmat_t\right)$.
\end{proof}

\begin{lemmas}
\label{lemma3-c4-1}
Under the generalized Markov-type structure, the matrix $\boldsymbol{A}^*$ as given in \autoref{lemma5-c4} can be expressed as
\begin{align}
\boldsymbol{A}^* &= 
\begin{bmatrix}
\hatmat_n \otimes \boldsymbol{\mathit{\Omega}}_1 & -\hatmat_n \otimes \boldsymbol{\mathit{\Omega}}_2\\
-\hatmat_n \otimes \boldsymbol{\mathit{\Omega}}_2 & \hatmat_n \otimes \boldsymbol{\mathit{\Omega}}_4
\end{bmatrix},
\label{p31c-1}
\end{align}
where
\begin{align}
\begin{split}
\boldsymbol{\mathit{\Omega}}_1 &= \boldsymbol{V}_1^* + \frac{\bar{\rho}^2}{\sigma_{12}} \boldsymbol{V}_R^*,\\
\boldsymbol{\mathit{\Omega}}_2 &=  \frac{\bar{\rho}}{\sigma_{12}} \boldsymbol{V}_R^*, \text{ and}\\
\boldsymbol{\mathit{\Omega}}_4 &= \frac{1}{\sigma_{12}} \boldsymbol{V}_R^*.
\end{split}
\label{markov-t1c-1}
\end{align}
Here, $\boldsymbol{V}_1^* = \boldsymbol{V}_1^{-1} - \left( \vecone_{p}^{'} \boldsymbol{V}_1^{-1} \vecone_p \right)^{-1} \boldsymbol{V}_1^{-1} \matone_{p \times p} \boldsymbol{V}_1^{-1}$, $\boldsymbol{V}_R^* = \boldsymbol{V}_R^{-1} - \left( \vecone_{p}^{'} \boldsymbol{V}_R^{-1} \vecone_p \right)^{-1} \boldsymbol{V}_R^{-1} \matone_{p \times p} \boldsymbol{V}_R^{-1}$, $\bar{\rho} = \rho \sqrt{\frac{\sigma_{22}}{\sigma_{11}}}$ and $\sigma_{12} = \sigma_{22} \left(1 
- \rho^2 \right)$.
\end{lemmas}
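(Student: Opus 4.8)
The plan is to evaluate $\boldsymbol{A}^*$ by first writing the oblique projector explicitly and then reducing the bivariate problem to two univariate ones through a block congruence of $\boldsymbol{\mathit{\Sigma}}$. Using $pr^{\perp}(\boldsymbol{M}) = \identity - \boldsymbol{M}(\boldsymbol{M}'\boldsymbol{M})^{-}\boldsymbol{M}'$ together with the definition $\boldsymbol{A}^* = \boldsymbol{\mathit{\Sigma}}^{-1/2} pr^{\perp}(\boldsymbol{\mathit{\Sigma}}^{-1/2}\boldsymbol{Z_1})\boldsymbol{\mathit{\Sigma}}^{-1/2}$ from \autoref{lemma5-c4}, one obtains the symmetric form
\begin{align*}
\boldsymbol{A}^* = \boldsymbol{\mathit{\Sigma}}^{-1} - \boldsymbol{\mathit{\Sigma}}^{-1}\boldsymbol{Z_1}\left( \boldsymbol{Z_1}'\boldsymbol{\mathit{\Sigma}}^{-1}\boldsymbol{Z_1}\right)^{-}\boldsymbol{Z_1}'\boldsymbol{\mathit{\Sigma}}^{-1},
\end{align*}
which involves only $\boldsymbol{\mathit{\Sigma}}^{-1}$ and the column space of $\boldsymbol{Z_1} = \identity_2\otimes\boldsymbol{X_1}$.

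The decisive step is a block congruence factorization of the generalized Markov dispersion. Using $\boldsymbol{\mathit{\Sigma}}_{12} = \bar{\rho}\,\boldsymbol{\mathit{\Sigma}}_{11}$ and $\boldsymbol{\mathit{\Sigma}}_{22} = \bar{\rho}^2\boldsymbol{\mathit{\Sigma}}_{11} + \sigma_{12}\boldsymbol{\mathit{\Sigma}}_R$, one checks directly that $\boldsymbol{\mathit{\Sigma}} = \boldsymbol{L}\boldsymbol{D}\boldsymbol{L}'$, where
\begin{align*}
\boldsymbol{L} = \begin{bmatrix} \identity & \zero \\ \bar{\rho}\,\identity & \identity \end{bmatrix}, \qquad \boldsymbol{D} = \begin{bmatrix} \boldsymbol{\mathit{\Sigma}}_{11} & \zero \\ \zero & \sigma_{12}\boldsymbol{\mathit{\Sigma}}_R \end{bmatrix}.
\end{align*}
Substituting $\boldsymbol{\mathit{\Sigma}}^{-1} = \left(\boldsymbol{L}^{-1}\right)'\boldsymbol{D}^{-1}\boldsymbol{L}^{-1}$ into the displayed expression and setting $\boldsymbol{W} = \boldsymbol{L}^{-1}\boldsymbol{Z_1}$ gives $\boldsymbol{A}^* = \left(\boldsymbol{L}^{-1}\right)'\boldsymbol{A}^*_D\,\boldsymbol{L}^{-1}$ with $\boldsymbol{A}^*_D = \boldsymbol{D}^{-1/2}\,pr^{\perp}\!\left(\boldsymbol{D}^{-1/2}\boldsymbol{W}\right)\boldsymbol{D}^{-1/2}$. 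The key observation is that $\boldsymbol{L}^{-1}$ leaves $\mathrm{col}(\boldsymbol{Z_1})$ invariant, i.e. $\mathrm{col}(\boldsymbol{W}) = \mathrm{col}(\boldsymbol{Z_1})$, because the lower-triangular coupling only injects a multiple of the first block's $\boldsymbol{X_1}$-columns into the second block, where $\boldsymbol{X_1}$-columns already lie. Since the projector depends only on $\mathrm{col}(\boldsymbol{D}^{-1/2}\boldsymbol{W}) = \mathrm{col}(\boldsymbol{D}^{-1/2}\boldsymbol{Z_1})$, we may replace $\boldsymbol{W}$ by $\boldsymbol{Z_1}$ in $\boldsymbol{A}^*_D$.

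Both $\boldsymbol{D}$ and $\boldsymbol{Z_1} = \identity_2\otimes\boldsymbol{X_1}$ are block diagonal, so $\boldsymbol{A}^*_D$ is block diagonal, its two diagonal blocks being the univariate quantities attached to the dispersions $\boldsymbol{\mathit{\Sigma}}_{11} = \identity_n\otimes\boldsymbol{V}_1$ and $\sigma_{12}\boldsymbol{\mathit{\Sigma}}_R = \identity_n\otimes(\sigma_{12}\boldsymbol{V}_R)$. Applying the identity $(\identity_n\otimes\boldsymbol{V}^{-1/2})pr^{\perp}((\identity_n\otimes\boldsymbol{V}^{-1/2})\boldsymbol{X_1})(\identity_n\otimes\boldsymbol{V}^{-1/2}) = \hatmat_n\otimes\boldsymbol{V}^*$ from \citet{Bose2009OptimalDesigns} to each block, together with the scaling $(\sigma_{12}\boldsymbol{V}_R)^* = \sigma_{12}^{-1}\boldsymbol{V}_R^* = \boldsymbol{\mathit{\Omega}}_4$, yields diagonal blocks $\hatmat_n\otimes\boldsymbol{V}_1^*$ and $\hatmat_n\otimes\boldsymbol{\mathit{\Omega}}_4$. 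Forming $\left(\boldsymbol{L}^{-1}\right)'\boldsymbol{A}^*_D\,\boldsymbol{L}^{-1}$ with $\boldsymbol{L}^{-1} = \begin{bmatrix} \identity & \zero \\ -\bar{\rho}\,\identity & \identity \end{bmatrix}$ then produces the blocks $\hatmat_n\otimes(\boldsymbol{V}_1^* + \bar{\rho}^2\boldsymbol{\mathit{\Omega}}_4)$, $-\hatmat_n\otimes\bar{\rho}\boldsymbol{\mathit{\Omega}}_4$ off the diagonal, and $\hatmat_n\otimes\boldsymbol{\mathit{\Omega}}_4$, which coincide with $\hatmat_n\otimes\boldsymbol{\mathit{\Omega}}_1$, $-\hatmat_n\otimes\boldsymbol{\mathit{\Omega}}_2$ and $\hatmat_n\otimes\boldsymbol{\mathit{\Omega}}_4$ as claimed.

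I expect the main obstacle to be neither the block inversion nor the final conjugation, both routine, but rather justifying the reduction cleanly: one must confirm that $\boldsymbol{L}$ simultaneously block-diagonalizes $\boldsymbol{\mathit{\Sigma}}$ and preserves $\mathrm{col}(\boldsymbol{Z_1})$, so that swapping $\boldsymbol{W}$ for $\boldsymbol{Z_1}$ leaves the generalized projector genuinely unchanged. A direct alternative, avoiding the factorization, is to compute $\boldsymbol{\mathit{\Sigma}}^{-1}$ blockwise and expand the projector term by term using $\vecone_p'\boldsymbol{V}_1^* = \vecone_p'\boldsymbol{V}_R^* = \zero_{1\times p}$; this is the more \emph{laborious} route the statement alludes to, and the congruence argument above is the efficient way to organize it.
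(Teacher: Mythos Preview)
Your proposal is correct and takes a genuinely different, considerably more efficient route than the paper. The paper proceeds by brute force: it computes $\boldsymbol{\mathit{\Sigma}}^{-1}$ blockwise, then evaluates $\boldsymbol{Z_1}'\boldsymbol{\mathit{\Sigma}}^{-1}\boldsymbol{Z_1}$ and its generalized inverse via nested $2\times 2$ Schur complements (once for the response blocks, once more for the period/subject split within each response block), then assembles $\boldsymbol{\mathit{\Sigma}}^{-1}\boldsymbol{Z_1}(\boldsymbol{Z_1}'\boldsymbol{\mathit{\Sigma}}^{-1}\boldsymbol{Z_1})^{-}\boldsymbol{Z_1}'\boldsymbol{\mathit{\Sigma}}^{-1}$ term by term, introducing auxiliary matrices $\boldsymbol{K}_1,\boldsymbol{K}_2,\boldsymbol{K}_3$ and constants $\delta_1,\delta_2,\delta_4$ along the way. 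Your block-LDL$'$ congruence $\boldsymbol{\mathit{\Sigma}}=\boldsymbol{L}\boldsymbol{D}\boldsymbol{L}'$ exploits the specific Markov structure to decouple the bivariate problem into two independent univariate ones, after which the known identity $(\identity_n\otimes\boldsymbol{V}^{-1/2})pr^{\perp}((\identity_n\otimes\boldsymbol{V}^{-1/2})\boldsymbol{X_1})(\identity_n\otimes\boldsymbol{V}^{-1/2})=\hatmat_n\otimes\boldsymbol{V}^*$ from \citet{Bose2009OptimalDesigns} applies twice and a single $2\times 2$ conjugation finishes the job. What your argument buys is transparency: the appearance of $\hatmat_n$, the additive form of $\boldsymbol{\mathit{\Omega}}_1$, and the proportionality $\boldsymbol{\mathit{\Omega}}_2=\bar{\rho}\,\boldsymbol{\mathit{\Omega}}_4$ all fall out structurally rather than by cancellation. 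The paper's direct computation, while tedious, has the minor advantage of not relying on the column-space invariance step $\mathrm{col}(\boldsymbol{L}^{-1}\boldsymbol{Z_1})=\mathrm{col}(\boldsymbol{Z_1})$, which you correctly flag as the one point needing care; but that invariance is immediate once written out, since $\boldsymbol{L}^{-1}(\identity_2\otimes\boldsymbol{X_1})=(\identity_2\otimes\boldsymbol{X_1})\begin{bmatrix}\identity & \zero\\ -\bar{\rho}\,\identity & \identity\end{bmatrix}$.
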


\begin{proof}
From \eqref{dispepsc}, under the generalized Markov-type structure,
\begin{align}
\boldsymbol{\mathit{\Sigma}} =
\begin{bmatrix}
\boldsymbol{\mathit{\Sigma}}_{11} & \boldsymbol{\mathit{\Sigma}}_{12}\\
\boldsymbol{\mathit{\Sigma}}_{21} & \boldsymbol{\mathit{\Sigma}}_{22}
\end{bmatrix},
\label{t17c-1}
\end{align}
where $\boldsymbol{\mathit{\Sigma}}_{11} = \identity_n \otimes \boldsymbol{V}_1$, $\boldsymbol{\mathit{\Sigma}}_{12} = \boldsymbol{\mathit{\Sigma}}_{21} = \rho \sqrt{\frac{\sigma_{22}}{\sigma_{11}} }\boldsymbol{\mathit{\Sigma}}_{11}$, $\boldsymbol{\mathit{\Sigma}}_{22} = \rho^2 \frac{\sigma_{22}}{\sigma_{11}} \boldsymbol{\mathit{\Sigma}}_{11} + \sigma_{22} (1-\rho^2) \boldsymbol{\mathit{\Sigma}}_{R}$ and $\boldsymbol{\mathit{\Sigma}}_{R} = \identity_n \otimes \boldsymbol{V}_R$. Here, $\boldsymbol{V}_1$ is a known positive definite and symmetric matrix with each diagonal element as $\sigma_{11}>0$, and $\boldsymbol{V}_R$ is a known correlation matrix. Also, $\sigma_{11} , \sigma_{22} >0$ and $0<|\rho| <1$. So
\begin{align}
\boldsymbol{\mathit{\Sigma}}^{-1} &= 
\begin{bmatrix}
\boldsymbol{\mathit{\Sigma}}^{11} & \boldsymbol{\mathit{\Sigma}}^{12}\\
\boldsymbol{\mathit{\Sigma}}^{21} & \boldsymbol{\mathit{\Sigma}}^{22}
\end{bmatrix},
\label{t18c-1}
\end{align}
where $\boldsymbol{\mathit{\Sigma}}^{11} = \boldsymbol{\mathit{\Sigma}}_{11}^{-1} + \boldsymbol{\mathit{\Sigma}}_{11}^{-1} \boldsymbol{\mathit{\Sigma}}_{12} \boldsymbol{\mathit{\Sigma}}^{22} \boldsymbol{\mathit{\Sigma}}_{21} \boldsymbol{\mathit{\Sigma}}_{11}^{-1}$, $\boldsymbol{\mathit{\Sigma}}^{12} = \left(\boldsymbol{\mathit{\Sigma}}^{21} \right)^{'}= - \boldsymbol{\mathit{\Sigma}}_{11}^{-1} \boldsymbol{\mathit{\Sigma}}_{12} \boldsymbol{\mathit{\Sigma}}^{22}$ and $\boldsymbol{\mathit{\Sigma}}^{22} = $\\$ \left( \boldsymbol{\mathit{\Sigma}}_{11} - \boldsymbol{\mathit{\Sigma}}_{12} \boldsymbol{\mathit{\Sigma}}^{-1}_{22} \boldsymbol{\mathit{\Sigma}}_{21}\right)^{-1}$.\par

\noindent Let $\bar{\rho} = \rho \sqrt{\frac{\sigma_{22}}{\sigma_{11}}}$ and $\sigma_{12} = \sigma_{22} \left( 1 - \rho^2 \right)$. Note that since $\sigma_{11}, \sigma_{22} >0$ and $0<|\rho|<1$, $|\bar{\rho}|>0$ and $\sigma_{12}>0$. By calculation we get
\begin{align}
\begin{split}
\boldsymbol{\mathit{\Sigma}}^{11} &= \identity_n \otimes\boldsymbol{K}_1^{-1}, \text{ and}\\
\boldsymbol{\mathit{\Sigma}}^{12} &=\boldsymbol{\mathit{\Sigma}}^{21} = -\bar{\rho} \boldsymbol{\mathit{\Sigma}}^{22} = - \identity_n \otimes \bar{\rho} \boldsymbol{K}_2^{-1}. 
\label{t19c-1}
\end{split}
\end{align}
where $\boldsymbol{K}_1^{-1} = \boldsymbol{V}_1^{-1} + \bar{\rho}^2 \boldsymbol{K}_2^{-1}$ and $\boldsymbol{K}_2 = \sigma_{12} \boldsymbol{V}_R$. Note that $\boldsymbol{K}_1$ and $\boldsymbol{K}_2$ are symmetric matrices. Since, $\boldsymbol{V}_R$ is positive definite matrix, $\sigma_{12}>0$ and $|\bar{\rho}|>0$, we get that $\boldsymbol{K}_2$ is a positive definite matrix. Since $\boldsymbol{\mathit{\Sigma}}$ is positive definite, $\boldsymbol{\mathit{\Sigma}}^{11}$ is also a positive definite matrix. Thus, $\boldsymbol{K}_1$ is a positive definite matrix.\par

\noindent Now, we will find $\left( \boldsymbol{Z}_1^{'} \boldsymbol{\mathit{\Sigma}}^{-1} \boldsymbol{Z}_1 \right)^{-}$. Since from \eqref{t19c-1}, $\boldsymbol{\mathit{\Sigma}}^{12} = \boldsymbol{\mathit{\Sigma}}^{21}$, using \eqref{Zd} and \eqref{t18c-1}, we get
\begin{align*}
\boldsymbol{Z}^{'}_{1} \boldsymbol{\mathit{\Sigma}}^{-1} \boldsymbol{Z}_{1} &= 
\begin{bmatrix}
\boldsymbol{X}^{'}_{1} \boldsymbol{\mathit{\Sigma}}^{11} \boldsymbol{X}_{1} & \boldsymbol{X}^{'}_{1} \boldsymbol{\mathit{\Sigma}}^{12} \boldsymbol{X}_{1}\\
\boldsymbol{X}^{'}_{1} \boldsymbol{\mathit{\Sigma}}^{12} \boldsymbol{X}_{1} & \boldsymbol{X}^{'}_{1} \boldsymbol{\mathit{\Sigma}}^{22} \boldsymbol{X}_{1}
\end{bmatrix}.
\end{align*}
Let $\boldsymbol{A}^{(1)} = \boldsymbol{X}^{'}_{1} \boldsymbol{\mathit{\Sigma}}^{11} \boldsymbol{X}_{1}$, $\boldsymbol{B}^{(1)}  = \boldsymbol{X}^{'}_{1} \boldsymbol{\mathit{\Sigma}}^{12} \boldsymbol{X}_{1}$ and $\boldsymbol{D}^{(1)} = \boldsymbol{X}^{'}_{1} \boldsymbol{\mathit{\Sigma}}^{22} \boldsymbol{X}_{1}$. So
\begin{align}
\left( \boldsymbol{Z}^{'}_{1} \boldsymbol{\mathit{\Sigma}}^{-1} \boldsymbol{Z}_{1} \right)^{-}&= 
\begin{bmatrix}
\boldsymbol{L}_1 & \boldsymbol{L}_2\\
\boldsymbol{L}_3 & \boldsymbol{L}_4
\end{bmatrix},
\label{t22c-1}
\end{align}
where $\boldsymbol{L}_1 = \left(\boldsymbol{A}^{(1)} \right)^{-} + \left(\boldsymbol{A}^{(1)} \right)^{-} \boldsymbol{B}^{(1)} \left( \boldsymbol{Z}^{(1)} \right)^{-} \boldsymbol{B}^{(1)} \left(\boldsymbol{A}^{(1)} \right)^{-}$, $\boldsymbol{L}_2 = \boldsymbol{L}_3^{'} =  - \left(\boldsymbol{A}^{(1)} \right)^{-} \boldsymbol{B}^{(1)}\left( \boldsymbol{Z}^{(1)} \right)^{-}$ and $\boldsymbol{L}_4 = \left( \boldsymbol{Z}^{(1)} \right)^{-}$. Here, $ \boldsymbol{Z}^{(1)} = \boldsymbol{D}^{(1)} - \boldsymbol{B}^{(1)} \left( \boldsymbol{A}^{(1)} \right)^{-} \boldsymbol{B}^{(1)} $.
From \eqref{Zd} and \eqref{t19c-1}, by calculation we get
\begin{align}
\boldsymbol{A}^{(1)} &=
\begin{bmatrix}
\boldsymbol{A}^{(2)} & \boldsymbol{B}^{(2)}\\
\boldsymbol{C}^{(2)} & \boldsymbol{D}^{(2)}
\end{bmatrix},
\label{t23c-1}
\end{align}
where $\boldsymbol{A}^{(2)} =  n \boldsymbol{K}_1^{-1}$, $\boldsymbol{B}^{(2)} = \left( \boldsymbol{C}^{(2)} \right)^{'} = \vecone_n^{'} \otimes  \boldsymbol{K}_1^{-1} \vecone_p$ and $\boldsymbol{D}^{(2)} = \delta_1^{-1} \identity_n$, where  $\delta_1 = 1/\vecone_p^{'} \boldsymbol{K}_1^{-1} \vecone_p$. Note that since $\boldsymbol{K}_1$ is a positive definite matrix, $\boldsymbol{K}_1^{-1}$ is a positive definite matrix and thus $\delta_1>0$.\par

\noindent Hence
\begin{align}
\left( \boldsymbol{A}^{(1)} \right)^{-} &=
\begin{bmatrix}
\left(\boldsymbol{A}^{(2)} \right)^{-} + \left(\boldsymbol{A}^{(2)} \right)^{-} \boldsymbol{B}^{(2)} \left( \boldsymbol{Z}^{(2)} \right)^{-} \boldsymbol{C}^{(2)} \left(\boldsymbol{A}^{(2)} \right)^{-} & - \left(\boldsymbol{A}^{(2)} \right)^{-} \boldsymbol{B}^{(2)}\left( \boldsymbol{Z}^{(2)} \right)^{-}\\
- \left( \boldsymbol{Z}^{(2)} \right)^{-} \boldsymbol{C}^{(2)} \left(\boldsymbol{A}^{(2)} \right)^{-}   & \left( \boldsymbol{Z}^{(2)} \right)^{-}
\end{bmatrix},
\label{t24c-1}
\end{align}
where $\boldsymbol{Z}^{(2)} = \boldsymbol{D}^{(2)} - \boldsymbol{C}^{(2)} \left( \boldsymbol{A}^{(2)} \right)^{-} \boldsymbol{B}^{(2)} $. By calculation we get
\begin{align}
\begin{split}
\boldsymbol{Z}^{(2)} &= \delta_1^{-1} \hatmat_{n},\\
\left(\boldsymbol{A}^{(2)} \right)^{-} \boldsymbol{B}^{(2)}\left( \boldsymbol{Z}^{(2)} \right)^{-} &= \frac{1}{n}  \vecone_n^{'} \otimes \delta_1 \vecone_p ,\\
\left( \boldsymbol{Z}^{(2)} \right)^{-} \boldsymbol{C}^{(2)} \left(\boldsymbol{A}^{(2)} \right)^{-}  &= \frac{1}{n}  \vecone_n \otimes \delta_1 \vecone^{'}_p, \text{ and}\\
\left(\boldsymbol{A}^{(2)} \right)^{-} + \left(\boldsymbol{A}^{(2)} \right)^{-} \boldsymbol{B}^{(2)} \left( \boldsymbol{Z}^{(2)} \right)^{-} \boldsymbol{C}^{(2)} \left(\boldsymbol{A}^{(2)} \right)^{-} &= \frac{1}{n} \boldsymbol{K}_1 + \frac{1}{n} \delta_1 \matone_{p \times p}.
\label{t28c-1}
\end{split}
\end{align}

\noindent So using \eqref{t28c-1}, \eqref{t24c-1} becomes
\begin{align}
\left(\boldsymbol{A}^{(1)} \right)^{-} &= 
\begin{bmatrix}
\frac{1}{n} \boldsymbol{K}_1 + \frac{1}{n} \delta_1 \matone_{p \times p} &  -\frac{1}{n}  \vecone_n^{'} \otimes \delta_1 \vecone_p \\
-\frac{1}{n}  \vecone_n \otimes \delta_1 \vecone^{'}_p  & \delta_1 \identity_n
\end{bmatrix}.
\label{t29c-1}
\end{align}
From \eqref{Zd} and \eqref{t19c-1}, by calculation we get
\begin{align}
\boldsymbol{B}^{(1)} &=  - \boldsymbol{D}^{(1)} \bar{\rho} = 
\begin{bmatrix}
-n \bar{\rho} \boldsymbol{K}_2^{-1} & - \vecone_n^{'} \otimes   \bar{\rho}  \boldsymbol{K}_2^{-1} \vecone_p\\
-\vecone_n \otimes  \bar{\rho} \vecone^{'}_p \boldsymbol{K}_2^{-1} & - \bar{\rho} \delta_2^{-1} \identity_n
\end{bmatrix}, 
\label{t31c-1}
\end{align}
where $\delta_2 = 1/\vecone^{'}_p  \boldsymbol{K}_2^{-1} \vecone_p$. Note that since $\boldsymbol{K}_2$ is a positive definite matrix, $\boldsymbol{K}_2^{-1}$ is a positive definite matrix and thus $\delta_2>0$.\par

\noindent Let $\boldsymbol{K}_3 =  \bar{\rho}^2 \boldsymbol{V}_1 + \boldsymbol{K}_2 $. Note that $\boldsymbol{K}_3$ is a symmetric matrix. Since $\boldsymbol{V}_1$, $\boldsymbol{K}_1$ and $\boldsymbol{K}_2$ are positive definite matrices, using the result that $ \identity_p + \boldsymbol{\mathit{\zeta}_3} \boldsymbol{\mathit{\zeta}_4}$ is invertible if and only if $\identity_p + \boldsymbol{\mathit{\zeta}_4} \boldsymbol{\mathit{\zeta}_3}$ is invertible, where $\boldsymbol{\mathit{\zeta}_3}$ and $\boldsymbol{\mathit{\zeta}_4}$ are matrices such that $\boldsymbol{\mathit{\zeta}_3} \boldsymbol{\mathit{\zeta}_4}$ is a $p \times p$ matrix, we get that $\boldsymbol{K}_3$ is a invertible matrix. Using the result $\left( \identity_p + \boldsymbol{\mathit{\zeta}_3} \boldsymbol{\mathit{\zeta}_4} \right)^{-1} = \identity_p - \boldsymbol{\mathit{\zeta}_3}  \left( \identity_p + \boldsymbol{\mathit{\zeta}_4} \boldsymbol{\mathit{\zeta}_3} \right)^{-1}   \boldsymbol{\mathit{\zeta}_4}$, where $\boldsymbol{\mathit{\zeta}_3}$ and $\boldsymbol{\mathit{\zeta}_4}$ are matrices such that $\boldsymbol{\mathit{\zeta}_3} \boldsymbol{\mathit{\zeta}_4}$ is a $p \times p$ matrix, we get that $\boldsymbol{K}_2^{-1} - \bar{\rho}^2 \boldsymbol{K}_2^{-1} \boldsymbol{K}_1 \boldsymbol{K}_2^{-1} = \boldsymbol{K}_3^{-1}$. Thus from \eqref{t29c-1} and \eqref{t31c-1}, by calculation we get
\begin{align}
\boldsymbol{Z}^{(1)} &=\boldsymbol{D}^{(1)} - \boldsymbol{B}^{(1)} \left(\boldsymbol{A}^{(1)} \right)^{-} \boldsymbol{B}^{(1)} \nonumber\\
&=
\begin{bmatrix}
n \boldsymbol{K}_3^{-1} & \vecone_n^{'} \otimes \boldsymbol{K}_3^{-1} \vecone_p\\
\vecone_n \otimes \vecone_p^{'} \boldsymbol{K}_3^{-1} & \delta_4 \hatmat_n + \frac{1}{n} \left( \vecone_p^{'} \boldsymbol{K}_3^{-1}  \vecone_p \right) \matone_{n \times n}
\end{bmatrix},
\label{t37c-1}
\end{align}
where $\delta_4 = \delta_2^{-1} - \bar{\rho}^2 \delta_1 \delta_2^{-2} $. Using the expression of $\boldsymbol{K}_1^{-1}$, $\delta_1$ and $\delta_2$, we get
\begin{align}
\delta_1 \vecone_{p}^{'} \boldsymbol{V}_1^{-1} \vecone_{p} + \bar{\rho}^2 \delta_1 \delta_2^{-1} = 1.
\label{markov-t5c-1}
\end{align}
From \eqref{markov-t5c-1} and using the expression of $\delta_4$, we get
\begin{align}
\delta_4 = \delta_1 \delta_2^{-1} \vecone_{p}^{'} \boldsymbol{V}_1^{-1} \vecone_{p}.
\label{markov-t6c-1}
\end{align}
Since $\delta_1, \delta_2>0$ and $\boldsymbol{V}_1$ is a positive definite matrix, from \eqref{markov-t6c-1}, we get that $\delta_4>0$.\par

\noindent Let $\boldsymbol{A}^{(3)}= n \boldsymbol{K}_3^{-1}$, $\boldsymbol{B}^{(3)} = \left(\boldsymbol{C}^{(3)} \right)^{'} = \vecone_n^{'} \otimes \boldsymbol{K}_3^{-1} \vecone_p$ and $\boldsymbol{D}^{(3)} = \delta_4 \hatmat_n + \frac{1}{n} \left( \vecone_p^{'} \boldsymbol{K}_3^{-1}  \vecone_p \right) \matone_{n \times n}$. So
\begin{align}
\left( \boldsymbol{Z}^{(1)} \right)^{-} &=
\begin{bmatrix}
\left(\boldsymbol{A}^{(3)} \right)^{-} + \left(\boldsymbol{A}^{(3)} \right)^{-} \boldsymbol{B}^{(3)} \left( \boldsymbol{Z}^{(3)} \right)^{-} \boldsymbol{C}^{(3)} \left(\boldsymbol{A}^{(3)} \right)^{-} & - \left(\boldsymbol{A}^{(3)} \right)^{-} \boldsymbol{B}^{(3)}\left( \boldsymbol{Z}^{(3)} \right)^{-}\\
- \left( \boldsymbol{Z}^{(3)} \right)^{-} \boldsymbol{C}^{(3)} \left(\boldsymbol{A}^{(3)} \right)^{-}   & \left( \boldsymbol{Z}^{(3)} \right)^{-}
\end{bmatrix},
\label{t38c-1}
\end{align}
where $\boldsymbol{Z}^{(3)} = \boldsymbol{D}^{(3)} - \boldsymbol{C}^{(3)} \left(\boldsymbol{A}^{(3)} \right)^{-} \boldsymbol{B}^{(3)}$. Using the expression of the matrices $\boldsymbol{A}^{(3)}$, $\boldsymbol{B}^{(3)}$, $\boldsymbol{C}^{(3)}$ and $\boldsymbol{D}^{(3)}$, by calculation we get
\begin{align}
\begin{split}
\boldsymbol{Z}^{(3)} &= \delta_4 \hatmat_n,\\
\left(  \boldsymbol{A}^{(3)}  \right)^{-} \boldsymbol{B}^{(3)} \left( \boldsymbol{Z}^{(3)} \right)^{-} &= \frac{1}{n}  \vecone_n^{'} \otimes \delta_4^{-1} \vecone_p,\\
\left( \boldsymbol{Z}^{(3)} \right)^{-} \boldsymbol{C}^{(3)} \left(  \boldsymbol{A}^{(3)}  \right)^{-} &= \frac{1}{n}   \vecone_n \otimes \delta_4^{-1} \vecone_p^{'}, \text{ and} \\
\left(\boldsymbol{A}^{(3)} \right)^{-} + \left(\boldsymbol{A}^{(3)} \right)^{-} \boldsymbol{B}^{(3)} \left( \boldsymbol{Z}^{(3)} \right)^{-} \boldsymbol{C}^{(3)} \left(\boldsymbol{A}^{(3)} \right)^{-} &= \frac{1}{n} \boldsymbol{K}_3 + \frac{1}{n} \delta_4^{-1} \matone_{p \times p}.
\end{split}
\label{t40c-1}
\end{align}

\noindent From \eqref{t38c-1} and \eqref{t40c-1}, we get
\begin{align}
\left( \boldsymbol{Z}^{(1)} \right)^{-} &=
\begin{bmatrix}
\frac{1}{n} \boldsymbol{K}_3 + \frac{1}{n} \delta_4^{-1} \matone_{p \times p} & -\frac{1}{n}  \vecone_n^{'} \otimes \delta_4^{-1} \vecone_p \\
-\frac{1}{n}   \vecone_n \otimes \delta_4^{-1} \vecone_p^{'}  & \delta_4^{-1} \identity_n
\end{bmatrix}.
\label{t43c-1}
\end{align}
Using the expression of matrices $\boldsymbol{K}_1$, $\boldsymbol{K}_2$ and $\boldsymbol{K}_3$, we get 
\begin{align}
\boldsymbol{K}_1 \boldsymbol{K}_2^{-1} = \boldsymbol{V}_1 \boldsymbol{K}_3^{-1}.
\label{markov-t2c-1}
\end{align}
So from \eqref{t29c-1}, \eqref{t31c-1} and \eqref{t43c-1}, by calculation we get
\begin{align}
\left( \boldsymbol{A}^{(1)} \right)^{-}  \boldsymbol{B}^{(1)}  \left( \boldsymbol{Z}^{(1)} \right)^{-} = \left( \boldsymbol{Z}^{(1)} \right)^{-}  \boldsymbol{B}^{(1)}  \left( \boldsymbol{A}^{(1)} \right)^{-} =
\begin{bmatrix}
-\frac{1}{n} \bar{\rho} \boldsymbol{V}_1 & \zero_{p \times n}\\
\zero_{n \times p} & - \bar{\rho} \delta_4^{-1} \delta_1 \delta_2^{-1} \hatmat_n
\end{bmatrix}
\label{markov-t3c-1}
\end{align}
and
\begin{multline}
\left( \boldsymbol{A}^{(1)} \right)^{-}+\left( \boldsymbol{A}^{(1)} \right)^{-}  \boldsymbol{B}^{(1)}  \left( \boldsymbol{Z}^{(1)} \right)^{-} \boldsymbol{C}^{(1)}  \left( \boldsymbol{A}^{(1)} \right)^{-}  \\=
\begin{bmatrix}
\frac{1}{n} \boldsymbol{K}_1 + \frac{1}{n} \delta_1 \matone_{p \times p} + \frac{1}{n} \bar{\rho}^2 \boldsymbol{V}_1 \boldsymbol{K}_2^{-1} \boldsymbol{K}_1 &  -\frac{1}{n}  \vecone_n^{'} \otimes \delta_1 \vecone_p \\
-\frac{1}{n}  \vecone_n \otimes \delta_1 \vecone^{'}_p  & \delta_1 \identity_n  + \bar{\rho}^2 \delta_4^{-1} \delta_1^2 \delta_2^{-2} \hatmat_n
\end{bmatrix}.
\label{t44c-1}
\end{multline}

\noindent From \eqref{t43c-1}, \eqref{markov-t3c-1} and \eqref{t44c-1}, \eqref{t22c-1} becomes
\begin{align}
\left( \boldsymbol{Z}^{'}_{1} \boldsymbol{\mathit{\Sigma}}^{-1} \boldsymbol{Z}_{1} \right)^{-}&= 
\begin{bmatrix}
\boldsymbol{L}_1 & \boldsymbol{L}_2\\
\boldsymbol{L}_3 & \boldsymbol{L}_4
\end{bmatrix},
\label{t48c-1}
\end{align}
where
\begin{align}
\begin{split}
\boldsymbol{L}_1 &=
\begin{bmatrix}
\frac{1}{n} \boldsymbol{K}_1 + \frac{1}{n} \delta_1 \matone_{p \times p} + \frac{1}{n} \bar{\rho}^2 \boldsymbol{V}_1 \boldsymbol{K}_2^{-1} \boldsymbol{K}_1 &  -\frac{1}{n}  \vecone_n^{'} \otimes \delta_1 \vecone_p \\
-\frac{1}{n}  \vecone_n \otimes \delta_1 \vecone^{'}_p  & \delta_1 \identity_n  + \bar{\rho}^2 \delta_4^{-1} \delta_1^2 \delta_2^{-2} \hatmat_n
\end{bmatrix},\\
\boldsymbol{L}_2 &= \boldsymbol{L}_3 =
\begin{bmatrix}
\frac{1}{n} \bar{\rho} \boldsymbol{V}_1 & \zero_{p \times n}\\
\zero_{n \times p} & \bar{\rho} \delta_4^{-1} \delta_1 \delta_2^{-1} \hatmat_n
\end{bmatrix}, \text{ and}\\
\boldsymbol{L}_4 &= 
\begin{bmatrix}
\frac{1}{n} \boldsymbol{K}_3 + \frac{1}{n} \delta_4^{-1} \matone_{p \times p} & -\frac{1}{n}   \vecone_n^{'} \otimes \delta_4^{-1} \vecone_p  \\
-\frac{1}{n}   \vecone_n \otimes \delta_4^{-1} \vecone_p^{'}  & \delta_4^{-1} \identity_n
\end{bmatrix}.
\end{split}
\label{t49c-1}
\end{align}
Now, we find $\boldsymbol{\mathit{\Sigma}}^{-1} \boldsymbol{Z_1}$. From \eqref{Zd}, \eqref{t18c-1} and \eqref{t19c-1}, by calculation we get
\begin{align}
\boldsymbol{\mathit{\Sigma}}^{-1} \boldsymbol{Z_1} &= 
\begin{bmatrix}
\boldsymbol{\mathit{\Sigma}}^{11} \boldsymbol{X_1} & \boldsymbol{\mathit{\Sigma}}^{12} \boldsymbol{X_1}\\
\boldsymbol{\mathit{\Sigma}}^{12} \boldsymbol{X_1} & \boldsymbol{\mathit{\Sigma}}^{22} \boldsymbol{X_1}
\end{bmatrix},
\label{t50c-1}
\end{align}
where
\begin{align}
\begin{split}
\boldsymbol{\mathit{\Sigma}}^{11} \boldsymbol{X_1} &= 
\begin{bmatrix}
\vecone_n \otimes \boldsymbol{K}_1^{-1} & \identity_n \otimes \boldsymbol{K}_1^{-1} \vecone_p
\end{bmatrix}, \text{ and}\\
\boldsymbol{\mathit{\Sigma}}^{12} \boldsymbol{X_1} &= -\bar{\rho} \boldsymbol{\mathit{\Sigma}}^{22} \boldsymbol{X_1} = \begin{bmatrix}
-\vecone_n \otimes \bar{\rho} \boldsymbol{K}_2^{-1}  & -\identity_n \otimes \bar{\rho}  \boldsymbol{K}_2^{-1} \vecone_p
\end{bmatrix}. 
\end{split}
\label{t51c-1}
\end{align}

\noindent Now, we find $\boldsymbol{\mathit{\Sigma}}^{-1} \boldsymbol{Z_1} \left( \boldsymbol{Z}^{'}_{1} \boldsymbol{\mathit{\Sigma}}^{-1} \boldsymbol{Z}_{1} \right)^{-} \boldsymbol{Z}^{'}_{1} \boldsymbol{\mathit{\Sigma}}^{-1}$. Since $\boldsymbol{V}_1$, $\boldsymbol{K}_1$, $\boldsymbol{K}_2$ and $\boldsymbol{K}_3$ are symmetric matrices, from \eqref{markov-t2c-1}, we get that $\boldsymbol{K}_2^{-1} \boldsymbol{K}_3 = \boldsymbol{K}_1^{-1} \boldsymbol{V}_1$ and $\boldsymbol{K}_1^{-1} \boldsymbol{V}_1 \boldsymbol{K}_2^{-1} \boldsymbol{K}_1 = \boldsymbol{K}_2^{-1} \boldsymbol{V}_1$. So, using the expression of matrix $\boldsymbol{K}_3$, we get that $\boldsymbol{K}_2^{-1} \boldsymbol{K}_3 = \identity_p + \bar{\rho}^2 \boldsymbol{K}_2^{-1} \boldsymbol{V}_1$ and $\boldsymbol{K}_2^{-1} \boldsymbol{K}_1 + \bar{\rho}^2 \boldsymbol{K}_2^{-1} \boldsymbol{V}_1 \boldsymbol{K}_2^{-1} \boldsymbol{K}_1  = \boldsymbol{K}_2^{-1} \boldsymbol{V}_1$. Using the expression of $\delta_4$, by solving we get $\delta_1 + \bar{\rho}^2 \delta_4^{-1} \delta_1^2 \delta_2^{-2} =
\delta_4^{-1} \delta_1 \delta_2^{-1}
$. Thus from \eqref{t48c-1}, \eqref{t49c-1}, \eqref{t50c-1} and \eqref{t51c-1}, by calculation we get
\begin{align}
\boldsymbol{\mathit{\Sigma}}^{-1} \boldsymbol{Z_1} \left( \boldsymbol{Z}^{'}_{1} \boldsymbol{\mathit{\Sigma}}^{-1} \boldsymbol{Z}_{1} \right)^{-} \boldsymbol{Z}^{'}_{1} \boldsymbol{\mathit{\Sigma}}^{-1} &=
\begin{bmatrix}
\boldsymbol{A}^{(4)} & \boldsymbol{B}^{(4)}\\
\boldsymbol{B}^{(4)} & \boldsymbol{D}^{(4)}
\end{bmatrix},
\label{t79c-1}
\end{align}
where
\begin{multline*}
\boldsymbol{A}^{(4)} = 
\frac{1}{n} \matone_{n \times n} \otimes \boldsymbol{K}_1^{-1} + \hatmat_n \otimes \delta_4^{-1} \delta_1 \delta_2^{-1} \boldsymbol{K}_1^{-1} \matone_{p \times p} \boldsymbol{K}_1^{-1}  - \hatmat_n \otimes \bar{\rho}^2 \delta_4^{-1}  \delta_1 \delta_2^{-1} \boldsymbol{K}_2^{-1}  \matone_{p \times p} \boldsymbol{K}_1^{-1} \\+ \hatmat_n \otimes \bar{\rho}^2 \delta_4^{-1}  \boldsymbol{K}_2^{-1} \matone_{p \times p} \boldsymbol{K}_2^{-1}  - \hatmat_n \otimes \bar{\rho}^2 \delta_4^{-1}  \delta_1 \delta_2^{-1} \boldsymbol{K}_1^{-1} \matone_{p \times p}  \boldsymbol{K}_2^{-1} , \text{ and}
 \end{multline*}
 \begin{align*}
\boldsymbol{B}^{(4)} &= -\bar{\rho} \boldsymbol{D}^{(4)}\\
&=
-\frac{1}{n} \matone_{n \times n} \otimes \bar{\rho}  \boldsymbol{K}_2^{-1}  + \hatmat_n \otimes \bar{\rho}^3 \delta_4^{-1} \delta_1 \delta_2^{-1} \boldsymbol{K}_2^{-1} \matone_{p \times p}  \boldsymbol{K}_2^{-1}  -\hatmat_n \otimes \bar{\rho} \delta_4^{-1}   \boldsymbol{K}_2^{-1} \matone_{p \times p} \boldsymbol{K}_2^{-1}.
\end{align*}
Thus from \eqref{markov-t5c-1}, \eqref{markov-t6c-1} and using the expression of $\boldsymbol{A}^*$ from \autoref{lemma5-c4}, by calculation we can prove \eqref{p31c-1}.
\end{proof}

\begin{remarks}
\label{remark1-c4}
The matrices $\boldsymbol{\mathit{\Omega}}_1$, $\boldsymbol{\mathit{\Omega}}_2$, and $\boldsymbol{\mathit{\Omega}}_4$ have row sums and column sums as zero.
\end{remarks}
\begin{proof}
Using the expression of the matrices $\boldsymbol{\mathit{\Omega}}_1$, $\boldsymbol{\mathit{\Omega}}_2$, and $\boldsymbol{\mathit{\Omega}}_4$, as given in \autoref{lemma3-c4-1}, we get that $\vecone_p^{'} \boldsymbol{\mathit{\Omega}}_1 = \vecone_p^{'} \boldsymbol{\mathit{\Omega}}_2 = \vecone_p^{'} \boldsymbol{\mathit{\Omega}}_4 = \zero_{1 \times p}$ and $ \boldsymbol{\mathit{\Omega}}_1 \vecone_p=  \boldsymbol{\mathit{\Omega}}_2 \vecone_p=   \boldsymbol{\mathit{\Omega}}_4 \vecone_p= \zero_{p \times 1}$.
\end{proof}

\begin{lemmas}
\label{lemma5-c4-1}
Under the generalized Markov-type structure, the information matrix for the direct effects in the absence of the period effects can be expressed as
\begin{align}
\begin{split}
\boldsymbol{\tilde{C}}_{d(s2)} &= \left( \identity_2 \otimes \boldsymbol{T}_d^{'} \right) \boldsymbol{\mathit{\Sigma}}^{-1/2} pr^{\perp} \left( \boldsymbol{\mathit{\Sigma}}^{-1/2} 
\begin{bmatrix}
\identity_2 \otimes \boldsymbol{U} & \identity_2 \otimes \boldsymbol{F}_d \hatmat_t
\end{bmatrix}  
\right) \boldsymbol{\mathit{\Sigma}}^{-1/2} \left( \identity_2 \otimes \boldsymbol{T}_d\right)\\
&= \boldsymbol{\tilde{C}}_{d(s2)(11)} - \boldsymbol{\tilde{C}}_{d(s2)(12)} \boldsymbol{\tilde{C}}^{-}_{d(s2)(22)} \boldsymbol{\tilde{C}}_{d(s2)(21)},
\end{split}
\label{p25c}
\end{align}
where $\boldsymbol{\tilde{C}}_{d(s2)(11)} = \left( \identity_2 \otimes \boldsymbol{T}_d^{'} \right) \boldsymbol{\mathit{\Sigma}}^{-1/2} pr^{\perp} \left( \boldsymbol{\mathit{\Sigma}}^{-1/2} 
\begin{bmatrix}
\identity_2 \otimes \boldsymbol{U} 
\end{bmatrix}  
\right) \boldsymbol{\mathit{\Sigma}}^{-1/2} \left( \identity_2 \otimes \boldsymbol{T}_d\right)$,\\
$\boldsymbol{\tilde{C}}_{d(s2)(12)} = \boldsymbol{\tilde{C}}^{'}_{d(s2)(21)} = \left( \identity_2 \otimes \boldsymbol{T}_d^{'} \right) \boldsymbol{\mathit{\Sigma}}^{-1/2} pr^{\perp} \left( \boldsymbol{\mathit{\Sigma}}^{-1/2} 
\begin{bmatrix}
\identity_2 \otimes \boldsymbol{U} 
\end{bmatrix}  
\right) \boldsymbol{\mathit{\Sigma}}^{-1/2} \left( \identity_2 \otimes \boldsymbol{F}_d \hatmat_t \right)$, and\\
$\boldsymbol{\tilde{C}}_{d(s2)(22)} = \left( \identity_2 \otimes \hatmat_t \boldsymbol{F}_d^{'} \right) \boldsymbol{\mathit{\Sigma}}^{-1/2} pr^{\perp} \left( \boldsymbol{\mathit{\Sigma}}^{-1/2} 
\begin{bmatrix}
\identity_2 \otimes \boldsymbol{U} 
\end{bmatrix}  
\right) \boldsymbol{\mathit{\Sigma}}^{-1/2} \left( \identity_2 \otimes \boldsymbol{F}_d \hatmat_t \right)$.
\end{lemmas}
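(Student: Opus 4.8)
The plan is to mirror the arguments already used for \autoref{lemma5-c4-i} and \autoref{lemma5-c4}, specializing them to the model from which the period effects have been deleted. First I would take model \eqref{model2} with $g=2$, discard the period parameters $\boldsymbol{\alpha}_1,\boldsymbol{\alpha}_2$, and group the remaining coefficients into the direct-effect vector $\boldsymbol{\eta}^{(1)} = \begin{bmatrix} \boldsymbol{\tau}_1^{'} & \boldsymbol{\tau}_2^{'} \end{bmatrix}^{'}$ and a nuisance vector collecting the intercepts, the subject effects and the carryover effects. Because the columns of $\boldsymbol{U} = \identity_n \otimes \vecone_p$ sum to $\vecone_{np}$, the intercept term $\identity_2 \otimes \vecone_{np}$ already lies in the column space of $\identity_2 \otimes \boldsymbol{U}$, so the non-carryover part of the nuisance design is exactly $\identity_2 \otimes \boldsymbol{U}$, while the carryover part is $\identity_2 \otimes \boldsymbol{F}_d$.

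Next I would premultiply the reduced model by $\boldsymbol{\mathit{\Sigma}}^{-1/2}$ and invoke the information-matrix formula of \citet{kunert1983optimal}, exactly as in the passage leading to \eqref{markov-t9c-1-i}--\eqref{markov-t9c-1}, to obtain the orthogonal-projection representation of the information matrix for $\boldsymbol{\eta}^{(1)}$ with the nuisance columns $\boldsymbol{\mathit{\Sigma}}^{-1/2}\begin{bmatrix} \identity_2 \otimes \boldsymbol{U} & \identity_2 \otimes \boldsymbol{F}_d \end{bmatrix}$ projected out. The carryover block is then written as $\identity_2 \otimes \boldsymbol{F}_d \hatmat_t$ in the first line of \eqref{p25c}; this is the one point needing care, since, unlike in \autoref{lemma5-c4}, the vector $\boldsymbol{F}_d \vecone_t = \vecone_n \otimes \begin{bmatrix} 0 & \vecone_{p-1}^{'} \end{bmatrix}^{'}$ is no longer contained in the subject column space $\identity_2 \otimes \boldsymbol{U}$ (the period design $\boldsymbol{P}$ that provided this containment is absent). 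I would therefore take the first line of \eqref{p25c} as the defining expression for $\boldsymbol{\tilde{C}}_{d(s2)}$ --- the quantity playing the role of the period-free information matrix and serving as the majorant in $\boldsymbol{C}_{d(s2)} \leq \boldsymbol{\tilde{C}}_{d(s2)}$ (\autoref{thm3-c4}) --- and prove the second line from it.

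The main computation is then a single application of the projection-splitting identity $pr^{\perp}\left( \begin{bmatrix} \boldsymbol{\mathit{\zeta}}_1 & \boldsymbol{\mathit{\zeta}}_2 \end{bmatrix} \right) = pr^{\perp}(\boldsymbol{\mathit{\zeta}}_1) - pr^{\perp}(\boldsymbol{\mathit{\zeta}}_1)\boldsymbol{\mathit{\zeta}}_2 \left( \boldsymbol{\mathit{\zeta}}_2^{'} pr^{\perp}(\boldsymbol{\mathit{\zeta}}_1) \boldsymbol{\mathit{\zeta}}_2 \right)^{-} \boldsymbol{\mathit{\zeta}}_2^{'} pr^{\perp}(\boldsymbol{\mathit{\zeta}}_1)$ of \citet{kunert1983optimal}, taken with $\boldsymbol{\mathit{\zeta}}_1 = \boldsymbol{\mathit{\Sigma}}^{-1/2}(\identity_2 \otimes \boldsymbol{U})$ and $\boldsymbol{\mathit{\zeta}}_2 = \boldsymbol{\mathit{\Sigma}}^{-1/2}(\identity_2 \otimes \boldsymbol{F}_d \hatmat_t)$. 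I would substitute this into the first line of \eqref{p25c}, sandwich it between $(\identity_2 \otimes \boldsymbol{T}_d^{'})\boldsymbol{\mathit{\Sigma}}^{-1/2}$ and $\boldsymbol{\mathit{\Sigma}}^{-1/2}(\identity_2 \otimes \boldsymbol{T}_d)$, and read off the three blocks: the leading term is exactly $\boldsymbol{\tilde{C}}_{d(s2)(11)}$, the factor $(\identity_2 \otimes \boldsymbol{T}_d^{'})\boldsymbol{\mathit{\Sigma}}^{-1/2} pr^{\perp}(\boldsymbol{\mathit{\zeta}}_1) \boldsymbol{\mathit{\zeta}}_2$ is $\boldsymbol{\tilde{C}}_{d(s2)(12)}$, and $\boldsymbol{\mathit{\zeta}}_2^{'} pr^{\perp}(\boldsymbol{\mathit{\zeta}}_1) \boldsymbol{\mathit{\zeta}}_2$ is $\boldsymbol{\tilde{C}}_{d(s2)(22)}$ (using $\hatmat_t^{'} = \hatmat_t$). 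The relation $\boldsymbol{\tilde{C}}_{d(s2)(12)} = \boldsymbol{\tilde{C}}_{d(s2)(21)}^{'}$ is immediate from the symmetry and idempotence of $pr^{\perp}(\boldsymbol{\mathit{\zeta}}_1)$, which yields the Schur-complement expression and completes the argument.

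The hard part will not be the algebra --- which is a verbatim specialization of the proof of \autoref{lemma5-c4} --- but the conceptual point flagged above: making precise in what sense the $\hatmat_t$-form deserves the name ``information matrix in the absence of period effects,'' and keeping it consistent with \autoref{thm3-c4}, where it is precisely the shared appearance of $\boldsymbol{F}_d \hatmat_t$ in both $\boldsymbol{C}_{d(s2)}$ and $\boldsymbol{\tilde{C}}_{d(s2)}$, together with the inclusion of $\identity_2 \otimes \boldsymbol{U}$ in the column space of $\boldsymbol{Z_1}$, that makes the Loewner ordering $\boldsymbol{C}_{d(s2)} \leq \boldsymbol{\tilde{C}}_{d(s2)}$ follow cleanly.
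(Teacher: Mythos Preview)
Your proposal is correct and follows essentially the same approach as the paper, which simply states that the proof proceeds by following the argument of \autoref{lemma5-c4}. You in fact go further than the paper in flagging the subtle point that, without the period block $\identity_2\otimes\boldsymbol{P}$, the column-space argument used to replace $\boldsymbol{F}_d$ by $\boldsymbol{F}_d\hatmat_t$ no longer applies, and in correctly resolving this by taking the first line of \eqref{p25c} as the defining expression for $\boldsymbol{\tilde{C}}_{d(s2)}$; this is precisely what is needed for the comparison in \autoref{thm3-c4}, and the Schur-complement decomposition then follows from the projection-splitting identity exactly as you describe.
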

\begin{proof}
Following the same approach as to obtain the information matrix for the direct effects in the presence of the period effects as discussed in \autoref{lemma5-c4}, we can prove \eqref{p25c}.
\end{proof}

\begin{lemmas}
\label{lemma7-c4-1}
The matrix $\boldsymbol{\mathit{\Sigma}}^{-1/2} pr^{\perp} \left( \boldsymbol{\mathit{\Sigma}}^{-1/2} 
\begin{bmatrix}
\identity_2 \otimes \boldsymbol{U} 
\end{bmatrix}  
\right) \boldsymbol{\mathit{\Sigma}}^{-1/2}$ can be expressed as
\begin{align}
\boldsymbol{\mathit{\Sigma}}^{-1/2} pr^{\perp} \left( \boldsymbol{\mathit{\Sigma}}^{-1/2} 
\begin{bmatrix}
\identity_2 \otimes \boldsymbol{U} 
\end{bmatrix}  
\right) \boldsymbol{\mathit{\Sigma}}^{-1/2} &= 
\begin{bmatrix}
\identity_n \otimes \boldsymbol{\mathit{\Omega}}_1 & -\identity_n \otimes \boldsymbol{\mathit{\Omega}}_2\\
-\identity_n \otimes \boldsymbol{\mathit{\Omega}}_2 & \identity_n \otimes \boldsymbol{\mathit{\Omega}}_4
\end{bmatrix},
\label{p8c-1}
\end{align}
where $\boldsymbol{\mathit{\Omega}}_1$, $\boldsymbol{\mathit{\Omega}}_2$, and $\boldsymbol{\mathit{\Omega}}_4$ are as given in \autoref{lemma3-c4-1}.
\end{lemmas}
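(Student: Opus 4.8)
The plan is to compute the matrix directly from the identity
$\boldsymbol{\mathit{\Sigma}}^{-1/2} pr^{\perp}\!\left( \boldsymbol{\mathit{\Sigma}}^{-1/2} \boldsymbol{W} \right) \boldsymbol{\mathit{\Sigma}}^{-1/2} = \boldsymbol{\mathit{\Sigma}}^{-1} - \boldsymbol{\mathit{\Sigma}}^{-1} \boldsymbol{W} \left( \boldsymbol{W}^{'} \boldsymbol{\mathit{\Sigma}}^{-1} \boldsymbol{W} \right)^{-} \boldsymbol{W}^{'} \boldsymbol{\mathit{\Sigma}}^{-1}$, taking $\boldsymbol{W} = \identity_2 \otimes \boldsymbol{U} = \identity_2 \otimes \left( \identity_n \otimes \vecone_p \right)$. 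This mirrors the proof of \autoref{lemma3-c4-1} verbatim, except that the subject matrix $\boldsymbol{U}$ replaces $\boldsymbol{X_1} = \begin{bmatrix} \boldsymbol{P} & \boldsymbol{U} \end{bmatrix}$, i.e.\ the period columns $\boldsymbol{P}$ are dropped. I would reuse the block form of $\boldsymbol{\mathit{\Sigma}}^{-1}$ already derived in \eqref{t18c-1}--\eqref{t19c-1}, in which every block is of the form $\identity_n \otimes (\cdot)$: namely $\boldsymbol{\mathit{\Sigma}}^{11} = \identity_n \otimes \boldsymbol{K}_1^{-1}$, $\boldsymbol{\mathit{\Sigma}}^{12} = \boldsymbol{\mathit{\Sigma}}^{21} = -\identity_n \otimes \bar{\rho} \boldsymbol{K}_2^{-1}$ and $\boldsymbol{\mathit{\Sigma}}^{22} = \identity_n \otimes \boldsymbol{K}_2^{-1}$, with $\boldsymbol{K}_1^{-1} = \boldsymbol{V}_1^{-1} + \bar{\rho}^2 \boldsymbol{K}_2^{-1}$ and $\boldsymbol{K}_2 = \sigma_{12} \boldsymbol{V}_R$.

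Because $\boldsymbol{U} = \identity_n \otimes \vecone_p$, each response block of $\boldsymbol{W}^{'} \boldsymbol{\mathit{\Sigma}}^{-1} \boldsymbol{W}$ collapses to a scalar multiple of $\identity_n$, the scalars being $\vecone_p^{'} \boldsymbol{K}_1^{-1} \vecone_p = \delta_1^{-1}$, $-\bar{\rho} \vecone_p^{'} \boldsymbol{K}_2^{-1} \vecone_p = -\bar{\rho} \delta_2^{-1}$ and $\vecone_p^{'} \boldsymbol{K}_2^{-1} \vecone_p = \delta_2^{-1}$. Hence $\boldsymbol{W}^{'} \boldsymbol{\mathit{\Sigma}}^{-1} \boldsymbol{W} = \boldsymbol{S} \otimes \identity_n$ for a $2 \times 2$ matrix $\boldsymbol{S}$, and I would show $\det \boldsymbol{S} = \delta_1^{-1} \delta_4$ using $\delta_4 = \delta_2^{-1} - \bar{\rho}^2 \delta_1 \delta_2^{-2}$; since $\delta_1, \delta_4 > 0$ were already established in the proof of \autoref{lemma3-c4-1}, $\boldsymbol{S}$ is nonsingular and the generalized inverse is the ordinary inverse, $\left( \boldsymbol{W}^{'} \boldsymbol{\mathit{\Sigma}}^{-1} \boldsymbol{W} \right)^{-} = \boldsymbol{S}^{-1} \otimes \identity_n$.

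Substituting back, every factor in the correction term carries the $\identity_n$ factor, so both $\boldsymbol{\mathit{\Sigma}}^{-1}$ and the correction are $2 \times 2$ block matrices whose blocks are $\identity_n \otimes (p \times p)$; the between-subject slot therefore remains $\identity_n$. This is precisely where the present statement differs from \autoref{lemma3-c4-1}: omitting the period columns $\boldsymbol{P}$ removes the between-subject centering that forced the factor $\hatmat_n$ there, leaving $\identity_n$ here, while the within-subject reduction is untouched. It then remains to identify the within-block $p \times p$ matrices. Collecting the double sum $\sum_{\ell, \ell'} \boldsymbol{S}^{-1}_{\ell \ell'} \boldsymbol{K}^{-1}_{\cdot} \matone_{p \times p} \boldsymbol{K}^{-1}_{\cdot}$ and simplifying with the identities \eqref{markov-t5c-1}--\eqref{markov-t6c-1} and $\boldsymbol{K}_2^{-1} = \sigma_{12}^{-1} \boldsymbol{V}_R^{-1}$, the $(2,2)$ block reduces to $\boldsymbol{K}_2^{-1} - \left( \vecone_p^{'} \boldsymbol{K}_2^{-1} \vecone_p \right)^{-1} \boldsymbol{K}_2^{-1} \matone_{p \times p} \boldsymbol{K}_2^{-1} = \sigma_{12}^{-1} \boldsymbol{V}_R^* = \boldsymbol{\mathit{\Omega}}_4$, and analogously the off-diagonal block gives $-\frac{\bar{\rho}}{\sigma_{12}} \boldsymbol{V}_R^* = -\boldsymbol{\mathit{\Omega}}_2$ and the $(1,1)$ block gives $\boldsymbol{V}_1^* + \frac{\bar{\rho}^2}{\sigma_{12}} \boldsymbol{V}_R^* = \boldsymbol{\mathit{\Omega}}_1$.

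The main obstacle is this final identification: verifying that the Schur-type reduction of $\boldsymbol{K}_1^{-1}$ (and of the mixed blocks) collapses exactly to the $\ast$-operations defining $\boldsymbol{V}_1^*$ and $\boldsymbol{V}_R^*$, which hinges on the cross-term cancellations encoded by $\delta_1, \delta_2, \delta_4$. Since these are the very same within-subject cancellations already carried out in \autoref{lemma3-c4-1} — the only genuinely new feature being the replacement of $\hatmat_n$ by $\identity_n$ — the bookkeeping can be inherited rather than repeated. A cleaner conceptual alternative, which I would mention to avoid the computation altogether, is to project in two stages: projecting out $\identity_2 \otimes \boldsymbol{U}$ acts within each subject block as an ordinary $\boldsymbol{\mathit{\Sigma}}$-metric projection of $\vecone_p$ and yields $\identity_n \otimes \boldsymbol{\mathit{\Omega}}_x$ directly, after which additionally projecting out $\identity_2 \otimes \boldsymbol{P}$ would recover \autoref{lemma3-c4-1} with $\hatmat_n$ in place of $\identity_n$.
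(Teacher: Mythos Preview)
Your proposal is correct and follows essentially the same route as the paper: compute $\boldsymbol{\mathit{\Sigma}}^{-1}$ in block form, form $(\identity_2 \otimes \boldsymbol{U}')\boldsymbol{\mathit{\Sigma}}^{-1}(\identity_2 \otimes \boldsymbol{U})$, invert it, build the correction term, subtract, and simplify using the $\delta_1,\delta_2,\delta_4$ identities already established in \autoref{lemma3-c4-1}. The only cosmetic difference is that you package $(\identity_2 \otimes \boldsymbol{U}')\boldsymbol{\mathit{\Sigma}}^{-1}(\identity_2 \otimes \boldsymbol{U})$ as $\boldsymbol{S}\otimes\identity_n$ and invert the $2\times 2$ matrix $\boldsymbol{S}$ directly, whereas the paper writes out the Schur-complement formula blockwise; the resulting inverse and the final identification of $\boldsymbol{\mathit{\Omega}}_1,\boldsymbol{\mathit{\Omega}}_2,\boldsymbol{\mathit{\Omega}}_4$ are identical.
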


\begin{proof}
From \eqref{Zd}, \eqref{t18c-1} and \eqref{t19c-1}, by calculation we get
\begin{align}
\left(\identity_2 \otimes \boldsymbol{U}^{'} \right) \boldsymbol{\mathit{\Sigma}}^{-1} \left( \identity_2 \otimes \boldsymbol{U} \right) &=
\begin{bmatrix}
\boldsymbol{A}^{(5)} & \boldsymbol{B}^{(5)}\\
\boldsymbol{B}^{(5)} & \boldsymbol{D}^{(5)}
\end{bmatrix},
\label{p13c-1}
\end{align}
where $\boldsymbol{A}^{(5)} = \delta_1^{-1} \identity_n$, $\boldsymbol{B}^{(5)}  = -\bar{\rho} \delta_2^{-1}  \identity_n$, and $\boldsymbol{D}^{(5)}  = \delta_2^{-1} \identity_n$. Here, $\delta_1 = 1/ \vecone_{p}^{'} \boldsymbol{K}_1^{-1} \vecone_{p}$, $\delta_2 = 1/ \vecone_{p}^{'} \boldsymbol{K}_2^{-1} \vecone_{p}$, $\boldsymbol{K}_1^{-1} = \boldsymbol{V}_1^{-1} + \bar{\rho}^2 \boldsymbol{K}_2^{-1}$ and $\boldsymbol{K}_2 = \sigma_{12} \boldsymbol{V}_R$, $\bar{\rho} = \rho \sqrt{\frac{\sigma_{22}}{\sigma_{11}}}$ and $\sigma_{12} = \sigma_{22} \left( 1 - \rho^2 \right)$. So
\begin{multline}
\left[ \left(\identity_2 \otimes \boldsymbol{U}^{'} \right) \boldsymbol{\mathit{\Sigma}}^{-1} \left( \identity_2 \otimes \boldsymbol{U} \right) \right]^{-} =\\
\begin{bmatrix}
\left(\boldsymbol{A}^{(5)} \right)^{-} + \left(\boldsymbol{A}^{(5)} \right)^{-} \boldsymbol{B}^{(5)} \left( \boldsymbol{Z}^{(5)} \right)^{-} \boldsymbol{B}^{(5)} \left(\boldsymbol{A}^{(5)} \right)^{-} & - \left(\boldsymbol{A}^{(5)} \right)^{-} \boldsymbol{B}^{(5)}\left( \boldsymbol{Z}^{(5)} \right)^{-}\\
- \left( \boldsymbol{Z}^{(5)} \right)^{-} \boldsymbol{B}^{(5)} \left(\boldsymbol{A}^{(5)} \right)^{-}   & \left( \boldsymbol{Z}^{(5)} \right)^{-}
\end{bmatrix},
\label{p16c-1}
\end{multline}
where $\boldsymbol{Z}^{(5)} = \boldsymbol{D}^{(5)} - \boldsymbol{B}^{(5)} \left(\boldsymbol{A}^{(5)} \right)^{-} \boldsymbol{B}^{(5)}$. Using the expressions of $\boldsymbol{A}^{(5)}$, $\boldsymbol{B}^{(5)}$ and $\boldsymbol{D}^{(5)}$, by solving we get
{\small{
\begin{align}
\begin{split}
\left(\boldsymbol{Z}^{(5)} \right)^{-} &= \delta_4^{-1} \identity_n,\\
\left(\boldsymbol{A}^{(5)} \right)^{-} \boldsymbol{B}^{(5)}\left( \boldsymbol{Z}^{(5)} \right)^{-} &=  \left( \boldsymbol{Z}^{(5)} \right)^{-} \boldsymbol{B}^{(5)} \left(\boldsymbol{A}^{(5)} \right)^{-} = - \bar{\rho} \delta_4^{-1} \delta_1 \delta_2^{-1} \identity_n, \text{ and}\\
\left(\boldsymbol{A}^{(5)} \right)^{-} + \left(\boldsymbol{A}^{(5)} \right)^{-} \boldsymbol{B}^{(5)} \left( \boldsymbol{Z}^{(5)} \right)^{-} \boldsymbol{B}^{(5)} \left(\boldsymbol{A}^{(5)} \right)^{-} &= \left( \delta_1 + \bar{\rho}^2 \delta_4^{-1}  \delta_1^2 \delta_2^{-2} \right) \identity_n,
\end{split}
\label{p17c-1}
\end{align}
}}
where $\delta_4 = \delta_2^{-1} - \bar{\rho}^2 \delta_1 \delta_2^{-2} $. Using the expression $\delta_4$, we get that $
\delta_1 + \bar{\rho}^2 \delta_4^{-1}  \delta_1^2 \delta_2^{-2} =$\\$
\delta_4^{-1} \delta_1 \delta_2^{-1}
$. So from \eqref{p16c-1} and \eqref{p17c-1}, we get
\begin{align}
\left[ \left(\identity_2 \otimes \boldsymbol{U}^{'} \right) \boldsymbol{\mathit{\Sigma}}^{-1} \left( \identity_2 \otimes \boldsymbol{U} \right) \right]^{-} &= \delta_4^{-1}
\begin{bmatrix}
\delta_1 \delta_2^{-1} \identity_n & \bar{\rho} \delta_1 \delta_2^{-1} \identity_n\\
\bar{\rho} \delta_1 \delta_2^{-1} \identity_n & \identity_n
\end{bmatrix}.
\label{p19c-1}
\end{align}
From \eqref{Zd}, \eqref{t18c-1} and \eqref{t19c-1}, by calculation we get
\begin{align}
\left(\identity_2 \otimes \boldsymbol{U}^{'} \right) \boldsymbol{\mathit{\Sigma}}^{-1} &=
\begin{bmatrix}
\identity_n \otimes \vecone_{p}^{'} \boldsymbol{K}_1^{-1} & -\identity_n \otimes \bar{\rho} \vecone_{p}^{'} \boldsymbol{K}_2^{-1}\\
-\identity_n \otimes \bar{\rho} \vecone_{p}^{'} \boldsymbol{K}_2^{-1} & \identity_n \otimes \vecone_p^{'} \boldsymbol{K}_2^{-1} 
\end{bmatrix}.
\label{p20c-1}
\end{align}
From \eqref{p19c-1} and \eqref{p20c-1}, by calculation we get
\begin{align}
\boldsymbol{\mathit{\Sigma}}^{-1} \left(\identity_2 \otimes \boldsymbol{U} \right) \left[ \left(\identity_2 \otimes \boldsymbol{U}^{'} \right) \boldsymbol{\mathit{\Sigma}}^{-1} \left( \identity_2 \otimes \boldsymbol{U} \right) \right]^{-} \left(\identity_2 \otimes \boldsymbol{U}^{'} \right) \boldsymbol{\mathit{\Sigma}}^{-1} &=
\begin{bmatrix}
\identity_n \otimes \boldsymbol{A}^{(6)} & -\identity_n \otimes \boldsymbol{B}^{(6)}\\
-\identity_n \otimes \boldsymbol{B}^{(6)} & \identity_n \otimes \boldsymbol{D}^{(6)}
\end{bmatrix},
\label{p21c-1}
\end{align}
where
\begin{multline*}
\boldsymbol{A}^{(6)} = \delta_4^{-1} \delta_1 \delta_2^{-1}  \boldsymbol{K}_1^{-1} \matone_{p \times p}  \boldsymbol{K}_1^{-1} - \bar{\rho}^2 \delta_4^{-1}  \delta_1 \delta_2^{-1} \boldsymbol{K}_1^{-1} \matone_{p \times p}  \boldsymbol{K}_2^{-1}  + \bar{\rho}^2 \delta_4^{-1}   \boldsymbol{K}_2^{-1} \matone_{p \times p}  \boldsymbol{K}_2^{-1} \\ - \bar{\rho}^2 \delta_4^{-1}   \delta_1 \delta_2^{-1} \boldsymbol{K}_2^{-1} \matone_{p \times p}  \boldsymbol{K}_1^{-1}, \text{ and}
\end{multline*}
\begin{align*}
\boldsymbol{B}^{(6)} = \bar{\rho} \boldsymbol{D}^{(6)}  = - \bar{\rho}^3 \delta_4^{-1}  \delta_1 \delta_2^{-1} \boldsymbol{K}_2^{-1} \matone_{p \times p} \boldsymbol{K}_2^{-1} + \bar{\rho} \delta_4^{-1}  \boldsymbol{K}_2^{-1} \matone_{p \times p} \boldsymbol{K}_2^{-1}. 
\end{align*}
Thus from \eqref{t18c-1}, \eqref{t19c-1}, \eqref{markov-t5c-1}, \eqref{markov-t6c-1} and \eqref{p21c-1}, by calculation we can prove \eqref{p8c-1}.
\end{proof}

\begin{lemmas}
\label{thm3-c4}
For any design $d \in \Omega_{t,n,p}$, $\boldsymbol{{C}}_{d(s2)} \leq \boldsymbol{\tilde{C}}_{d(s2)}$. If $d$ is a design uniform on periods, then $\boldsymbol{{C}}_{d(s2)}=\boldsymbol{\tilde{C}}_{d(s2)}$.
\end{lemmas}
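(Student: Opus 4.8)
The plan is to read both matrices off their projector representations, \eqref{z42} for $\boldsymbol{C}_{d(s2)}$ and \eqref{p25c} for $\boldsymbol{\tilde{C}}_{d(s2)}$, and to observe that the two differ only in whether the period columns are eliminated. Writing $\boldsymbol{\mathcal{T}} = \boldsymbol{\mathit{\Sigma}}^{-1/2}\left( \identity_2 \otimes \boldsymbol{T}_d \right)$, the matrix $\boldsymbol{C}_{d(s2)}$ equals $\boldsymbol{\mathcal{T}}^{'} pr^{\perp}\left( \boldsymbol{\mathit{\Sigma}}^{-1/2} \begin{bmatrix} \boldsymbol{Z_1} & \identity_2 \otimes \boldsymbol{F}_d \hatmat_t \end{bmatrix} \right) \boldsymbol{\mathcal{T}}$, while $\boldsymbol{\tilde{C}}_{d(s2)} = \boldsymbol{\mathcal{T}}^{'} pr^{\perp}\left( \boldsymbol{\mathit{\Sigma}}^{-1/2} \begin{bmatrix} \identity_2 \otimes \boldsymbol{U} & \identity_2 \otimes \boldsymbol{F}_d \hatmat_t \end{bmatrix} \right) \boldsymbol{\mathcal{T}}$. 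Since $\boldsymbol{Z_1} = \identity_2 \otimes \begin{bmatrix} \boldsymbol{P} & \boldsymbol{U} \end{bmatrix}$, the column span appearing in $\boldsymbol{C}_{d(s2)}$ contains the one appearing in $\boldsymbol{\tilde{C}}_{d(s2)}$, the surplus generators being precisely the period columns $\identity_2 \otimes \boldsymbol{P}$.

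For the inequality I would use monotonicity of the orthogonal projector: whenever $\sspan\left( \boldsymbol{M}_2 \right) \subseteq \sspan\left( \boldsymbol{M}_1 \right)$ one has $pr^{\perp}\left( \boldsymbol{M}_1 \right) \leq pr^{\perp}\left( \boldsymbol{M}_2 \right)$ in the Loewner order, because the difference is the orthogonal projector onto $\sspan\left( \boldsymbol{M}_1 \right) \cap \sspan\left( \boldsymbol{M}_2 \right)^{\perp}$ and hence nonnegative definite. Flanking this inequality by $\boldsymbol{\mathcal{T}}^{'}$ and $\boldsymbol{\mathcal{T}}$ preserves the ordering and gives $\boldsymbol{C}_{d(s2)} \leq \boldsymbol{\tilde{C}}_{d(s2)}$ at once, for every $d \in \Omega_{t,n,p}$.

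For the equality under uniformity on periods I would pass to the ``subjects-first'' Schur forms: represent $\boldsymbol{C}_{d(s2)}$ through \eqref{z43}, whose blocks are sandwiches of $\boldsymbol{A}^*$ from \autoref{lemma3-c4-1}, and $\boldsymbol{\tilde{C}}_{d(s2)}$ through \autoref{lemma5-c4-1}, whose blocks are the same sandwiches of the matrix of \autoref{lemma7-c4-1}, say $\boldsymbol{B}^*$. As $\boldsymbol{A}^*$ and $\boldsymbol{B}^*$ differ only by $\identity_n$ versus $\hatmat_n$ in their first Kronecker factor, $\boldsymbol{B}^* - \boldsymbol{A}^*$ has blocks $\tfrac{1}{n}\matone_{n \times n} \otimes \boldsymbol{\mathit{\Omega}}_x$, $x \in \{1,2,4\}$. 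Letting $\boldsymbol{T}_{dj}$ be the $p \times t$ treatment incidence matrix of subject $j$, uniformity on periods forces $\sum_{j} \boldsymbol{T}_{dj} = \tfrac{n}{t}\vecone_p \vecone_t^{'}$, whence $\sum_j \boldsymbol{F}_{dj} = \boldsymbol{\psi}\sum_j \boldsymbol{T}_{dj} = \tfrac{n}{t}\left( \boldsymbol{\psi}\vecone_p \right)\vecone_t^{'}$. Sandwiching $\boldsymbol{B}^* - \boldsymbol{A}^*$ by the direct and carryover matrices then reduces each block to a rank-one expression in these summed incidences: the pure direct block collapses to a multiple of $\vecone_p^{'}\boldsymbol{\mathit{\Omega}}_x \vecone_p = 0$ by \autoref{remark1-c4}, while every block that carries a carryover factor $\boldsymbol{F}_d \hatmat_t$ inherits the row factor $\vecone_t^{'}$ and is annihilated by $\hatmat_t \vecone_t = \zero$. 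Hence $\boldsymbol{C}_{d(s2)(11)(1)} = \boldsymbol{\tilde{C}}_{d(s2)(11)}$, and likewise for the $(12)$ and $(22)$ blocks, so the two Schur complements coincide and $\boldsymbol{C}_{d(s2)} = \boldsymbol{\tilde{C}}_{d(s2)}$.

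The inequality is routine once projector monotonicity is invoked; the delicate part is the equality, and within it the carryover blocks. The difficulty is that after reduction the carryover block leaves the row vector $\vecone_p^{'}\boldsymbol{\psi}^{'}\boldsymbol{\mathit{\Omega}}_x$, which is not zero, so its vanishing cannot be read off from \autoref{remark1-c4} alone; it must be secured jointly from the uniform-on-periods identity for $\sum_j \boldsymbol{T}_{dj}$ and from the centering $\hatmat_t$ acting on the treatment index. Checking that these two mechanisms together eliminate all off-diagonal and carryover contributions is the step requiring the most care.
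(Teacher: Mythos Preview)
Your proposal is correct and follows essentially the same route as the paper: projector monotonicity for the inequality, and for the equality a block-by-block comparison of the Schur forms built from $\boldsymbol{A}^*$ (with $\hatmat_n$) and from the matrix of \autoref{lemma7-c4-1} (with $\identity_n$), using uniformity on periods to collapse $\sum_j \boldsymbol{T}_{dj}$ to $\tfrac{n}{t}\vecone_p\vecone_t^{'}$ and then invoking \autoref{remark1-c4} together with $\hatmat_t\vecone_t=\zero$. The paper is terser---it simply cites ``the proof of Lemma~4 in \citet{Martin1998Variance-balancedObservations}'' for the last step---whereas you spell out explicitly which blocks are killed by $\vecone_p^{'}\boldsymbol{\mathit{\Omega}}_x=\zero$ and which require the $\hatmat_t$ centering; your identification of the carryover $(22)$ block as the one that genuinely needs $\hatmat_t$ (since $\vecone_p^{'}\boldsymbol{\psi}^{'}\boldsymbol{\mathit{\Omega}}_x\neq\zero$ in general) is exactly right.
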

\begin{proof}
Since $pr^{\perp} \left( \boldsymbol{\mathit{\Sigma}}^{-1/2} 
\begin{bmatrix}
\boldsymbol{Z_1} & \identity_2 \otimes \boldsymbol{F}_d \hatmat_t
\end{bmatrix}  
\right) = pr^{\perp} \left( \boldsymbol{\mathit{\Sigma}}^{-1/2} 
\begin{bmatrix}
\identity_2 \otimes  \boldsymbol{P} & \identity_2 \otimes \boldsymbol{U} & \identity_2 \otimes \boldsymbol{F}_d \hatmat_t
\end{bmatrix}  
\right) $\\$ \leq pr^{\perp} \left( \boldsymbol{\mathit{\Sigma}}^{-1/2} 
\begin{bmatrix}
 \identity_2 \otimes \boldsymbol{U} & \identity_2 \otimes \boldsymbol{F}_d \hatmat_t
\end{bmatrix}  
\right)$, using the expression of the matrices $\boldsymbol{C}_{d(s2)}$ and $\boldsymbol{\tilde{C}}_{d(s2)}$ as given in \autoref{lemma5-c4} and \autoref{lemma5-c4-1}, respectively, we get that $\boldsymbol{C}_{d(s2)} \leq \boldsymbol{\tilde{C}}_{d(s2)}$.\par

\noindent Since $\boldsymbol{F}_d = \left( \identity_n \otimes \boldsymbol{\psi} \right) \boldsymbol{T}_d$, where $\boldsymbol{\psi} =  
\begin{bmatrix}
\zero^{'}_{p-1 \times 1} & 0\\
\identity_{p-1} & \zero_{p-1 \times 1}
\end{bmatrix}$, using the expression of the matrices $\boldsymbol{\tilde{C}}_{d(s2)(11)}$, $\boldsymbol{\tilde{C}}_{d(s2)(12)}$, $\boldsymbol{\tilde{C}}_{d(s2)(21)}$ and $\boldsymbol{\tilde{C}}_{d(s2)(22)}$ as given in \autoref{lemma5-c4-1}, and the expression of $\boldsymbol{\mathit{\Sigma}}^{-1/2} pr^{\perp} \left( \boldsymbol{\mathit{\Sigma}}^{-1/2} 
\begin{bmatrix}
\identity_2 \otimes \boldsymbol{U} 
\end{bmatrix}  
\right) \boldsymbol{\mathit{\Sigma}}^{-1/2}$ as given in \autoref{lemma7-c4-1}, we get
\begin{align}
\begin{split}
\boldsymbol{\tilde{C}}_{d(s2)(11)} &=
\begin{bmatrix}
\boldsymbol{T}_d^{'} \left( \identity_n \otimes \boldsymbol{\mathit{\Omega}}_1 \right) \boldsymbol{T}_d & -\boldsymbol{T}_d^{'} \left( \identity_n \otimes \boldsymbol{\mathit{\Omega}}_2 \right) \boldsymbol{T}_d\\
-\boldsymbol{T}_d^{'} \left( \identity_n \otimes \boldsymbol{\mathit{\Omega}}_2 \right) \boldsymbol{T}_d & \boldsymbol{T}_d^{'} \left( \identity_n \otimes \boldsymbol{\mathit{\Omega}}_4 \right) \boldsymbol{T}_d
\end{bmatrix},\\
\boldsymbol{\tilde{C}}_{d(s2)(12)} &= \boldsymbol{\tilde{C}}^{'}_{d(s2)(21)} =
\begin{bmatrix}
\boldsymbol{T}_d^{'} \left( \identity_n \otimes \boldsymbol{\mathit{\Omega}}_1 \boldsymbol{\psi} \right) \boldsymbol{T}_d \hatmat_t & -\boldsymbol{T}_d^{'} \left( \identity_n \otimes \boldsymbol{\mathit{\Omega}}_2 \boldsymbol{\psi} \right) \boldsymbol{T}_d \hatmat_t\\
-\boldsymbol{T}_d^{'} \left( \identity_n \otimes \boldsymbol{\mathit{\Omega}}_2 \boldsymbol{\psi} \right) \boldsymbol{T}_d \hatmat_t & \boldsymbol{T}_d^{'} \left( \identity_n \otimes \boldsymbol{\mathit{\Omega}}_4 \boldsymbol{\psi} \right) \boldsymbol{T}_d \hatmat_t
\end{bmatrix}, \text{ and}\\
\boldsymbol{\tilde{C}}_{d(s2)(22)} &=
\begin{bmatrix}
\hatmat_t \boldsymbol{T}_d^{'} \left( \identity_n \otimes \boldsymbol{\psi}^{'} \boldsymbol{\mathit{\Omega}}_1 \boldsymbol{\psi} \right) \boldsymbol{T}_d \hatmat_t & -\hatmat_t \boldsymbol{T}_d^{'} \left( \identity_n \otimes \boldsymbol{\psi}^{'} \boldsymbol{\mathit{\Omega}}_2 \boldsymbol{\psi} \right) \boldsymbol{T}_d \hatmat_t\\
-\hatmat_t \boldsymbol{T}_d^{'} \left( \identity_n \otimes \boldsymbol{\psi}^{'} \boldsymbol{\mathit{\Omega}}_2 \boldsymbol{\psi} \right) \boldsymbol{T}_d \hatmat_t & \hatmat_t \boldsymbol{T}_d^{'} \left( \identity_n \otimes \boldsymbol{\psi}^{'} \boldsymbol{\mathit{\Omega}}_4 \boldsymbol{\psi} \right) \boldsymbol{T}_d \hatmat_t
\end{bmatrix},
\end{split}
\label{p34c}
\end{align}
where $\boldsymbol{\mathit{\Omega}}_1$, $\boldsymbol{\mathit{\Omega}}_2$, and $\boldsymbol{\mathit{\Omega}}_4$ are the matrices as given in \autoref{lemma3-c4-1}.\par

\noindent Since $\boldsymbol{F}_d = \left(\identity_n \otimes \boldsymbol{\psi} \right) \boldsymbol{T}_d$, from the expression of the matrices $\boldsymbol{C}_{{d(s2)}(11)(1)}$, $\boldsymbol{C}_{{d(s2)}(12)(1)}$, \\$\boldsymbol{C}_{{d(s2)}(21)(1)}$ and $\boldsymbol{C}_{{d(s2)}(22)(1)}$ as given in \autoref{lemma5-c4}, and using the expression of $\boldsymbol{A}^*$ as given in \autoref{lemma3-c4-1}, we get
\begin{align*}
\begin{split}
\boldsymbol{C}_{{d(s2)}(11)(1)} &=
\begin{bmatrix}
\boldsymbol{T}_d^{'} \left( \hatmat_n \otimes \boldsymbol{\mathit{\Omega}}_1 \right) \boldsymbol{T}_d & -\boldsymbol{T}_d^{'} \left( \hatmat_n \otimes \boldsymbol{\mathit{\Omega}}_2 \right) \boldsymbol{T}_d\\
-\boldsymbol{T}_d^{'} \left( \hatmat_n \otimes \boldsymbol{\mathit{\Omega}}_2 \right) \boldsymbol{T}_d & \boldsymbol{T}_d^{'} \left( \hatmat_n \otimes \boldsymbol{\mathit{\Omega}}_4 \right) \boldsymbol{T}_d
\end{bmatrix},\\
\boldsymbol{C}_{{d(s2)}(12)(1)} &= \boldsymbol{C}^{'}_{{d(s2)}(21)(1)} =
\begin{bmatrix}
\boldsymbol{T}_d^{'} \left( \hatmat_n \otimes \boldsymbol{\mathit{\Omega}}_1 \boldsymbol{\psi} \right) \boldsymbol{T}_d \hatmat_t & -\boldsymbol{T}_d^{'} \left( \hatmat_n \otimes \boldsymbol{\mathit{\Omega}}_2 \boldsymbol{\psi} \right) \boldsymbol{T}_d \hatmat_t\\
-\boldsymbol{T}_d^{'} \left( \hatmat_n \otimes \boldsymbol{\mathit{\Omega}}_2 \boldsymbol{\psi} \right) \boldsymbol{T}_d \hatmat_t & \boldsymbol{T}_d^{'} \left( \hatmat_n \otimes \boldsymbol{\mathit{\Omega}}_4 \boldsymbol{\psi} \right) \boldsymbol{T}_d \hatmat_t
\end{bmatrix}, \text{ and}\\
\boldsymbol{C}_{{d(s2)}(22)(1)} &=
\begin{bmatrix}
\hatmat_t \boldsymbol{T}_d^{'} \left( \hatmat_n \otimes \boldsymbol{\psi}^{'} \boldsymbol{\mathit{\Omega}}_1 \boldsymbol{\psi} \right) \boldsymbol{T}_d \hatmat_t & -\hatmat_t \boldsymbol{T}_d^{'} \left( \hatmat_n \otimes \boldsymbol{\psi}^{'} \boldsymbol{\mathit{\Omega}}_2 \boldsymbol{\psi} \right) \boldsymbol{T}_d \hatmat_t\\
-\hatmat_t \boldsymbol{T}_d^{'} \left( \hatmat_n \otimes \boldsymbol{\psi}^{'} \boldsymbol{\mathit{\Omega}}_2 \boldsymbol{\psi} \right) \boldsymbol{T}_d \hatmat_t & \hatmat_t \boldsymbol{T}_d^{'} \left( \hatmat_n \otimes \boldsymbol{\psi}^{'} \boldsymbol{\mathit{\Omega}}_4 \boldsymbol{\psi} \right) \boldsymbol{T}_d \hatmat_t
\end{bmatrix}.
\end{split}
\end{align*}
Thus using \autoref{remark1-c4} and following the proof of Lemma 4 in \citet{Martin1998Variance-balancedObservations}, since $\hatmat_t \vecone_t = \zero_{t \times 1}$,  we get that if $d$ is uniform on periods,
\begin{align}
\begin{split}
\boldsymbol{C}_{{d(s2)}(11)(1)} &=
\begin{bmatrix}
\boldsymbol{T}_d^{'} \left( \identity_n \otimes \boldsymbol{\mathit{\Omega}}_1 \right) \boldsymbol{T}_d & -\boldsymbol{T}_d^{'} \left( \identity_n \otimes \boldsymbol{\mathit{\Omega}}_2 \right) \boldsymbol{T}_d\\
-\boldsymbol{T}_d^{'} \left( \identity_n \otimes \boldsymbol{\mathit{\Omega}}_2 \right) \boldsymbol{T}_d & \boldsymbol{T}_d^{'} \left( \identity_n \otimes \boldsymbol{\mathit{\Omega}}_4 \right) \boldsymbol{T}_d
\end{bmatrix},\\
\boldsymbol{C}_{{d(s2)}(12)(1)} &= \boldsymbol{C}^{'}_{{d(s2)}(21)(1)} =
\begin{bmatrix}
\boldsymbol{T}_d^{'} \left( \identity_n \otimes \boldsymbol{\mathit{\Omega}}_1 \boldsymbol{\psi} \right) \boldsymbol{T}_d \hatmat_t & -\boldsymbol{T}_d^{'} \left( \identity_n \otimes \boldsymbol{\mathit{\Omega}}_2 \boldsymbol{\psi} \right) \boldsymbol{T}_d \hatmat_t\\
-\boldsymbol{T}_d^{'} \left( \identity_n \otimes \boldsymbol{\mathit{\Omega}}_2 \boldsymbol{\psi} \right) \boldsymbol{T}_d \hatmat_t & \boldsymbol{T}_d^{'} \left( \identity_n \otimes \boldsymbol{\mathit{\Omega}}_4 \boldsymbol{\psi} \right) \boldsymbol{T}_d \hatmat_t
\end{bmatrix}, \text{ and}\\
\boldsymbol{C}_{{d(s2)}(22)(1)} &=
\begin{bmatrix}
\hatmat_t \boldsymbol{T}_d^{'} \left( \identity_n \otimes \boldsymbol{\psi}^{'} \boldsymbol{\mathit{\Omega}}_1 \boldsymbol{\psi} \right) \boldsymbol{T}_d \hatmat_t & -\hatmat_t \boldsymbol{T}_d^{'} \left( \identity_n \otimes \boldsymbol{\psi}^{'} \boldsymbol{\mathit{\Omega}}_2 \boldsymbol{\psi} \right) \boldsymbol{T}_d \hatmat_t\\
-\hatmat_t \boldsymbol{T}_d^{'} \left( \identity_n \otimes \boldsymbol{\psi}^{'} \boldsymbol{\mathit{\Omega}}_2 \boldsymbol{\psi} \right) \boldsymbol{T}_d \hatmat_t & \hatmat_t \boldsymbol{T}_d^{'} \left( \identity_n \otimes \boldsymbol{\psi}^{'} \boldsymbol{\mathit{\Omega}}_4 \boldsymbol{\psi} \right) \boldsymbol{T}_d \hatmat_t
\end{bmatrix}.
\end{split}
\label{p32c}
\end{align}
So, if $d \in \Omega_{t,n,p}$ is a design uniform on periods, then from \eqref{p34c} and \eqref{p32c}, we get that for $f_1,f_2=1,2$, $
\boldsymbol{{C}}_{d(s2)(f_1f_2)(1)} = \boldsymbol{\tilde{C}}_{d(s2)(f_1f_2)}
$. Thus using the expression of the matrices $\boldsymbol{{C}}_{d(s2)}$ and $\boldsymbol{\tilde{C}}_{d(s2)}$ as given in \autoref{lemma5-c4} and \autoref{lemma5-c4-1}, respectively, we get that $\boldsymbol{{C}}_{d(s2)}=\boldsymbol{\tilde{C}}_{d(s2)}$.
\end{proof}

\begin{lemmas}
\label{lemma-markov-1}
Let $\boldsymbol{V}_1^* = \boldsymbol{V}_1^{-1} - \left( \vecone_{p}^{'} \boldsymbol{V}_1^{-1} \vecone_p \right)^{-1} \boldsymbol{V}_1^{-1} \matone_{p \times p} \boldsymbol{V}_1^{-1}$ and $\boldsymbol{V}_R^* = \boldsymbol{V}_R^{-1} - \left( \vecone_{p}^{'} \boldsymbol{V}_R^{-1} \vecone_p \right)^{-1} \times$\\$ \boldsymbol{V}_R^{-1} \matone_{p \times p} \boldsymbol{V}_R^{-1}$. Then $tr \left( \hatmat_{p} \boldsymbol{\psi}^{'} \boldsymbol{V}_1^* \boldsymbol{\psi} \right) > 0$ and $tr \left( \hatmat_{p} \boldsymbol{\psi}^{'} \boldsymbol{V}_R^* \boldsymbol{\psi} \right) > 0$.
\end{lemmas}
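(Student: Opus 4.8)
The plan is to prove both inequalities simultaneously by establishing the following fact for an arbitrary $p \times p$ positive definite symmetric matrix $\boldsymbol{V}$: writing $\boldsymbol{V}^* = \boldsymbol{V}^{-1} - \left( \vecone_{p}^{'} \boldsymbol{V}^{-1} \vecone_p \right)^{-1} \boldsymbol{V}^{-1} \matone_{p \times p} \boldsymbol{V}^{-1}$, one has $tr\left( \hatmat_p \boldsymbol{\psi}^{'} \boldsymbol{V}^* \boldsymbol{\psi} \right) > 0$. Specializing to $\boldsymbol{V} = \boldsymbol{V}_1$ and $\boldsymbol{V} = \boldsymbol{V}_R$ (both positive definite and symmetric) then yields the lemma. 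First I would record the structural properties of $\boldsymbol{V}^*$. A direct computation, identical to the one used in the proof of \autoref{prop-lemma3-c4-1}, gives the factorization $\boldsymbol{V}^* = \boldsymbol{V}^{-1/2} pr^{\perp}\left( \boldsymbol{V}^{-1/2} \vecone_p \right) \boldsymbol{V}^{-1/2}$, so $\boldsymbol{V}^*$ is symmetric positive semidefinite. Moreover $\boldsymbol{V}^* \boldsymbol{x} = \zero$ forces $\left\| pr^{\perp}\left( \boldsymbol{V}^{-1/2}\vecone_p \right) \boldsymbol{V}^{-1/2} \boldsymbol{x} \right\|^2 = 0$, i.e. $\boldsymbol{V}^{-1/2}\boldsymbol{x} \in \mathrm{span}\left( \boldsymbol{V}^{-1/2}\vecone_p \right)$, so that the null space of $\boldsymbol{V}^*$ is exactly $\mathrm{span}\left( \vecone_p \right)$. (This is consistent with \autoref{remark1-c4}.)

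Next I would exploit that $\hatmat_p = \identity_p - \tfrac{1}{p}\vecone_p \vecone_p^{'}$ is idempotent. Writing $\boldsymbol{M} = \boldsymbol{\psi}^{'} \boldsymbol{V}^* \boldsymbol{\psi}$, which is positive semidefinite, and using $\hatmat_p = \hatmat_p^2$ together with the cyclic property of the trace,
\begin{align*}
tr\left( \hatmat_p \boldsymbol{M} \right) = tr\left( \hatmat_p \boldsymbol{M} \hatmat_p \right) = tr\left( \boldsymbol{B}^{'} \boldsymbol{B} \right) = \left\| \boldsymbol{B} \right\|_F^2 \geq 0, \quad \text{where } \boldsymbol{B} = \left( \boldsymbol{V}^* \right)^{1/2} \boldsymbol{\psi} \hatmat_p
\end{align*}
and $\left( \boldsymbol{V}^* \right)^{1/2}$ is the positive semidefinite square root. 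This reduces the entire lemma to showing $\boldsymbol{B} \neq \zero$, equivalently $\boldsymbol{V}^* \boldsymbol{\psi} \hatmat_p \neq \zero$, since $\left( \boldsymbol{V}^* \right)^{1/2}$ and $\boldsymbol{V}^*$ have the same null space.

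Finally, by the null-space description above, $\boldsymbol{V}^* \boldsymbol{\psi} \hatmat_p = \zero$ would force every column of $\boldsymbol{\psi}\hatmat_p$ to lie in $\mathrm{span}\left( \vecone_p \right)$, so it suffices to exhibit a single column that does not. Since the shift $\boldsymbol{\psi}$ sends $\boldsymbol{e}_j \mapsto \boldsymbol{e}_{j+1}$ for $j \leq p-1$ and $\boldsymbol{e}_p \mapsto \zero$, one has $\boldsymbol{\psi}\vecone_p = \vecone_p - \boldsymbol{e}_1$, and the first column of $\boldsymbol{\psi}\hatmat_p$ equals $\boldsymbol{\psi}\left( \boldsymbol{e}_1 - \tfrac{1}{p}\vecone_p \right) = \boldsymbol{e}_2 - \tfrac{1}{p}\left( \vecone_p - \boldsymbol{e}_1 \right)$. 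This vector has first coordinate $0$ and second coordinate $\tfrac{p-1}{p} \neq 0$, so it is not a scalar multiple of $\vecone_p$; hence $\boldsymbol{B} \neq \zero$ and $tr\left( \hatmat_p \boldsymbol{\psi}^{'} \boldsymbol{V}^* \boldsymbol{\psi} \right) > 0$. The only genuinely delicate point, and the step I would treat most carefully, is this strict positivity: non-negativity is automatic from the semidefinite factorization, whereas ruling out equality hinges on combining the null-space identity $\ker \boldsymbol{V}^* = \mathrm{span}\left( \vecone_p \right)$ with the explicit verification that $\boldsymbol{\psi}\hatmat_p$ has a column outside the one-dimensional space $\mathrm{span}\left( \vecone_p \right)$ (which holds for all $p \geq 3$ in our setting). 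Applying the conclusion to $\boldsymbol{V}_1$ and $\boldsymbol{V}_R$ completes the proof.
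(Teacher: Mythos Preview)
Your proof is correct and follows essentially the same skeleton as the paper's: both factor $\boldsymbol{V}^* = \boldsymbol{L}^{'}\boldsymbol{L}$ via the projection $pr^{\perp}\left(\boldsymbol{V}^{-1/2}\vecone_p\right)$, use idempotence of $\hatmat_p$ to write the trace as $\left\|\boldsymbol{L}\boldsymbol{\psi}\hatmat_p\right\|_F^2 \geq 0$, and then argue that equality would force $\boldsymbol{V}^*\boldsymbol{\psi}\hatmat_p = \zero$. The only real difference is in how strict positivity is finished off: the paper assumes $\boldsymbol{V}^*\boldsymbol{\psi}\hatmat_p = \zero$, combines this with the zero-row-sum property to deduce entry-by-entry that $\boldsymbol{V}^* = \zero$, and reaches a contradiction; you instead identify $\ker\boldsymbol{V}^* = \mathrm{span}\left(\vecone_p\right)$ directly and exhibit an explicit column of $\boldsymbol{\psi}\hatmat_p$ (namely $\boldsymbol{e}_2 - \tfrac{1}{p}\left(\vecone_p - \boldsymbol{e}_1\right)$) that lies outside this one-dimensional kernel. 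Your endgame is a bit more transparent and avoids the coordinate calculation leading to $\boldsymbol{V}^* = \zero$, but the two arguments are equivalent in substance.
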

\begin{proof}
 Here, $\boldsymbol{V}_1$ is a symmetric and positive definite matrix. Since $pr^{\perp} \left( . \right)$ is a symmetric and idempotent matrix, $\boldsymbol{V}_1^* = \boldsymbol{V}_1^{-1} - \left( \vecone_{p}^{'} \boldsymbol{V}_1^{-1} \vecone_p \right)^{-1} \boldsymbol{V}_1^{-1} \matone_{p \times p} \boldsymbol{V}_1^{-1}$ can be written as
 \begin{align}
\boldsymbol{V}_1^* &= \boldsymbol{L}_{1}^{'} \boldsymbol{L}_{1},
\label{markov-t11c-1}
 \end{align}
 where $\boldsymbol{L}_{1} =   pr^{\perp} \left( \boldsymbol{V}_1^{-1/2} \matone_{p \times p} \right) \boldsymbol{V}_1^{-1/2}$. Since $\hatmat_p$ is a symmetric and idempotent matrix, \\$tr \left( \hatmat_p \boldsymbol{\psi}^{'} \boldsymbol{V}_1^* \boldsymbol{\psi}  \right) = tr \left(  \boldsymbol{L}_{2}^{'} \boldsymbol{L}_{2} \right) \geq 0$, where $\boldsymbol{L}_2 = \boldsymbol{L}_1 \boldsymbol{\psi} \hatmat_p $. It can be easily seen that $tr \left(  \boldsymbol{L}_{2}^{'} \boldsymbol{L}_{2} \right) = \zero_{p \times p}$ if and only if $\boldsymbol{L}_{2} = \zero_{p \times p}$. Now, suppose $tr \left( \hatmat_p \boldsymbol{\psi}^{'} \boldsymbol{V}_1^* \boldsymbol{\psi}  \right) = 0$. So, $\boldsymbol{L}_1 \boldsymbol{\psi} \hatmat_p = \zero_{p \times p}$ and thus from \eqref{markov-t11c-1}, we get
 \begin{align}
     \boldsymbol{V}_1^* \boldsymbol{\psi} \hatmat_{p} = \zero_{p \times p}.
     \label{markov-t12c-1}
 \end{align}
 From the expression of $\boldsymbol{V}_1^*$, we get that the row sums of $\boldsymbol{V}_1^*$ is $0$. Thus using \eqref{markov-t12c-1}, by calculation we get
 \begin{align}
 \begin{split}
\left( \boldsymbol{V}_1^* \right)_{1,2} &= \cdots = \left( \boldsymbol{V}_1^* \right)_{1,p} = -\frac{1}{p} \left( \boldsymbol{V}_1^* \right)_{1,1},\\
\vdots\\
\left( \boldsymbol{V}_1^* \right)_{p,2} &= \cdots = \left( \boldsymbol{V}_1^* \right)_{p,p} = -\frac{1}{p} \left( \boldsymbol{V}_1^* \right)_{p,1},
\end{split}
\label{markov-t13c-1}
 \end{align}
 where $\left( \boldsymbol{V}_1^* \right)_{l_1,l_2}$ denotes the $(l_1,l_2)^{th}$ element of $\boldsymbol{V}_1^*$. Using the property that the row sums of $\boldsymbol{V}_1^*$ is $0$ in \eqref{markov-t13c-1}, we get that $\boldsymbol{V}_1^{*} = \zero_{p \times p}$. Thus, by calculation we get
 \begin{align}
\left( \vecone_{p}^{'} \boldsymbol{V}_1^{-1} \vecone_p \right)^{-1} \matone_{p \times p} \boldsymbol{V}_1^{-1} &= \identity_p.
\label{markov-t14c-1}
 \end{align}
 The above equation \eqref{markov-t14c-1} leads to a contradiction. Hence our assumption that \\$tr \left( \hatmat_p \boldsymbol{\psi}^{'} \boldsymbol{V}_1^* \boldsymbol{\psi} \right) =0$ is wrong. Thus, $tr \left( \hatmat_p \boldsymbol{\psi}^{'} \boldsymbol{V}_1^* \boldsymbol{\psi} \right) >0$.\par
 
\noindent Since $\boldsymbol{V}_R$ is a positive definite matrix, similarly we can prove that $tr \left( \hatmat_p \boldsymbol{\psi}^{'} \boldsymbol{V}_R^* \boldsymbol{\psi} \right) >0$.
\end{proof}

\end{appendices}


\begin{thebibliography}{}

\bibitem[Bate and Jones, 2006]{bate2006construction}
Bate, S. and Jones, B. (2006).
\newblock The construction of nearly balanced and nearly strongly balanced
  uniform cross-over designs.
\newblock {\em Journal of Statistical Planning and Inference},
  136(9):3248--3267.

\bibitem[Bose and Dey, 2009]{Bose2009OptimalDesigns}
Bose, M. and Dey, A. (2009).
\newblock {\em Optimal crossover designs}.
\newblock World Scientific.

\bibitem[Bose and Dey, 2013]{Bose2013DevelopmentsDesigns}
Bose, M. and Dey, A. (2013).
\newblock Developments in crossover designs.

\bibitem[Cheng and Wu, 1980]{cheng1980balanced}
Cheng, C.~S. and Wu, C.~F. (1980).
\newblock Balanced repeated measurements designs.
\newblock {\em The Annals of Statistics}, 8(6):1272--1283.

\bibitem[Chiles and Delfiner, 2012]{chiles2012geostatistics}
Chiles, J.~P. and Delfiner, P. (2012).
\newblock {\em Geostatistics: modeling spatial uncertainty}, volume 713.
\newblock John Wiley \& Sons.

\bibitem[Chinchilli and Esinhart, 1996]{chinchilli1996design}
Chinchilli, V.~M. and Esinhart, J.~D. (1996).
\newblock Design and analysis of intra-subject variability in cross-over
  experiments.
\newblock {\em Statistics in Medicine}, 15(15):1619--1634.

\bibitem[Clough and Barrett, 2016]{Clough2016}
Clough, E. and Barrett, T. (2016).
\newblock The gene expression omnibus database.
\newblock {\em Methods in Molecular Biology}, 1418:93--110.

\bibitem[Dasgupta et~al., 2022]{dasgupta2022optimal}
Dasgupta, S., Mukhopadhyay, S., and Keith, J. (2022).
\newblock Optimal designs for some bivariate cokriging models.
\newblock {\em Journal of Statistical Planning and Inference}, 221:9--28.

\bibitem[Grender and Johnson, 1993]{Grender1993AnalysisResponse}
Grender, J.~M. and Johnson, W.~D. (1993).
\newblock Analysis of crossover designs with multivariate response.
\newblock {\em Statistics in medicine}, 12(1):69--89.

\bibitem[Hedayat and Afsarinejad, 1978]{hedayat1978repeated}
Hedayat, A. and Afsarinejad, K. (1978).
\newblock Repeated measurements designs, {II}.
\newblock {\em The Annals of Statistics}, 6(3):619--628.

\bibitem[Hedayat and Yang, 2003]{hedayat2003universal}
Hedayat, A. and Yang, M. (2003).
\newblock Universal optimality of balanced uniform crossover designs.
\newblock {\em The Annals of Statistics}, 31(3):978--983.

\bibitem[Hedayat and Yang, 2004]{hedayat2004universal}
Hedayat, A. and Yang, M. (2004).
\newblock Universal optimality for selected crossover designs.
\newblock {\em Journal of the American Statistical Association},
  99(466):461--466.

\bibitem[Journel, 1999]{Journel1999955}
Journel, A.~G. (1999).
\newblock Markov models for cross-covariances.
\newblock {\em Mathematical Geology}, 31:955--964.

\bibitem[Kenward and Jones, 2014]{Kenward2014CrossoverTrials}
Kenward, M.~G. and Jones, B. (2014).
\newblock Crossover trials.
\newblock {\em Methods and applications of statistics in clinical trials:
  Concepts, principles, trials, and design}, pages 310--319.

\bibitem[Kiefer, 1975]{Kiefer1975ConstructionIi}
Kiefer, J. (1975).
\newblock Construction and optimality of generalized {Y}ouden designs.
\newblock {\em A Survey of Statistical Designs and Linear Models \rm{(J. N.
  Srivastava, Ed.)}}, pages 333--353.

\bibitem[Kunert, 1983]{kunert1983optimal}
Kunert, J. (1983).
\newblock Optimal design and refinement of the linera model with applications
  to repeated measurements designs.
\newblock {\em The Annals of Statistics}, 11(1):247--257.

\bibitem[Kunert, 1984]{kunert1984optimality}
Kunert, J. (1984).
\newblock Optimality of balanced uniform repeated measurements designs.
\newblock {\em The Annals of Statistics}, 12(3):1006--1017.

\bibitem[Kunert, 1991]{kunert1991cross}
Kunert, J. (1991).
\newblock Cross-over designs for two treatments and correlated errors.
\newblock {\em Biometrika}, 78(2):315--324.

\bibitem[Kunert and Martin, 2000]{Kunert2000OptimalityErrors}
Kunert, J. and Martin, R. (2000).
\newblock Optimality of type i orthogonal arrays for cross-over models with
  correlated errors.
\newblock {\em Journal of Statistical Planning and Inference}, 87(1):119--124.

\bibitem[Kushner, 1997a]{kushner1997optimal}
Kushner, H. (1997a).
\newblock Optimal repeated measurements designs: the linear optimality
  equations.
\newblock {\em The Annals of Statistics}, 25(6):2328--2344.

\bibitem[Kushner, 1997b]{Kushner1997OptimalityDesigns}
Kushner, H. (1997b).
\newblock Optimality and efficiency of two-treatment repeated measurements
  designs.
\newblock {\em Biometrika}, 84(2):455--468.

\bibitem[Kushner, 1998]{Kushner1998OptimalObservations}
Kushner, H. (1998).
\newblock Optimal and efficient repeated-measurements designs for uncorrelated
  observations.
\newblock {\em Journal of the American Statistical Association},
  93(443):1176--1187.

\bibitem[Leaker et~al., 2017]{leaker2017nasal}
Leaker, B.~R., Malkov, V., Mogg, R., Ruddy, M., Nicholson, G.~C., Tan, A.~J.,
  Tribouley, C., Chen, G., De~Lepeleire, I., Calder, N., et~al. (2017).
\newblock The nasal mucosal late allergic reaction to grass pollen involves
  type 2 inflammation (il-5 and il-13), the inflammasome (il-1$\beta$), and
  complement.
\newblock {\em Mucosal immunology}, 10(2):408--420.

\bibitem[Li and Zimmerman, 2015]{li2015}
Li, J. and Zimmerman, D.~L. (2015).
\newblock Model-based sampling design for multivariate geostatistics.
\newblock {\em Technometrics}, 57(1):75--86.

\bibitem[Martin and Eccleston, 1998]{Martin1998Variance-balancedObservations}
Martin, R. and Eccleston, J. (1998).
\newblock Variance-balanced change-over designs for dependent observations.
\newblock {\em Biometrika}, 85(4):883--892.

\bibitem[Matthews, 1987]{matthews1987optimal}
Matthews, J. (1987).
\newblock Optimal crossover designs for the comparison of two treatments in the
  presence of carryover effects and autocorrelated errors.
\newblock {\em Biometrika}, 74(2):311--320.

\bibitem[Matthews, 1990]{matthews1990optimal}
Matthews, J. (1990).
\newblock Optimal dual-balanced two treatment crossover designs.
\newblock {\em Sankhy{\=a}: The Indian Journal of Statistics, Series {B}},
  52(3):332--337.

\bibitem[Niphadkar and Mukhopadhyay, 2023]{niphadkar2023universally}
Niphadkar, S. and Mukhopadhyay, S. (2023).
\newblock Universally optimal multivariate crossover designs.
\newblock {\em arXiv preprint arXiv:2311.13556}.

\bibitem[Pareek et~al., 2023]{pareek2023likelihood}
Pareek, S., Das, K., and Mukhopadhyay, S. (2023).
\newblock Likelihood-based missing data analysis in crossover trials.
\newblock {\em Brazilian Journal of Probability and Statistics},
  37(2):329--350.

\bibitem[{R Core Team}, 2023]{rr}
{R Core Team} (2023).
\newblock {\em R: A Language and Environment for Statistical Computing}.
\newblock R Foundation for Statistical Computing, Vienna, Austria.

\bibitem[Senn, 2002]{Senn2002Cross-overResearch}
Senn, S.~S. (2002).
\newblock {\em Cross-over trials in clinical research}, volume~5.
\newblock John Wiley \& Sons.

\bibitem[Singh and Kunert, 2021]{Singh2021EfficientSettings}
Singh, R. and Kunert, J. (2021).
\newblock Efficient crossover designs for non-regular settings.
\newblock {\em Metrika}, 84(4):497--510.

\bibitem[Stufken, 1991]{Stufken1991SomeDesigns}
Stufken, J. (1991).
\newblock Some families of optimal and efficient repeated measurements designs.
\newblock {\em Journal of Statistical Planning and Inference}, 27(1):75--83.

\bibitem[{The MathWorks Inc.}, 2024]{MATLAB}
{The MathWorks Inc.} (2024).
\newblock {MATLAB} version: 24.1.0.2605623 ({R}2024a).

\bibitem[Tudor et~al., 2000]{tudor200020}
Tudor, G.~E., Koch, G.~G., and Catellier, D. (2000).
\newblock 20 statistical methods for crossover designs in bioenvironmental and
  public health studies.
\newblock {\em Handbook of Statistics}, 18:571--614.

\bibitem[Yeh, 1986]{yen1986conditions}
Yeh, C.~M. (1986).
\newblock Conditions for universal optimality of block designs.
\newblock {\em Biometrika}, 73(3):701--706.

\end{thebibliography}
\end{document}